\pgfplotsset{compat=1.10}
\newdimen\LineSpace
\tikzset{
    line space/.code={\LineSpace=#1},
    line space=10pt
}
\newcommand{\shP}{\#P\xspace}
\DeclareMathOperator*{\Exp}{E}
\newcommand{\prob}[2][{}]{\ensuremath{{\textstyle{\boldsymbol \Pr}_{#1}}\bigl[#2\bigr]}}
\newcommand{\E}[2][{}]{\ensuremath{{\textstyle{\boldsymbol \Exp}_{#1}}\bigl[#2\bigr]}}
\DeclareMathOperator{\supp}{supp}
\DeclareMathOperator{\conv}{conv}
\newcommand{\gap}{\enspace}
\newcommand{\gaptwo}{\gap \gap}
\newtheorem{theorem}{Theorem}[section]
\newtheorem{lemma}[theorem]{Lemma}
\newtheorem{claim}[theorem]{Claim}
\newtheorem{corollary}[theorem]{Corollary}
\theoremstyle{definition} 
\newtheorem{definition}[theorem]{Definition}
\theoremstyle{remark} \newtheorem{remark}{Remark}}
\newtheoremstyle{empty}
        {\topsep}{\topsep}              
        {\itshape}                      
        {}                              
        {\bfseries}                     
        {.}                             
        { }                             
        {\thmnote{\bfseries #3}}       
{\theoremstyle{empty} \newtheorem{duplicate}{noname}}
\newenvironment{proofof}[1]{\begin{proof}[Proof of #1]}{\end{proof}}
\newcommand{\bg}[1]{\medskip\noindent{\it #1}}
\newcommand{\paren}[1]{\ensuremath{\left(#1\right)}}
\newcommand{\curly}[1]{\ensuremath{\left\{ #1 \right\}}}
\newcommand{\mc}{\mathcal}
\newcommand{\R}{\ensuremath{\mathbb R}}
\newcommand{\Rp}{\ensuremath{\mathbb R_{\geq 0}}}
\newcommand{\Z}{\ensuremath{\mathbb Z}}
\newcommand{\Zp}{\ensuremath{\mathbb Z_{\geq 0}}}
\newcommand{\C}{\ensuremath{\mc{C}}}
\newcommand{\I}{\ensuremath{\mc I}}
\newcommand{\M}{\ensuremath{\mathcal M}}
\newcommand{\U}{\ensuremath{\mathcal U}}
\newcommand{\T}{\ensuremath{\mathcal T}}
\newcommand{\Q}{\ensuremath{\mc Q}}
\newcommand{\rank}{\ensuremath{\mathsf{r}}}
\newcommand{\OPT}{\ensuremath{\mathsf{OPT}}}
\newcommand{\iopt}{\ensuremath{\mathcal{O}^*}}
\newcommand{\val}{\ensuremath{\mathit{value}}}
\newcommand{\load}{\ensuremath{\mathsf{load}}}
\newcommand{\LB}{\ensuremath{\mathsf{LB}}}
\newcommand{\UB}{\ensuremath{\mathsf{UB}}}
\newcommand{\LP}[1]{\ensuremath{\mathsf{LP}}}
\newcommand{\bd}{\ensuremath{\mathsf{bd}}}
\newcommand{\low}{\ensuremath{\mathsf{low}}}
\newcommand{\hi}{\ensuremath{\mathsf{hi}}}
\newcommand{\excep}{\ensuremath{\mathsf{excep}}}
\newcommand{\sm}{\ensuremath{\setminus}} 
\newcommand{\sse}{\subseteq}
\newcommand{\ceil}[1]{\ensuremath{\left\lceil#1\right\rceil}}
\newcommand{\floor}[1]{\ensuremath{\left\lfloor#1\right\rfloor}}
\newcommand{\bigo}[1]{O\left( #1 \right)}
\newcommand{\tht}[1]{\Theta\left(#1\right)}
\newcommand{\poly}{\operatorname{poly}}
\newcommand{\Min}[1]{\min{\left\{#1\right\}}}
\newcommand{\POS}{\mathsf{POS}}
\newcommand{\POSl}{\POS^{<}}
\newcommand{\POSe}{\POS^{=}}
\newcommand{\norm}[1]{\left\lVert#1\right\rVert}
\newcommand{\ve}{\ensuremath{\varepsilon}} 
\newcommand{\gm}{\ensuremath{\gamma}} 
\newcommand{\ld}{\ensuremath{\lambda}} 
\newcommand{\kp}{\ensuremath{\kappa}}
\newcommand{\al}{\ensuremath{\alpha}}
\newcommand{\sg}{\ensuremath{\sigma}}
\newcommand{\bt}{\ensuremath{\overline t}}
\newcommand{\bY}{\ensuremath{\overline Y}}
\newcommand{\bz}{\ensuremath{\overline z}}
\newcommand{\qt}{\theta}
\newcommand{\eproxl}[1][\ell]{\ensuremath{\gm_{#1}}}
\newcommand{\problth}[1][\ell]{\ensuremath{\tau_{#1}}}
\newcommand{\bproblth}[1][\ell]{\ensuremath{t_{#1}}}
\newcommand{\bon}{\ensuremath{\mathbbm{1}}}
\newcommand{\down}{\ensuremath{\downarrow}}
\newcommand{\thresh}{\sg}
\newcommand{\scrv}{\ensuremath{Z}}
\newcommand{\vecrv}{\ensuremath{W}}
\newcommand{\sumrv}{\ensuremath{S}}
\newcommand{\treerv}{\ensuremath{Y}}
\newcommand{\tone}{A}
\newcommand{\budg}{B}
\newcommand{\bvec}{\ensuremath{\vec{b}}}
\newcommand{\bfun}{\ensuremath{b}}
\newcommand{\z}{\ensuremath{\eta}}
\newcommand{\leqnomode}{\tagsleft@true\let\veqno\@@leqno}
\newcommand{\reqnomode}{\tagsleft@false\let\veqno\@@eqno}
\newcommand{\clbp}{\ensuremath{7}\xspace}
\newcommand{\cexpnorm}{\ensuremath{28}\xspace}
\newcommand{\ctauexpr}{\ensuremath{14}\xspace}
\newcommand{\pr}[1]{\ensuremath{\prob{#1}}} 
\newcommand{\eload}[1][i]{\xi_{\ensuremath{#1}}}
\newcommand{\beload}[1][i]{\overline{\xi}_{\ensuremath{#1}}}
\newcommand{\eloadl}[1][i,\ell]{\eload[i,\ell]} 
\newcommand{\topl}[1][\ell]{\ensuremath{\mathsf{Top}_{#1}}}
\newcommand{\Topl}[2][\ell]{\ensuremath{\topl[{#1}]\left(#2\right)}}
\newcommand{\taul}[1][\ell]{\ensuremath{\problth[#1]}} 
\newcommand{\Taul}[2][\ell]{\ensuremath{\problth[#1]\left(#2\right)}} 
\newcommand{\gaml}[1][\ell]{\ensuremath{\eproxl[#1]}} 
\newcommand{\Gaml}[2][\ell]{\ensuremath{\eproxl[#1]\left(#2\right)}} 
\newcommand{\erv}[2]{\ensuremath{#1^{\geq #2}}} 
\newcommand{\trv}[2]{\ensuremath{#1^{< #2}}} 
\newcommand{\eff}[2][\lambda]{\beta_{#1}\left(#2\right)}
\newcommand{\indset}[1][F]{\ensuremath{#1}}
\newcommand{\scalar}{\qt} 
\newcommand{\scal}{\scalar}
\newcommand{\viol}{\nu}
\newcommand{\sz}{\ensuremath{z^*}}
\newcommand{\ssg}{\ensuremath{\sg^*}}
\newcommand{\vt}{\ensuremath{\vec{t}}}
\newcommand{\vtp}{\ensuremath{\vec{t'}}}
\newcommand{\vtb}{\ensuremath{\vec{\bt}}}
\newcommand{\vts}{\ensuremath{\vec{t^*}}}
\newcommand{\LPv}[1][\vt]{\ensuremath{(\LP{}({#1}))}}
\newcommand{\LPTv}[1][\vt]{\ensuremath{(\tree{}({#1}))}}
\newcommand{\LPOPTv}[1][\vt]{\tree\text{-}\OPT\paren{#1}}
\newcommand{\rd}{\ensuremath{\mathsf{rd}}}
\newcommand{\ber}{\ensuremath{\mathsf{ber}}}
\newcommand{\lar}{\ensuremath{\mathsf{lg}}} 
\newcommand{\sml}{\ensuremath{\mathsf{sml}}} 
\newcommand{\mce}{\ensuremath{\mc{E}}}
\newcommand{\effm}[2][\lambda]{\beta'_{#1}\left(#2\right)}
\newcommand{\vol}{\ensuremath{\mathsf{vol}}} 
\newcommand{\Eb}[1]{\ensuremath{{\boldsymbol \Exp}\left[#1\right]}}
\newcommand{\tree}{\ensuremath{\mathsf{Tree}}}
\newcommand{\comp}{\ensuremath{\mathsf{comp}}}
\newcommand{\n}{\ensuremath{n-1}}
\newcommand{\jtoi}{{\ensuremath{j: j \mapsto i}}}
\newcommand{\sdleq}{\ensuremath{\preceq_{\,\mathrm{sd}}}}
\newcommand{\asgn}{\ensuremath{\mathsf{asgn}}}
\newcommand{\mst}{\ensuremath{\mathsf{tree}}}
\title{Approximation Algorithms for Stochastic Minimum Norm Combinatorial Optimization%
\thanks{An extended abstract is to appear in the Proceedings of the 61st FOCS, 2020.}} 
\author{
    Sharat Ibrahimpur\thanks{{\tt \{sharat.ibrahimpur,cswamy\}@uwaterloo.ca}.
    Dept. of Combinatorics and Optimization, Univ. Waterloo, Waterloo, ON N2L 3G1. 
    Supported in part by NSERC grant 327620-09 and an NSERC Discovery Accelerator
    Supplement Award.}
\and 
\addtocounter{footnote}{-1} 
    Chaitanya Swamy\footnotemark
}
\date{}
\begin{document}

\maketitle	
	
\begin{abstract}
Motivated by the need for, and growing interest in, modeling uncertainty in data, we
introduce and study {\em stochastic minimum-norm optimization}. 
We have an underlying combinatorial optimization problem where the costs involved
are {\em random variables} with given distributions; 
each feasible solution induces a random multidimensional cost vector, and given a   
certain objective function, 
the goal is to find a solution (that does not depend on the realizations of the costs)
that minimizes the expected objective value. 
For instance, in stochastic load balancing, jobs with random
processing times need to be assigned to machines, and the induced cost vector is the
machine-load vector. 
The choice of objective is typically the maximum- or sum- of the
entries of the cost vector, or in some cases some other $\ell_p$ norm of the cost vector. 
Recently, in the deterministic setting, Chakrabarty and
Swamy~\cite{ChakrabartyS19a} considered a much broader suite of objectives,
wherein we seek to minimize the $f$-norm of the cost vector under a given
{\em arbitrary monotone, symmetric norm} $f$.
In stochastic minimum-norm optimization, we work with this broad class of objectives,
and seek a solution that minimizes the {\em expected $f$-norm} of the induced cost vector.

The class of monotone, symmetric norms is versatile and 
includes $\ell_p$-norms, and $\topl$-norms (sum of $\ell$ largest coordinates in absolute
value), and enjoys various closure properties; in particular, 
it can be used to incorporate 
{\em multiple} norm budget constraints $f_\ell(x)\leq\budg_\ell$, $\ell=1,\ldots,k$. 

We give a general framework for devising algorithms for stochastic minimum-norm
combinatorial optimization, using which we obtain approximation algorithms for the
stochastic minimum-norm versions of the load balancing and spanning tree problems. We
obtain the following concrete results.

\begin{itemize}[noitemsep]
\item An $O(1)$-approximation for {\em stochastic minimum-norm load balancing on  unrelated machines} with:
(i) arbitrary monotone symmetric norms and job sizes that are Bernoulli random variables; and
(ii) $\topl$ norms and arbitrary job-size distributions.

\item An $\bigo{\log m/\log\log m}$-approximation for the general stochastic minimum-norm
load balancing problem, 
where $m$ is the number of machines.  

\item An $O(1)$-approximation for stochastic minimum-norm spanning tree with arbitrary
monotone symmetric norms and distributions; this guarantee extends to 
the stochastic minimum-norm matroid basis problem.
\end{itemize}

Two key technical contributions of this work are: 
(1) a structural result of independent interest 
connecting stochastic minimum-norm optimization to the simultaneous optimization of a
(\emph{small}) collection of expected $\topl$-norms; 
and 
(2) showing how to tackle expected $\topl$-norm minimization by leveraging techniques used
to deal with minimizing the expected maximum, circumventing the difficulties
posed by the non-separable nature of $\topl$ norms.
\end{abstract}

\section{Introduction} \label{sec:intro}
Uncertainty is a facet of many real-world decision environments, and
a thriving and growing area of optimization, {\em stochastic optimization}, 
deals with optimization under uncertainty. 
Stochastic load-balancing and scheduling problems, 
where we have uncertain job sizes (a.k.a processing times), constitute
a prominent and well-studied class of stochastic-optimization problems 
(see,
e.g.,~\cite{KleinbergRT00,GoelI99,GuptaKNS18,Pinedo04,MoehringSU99,ImMP15,GuptaMUX20}).
In {\em stochastic load balancing}, we are given job-size distributions
and we need to fix a job-to-machine assignment
{\em without knowing the actual processing-time realizations}. 
This assignment induces a {\em random} load vector, 
and its quality is assessed by evaluating the expected 
objective value of this load vector, which one seeks to minimize;
the objectives typically considered are: {\em makespan}---i.e., maximum
load-vector entry---which leads to the stochastic makespan minimization
problem~\cite{KleinbergRT00,GoelI99,GuptaKNS18},  
and the {\em $\ell_p$-norm of the load vector}~\cite{Molinaro19}. 
More generally, in a generic stochastic-optimization problem, we have
an underlying combinatorial optimization problem and the costs involved are 
described by random variables with given distributions. 
We need to take decisions, i.e., find a feasible solution, given only the distributional
information, and without knowing the realizations of the costs. 
({This is sometimes called {\em one-stage} stochastic optimization.})
Each feasible solution induces a random cost vector,  
we have an objective that seeks to quantitatively measure the quality of the solution
by aggregating the entries of the cost vector, and the goal is to find a feasible
solution that minimizes the expected objective value of the induced cost vector.
As another example in this setup, consider the {\em stochastic spanning tree} problem,
which is the following basic stochastic network-design problem: 
we have a graph with random edge costs and we seek a spanning tree of low expected
objective value, where the objective is applied to the cost vector that consists of the
costs of edges in the tree. 

The two most-commonly considered objectives in such settings (as also for deterministic
problems) are: (a) the {\em maximum} cost-vector entry   
(which adopts an egalitarian view); and (b) the {\em sum} of the cost-vector entries
(i.e., a utilitarian view that considers the total cost incurred).
These objectives give rise to various classical problems:
besides the makespan minimization 
problem in load balancing, 
other examples include (deterministic/stochastic) bottleneck spanning tree (max-
objective), minimum spanning tree (sum- objective),  
and the $k$-center (max- objective) and $k$-median (sum- objective) clustering problems.
Recognizing that the max- and sum- objectives tend to skew solutions in
different directions, 
other $\ell_p$-norms of the cost vector (e.g., $\ell_2$ norm) have also been considered in
certain settings as a means of interpolating between, or trading off, the max- and sum-
objectives.  
For instance, $\ell_p$-norms have been investigated for both deterministic and stochastic
load balancing~\cite{AwerbuchAGKKV95,AzarE05,MakarychevS18,Molinaro19} and for
deterministic $k$-clustering~\cite{GuptaT08}. 
Furthermore, very recently, Chakrabarty and Swamy~\cite{ChakrabartyS19a} introduced a rather general
model to unify these various problems (including minimizing $\ell_p$-norms), that they call 
{\em minimum-norm optimization}: 
given an \emph{arbitrary} monotone, symmetric norm $f$, 
find a solution that minimizes the $f$-norm of the induced cost vector.

\vspace*{-1ex}
\paragraph{Our contributions.}
In this work, we introduce and study {\em stochastic minimum-norm optimization}, 
which is the stochastic version of min-norm optimization: given a
stochastic-optimization problem and an arbitrary monotone, symmetric norm $f$, 
we seek a feasible solution that minimizes the 
{\em expected $f$-norm of the induced cost vector}.
As a model, this combines the versatility of
(deterministic) min-norm optimization with the more realistic setting of uncertain data,
thereby giving us a unified way of dealing with the various objective functions typically
considered for (deterministic and stochastic) optimization problems in the face of uncertain data.
We consider problems where there is a certain degree of independence in the underlying
costs, so that the components of the underlying cost vector are always {\em independent}
(and nonnegative) random variables.

{\em Our chief contribution is a framework that we develop for designing 
algorithms for stochastic minimum-norm combinatorial optimization problems, using which we  
devise approximation algorithms for the stochastic minimum-norm versions of load balancing 
(Theorems~\ref{thm:loadbalgen} and~\ref{thm:loadbalber}) and spanning trees
(Theorem~\ref{thm:tree}).} 
We only assume that we have a {\em value oracle} for the norm $f$; we remark that this is 
weaker than the optimization-oracle and first-order-oracle access required to $f$
in~\cite{ChakrabartyS19a} and~\cite{ChakrabartyS19b} respectively.

Stochastic minimum-norm optimization can be motivated from two distinct perspectives. 
The class of monotone, symmetric norms is quite rich and broad. In particular, it contains
all $\ell_p$-norms, as also another fundamental class of norms 
called {\em $\topl$-norms}: $\topl(x)$ is the sum of the $\ell$ largest coordinates of $x$
(in absolute value). Notice that $\topl$ norms provide another means of interpolating between
the min-max ($\topl[1]$) and min-sum ($\topl[m]$) problems (where $m$ is number of
coordinates). 
One of the motivating factors for~\cite{ChakrabartyS19a} to consider the (far-reaching)
generalization of arbitrary monotone, symmetric norms (in the deterministic setting) was
that it allows one to capture the optimization problems associated with these
various objectives under one umbrella, and thereby come up with a unified set of
techniques for handling these optimization problems. 
These same benefits also apply in the stochastic setting, 
making stochastic minimum-norm optimization an appealing model to study. 

Another motivation comes from the fact that the class of monotone, symmetric norms is
closed under various operations, including taking nonnegative combinations, and taking the
maximum of any finite collection of monotone, symmetric norms.  
A noteworthy and non-evident consequence of these closure properties is that they allow
us to incorporate budget constraints $f_\ell(x)\leq\budg_\ell$, $\ell=1,\ldots,k$ involving 
{\em multiple} monotone, symmetric norms $f_1,\ldots,f_k$ using the min-norm optimization
model: we can simply define another (monotone, symmetric) norm
$g(x):=\max\bigl\{{f_\ell(x)}/{\budg_\ell}: \ell=1,\ldots,k\bigr\}$, and the
(single) budget constraint $g(x)\leq 1$ can be captured by the problem of minimizing
$g(x)$. 
Multiple norm budget constraints may
arise, or be useful, for example, when 
no single norm may be a clear choice for assessing the solution quality.%
\footnote{Such considerations arise, for instance, when considering semi-supervised
learning on graphs using $\ell_p$-norm based Laplacian regularization, where different
choices of $p$ can lead to good solutions on different instances; see
e.g.,~\cite{AlaouiCRWJ16}.}
Moreover, such constraints, and, in particular, the above means of capturing them, can be
especially useful in stochastic settings, as they can provide us with more fine-grained
control of the underlying random cost vector, which can help offset the risk associated
with uncertainty; e.g., 
the constraint $\E{\max\bigl\{{\topl(Y)}/{\budg_\ell}: \ell=1,\ldots,k\bigr\}}\leq 1$, 
enforces a fair bit of control on the random cost vector $Y$ and provides safeguards   
against the costs being too high.

\medskip
To elaborate on our framework and results, it is useful to highlight and
appreciate two distinct high-level challenges that arise in dealing with stochastic
min-norm optimization. We delve into more details in Section~\ref{overview}.
First, how do we reason about the expectation of an {\em arbitrary} monotone, symmetric
norm? 
\cite{ChakrabartyS19a} prove the useful structural result that any monotone
symmetric norm $f$ can be expressed as the maximum of a collection of {\em ordered norms},
where an ordered norm is simply a nonnegative combination of $\topl$-norms
(Theorem~\ref{monnormthm}). 
While this does give a more concrete way of thinking about the expected $f$-norm, 
the challenge nevertheless in the stochastic setting is that one now needs to reason about
the expectation of the maximum of a collection of random variables, where each random
variable is the ordered norm of our cost vector $Y$. 
{\em One of our chief insights is that the expectation of the maximum of a collection of
ordered norms is within an $O(1)$-factor of the maximum of the expected ordered norms}
(Theorem~\ref{thm:expnorm}), 
i.e., interchanging the expectation and maximum operators only loses an $O(1)$ factor!
The crucial consequence is that this provides us with a ready 
means for reasoning
about $\E{f(Y)}$, namely by controlling $\E{\topl(Y)}$ for all indices $\ell$ (see
Theorem~\ref{stochmaj}). 
We believe that this structural result about the expectation of a monotone, symmetric norm
is of independent interest.

This brings us to the second challenge: how do we deal with a specific norm $f$,
such as the $\topl$ norm? To our knowledge, there is no prior work on any
stochastic $\topl$-norm minimization problem, and 
we need to control {\em all} expected $\topl$ norms.
Our approach is based on 
carefully identifying certain statistics of the random vector $Y$ that provide a
convenient handle on $\E{\topl(Y)}$ (see Section~\ref{proxy}).
(These statistics also play a role in establishing our above result on the expectation of
a monotone, symmetric norm.)
For a specific application (e.g., stochastic \{load balancing, spanning trees\}),
we formulate an LP encoding (loosely speaking) that the statistics of our random cost
vector match 
the statistics of the cost vector of an optimal solution. The main technical
component is then to devise a rounding algorithm that rounds the LP solution while losing
only a small factor in these statistics, and we utilize 
\nolinebreak\mbox{{\em iterative-rounding} ideas to achieve this.}

Combining the above ingredients leads to our approximation guarantees for stochastic
min-norm \{load balancing, spanning trees\}.
Our strongest and most-sophisticated results are for stochastic min-norm load balancing
with:
(i) arbitrary monotone symmetric norms and Bernoulli job sizes (Theorem~\ref{thm:loadbalber}); and
(ii) $\topl$ norms and arbitrary job-size distributions (Theorem~\ref{thm:loadbaltopl}); 
{\em in both cases, we obtain constant-factor approximations}. 
(We emphasize that we have
not attempted to optimize constants, and instead chosen to keep exposition simple and clean.)
We also obtain an $O(\log m/\log\log m)$-approximation for general stochastic min-norm
load balancing (Theorem~\ref{thm:loadbalgen}). 
We remark that dealing with Bernoulli distributions is often considered to be a stepping
stone towards handling general distributions (see, e.g.,~\cite{KleinbergRT00}), and so we
believe that our techniques will eventually lead to an constant-factor approximation for
(general) stochastic min-norm load balancing.
For stochastic spanning trees, wherein edge costs are random, we obtain an
$O(1)$-approximation for 
{\em arbitrary monotone symmetric norms and arbitrary distributions} (Theorem~\ref{thm:tree}).

\vspace*{-1ex}
\paragraph{Related work.}
As mentioned earlier, stochastic load balancing is a prominent combinatorial-optimization
problem that has been investigated in the stochastic setting under various $\ell_p$ norms.
Kleinberg, Rabani, and Tardos~\cite{KleinbergRT00} 
investigated {\em stochastic makesepan minimization} (i.e., minimize expected maximum load)
in the setting of identical machines (i.e., the processing time of a job is the same
across all machines), and were the first to obtain an $O(1)$-approximation for this
problem. We utilize the tools that they developed to reason about the expected makespan. 
Their results were improved for specific job-size distributions by~\cite{GoelI99}.  
Almost two decades later, 
Gupta et al.~\cite{GuptaKNS18} obtained the first the $\bigo{1}$-approximation for
stochastic makespan minimization on unrelated machines, and an 
$O\bigl(\frac{p}{\log p}\bigr)$-approximation for minimizing the expected $\ell_p$-norm of
the load vector.
Our $O(1)$-approximation result for all $\topl$-norms substantially generalizes the
makespan-minimization (i.e., $\topl[1]$-norm) result of~\cite{GuptaKNS18}. 
The latter guarantee was improved to a constant by Molinaro~\cite{Molinaro19} via a
careful and somewhat involved use of the so-called $L$-function method of~\cite{Latala97}. 

Our results and techniques are incomparable to those of~\cite{Molinaro19}. 
At a high level, Molinaro argues that $\E{\|Y\|_p}$, where $Y$ is the $m$-dimensional
machine-load vector, can be bounded by controlling the quantity 
$\bigl(\sum_{i\in[m]}\E{Y_i^p}\bigr)^{1/p}$, and uses a notion of effective size due to
Latala~\cite{Latala97} to obtain a handle on $\E{Y_i^p}$ in terms of the $X_{ij}$ random
variables of the jobs assigned to machine $i$. Essentially, ``effective size'' 
of a random variable maps the random variable to a deterministic quantity that one can
work with instead (but its definition and utility depends on a certain scale parameter).
Previously, for stochastic makespan minimization, Kleinberg et al.~\cite{KleinbergRT00}
and Gupta et al.~\cite{GuptaKNS18} utilizes a notion of 
effective size 
due to Hui~\cite{Hui88}, 
which helps in controlling tail bounds of the machine loads. 
Molinaro leverages the full power of Latala's notion of effective size 
by applying it at multiple scales, which allows him to obtain an $O(1)$-approximation
for $\ell_p$ norms.
A pertinent question that perhaps arises is: 
given the success yielded by the various notions of effective sizes for $\ell_\infty$ and
other $\ell_p$ norms, can one come up with a notion of effective size that one can use 
for a general monotone symmetric norm $f$?
This however seems unlikely. A concrete reason for this can be gleaned from the
modeling power of general monotone, symmetric norms that arises due to their closure
properties. Recall that one can encode multiple monotone, symmetric-norm budget
constraints 
via one monotone, symmetric norm $f$ (by taking a maximum of the scaled constituent norms). 
A notion of effective size for $f$ would (remarkably) translate to one deterministic
quantity that simultaneously yields some control for all the norms involved in the budget
constraints; this is unreasonable to expect, even when the constituent norms are $\ell_p$
norms. 

Examples of other well-known combinatorial optimization problems that have been
investigated in the stochastic setting include stochastic knapsack and bin
packing~\cite{KleinbergRT00,GoelI99,LiD19}, stochastic shortest paths~\cite{LiD19}. 
The works of~\cite{LiD19,LiY13,LiL16} consider expected-utility-maximization versions of
various combinatorial optimization problems.
In a sense, this can be viewed as a counterpart of stochastic min-norm optimization, where 
we have a {\em concave} utility function, and we seek to maximize the expected utility of
the underlying random value vector induced by our solution.
Their results are obtained by a clever discretization of the probability space; this
does not seem to apply to stochastic minimization problems.

$\topl$- and ordered-norms have been proposed in the location-theory literature, as a
means of interpolating between the $k$-center and $k$-median clustering problems, and
have been studied in the Operations Research literature~\cite{NickelP05,LaporteNG19}, but
largely from a modeling perspective.
Recently they have received much interest in the algorithms and optimization communities%
---partly, because $\topl$ norms yield an alternative (to $\ell_p$ norms) natural means of 
interpolating between the $\ell_\infty$- and $\ell_1$- objectives---%
and this work has led to strong algorithmic results for $\topl$-norm- and ordered-norm-
minimization in deterministic 
settings~\cite{AouadS19,AlamdariS17,ByrkaSS18,ChakrabartyS18,ChakrabartyS19a,ChakrabartyS19b}.

\section{Technical overview and organization} \label{sec:overview} \label{overview}
We discuss here the various challenges that arise in stochastic min-norm
optimization, and give an overview of the technical ideas we develop to overcome these
challenges  
{with pointers to the relevant sections for more details.}
We conclude in Section~\ref{concl} with a discussion of some open questions. 

As noted in~\cite{ChakrabartyS19a}, even for {\em deterministic} min-norm optimization,
the simple approach of minimizing $f(\vec{v})$, where $\vec{v}$ ranges over the cost of 
fractional solutions (e.g., convex combinations of integer solutions), fails badly since
this convex program often has large integrality gap. In the stochastic setting, a further
problem that arises with this approach is that the random variable $f(Y)$ will typically 
have exponential-size support (note that $Y$ follows a product distribution) 
making it computationally challenging to evaluate $\E{f(Y)}$. 

As noted earlier, we face two types of challenges 
in tackling stochastic min-norm
optimization. The first is posed by the generality of an arbitrary monotone, symmetric
norm. 
While the work of~\cite{ChakrabartyS19a} 
shows that $\topl$-norms are fundamental building blocks of monotone symmetric 
norms, and suggests a way forward for dealing with stochastic min-norm optimization,
as we elaborate below, the stochastic nature of the problem throws up various 
new issues, that create significant difficulties with leveraging the tools 
in~\cite{ChakrabartyS19a} to reason about 
{the {\em expectation} of a monotone symmetric norm.}

Second, stochastic min-norm optimization is complicated even for various {\em specific} 
norms. 
As mentioned earlier, 
$O(1)$-approximation algorithms were obtained only quite recently 
for stochastic min-norm load balancing under the $\ell_\infty$ norm (i.e., stochastic
makespan-minimization)~\cite{GuptaKNS18}, 
and other $\ell_p$ norms~\cite{Molinaro19}; 
there is no prior work on stochastic $\topl$-load balancing (or any other stochastic
$\topl$-norm optimization problem).  
The difficulties arise again due to the underlying stochasticity; even with the
$\ell_\infty$ norm, which is the most specialized of the 
aforementioned norms, one needs to bound the expectation of the maximum of a collection 
of random variables, a quantity that is not convenient to control. 
The importance of $\topl$ norms in the deterministic setting indicates that 
stochastic $\topl$-norm optimization is a key special case that one needs to understand,
but the non-separable nature of $\topl$ norms adds another layer of difficulty. 

\vspace*{-1ex}
\paragraph{Expectation of a monotone, symmetric norm (Section~\ref{expnorm}).}
Recall that Chakrabarty and Swamy~\cite{ChakrabartyS19a} show 
that for any monotone, symmetric norm $f:\R^m\mapsto\Rp$, there is a collection
$\C\sse\Rp^m$ of weight vectors with non-increasing coordinates
such that $f(x)=\max_{w\in\C} w^T x^\down$ for any $x\in\Rp^m$ (see
Theorem~\ref{monnormthm}), where 
$x^\down$ is the vector $x$ with coordinates sorted in non-increasing order.
The quantity $w^T x^\down$ 
is called an ($w$-) {\em ordered norm}, and can be expressed as 
a nonnegative combination $\sum_{\ell=1}^m(w_\ell-w_{\ell+1})\topl(x)$ of $\topl$
norms (where $w_{m+1}:=0$).
This structural result 
is quite useful 
for deterministic min-norm optimization, 
since it yields an easier-to-work-with objective, 
and more significantly, 
it {\em immediately} shows that controlling all $\topl$ norms 
suffices to control $f(x)$; 
these properties are leveraged by~\cite{ChakrabartyS19a} 
to devise approximation algorithms for deterministic min-norm optimization.

However, it is rather unclear how this structural result helps with the {\em stochastic}
problem. 
If $Y$ is the random cost vector of our solution (with independently-distributed
coordinates) 
one can now rewrite the objective function $\E{f(Y)}$ 
as $\E{\sup_{w\in\C}w^TY^\down}$ but dealing with the latter quantity
entails reasoning about the expectation of the maximum of a collection of (positively
correlated) random variables, which is a potentially onerous task.%
\footnote{The positive correlation between the random variables $w^TY^\down$ for $w\in\C$
renders the techniques used in stochastic makespan minimization inapplicable. In the
latter problem, one needs to bound the expectation of the maximum of a collection of {\em
independent} random variables
and the underlying techniques 
heavily exploit this independence.}  
Taking cue from the deterministic setting, 
a natural result to aim for in the stochastic setting is that, {\em bounding all  
expected $\topl$ norms  
enables one to bound $\E{f(Y)}$}. But unlike the deterministic setting, 
reformulating the objective as $\E{\sup_{w\in\C}w^TY^\down}$
does not yield any apparent dividends, and
it is not at all clear that such a result is actually true.
The issue again is that it is difficult to reason about $\E{\sup\ldots}$, and
interchanging the expectation and $\sup$ terms is not usually a viable option
(it is not hard to see that $\E{\max\{\scrv_1,\ldots,\scrv_k\}}$ may in general be
$\Omega(k)$ times $\max\{\E{\scrv_1},\ldots,\E{\scrv_k}\}$). 

One of our chief contributions is to prove that 
the analogue mentioned above for the stochastic setting {\em does indeed hold}, i.e.,
controlling $\E{\topl(Y)}$ for all $\ell\in[m]$ allows one to control $\E{f(Y)}$.
The key to this, and our main technical result here, is that, somewhat surprisingly and   
intriguingly, {\em $\E{f(Y)}=\E{\sup_{w\in\C}w^TY^\down}$ is at most a constant factor
larger than $\sup_{w\in\C}\E{w^TY^\down}$} (Theorem~\ref{thm:expnorm}). 
The quantity $\sup_{w\in\C}\E{w^TY^\down}=\sup_{w\in\C}w^T\E{Y^\down}$ has a nice
interpretation: it is simply $f\bigl(\E{Y^\down}\bigr)$, and so this can be restated
as $\E{f(Y)}=O(1)\cdot f\bigl(\E{Y^\down}\bigr)$.
This result 
provides us with the same mileage 
that~\cite{ChakrabartyS19a} obtain in the deterministic setting. 
Since 
$\E{w^TY^\down}$ is 
$\sum_{\ell=1}^m(w_\ell-w_{\ell+1})\E{\topl(Y)}$, as in the deterministic setting,
this immediately implies that controlling $\E{\topl(Y)}$ for all $\ell\in[m]$ allows
us to control $\E{f(Y)}$, thereby providing a foothold for reasoning about the   
fairly general stochastic min-norm optimization problem. 
In particular, we infer that 
if $\E{\topl(Y)}\leq\al\cdot\E{\topl(\vecrv)}$ for all $\ell\in[m]$ (and we can restrict to
$\ell$s that are powers of $2$ here), 
then $\E{f(Y)}=O(\al)\cdot\E{f(\vecrv)}$ (Theorem~\ref{stochmaj}). 
We believe that our structural result for the expectation of a monotone, symmetric
norm is of independent interest, and should
also find application in other stochastic settings
involving monotone, symmetric norms.

Our structural result showing that $\E{f(Y)}=O(1)\cdot f\bigl(\E{Y^\down}\bigr)$, 
is obtained by carefully exploiting the structure of monotone, symmetric norms. 
A key component of this is identifying suitable {\em statistics} of the random
vector $Y$, indexed by $\ell\in[m]$ 
such that: (a) the statistics for index $\ell$ lead to a convenient proxy function for 
estimating $\E{\topl(Y)}$ within $O(1)$ factors; 
(b) the statistics 
are related to the expectations of some random variables that are tightly concentrated
around their means; 
and (c) $\Pr\bigl[f(Y)>\thresh\cdot f\bigl(\E{Y^\down}\bigr)\bigr]$ is governed by the
probability that these random variables 
deviate from their means.  
Together these properties imply the desired bound on $\E{f(Y)}$.
Next, we elaborate on these statistics. 

\vspace*{-1ex}
\paragraph{Proxy functions and statistics (Section~\ref{proxy}).}
Since $f$ is a symmetric function, it is not hard to see that $f(Y)$ depends only on the  
{\em ``histogram''} $\bigl\{N^{>\qt}(Y)\bigr\}_{\qt\in\Rp}$, where $N^{>\qt}(Y)$ is the number of
coordinates of $Y$ larger than $\qt$. But this dependence is quite non-linear; its
precise form is given by the structural result for monotone, symmetric norms, and
by noting that we can write 
$\topl(Y)=\int_0^\infty\min\bigl\{\ell,N^{>\qt}(Y)\bigr\}d\qt$.
Despite these non-linearities, 
we show that the 
expected histogram curve $\bigl\{\E{N^{>\qt}(Y)}\bigr\}_{\qt\in\Rp}$ (see
Fig.~\ref{histogram} in Section~\ref{proxy}) controls $\E{f(Y)}$.

To show this, and also compress $\bigl\{\E{N^{>\qt}(Y)}\bigr\}_{\qt\in\Rp}$ 
to a finite, manageable number of statistics, we consider first the $\topl$-norm. While 
$\E{\topl(Y)}=\int_0^\infty\E{\min\{\ell,N^{>\qt}(Y)\}}d\qt$, 
interestingly, we prove that {\em this is only a constant-factor smaller than 
$\eproxl(Y):=\int_0^\infty\min\{\ell,\E{N^{>\qt}(Y)}\}d\qt$} (Theorem~\ref{eproxlthm});
i.e., interchanging $\Exp$ and $\min$ only leads to an $O(1)$-factor loss. 
Defining $\problth(Y)$ to be
the smallest $\qt$ such that $\E{N^{>\qt}(Y)}<\ell$, 
which can be viewed 
as an estimate of the $\ell$-th largest entry of $Y$, 
we can more compactly write 
$\eproxl(Y)=\ell\problth(Y)+\int_{\problth(Y)}^\infty\E{N^{>\qt}(Y)}d\qt$.
The statistics of interest to us are then the quantities $\problth=\problth(Y)$ (and 
implicitly $\E{N^{>\problth}(Y)}$) for $\ell=1,\ldots,m$. 
They enable us to bound the tail probability $\Pr[f(Y)>\thresh\cdot f(\E{Y^\down})]$ by
$\exp(-\Omega(\thresh))$ which implies that $\E{f(Y)}=O(1)\cdot f(\E{Y^\down})$.  
Roughly speaking, this follows because we show (by exploiting the proxy function
$\eproxl(Y)$) that $\Pr[f(Y)>\thresh\cdot f(\E{Y^\down})]$ is at most
the probability that $N^{>\problth}(Y)>\Omega(\thresh)\cdot\ell$ for some $\ell\in[m]$,
and 
Chernoff bounds show that $N^{>\problth}(Y)=\sum_{i=1}^m\Pr[Y_i>\problth]$ is tightly
concentrated around $\E{N^{>\problth}(Y)}<\ell$ (see Lemma~\ref{nbound}). 

\vspace*{-1ex}
\paragraph{Approximation algorithms for stochastic minimum-norm optimization:
load-balancing (Section~\ref{sec:loadbal}) and spanning tree 
(Section~\ref{sec:tree}).}
In Sections~\ref{sec:loadbal} and~\ref{sec:tree}, we apply our framework to design the
first approximation algorithms for the stochastic minimum-norm versions of load
balancing, and spanning tree (and matroid basis) problems respectively.
These sections can be read independently of each other.

Let $\iopt$ denote the random cost vector resulting from an optimal solution.
Applying our framework entails bounding $\E{\topl(Y)}$ in terms of
$\E{\topl(\iopt)}$ for all $\ell\in[m]$. For algorithmic tractability, we only
work with indices $\ell\in\POS=\POS_m:=\{2^i: i=0,1,\ldots,\floor{\log_2m}\}$:
since $\topl(\vec{v})\leq\topl[\ell'](\vec{v}) \leq 2\topl(\vec{v})$ holds for any
$\vec{v}\in\Rp^m$ and any $\ell\leq\ell'\leq 2\ell$, it is easy to see that with a 
factor-$2$ loss, this still yields a bound on $\E{\topl(Y)}/\E{\topl(\iopt)}$ for 
{\em all} $\ell\in[m]$. 
At a high level, 
we ``guess'' within, say a factor of $2$,
$\E{\topl(\iopt)}$ or certain associated quantities such as $\problth(\iopt)$, for all
$\ell\in\POS$. This guessing takes polynomial time since it involves enumerating a vector
with $O(\log m)$ {\em monotone} (i.e., non-increasing or non-decreasing) coordinates, each
of which lies in a logarithmically bounded range.
We then write an LP to obtain a fractional solution whose associated cost-vector
$\bY$ roughly speaking satisfies $\E{\topl(\bY)}\leq O\bigl(\E{\topl(\iopt)}\bigr)$ for
all $\ell\in\POS$. The chief technical ingredient is to devise a rounding procedure to
obtain an integer solution while preserving the $\E{\topl(.)}$-costs (up to some factor)
for all $\ell\in\POS$. We exploit {\em iterative rounding} to achieve this, by
capitalizing on the fact that (loosely speaking) our matrix of tight constraints has
bounded column sums (see Theorem~\ref{iterrndthm}).

For spanning trees (Section~\ref{sec:tree})---edge costs are random and we
seek to minimize the expected norm of the edge-cost vector of the spanning tree---%
the implementation of the above plan is quite direct. 
We guess $\problth^*:=\problth(\iopt)$ for all $\ell\in\POS$, and our LP imposes the
constraints $\E{N\bigl(\bY^{>\problth^*}\bigr)}\leq\ell$ for all $\ell\in\POS$. Since 
the coordinates of $\bY$ correspond to individual edge costs, and we know their
distributions, it is easy to impose the above constraint (see \ref{treelp}). Iterative
rounding works out quite nicely here since after normalizing the above constraints to
obtain unit right-hand-sides,
each column sum is $O(1)$. Thus, we obtain a 
{\em constant-factor approximation}; also, everything extends to the setting of
general matroids. 

\medskip
Our results for load balancing (Section~\ref{sec:loadbal}) are the most technically
sophisticated results in the paper. We obtain {\em constant-factor approximations} for:
(i) {\em arbitrary} norms with Bernoulli job processing times; and 
(ii) $\topl$-norm with {\em arbitrary} distributions. 
We also obtain an $O(\log m/\log\log m)$-approximation for the most general setting, where
both the monotone, symmetric norm and the job-processing-time distributions may be
arbitrary. 

The cost-vector $Y$ in load balancing corresponds to the (random) loads on the machines.
Each component $Y_i$ 
is thus an {\em aggregate} of some random variables: 
$Y_i=\sum_{j\text{ assigned to }i}X_{ij}$, where $X_{ij}$ is 
the (random) processing time of job $j$ on machine $i$. The complication that this creates  
is that we do not have direct access to (the distribution for) the random variable $Y_i$, 
making it difficult to calculate (or estimate) $\Pr[Y_i>\qt]$ (and hence
$\E{N(Y^{>\qt})}$). 
(In fact, this is \shP-hard, even for Bernoulli $X_{ij}$s (see~\cite{KleinbergRT00}), but
if we know the jobs assigned to $i$, then we can use a dynamic program to obtain a
$(1+\ve)$-approximation of $\Pr[Y_i>\qt]$.) 

We circumvent these difficulties by leveraging some tools from the work
of~\cite{KleinbergRT00,GuptaKNS18} on stochastic makespan minimization, 
in conjunction with an {\em alternate} proxy that we
develop for $\E{\topl(Y)}$. 
We show that, for a suitable choice of $\qt$, 
$\sum_{i=1}^m\E{Y_i^{\geq\qt}}$ is a good proxy for $\E{\topl(Y)}$ (see
Lemmas~\ref{lbproxupper} and~\ref{lbproxlower}), 
where $Y_i^{\geq\qt}$ is the random variable that is $0$ if $Y_i<\qt$ and $Y_i$ otherwise.  
Complementing this, the insight we glean from~\cite{KleinbergRT00,GuptaKNS18} is
that we can estimate $\E{Y_i^{\geq\qt}}$ within $O(1)$ factors using quantities  
that can be obtained from the distributions of the $X_{ij}$ random variables (see
Section~\ref{largevalbnd}). 
This involves 
analyzing the contribution of job $j$ (to $\E{Y_i^{\geq\qt}}$) differently based on whether
$X_{ij}$ is ``small'' (truncated) or ``large'' (exceptional), and utilizing the notion 
of {\em effective size} of a random variable~\cite{Hui88,KleinbergRT00} to bound the
contribution from small jobs. 
We guess the $\qt$ values---call
them $t^*_\ell$---corresponding to $\E{\topl(\iopt)}$ for all $\ell\in\POS$, and
write an LP for finding a fractional assignment where we enforce constraints 
encoding that these $t^*_\ell$ values are compatible with the correct guesses (see 
$\LPv$ in Section~\ref{sec:loadbalf}).

\section{Preliminaries} \label{sec:prelims} \label{prelims}
For an integer $m\geq 0$, we use $[m]$ to denote the set $\{1,\dots,m\}$. 
For any integer $m\geq 0$, we define $\POS_m=\{2^i:\, i\in\Zp, 2^i\leq m\}$; we drop the
subscript $m$ when it is clear from the context.
For $x \in \R$, define $x^+:=\max\{x,0\}$. 
For an event $E$ we use the indicator random variable $\mathbbm{1}_E$ to denote if event
$E$ happens. 
For any vector $x\in \Rp^m$ we use $x^\downarrow$ to denote the vector $x$
with its coordinates sorted in non-increasing order. 

Throughout, we use symbols $Y$ and $\vecrv$ to denote {\em random vectors}. 
The coordinates of these random vectors are always 
{\em independent}, nonnegative random variables. We denote this by
saying that the random vector 
follows a product distribution. 
For an event $\mce$, let $\mathbbm{1}_{\mce}$ be $1$ if $\mce$ happens, and $0$
otherwise. 
We reserve $\scrv$ to denote a scalar nonnegative random variable.
Given $\scrv$ and $\qt \in \Rp$, 
define 
the {\em truncated} random variable $\scrv^{<\qt}:=\scrv \cdot \bon_{\scrv<\qt}$, which
has support in $[0,\qt)$. Analogously, define the {\em exceptional} random variable 
$\scrv^{\geq\qt} := \scrv \cdot \bon_{\scrv\geq\qt}$, whose support lies in 
$\{0\} \cup [\qt,\infty)$. The following Chernoff bound will be useful.

\begin{lemma}[Chernoff bound] \label{chernoff}
Let $\scrv_1,\ldots,\scrv_k$ be independent $[0,1]$ random variables, and
$\mu\geq\sum_{j\in[k]}\E{\scrv_j}$. 
For any $\ve>0$, we have
$\prob{\sum_{j\in[k]}\scrv_j>(1+\ve)\mu}\leq\bigl(\frac{e^{\ve}}{(1+\ve)^{1+\ve}}\bigr)^{\mu}$. 
If $\ve>1$, then we also have the simpler bound 
$\prob{\sum_{j\in[k]}\scrv_j>(1+\ve)\mu}\leq e^{-\ve\mu/3}$.
\end{lemma}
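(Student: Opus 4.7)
The plan is to apply the classical Chernoff method: bound the exponential moment generating function of $\sumrv := \sum_{j \in [k]} \scrv_j$ and then invoke Markov's inequality. For any $\lambda > 0$, Markov's inequality gives $\prob{\sumrv > (1+\ve)\mu} \leq e^{-\lambda(1+\ve)\mu}\,\E{e^{\lambda \sumrv}}$, and by independence the mgf factors as $\E{e^{\lambda \sumrv}} = \prod_{j \in [k]} \E{e^{\lambda \scrv_j}}$, so it suffices to bound each factor separately.

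Next, I would exploit $\scrv_j \in [0,1]$ together with the convexity of $x \mapsto e^{\lambda x}$: on $[0,1]$ the chord bound gives $e^{\lambda x} \leq 1 + x(e^\lambda - 1)$, so $\E{e^{\lambda \scrv_j}} \leq 1 + (e^\lambda - 1)\E{\scrv_j} \leq \exp\bigl((e^\lambda - 1)\,\E{\scrv_j}\bigr)$ via $1+y \leq e^y$. Taking the product and using $\sum_j \E{\scrv_j} \leq \mu$ together with $e^\lambda - 1 > 0$, I obtain $\E{e^{\lambda \sumrv}} \leq \exp\bigl((e^\lambda - 1)\mu\bigr)$, whence $\prob{\sumrv > (1+\ve)\mu} \leq \exp\bigl(\mu\bigl((e^\lambda - 1) - \lambda(1+\ve)\bigr)\bigr)$. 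Optimizing by setting $\lambda = \ln(1+\ve)$ (so $e^\lambda - 1 = \ve$) yields the first bound $\bigl(e^{\ve}/(1+\ve)^{1+\ve}\bigr)^{\mu}$.

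For the simpler $\ve > 1$ bound, I would just show that the exponent $\mu\bigl(\ve - (1+\ve)\ln(1+\ve)\bigr)$ is at most $-\ve\mu/3$, i.e., $h(\ve) := (1+\ve)\ln(1+\ve) - 4\ve/3 \geq 0$ for $\ve \geq 1$. This is a routine one-variable check: $h(1) = 2\ln 2 - 4/3 > 0$, and $h'(\ve) = \ln(1+\ve) - 1/3 > 0$ on $\ve \geq 1$, so $h \geq 0$ on $[1,\infty)$. There is no real obstacle here; the argument is essentially textbook Chernoff, and the only mildly delicate point is verifying the constant $1/3$, which reduces to the elementary calculus inequality above.
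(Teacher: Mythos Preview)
Your proposal is correct and is the standard textbook proof of the Chernoff bound. The paper states this lemma as a known fact and does not supply a proof, so there is nothing to compare against; your argument would serve perfectly well as a self-contained justification.
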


\vspace*{-1ex}
\paragraph{Monotone, symmetric norms.}
A function $f:\R^m \mapsto \Rp$ is a \emph{norm} if it satisfies the following: 
(a) $f(x) = 0$ iff $x=0$; (b)
(homogeneity) $f(\lambda x) = |\lambda|f(x)$, for all $\lambda \in \R$, $x\in\R^m$; and 
(c) (triangle inequality) $f(x+y) \leq f(x) + f(y)$ for all $x,y\in\R^m$. 
Since our cost vectors are always nonnegative, we only consider nonnegative vectors in the 
sequel. 
A {\em monotone, symmetric} norm $f$ is a norm that satisfies: 
$f(x) \leq f(y)$ for all $0\leq x\leq y$ (monotonicity); and 
$f(x) = f(x^\down)$ for all $x\in\Rp^m$ (symmetry).
In the sequel, whenever we say norm, we always mean a monotone, symmetric norm.
We will often assume that $f$ is normalized, i.e., $f(1,0,\ldots,0) = 1$. 
We will use the following simple claim to obtain bounds on the optimal value.

\begin{claim} \label{prop:sandwich} \label{sandwich} \label{lem:sandwich}
For any $x \in \Rp^m$, we have $\max_{i \in [m]} x_i \leq f(x) \leq \sum_{i \in [m]} x_i$.
\end{claim}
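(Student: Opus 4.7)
The plan is to bound $f(x)$ on both sides by exploiting the three properties of $f$ we have at our disposal: (normalized) $f(e_1)=1$, symmetry, and monotonicity, along with homogeneity and the triangle inequality that every norm enjoys. A preliminary observation to record is that, by symmetry, $f(e_i) = f(e_1) = 1$ for every standard basis vector $e_i$, since each $e_i$ is a permutation of $e_1$.

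For the upper bound, I would write $x = \sum_{i \in [m]} x_i e_i$ and repeatedly invoke the triangle inequality, followed by homogeneity (recalling $x_i \geq 0$), to get
\[
f(x) \;\leq\; \sum_{i \in [m]} f(x_i e_i) \;=\; \sum_{i \in [m]} x_i f(e_i) \;=\; \sum_{i \in [m]} x_i.
\]

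For the lower bound, let $k \in \arg\max_{i \in [m]} x_i$, so $x_k = \max_{i} x_i$. Since the coordinates of $x$ are nonnegative, we have $x \geq x_k e_k$ coordinate-wise. Monotonicity then yields $f(x) \geq f(x_k e_k) = x_k f(e_k) = x_k$, again using homogeneity and the fact that $f(e_k) = 1$.

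There is no real obstacle here; both inequalities follow immediately from the definitional properties of a normalized monotone symmetric norm, and the proof is only a few lines long. The only thing to be mildly careful about is ensuring the normalization $f(e_1) = 1$ (which the paper has already assumed) is used consistently and that nonnegativity of the coordinates is invoked when peeling off absolute values in the homogeneity step.
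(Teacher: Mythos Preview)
Your proof is correct and follows essentially the same approach as the paper: the upper bound via triangle inequality and homogeneity applied to the decomposition $x=\sum_i x_i e_i$, and the lower bound via monotonicity applied to $x\geq x_k e_k$, both using $f(e_i)=1$. The paper's argument is identical up to notation.
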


\begin{proof}
For the lower bound observe that for any $i$, $f(x) \geq f(0,\dots,0,x_i,0,\dots,0) =
x_i$. Here we used that $f$ is monotone, symmetric, and homogeneous. Next, for the upper
bound we use triangle inequality: $f(x) \leq \sum_{i \in [m]} f(0,\dots,0,x_i,0,\dots,0) =
\sum_{i \in [m]} x_i$.  
\end{proof}

The following two types of monotone, symmetric norms will be especially important to us. 

\begin{definition} \label{defn:topl}
For any $\ell \in [m]$, the {\em $\topl$ norm} is defined as follows: 
for $x \in \Rp^m$, $\topl(x)$ is 
the sum of the $\ell$ largest coordinates of $x$, i.e., 
$\topl(x) = \sum_{i=1}^{\ell} x^\down_i$.

More generally, for any $w \in \Rp^m$ satisfying 
$w_1 \geq w_2 \geq \dots \geq w_m \geq 0$---we call such a $w$ a non-increasing vector---%
the {\em $w$-ordered norm} (or simply ordered norm) of a 
vector $x \in \Rp^m$ is defined as $\norm{x}_w:=w^T x^\down$. 
Observe that $\norm{x}_w=\sum_{\ell=1}^m(w_\ell-w_{\ell+1})\topl(x)$, where
$w_\ell:=0$ for $\ell>m$. 
\end{definition}

$\topl$-norm minimization yields a natural way of interpolating between the min-max
($\topl[1]$) and min-sum ($\topl[m]$) problems. 
The following result by
Chakrabarty and Swamy~\cite{ChakrabartyS19a}, which gives a foothold for working with an
arbitrary monotone, symmetric norm, further highlights their importance. 

\begin{theorem}[Structural result for monotone, symmetric norms~\cite{ChakrabartyS19a}] 
\label{monnormthm}
Let $f: \R^m \mapsto \Rp$ be a monotone, symmetric norm. 
\begin{enumerate}[(a), topsep=0.25ex, itemsep=0ex, leftmargin=*]
\item 
There is a collection $\C\sse\Rp^m$ of 
{non-increasing vectors 
such that $f(x)=\sup_{w\in\C} w^Tx^\down$ for all $x\in\Rp^m$.}
Hence, we have $\sup_{w\in\C}w_1=f(1,0,\ldots,0)$.

\item If $x,y\in\Rp^m$ are such that $\topl(x)\leq\al\topl(y)+\beta$ for all
$\ell\in[m]$, where $\al,\beta\geq 0$, then 
$f(x)\leq \al\cdot f(y)+\beta\cdot f(1,0,\ldots,0)$. 
\end{enumerate}
\end{theorem}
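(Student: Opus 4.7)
The plan is to prove (a) via the supporting-hyperplane theorem, leveraging monotonicity and symmetry of $f$ to restrict the class of supporting linear functionals to nonnegative, non-increasing vectors; (b) will then follow from (a) by summation by parts.

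For (a), define $\C_0 := \{w \in \Rp^m : w^T y \leq f(y) \text{ for all } y \in \Rp^m\}$, the set of nonnegative dual witnesses. The inclusion $\sup_{w \in \C_0} w^T x \leq f(x)$ is immediate for $x \in \Rp^m$. For the reverse, fix $x \in \Rp^m$ with $f(x) > 0$ and apply the supporting-hyperplane theorem to the unit ball $B := \{y \in \R^m : f(y) \leq 1\}$ at the boundary point $x/f(x)$; this yields $w \in \R^m$ with $w^T x = f(x)$ and, via homogeneity, $w^T y \leq f(y)$ for all $y \in \R^m$. Let $w^+$ denote the coordinate-wise positive part of $w$. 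For any $y \in \Rp^m$, let $y' \in \Rp^m$ be obtained by zeroing out the coordinates of $y$ where $w$ is negative; then $y' \leq y$ coordinate-wise and $(w^+)^T y = w^T y' \leq f(y') \leq f(y)$, where the last step uses monotonicity. Hence $w^+ \in \C_0$, and since $x \geq 0$ we have $(w^+)^T x \geq w^T x = f(x)$, giving $(w^+)^T x = f(x)$. Permutation invariance of $f$ passes to $\C_0$, so the rearrangement inequality further allows us to restrict to $w$ with $w_1 \geq \cdots \geq w_m$ when $x = x^\down$. Setting
$$\C := \{w \in \C_0 : w_1 \geq w_2 \geq \cdots \geq w_m \geq 0\}$$
then yields $f(x) = f(x^\down) = \sup_{w \in \C} w^T x^\down$ for every $x \in \Rp^m$; plugging $x = (1,0,\ldots,0)$ into this identity gives $\sup_{w \in \C} w_1 = f(1,0,\ldots,0)$.

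For (b), summation by parts gives, for any $w \in \C$ with the convention $w_{m+1} := 0$,
$$w^T x^\down = \sum_{\ell=1}^m w_\ell\, x^\down_\ell = \sum_{\ell=1}^m (w_\ell - w_{\ell+1})\, \topl(x),$$
with coefficients $w_\ell - w_{\ell+1} \geq 0$. Applying the hypothesis $\topl(x) \leq \alpha\, \topl(y) + \beta$ termwise and telescoping,
$$w^T x^\down \leq \alpha \sum_{\ell=1}^m (w_\ell - w_{\ell+1})\, \topl(y) + \beta \sum_{\ell=1}^m (w_\ell - w_{\ell+1}) = \alpha\, w^T y^\down + \beta\, w_1.$$
Taking supremum over $w \in \C$ on the left-hand side, and using $w^T y^\down \leq f(y)$ and $w_1 \leq f(1,0,\ldots,0)$ (both from part~(a)), yields $f(x) \leq \alpha\, f(y) + \beta\, f(1,0,\ldots,0)$.

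The main subtlety is the step showing $w^+ \in \C_0$: since the paper's monotonicity hypothesis applies only within $\Rp^m$, one cannot simply invoke a sign-flip symmetry of $f$ off the nonnegative orthant. Instead, the zeroing-out trick used above crucially relies on monotonicity along nonnegative coordinates---replacing $y_i$ by $0$ at coordinates $i$ where $w_i < 0$ produces a $y' \in \Rp^m$ with $y' \leq y$ and $w^T y' = (w^+)^T y$, which transfers the bound $w^T y' \leq f(y')$ to $(w^+)^T y \leq f(y)$. Once this is in hand, the rest of the argument (permutation-invariance of $\C_0$, rearrangement, summation by parts) is mechanical.
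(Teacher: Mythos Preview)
Your proof is correct. For part~(b), your argument is essentially identical to the paper's: both use the Abel-summation identity $w^T x^\down = \sum_\ell (w_\ell - w_{\ell+1})\topl(x)$, apply the hypothesis termwise, and take the supremum over $w\in\C$.

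For part~(a), the paper simply cites the result from~\cite{ChakrabartyS19a} without reproducing a proof, whereas you supply a self-contained argument via the supporting-hyperplane theorem. Your handling of the sign issue---replacing the supporting functional $w$ by its positive part $w^+$ via the zeroing-out trick, using only monotonicity on $\Rp^m$---is clean and correct; this is the one place where naive duality could go wrong, and you address it properly. The subsequent reduction to non-increasing $w$ via permutation-invariance of $\C_0$ and the rearrangement inequality is standard. So your contribution here is strictly additive: you prove what the paper merely quotes, using the expected convex-analytic machinery.
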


\begin{proof} 
Part (a) is quoted from~\cite{ChakrabartyS19a} (see Lemma 5.2 in~\cite{ChakrabartyS19a}).
Part (b) follows easily from part (a). 
Let $\C\sse\R^m$ be the collection of non-increasing weight vectors given by part (a) for
$f$. 
We have 
\begin{equation*}
\begin{split}
f(x) & =\sup_{w\in\C}w^Tx^\down=\sup_{w\in\C}\sum_{\ell\in[m]}(w_\ell-w_{\ell+1})\topl(x)
\\
& \leq \sup_{w\in\C}\sum_{\ell\in[m]}(w_\ell-w_{\ell+1})(\al\cdot\topl(y)+\beta) \\
& \leq \sup_{w\in\C}\sum_{\ell\in[m]}(w_\ell-w_{\ell+1})\al\cdot\topl(y)+
\sup_{w\in\C}\sum_{\ell\in[m]}(w_\ell-w_{\ell+1})\beta 
= \al\cdot f(y)+\sup_{w\in\C}w_1\cdot\beta \\
& = \al\cdot f(y)+\beta\cdot f(1,0,\ldots,0). \qedhere
\end{split}
\end{equation*}
\end{proof}

We will often need to enumerate vectors with monotone integer coordinates. 
We use the following standard result, lifted from~\cite{ChakrabartyS19a}, to obtain a
polynomial bound on the number of such vectors.

\begin{claim} \label{nondec} \label{monseq}
There are at most $(2e)^{\max\{M,k\}}$ non-increasing sequences of $k$ integers chosen
from $\{0,\ldots,M\}$. 
\end{claim}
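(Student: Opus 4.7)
The plan is to reduce the claim to a standard binomial coefficient bound via a stars-and-bars correspondence, and then apply the inequality $\binom{n}{r} \le \left(en/r\right)^r$.

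First, I would observe that non-increasing sequences of $k$ integers drawn from $\{0,1,\ldots,M\}$ are in bijection with size-$k$ multisets over a ground set of $M+1$ elements: given a non-increasing sequence, forget the order; given a multiset, sort its elements in non-increasing order. The number of size-$k$ multisets over $M+1$ elements is the standard stars-and-bars count
\[
\binom{(M+1)+k-1}{k}=\binom{M+k}{k}=\binom{M+k}{M}.
\]

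Next, I would use the standard upper bound $\binom{n}{r}\le \bigl(\tfrac{en}{r}\bigr)^r$ (provable from $r!\ge (r/e)^r$). Let $n=\max\{M,k\}$. I split into two cases. If $k\ge M$, choose the form $\binom{M+k}{k}$; then $M+k\le 2k$, so
\[
\binom{M+k}{k}\le\Bigl(\tfrac{e(M+k)}{k}\Bigr)^{k}\le (2e)^{k}=(2e)^{\max\{M,k\}}.
\]
Symmetrically, if $M\ge k$, use the form $\binom{M+k}{M}$ with $M+k\le 2M$ to obtain $\binom{M+k}{M}\le (2e)^M=(2e)^{\max\{M,k\}}$.

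Neither step should be hard: the bijection is routine, and the binomial inequality is textbook. The only thing to get right is the case split so that in each case the number of ``chosen'' objects in the binomial coefficient equals the larger of $M$ and $k$, which is what makes the ratio $(M+k)/r$ bounded by $2$. No further subtleties arise, and the claim follows immediately.
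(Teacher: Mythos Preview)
Your proposal is correct and follows essentially the same approach as the paper: both reduce the count to a single binomial coefficient via a stars-and-bars/compositions bijection and then apply $\binom{n}{r}\le (en/r)^r$ with a case split on whether $M$ or $k$ is larger. Your multiset bijection is in fact the cleaner route, yielding the exact count $\binom{M+k}{k}$ directly; the paper instead encodes a sequence by its successive differences (a composition of $M$ into $k+1$ nonnegative parts) and arrives at the same bound.
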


\begin{proof} 
We reproduce the argument from~\cite{ChakrabartyS19a}.
A non-decreasing sequence $a_1\geq a_2\geq\ldots\geq a_k$, where $a_i\in\{0\}\cup [M]$ for all
$i\in[k]$, can be mapped bijectively to the set of $k+1$ integers 
$M-a_1,a_1-a_2,\ldots,a_{k-1}-a_k,a_k$ from $\{0\}\cup [M]$ that add up to $M$. The number
of such sequences of $k+1$ integers is equal to the coefficient of $x^M$ in the generating
function $(1+x+\ldots+x^M)^k$. This is equal to the coefficient of $x^M$ in $(1-x)^{-k}$,
which is $\binom{M+k-1}{M}$ using the binomial expansion. 
Let $U=\max\{N,k-1\}$. 
We have $\binom{M+k-1}{M}=\binom{M+k-1}{U}\leq\bigl(\frac{e(M+k-1)}{U}\bigr)^U\leq(2e)^U$.
\end{proof}

\paragraph{Iterative rounding.} 
Our algorithms are based on rounding fractional solutions to LP-relaxations that we
formulate for the stochastic min-norm versions of load balancing and spanning trees. 
The rounding algorithm needs to ensure that the various budget constraints
that we include in our LP to control quantities associated with expected $\topl$ norms (for
multiple indices $\ell$) are roughly preserved. The main technical tool involved in
achieving this is {\em iterative rounding}, as expressed by the following theorem, which
follows from a result in Linhares et al.~\cite{LinharesOSZ20}. 

\begin{theorem}[Follows from Corollary 11 in~\cite{LinharesOSZ20}] 
\label{thm:budgetedmatrounding} \label{iterrndthm}
Let $\M=(U,\I)$ be a matroid with rank function
$\rank$ (specified via a value oracle), 
and $\Q:=\{z\in\Rp^U:
z(U)=\rank(U),\ \ z(\indset)\leq\rank(\indset)\ \forall\indset\subseteq\U\}$ be its base
polytope. 
Let $\bz$ be a feasible solution to the following multi-budgeted matroid LP: 
\begin{equation} 
\tag{Budg-LP} \label{eq:iterlp}
\min \ \ c^T z \qquad \text{s.t.} \qquad 
A z \leq b, \quad z\in\Q.
\end{equation}
where $A \in \Rp^{k \times U}$. 
Suppose that for any $e \in \supp(\bz)$, we have $\sum_{i \in [k]} A_{i,e} \leq \viol$
for some parameter $\viol$. 
We can round $\bz$ in polynomial time to obtain a basis $B$ of $\M$ 
satisfying: 
(a) $c(B) \leq c^T \bz$; 
(b) $A\chi^B \leq b + \viol \mathbf{1}$, where $\mathbf{1}$ is the vector of all
$1$s; and
(c) $B$ is contained in the support of $\bz$.
\end{theorem}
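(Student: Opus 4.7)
The plan is to apply Corollary 11 of~\cite{LinharesOSZ20} essentially as a black box, after casting our LP in their framework. Their corollary treats multi-budgeted matroid LPs of precisely the form~(\ref{eq:iterlp})---a matroid base polytope intersected with side constraints $Az \leq b$---via an iterative-relaxation scheme, producing an integer basis $B$ of $\M$ contained in $\supp(\bz)$ with $c(B) \leq c^T \bz$ and with an additive violation in each side constraint that is controlled by the largest ``piece'' any surviving support element contributes at termination.

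The first step is to set up the correspondence: matroid $\M = (U,\I)$, base polytope $\Q$, cost vector $c$, and side constraints $Az \leq b$. The second step is to translate the violation bound of~\cite{LinharesOSZ20} into the form $A\chi^B \leq b + \viol\mathbf{1}$. The iterative rounding terminates when the remaining fractional structure is small enough that rounding up all residual elements can be charged cleanly: a fractional element $e$ that is rounded up contributes at most $A_{i,e}$ to constraint $i$. Invoking our hypothesis $\sum_{i \in [k]} A_{i,e} \leq \viol$ for every $e \in \supp(\bz)$, we have $A_{i,e} \leq \viol$ coordinatewise, so the per-constraint violation is at most $\viol$, yielding~(b). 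Properties~(a) and~(c) are standard invariants maintained throughout iterative rounding: the objective is non-increasing by choice of perturbation direction (hence (a)), and any element with $z_e = 0$ is immediately dropped from the ground set, so the support is monotonically shrinking and no element outside $\supp(\bz)$ is ever introduced into the final basis (hence (c)).

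I do not anticipate a substantive obstacle. The only delicate point is matching the precise form of the violation bound in~\cite{LinharesOSZ20}---which may be phrased, e.g., in terms of column sums $\sum_i A_{i,e}$ rather than per-entry bounds $A_{i,e}$---to our $\viol$ parameter. If their statement is already the column-sum version, the translation is immediate; if it is the per-entry version, the coordinatewise bound $A_{i,e} \leq \viol$ derived above closes the gap. Either way the argument is a short citation-plus-translation, and the conclusion follows.
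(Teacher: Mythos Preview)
Your plan is correct and matches the paper's approach: restrict to the support of $\bz$ (the paper does this explicitly upfront, which is how it secures (c) rather than appealing to an invariant), then invoke Corollary~11 of~\cite{LinharesOSZ20} and translate the violation bound. The one point you flag as delicate is exactly where the paper's work lies, and neither of your two guessed phrasings is quite it: Corollary~11 is stated with free parameters $p_1,\ldots,p_k$ required to satisfy $\sum_{i} A'_{i,e}/p_i \leq 1$ for all $e$ (where $A'$ is $A$ with each row scaled so its max entry is $1$), and guarantees violation $p_i\cdot\max_e A_{i,e}$ in row $i$; the paper sets $p_i = \viol/\max_e A_{i,e}$, so the column-sum hypothesis $\sum_i A_{i,e}\leq\viol$ is precisely what verifies the $p_i$-constraint, and each row's violation comes out to exactly $\viol$.
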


\begin{proof}
We first consider a new instance where we move to the support of $\bz$, which will
automatically take care of (c). 
More precisely, let $J=\supp(\bz)$. For a vector $v\in\R^U$, let $v_J:=(v_e)_{e\in J}$
denote the restriction of $v$ to the coordinates in $J$. Let $\M_J=(J,\I_J)$ with rank
function $\rank_J$, be the restriction of the matroid $\M$ to $J$. 
Let $A_J$ be $A$ restricted to the columns corresponding to $J$. 
Note that $\rank_J(J)=\rank(J)\geq\bz(J)=\bz(U)=\rank(U)$; so we have 
$\rank(J)=\bz(J)=\rank(U)$, and therefore a basis of $\M_J$ is also a basis of $\M$. 
It is easy to see now that $\bz_J$ is a feasible solution to \eqref{eq:iterlp} where we
replace $c$ and $A$ by $c_J$ and $A_J$ respectively, and replace $\Q$ by the base polytope
of $\M_J$.
It suffices to show how to round $\bz_J$ to obtain a basis $B$ of $\M_J$ satisfying
properties (a) and  (b) (i.e., $c_J(B)\leq c_J^T\bz_J$ and 
$A_J\chi^B\leq b+\viol\mathbf{1}$), since, by construction, we have $B\sse J$. 
Note that each column of $A_J$ sums to at most $\viol$. 

We now describe how Corollary 11 in~\cite{LinharesOSZ20} yields the desired rounding of $\bz_J$.
This result pertains to a more general setting, where we have a fractional point in the
the base polytope of one matroid that satisfies matroid-independence constraints for some
other matroids, and some knapsack constraints. Translating Corollary 11 to our setting
above, where we have only one matroid, yields the following:

\begin{enumerate}[label={}]
\item {\bf Corollary 11 in~\cite{LinharesOSZ20} in our setting:}\ 
Let $A'$ be obtained from $A_J$ by scaling each row so that $\max_{e\in J}A'_{ie}=1$ for
all $i\in[k]$. Let $p_1,\ldots,p_k\geq 0$ be such that
$\sum_{i\in[k]}\frac{A'_{ie}}{p_i}\leq 1$ for all $e\in J$.%
\footnote{In~\cite{LinharesOSZ20}, the $p_i$s are stated to be positive integers,
but the proof of Corollary 11 shows that this is not actually needed.}
Then, we can round $\bz_J$ to
obtain a basis $B$ of $\M_J$ such that $c_J(B)\leq c_J^T\bz_J$, and 
$\sum_{e\in B}A_{ie}\leq b_i+p_i\max_{e\in J}A_{ie}$ for all $i\in[k]$.
\end{enumerate}

Setting $p_i=\frac{\viol}{\max_{e\in J}A_{ie}}$ for all $i\in[k]$ satisfies the conditions
above, since $\sum_{i\in[k]}\frac{A'_{ie}}{p_i}=\sum_{i\in[k]}A_{ie}/\viol\leq 1$ for all
$e\in J$, and yields the desired rounding of $\bz_J$.
\end{proof}

\subsection{Bounding \boldmath $\E{\sumrv^{\geq\qt}}$ when $\sumrv$ is the sum of
  independent random variables}
\label{subsec:expectedmax} \label{largevalbnd} 
In stochastic load balancing, each coordinate of the random cost vector is a ``composite''
random variable that is the 
sum of independent random variables. In such settings, where we 
do not have direct access to the distributions of individual coordinates, we 
develop a proxy function for estimating $\E{\topl(Y)}$ 
that involves computing $\E{Y_i^{>\qt}}$ for $i\in[m]$. We discuss here to compute
$\E{\sumrv^{>\qt}}$ for a composite random variable $\sumrv=\sum_{j\in[k]}\scrv_j$, where
$\scrv_1,\ldots,\scrv_k\geq 0$ are independent random variables. 
By gleaning suitable insights from~\cite{KleinbergRT00,GuptaKNS18}, we show how
to estimate this quantity given access to the distributions of the $\scrv_j$ random
variables.   
First, observe that for any $j \in [k]$, if $\scrv_j\geq\qt$ then we also have
$\sumrv\geq\qt$. So after some simple manipulations, we can show that 
\begin{equation}
\E{\sumrv^{\geq\qt}} = 
\Theta\biggl(\sum_{j \in [k]} \E{\scrv_j^{\geq\qt}} + 
\Exp\Bigl[\bigl(\sum_{j \in [k]} Z_j^{<\qt}\bigr)^{\geq\qt}\Bigr]\biggr).
\label{expgeqbnd}
\end{equation}
The first term above is easily computed from the $\scrv_j$-distributions. 
So to get a
handle on $\E{\sumrv^{\geq\qt}}$ it suffices 
control the second term, which
measures the contribution from the sum of the \emph{truncated} random variables
$\scrv_j^{<\qt}$ to $\E{\sumrv^{\geq\qt}}$. This requires a nuanced 
notion called \emph{effective size}, a concept that originated in queuing
theory~\cite{Hui88}. 

\begin{definition} \label{defn:effectivesize}
For a nonnegative random variable $\scrv$ and a parameter $\lambda > 1$, the
$\lambda$-effective size $\beta_\lambda(\scrv)$ of $\scrv$ is defined as 
$\log_\lambda\E{\lambda^{\scrv}}$. We also define $\beta_1(\scrv) := \E{\scrv}$. 
\end{definition}

The usefulness of effective sizes follows from Lemmas~\ref{smallbeta} and \ref{largebeta},
which indicate that $\E{\bigl(\sum_j\scrv_j^{<\qt}\bigr)^{\geq\qt}}$ 
can be estimated
by controlling the effective sizes of 
{some random variables related to the $\scrv_j^{<\qt}$
random variables.}  

\begin{lemma} \label{lem:smallbetasmalltail} \label{smallbeta}
Let $\scrv$ be a nonnegative random variable and $\lambda \geq 1$. 
If $\beta_\lambda(\scrv) \leq b$, then for any $c \geq 0$, we have 
$\prob{\scrv \geq b+c} \leq \lambda^{-c}$. 
Furthermore, if $\lambda \geq 2$, then $\E{\scrv^{\geq \beta_\lambda(\scrv)+1}}
\leq \frac{\beta_\lambda(\scrv)+3}{\lambda}$. 
\end{lemma}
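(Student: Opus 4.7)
\textbf{Proof plan for Lemma~\ref{lem:smallbetasmalltail}.}

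For the tail bound, the plan is a one-line Markov/Chernoff-style argument applied to the random variable $\lambda^{\scrv}$. Since $\scrv \mapsto \lambda^{\scrv}$ is monotone nondecreasing (as $\lambda \geq 1$), the event $\{\scrv \geq b+c\}$ is the same as $\{\lambda^{\scrv} \geq \lambda^{b+c}\}$. Markov's inequality then gives
\[
\prob{\scrv \geq b+c} \leq \frac{\E{\lambda^{\scrv}}}{\lambda^{b+c}} = \lambda^{\beta_\lambda(\scrv) - b - c} \leq \lambda^{-c},
\]
where the equality uses the definition $\E{\lambda^{\scrv}} = \lambda^{\beta_\lambda(\scrv)}$, and the last inequality uses $\beta_\lambda(\scrv) \leq b$.

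For the second part, let $b := \beta_\lambda(\scrv)$. The plan is to decompose $\E{\scrv^{\geq b+1}}$ into a ``base'' contribution plus a tail integral, then bound each piece using the tail bound just established. Using the layer-cake formula applied to the nonnegative random variable $\scrv \cdot \bon_{\scrv \geq b+1}$, one obtains
\[
\E{\scrv^{\geq b+1}} = (b+1)\cdot\prob{\scrv \geq b+1} + \int_{b+1}^{\infty} \prob{\scrv > t}\, dt.
\]
By the first part applied with $c=1$, the first term is at most $(b+1)/\lambda$. For the integral, I would change variables to $c = t-b$ and apply the tail bound (using $\prob{\scrv > t} \leq \prob{\scrv \geq t}$), giving
\[
\int_{b+1}^\infty \prob{\scrv > t}\, dt \leq \int_{1}^{\infty} \lambda^{-c}\, dc = \frac{1}{\lambda \ln \lambda}.
\]
Combining the two pieces yields $\E{\scrv^{\geq b+1}} \leq \frac{b+1}{\lambda} + \frac{1}{\lambda \ln \lambda}$.

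The only remaining thing is to convert the $1/\ln \lambda$ factor into the clean bound stated in the lemma; this is where the hypothesis $\lambda \geq 2$ enters. Since $\ln \lambda \geq \ln 2 > 1/2$, we have $1/\ln \lambda < 2$, so $\frac{1}{\lambda \ln \lambda} \leq \frac{2}{\lambda}$, and therefore
\[
\E{\scrv^{\geq b+1}} \leq \frac{b+1}{\lambda} + \frac{2}{\lambda} = \frac{b+3}{\lambda} = \frac{\beta_\lambda(\scrv)+3}{\lambda},
\]
as required. There is no genuine obstacle here; the only subtlety is being careful about the $>$ versus $\geq$ in the layer-cake decomposition, which is harmless because the tail bound holds with $\geq$ and dominates $>$.
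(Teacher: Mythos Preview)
Your proof is correct and follows essentially the same route as the paper's: Markov's inequality on $\lambda^{\scrv}$ for the tail bound, then the layer-cake decomposition $\E{\scrv^{\geq b+1}} = (b+1)\prob{\scrv \geq b+1} + \int_{b+1}^\infty \prob{\scrv > t}\,dt$ with the same change of variables and the same use of $\ln\lambda \geq \ln 2 > 1/2$. The only nitpick is that for $\lambda=1$ the map $\scrv \mapsto \lambda^{\scrv}$ is constant, so the events are not literally ``the same''; the paper handles this by noting the $\lambda=1$ case is trivial (the bound is $1$), and you could say the same or replace ``the same as'' with the containment $\{\scrv \geq b+c\} \subseteq \{\lambda^{\scrv} \geq \lambda^{b+c}\}$.
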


\begin{proof}
The first part of the lemma trivially holds for $\lambda = 1$ so assume that $\lambda >
1$. From the above definition, $\lambda^{\beta_\lambda(\scrv)} = \E{\lambda^{\scrv}}$. By Markov's
inequality, 
\[
\prob{\scrv \geq b+c} = \prob{\lambda^{\scrv} \geq \lambda^{b+c}} 
\leq \frac{\E{\lambda^{\scrv}}}{\lambda^{b+c}} \leq \lambda^{-c}\,.
\]
For the second part, we have 
$\E{\scrv^{\geq\beta_\ld(\scrv)+1}}=(\beta_\ld(\scrv)+1)\Pr[\scrv\geq\beta_\ld(\scrv)+1]
+\int_{\beta_\ld(\scrv)+1}^\infty\Pr[\scrv>\sg]d\sg$.
Using the first part, the first term is at most $\frac{\beta_\ld(\scrv)+1}{\ld}$ 
and the second term is
$\int_1^\infty\Pr[\scrv>\beta_\ld(\scrv)+c]dc\leq\int_1^\infty\ld^{-c}dc
\leq\frac{1}{\ld\ln\ld}\leq\frac{2}{\ld}$, where 
the last inequality is because $\ld\geq 2$.
\end{proof}

Noting that $\beta_\ld(\scrv+\scrv')=\beta_\ld(\scrv)+\beta_\ld(\scrv')$ for independent
random variables $\scrv$ and $\scrv'$, we can infer
from Lemma~\ref{smallbeta} that if 
$\sum_{j\in[k]}\beta_\ld(\scrv_j^{<\qt}/\qt)=O(1)$, then
$\E{\bigl(\sum_{j\in[k]}\scrv_j^{<\qt}/\qt)^{>\Omega(1)}}\leq O(1)/\ld$, or equivalently
$\E{\bigl(\sum_{j\in[k]}\scrv_j^{<\qt})^{>\Omega(\qt)}}\leq O(\qt)/\ld$.

A key contribution of Kleinberg et al.~\cite{KleinbergRT00} is encapsulated 
by Lemma~\ref{largebeta} below, which is obtained by combining various
results from their work, and complements the above upper bound.
It lower bounds $\E{\sumrv^{\geq\qt}}$, 
where $\sumrv$ is the sum of independent, {\em bounded} random variables (e.g., truncated
random variables), in terms of the sum of the $\ld$-effective sizes
of these random variables.
The proof of this lemma is somewhat long and technical and is deferred to
Section~\ref{append-largebeta}. 

\begin{lemma} \label{lem:largebetalargetail} \label{largebeta}
Let $\sumrv = \sum_{j \in [k]} \scrv_j$, where the $\scrv_j$s are independent
$[0,\qt]$-bounded random variables. Let $\ld\geq 1$ be an integer. 
Then, 
\[
\E{\sumrv^{\geq \qt}} \geq \qt \cdot \frac{\sum_{j \in [k]}
\beta_\lambda\paren{\scrv_j/4\qt} - 6}{4 \ld} 
\]
\end{lemma}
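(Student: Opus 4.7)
The plan is to handle the cases $\ld=1$ and $\ld\geq 2$ separately. For $\ld=1$, the definition gives $\beta_1(\scrv_j/4\qt)=\E{\scrv_j}/4\qt$, so $\sum_j\beta_1(\scrv_j/4\qt)=\E{\sumrv}/4\qt$ and the claim reduces to $\E{\sumrv^{\geq\qt}}\geq(\E{\sumrv}-24\qt)/16$. This follows immediately from $\E{\sumrv^{\geq\qt}}=\E{\sumrv}-\E{\sumrv^{<\qt}}\geq\E{\sumrv}-\qt$ (since $\sumrv^{<\qt}\in[0,\qt)$) together with simple algebra (in the trivial case $\E{\sumrv}\leq 24\qt$, the right-hand side of the claim is non-positive). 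For $\ld\geq 2$, we decompose each $\scrv_j$ into a ``small'' part $\scrv_j^{\sml}:=\scrv_j\bon_{\scrv_j<\qt/2}$ and a ``large'' part $\scrv_j^{\lar}:=\scrv_j\bon_{\scrv_j\geq\qt/2}$, and argue separately according to which dominates the contribution to $B:=\sum_j\beta_\ld(\scrv_j/4\qt)$. The key enabling identity---obtained because $\scrv_j^{\sml}$ and $\scrv_j^{\lar}$ are never simultaneously nonzero---is
\[
\ld^{\beta_\ld(\scrv_j/4\qt)}-1
\;=\;\bigl(\ld^{\beta_\ld(\scrv_j^{\sml}/4\qt)}-1\bigr)+\bigl(\ld^{\beta_\ld(\scrv_j^{\lar}/4\qt)}-1\bigr);
\]
summing over $j$ and applying $\ld^x-1\geq x\ln\ld$, at least one of $B^{\sml}:=\sum_j\beta_\ld(\scrv_j^{\sml}/4\qt)$ or $B^{\lar}:=\sum_j\beta_\ld(\scrv_j^{\lar}/4\qt)$ must be $\Omega(B)$, which gives two sub-cases.

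In the ``small-dominant'' sub-case $B^{\sml}=\Omega(B)$, we use $\scrv_j^{\sml}/4\qt\in[0,1/8]$ and the convexity bound $\ld^y\leq 1+8y(\ld^{1/8}-1)$ for $y\in[0,1/8]$ to obtain $\beta_\ld(\scrv_j^{\sml}/4\qt)\leq C_\ld\cdot\E{\scrv_j^{\sml}}/\qt$ for a factor $C_\ld$ depending only on $\ld$. Summing yields a lower bound on $\E{W}$, where $W:=\sum_j\scrv_j^{\sml}$, and then the chain $\E{\sumrv^{\geq\qt}}\geq\E{W^{\geq\qt}}\geq\E{W}-\qt$ (using $W\leq\sumrv$ and $W^{<\qt}\in[0,\qt)$) completes this sub-case. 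In the ``large-dominant'' sub-case $B^{\lar}=\Omega(B)$, letting $p_j:=\pr{\scrv_j\geq\qt/2}$, the estimate $\ld^{\beta_\ld(\scrv_j^{\lar}/4\qt)}\leq 1+p_j(\ld^{1/4}-1)$ gives a lower bound on $P:=\sum_j p_j$. Since any two of the events $\{\scrv_j\geq\qt/2\}$ already force $\sumrv\geq\qt$, we obtain $\E{\sumrv^{\geq\qt}}\geq(\qt/2)\,\E{N^{\lar}\bon_{N^{\lar}\geq 2}}$, where $N^{\lar}:=\sum_j\bon_{\scrv_j\geq\qt/2}$; this in turn is lower-bounded by $(\qt/2)(P-1)^+$ for large $P$ and by a Bonferroni-type $\Omega(\qt P^2)$ estimate for small $P$.

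The main technical obstacle we anticipate is calibrating the thresholds and constants so that the bounds from the two sub-cases combine to yield the claimed $\qt(B-6)/(4\ld)$ uniformly in $\ld$: the convexity-based and threshold-based estimates each introduce multiplicative losses that grow with $\ld$, so fitting everything into the $1/\ld$ slack may require a finer multi-scale threshold decomposition, or a more careful merger of the ``expected-value'' bound (from the small-dominant sub-case) with the ``at-least-two-events'' probability bound (from the large-dominant sub-case).
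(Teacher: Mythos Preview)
Your $\ld=1$ case and your large-dominant sub-case are fine, but the small-dominant sub-case has a genuine gap that your closing paragraph correctly anticipates but does not close. Concretely: take all $\scrv_j$ to be Bernoulli of size $s=\qt/2-\epsilon$ (so they land in your ``small'' part and $B^{\lar}=0$) with a very small probability $p$, and choose $k,p$ so that $B:=\sum_j\beta_\ld(\scrv_j/4\qt)\approx 7$. For large $\ld$ (e.g.\ $\ld\sim 10^{20}$) your convexity constant $C_\ld=\tfrac{2(\ld^{1/8}-1)}{\ln\ld}$ exceeds $10$, so $\E{W}=\qt B/C_\ld<\qt$ and the chain $\E{\sumrv^{\geq\qt}}\geq\E{W}-\qt$ is negative and vacuous. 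The large-dominant branch does not rescue you here because $B^{\lar}=0$. The true value of $\E{\sumrv^{\geq\qt}}$ in this example is $\Theta(\qt)$ (it requires three events, and $\E{N}\approx 1$), far exceeding the target $\qt/(4\ld)$; the point is only that your argument does not capture it. The same phenomenon recurs at every scale $s=\qt/r$: a single threshold at $\qt/2$ cannot simultaneously control all sizes with $\ld$-independent constants.

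The paper's proof is precisely the ``finer multi-scale threshold decomposition'' you flag as possibly necessary, executed via the Kleinberg--Rabani--Tardos machinery. It first rounds each $\scrv_j$ to a geometric random variable and then splits it into independent Bernoulli trials (their Bernoulli decomposition lemma), reducing to Bernoulli $(q_\ell,s_\ell)$ with $s_\ell\in\{2^{-r}\}$; it works with the \emph{modified} effective size $\beta'_\ld(q,s)=\min(s,sq\ld^s)$, which dominates $\beta_\ld$. The ``small'' trials (those with $\ld^{s}\le 2$) are handled by exactly your expectation bound. The ``large'' trials are not handled by a single at-least-two-events estimate; instead, for each size $s$, trials are bundled so each bundle has total probability $\approx 2\ld^{-s}$ and is replaced (via a Bonferroni-type claim) by a single Bernoulli of type $(\ld^{-s},s)$, and then these replacement Bernoullis are \emph{packed across sizes} into groups whose sizes sum to exactly $1$. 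Each such group fires fully with probability $\prod_s\ld^{-s}=\ld^{-1}$, yielding one unit toward $\E{\sumrv^{\geq\qt}}$; the number of groups is within a constant of the total modified effective size. This packing step is the idea your proposal is missing, and it is what converts units of effective size into $1/\ld$-units of $\E{\sumrv^{\geq\qt}}$ without any $\ld$-dependent loss.
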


\section{Proxy functions and statistics for controlling \boldmath $\E{\topl(Y)}$ and $\E{f(Y)}$} 
\label{sec:proxy} \label{proxy}
In this section, we identify various statistics of a random vector $Y$ following a
product distribution on $\R^m$, which lead to two proxy functions that provide a 
convenient handle on $\E{\topl(Y)}$ (within $O(1)$ factors). 

The first proxy function uses $\sum_{i \in [m]} \E{Y_i^{\geq\qt}}$ as a
means to control $\E{\topl(Y)}$. Roughly speaking, if $\qt$ is such that 
$\sum_{i\in[m]}\E{Y_i^{\geq\qt}}=\ell\qt$, then we argue that $\ell\qt$ is a good proxy for
$\E{\topl(Y)}$; 
Lemmas~\ref{lem:upperboundtopl} and~\ref{lem:lowerboundtopl} make this statement precise. 
This proxy is helpful in settings where each $Y_i$ is a sum of independent random
variables, 
such as stochastic min-norm load balancing (Section~\ref{sec:loadbal}), wherein $Y_i$
denotes the load on machine $i$. 
The second proxy function is based on a statistic that we denote $\problth(Y)$, 
which aims to capture
the (expected) $\ell$-th largest entry of $Y$. 
This statistic is defined using the {\em expected histogram curve} 
$\{\E{N^{>\qt}(Y)}\}_{\qt\in\Rp}$ (see Fig.~\ref{histogram}), where $N^{>\qt}(Y)$ is the
number of coordinates of $Y$ that are larger than $\qt$, and we show that this 
leads to an effective proxy $\eproxl(Y)$ for $\E{\topl(Y)}$ (Theorem~\ref{eproxlthm}). 
Collectively, these statistics and the $\eproxl(Y)$ proxies (for all $\ell\in[m]$) 
are 
instrumental in showing that the expected histogram curve 
controls $\E{f(Y)}$, and hence 
that $\E{f(Y)}$ can be bounded in terms of the expected $\topl$ norms of $Y$. 
This follows because the $N^{>\problth(Y)}(Y)$ random variables enjoy nice concentration
properties, 
and their deviations from the mean govern the tail probability of $f(Y)$ (see
Lemma~\ref{nbound}). 
Also, these statistics are quite convenient to work with in designing algorithms for problems   
where the $Y_i$s are ``atomic'' random variables with known distributions, such as
stochastic min-norm spanning trees (see Section~\ref{sec:tree}).  

\paragraph{Proxy function based on \boldmath $\E{Y_i^{\geq\qt}}$ statistics.}

\begin{lemma} \label{lem:upperboundtopl} \label{lbproxupper}
For any $\qt\geq 0$ if $\sum_{i \in [m]} \E{Y_i^{\geq\qt}} \leq \ell \qt$ holds, then $\E{\topl(Y)} \leq 2 \ell \qt$. 
\end{lemma}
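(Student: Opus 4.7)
The plan is to establish the deterministic pointwise inequality
\[
\topl(Y) \leq \ell \qt + \sum_{i \in [m]} (Y_i - \qt)^+,
\]
and then convert this into an expectation bound using the fact that $(Y_i - \qt)^+ \leq Y_i^{\geq \qt}$ holds pointwise for every realization.

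To see the pointwise bound, I would decompose each coordinate as $Y_i = \min\{Y_i,\qt\} + (Y_i - \qt)^+$. For any subset $S \subseteq [m]$ of size $\ell$ (in particular, the set of indices of the $\ell$ largest coordinates of $Y$),
\[
\sum_{i \in S} Y_i = \sum_{i \in S} \min\{Y_i,\qt\} + \sum_{i \in S} (Y_i - \qt)^+ \leq \ell \qt + \sum_{i \in [m]} (Y_i - \qt)^+,
\]
where the final step extends the second sum from $S$ to all of $[m]$ using nonnegativity of $(Y_i - \qt)^+$. Specializing $S$ to the top-$\ell$ coordinates gives the claimed pointwise bound on $\topl(Y)$.

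Next, observe that $(Y_i - \qt)^+ \leq Y_i \cdot \mathbf{1}_{Y_i \geq \qt} = Y_i^{\geq \qt}$: when $Y_i < \qt$, both sides are $0$; when $Y_i \geq \qt$, the left side is $Y_i - \qt \leq Y_i$, which equals the right side. Taking expectations in the pointwise bound and applying this inequality together with the hypothesis $\sum_{i \in [m]} \E{Y_i^{\geq\qt}} \leq \ell \qt$ yields
\[
\E{\topl(Y)} \leq \ell \qt + \sum_{i \in [m]} \E{(Y_i - \qt)^+} \leq \ell \qt + \sum_{i \in [m]} \E{Y_i^{\geq \qt}} \leq 2 \ell \qt.
\]

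There is no real obstacle here; the whole argument is a direct two-line estimation once one spots the decomposition $Y_i = \min\{Y_i,\qt\} + (Y_i - \qt)^+$. The only minor subtlety is choosing to bound $(Y_i - \qt)^+$ by $Y_i^{\geq \qt}$ rather than by $Y_i - \qt$ directly (which is not valid, since $Y_i - \qt$ can be negative), and this accounts for the factor-$2$ slack in the final bound instead of a tighter $\ell \qt$.
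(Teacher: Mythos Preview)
Your proof is correct and follows essentially the same approach as the paper: split each coordinate at the threshold $\qt$, bound the below-threshold contribution to $\topl$ by $\ell\qt$, and bound the above-threshold contribution by $\sum_i \E{Y_i^{\geq\qt}}$. The only cosmetic difference is that the paper uses the decomposition $Y_i = Y_i^{<\qt} + Y_i^{\geq\qt}$ together with the triangle inequality for $\topl$, whereas you use $Y_i = \min\{Y_i,\qt\} + (Y_i-\qt)^+$ and a direct argument on the top-$\ell$ index set, followed by the pointwise bound $(Y_i-\qt)^+ \leq Y_i^{\geq\qt}$.
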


\begin{proof}
We have $Y_i=Y_i^{<\qt}+Y_i^{\geq\qt}$ for all $i\in[m]$. So
\[
\E{\topl(Y)} \leq \E{\topl(Y_1^{<\qt},\dots,Y_m^{<\qt})} +
\E{\topl(Y_1^{\geq\qt},\dots,Y_m^{\geq\qt})} 
\leq \ell\qt + \sum_{i \in [m]} \E{Y_i^{\geq\qt}} \leq 2 \ell\qt \ .
\]
The first inequality is due to the triangle inequality; 
\nolinebreak\mbox{the second is 
because each $Y_i^{<\qt}$ is at most $\qt$.} 
\end{proof}

\begin{lemma} \label{lem:lowerboundtopl} \label{lbproxlower}
For any $\qt\geq 0$ if $\sum_{i \in [m]} \E{Y_i^{\geq\qt}}>\ell\qt$ holds, then
$\E{\topl(Y)}>\ell\qt/2$. 
\end{lemma}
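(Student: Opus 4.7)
The plan is to establish the bound via a deterministic averaging inequality for $\topl$ combined with the coordinate-wise independence of $Y$. First I apply monotonicity to reduce to $V := Y^{\geq\qt}$, whose coordinates are independent and satisfy $V_i \in \{0\}\cup [\qt,\infty)$. Setting $S := \sum_i V_i$ and $K := |\{i : V_i > 0\}|$, the hypothesis becomes $a := \E{S} > \ell\qt$ and the goal is $\E{\topl(V)} > \ell\qt/2$.

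The key deterministic step is the pointwise bound $\topl(V) \geq \ell S/\max(\ell, K)$: if $K \leq \ell$ then all nonzero entries of $V$ lie in the top $\ell$, giving $\topl(V) = S$; if $K > \ell$ then the top $\ell$ of the $K$ positive entries account for at least an $\ell/K$-fraction of $S$ by averaging. Taking expectations, it suffices to lower bound $\E{V_i/\max(\ell,K)}$ for each $i$. Writing $K = \bon_{V_i>0} + K_{-i}$ with $K_{-i} := \sum_{j\neq i}\bon_{V_j>0}$ independent of $V_i$, and noting that $V_i/\max(\ell,K)$ vanishes on $\{V_i=0\}$, we may replace $K$ by $1+K_{-i}$ in the denominator without changing the value, so independence yields $\E{V_i/\max(\ell,K)} = \E{V_i}\cdot\E{1/\max(\ell,1+K_{-i})}$.

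The main obstacle is lower bounding $\E{1/\max(\ell,1+K_{-i})}$: applying Jensen directly to $1/\max(\ell,x+1)$ goes the wrong way, since this function is \emph{concave} in $x$. The workaround is to first use the elementary inequality $\max(\ell, 1+K_{-i}) \leq \ell + K_{-i}$ (valid for $\ell \geq 1$), and only then apply Jensen to the convex function $x \mapsto 1/(\ell + x)$, obtaining $\E{1/\max(\ell, 1+K_{-i})} \geq 1/(\ell + \E{K_{-i}}) \geq 1/(\ell + \E{K})$. Summing over $i$ gives $\E{\topl(V)} \geq \ell a/(\ell + \E{K})$.

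To finish, since $V_i \geq \qt\,\bon_{V_i>0}$ we have $\E{K} \leq a/\qt$, hence $\E{\topl(V)} \geq \ell a\qt/(\ell\qt + a)$. The strict hypothesis $a > \ell\qt$ implies $2a > \ell\qt + a$, so $\ell a\qt/(\ell\qt + a) > \ell\qt/2$, and the conclusion $\E{\topl(Y)} \geq \E{\topl(V)} > \ell\qt/2$ follows.
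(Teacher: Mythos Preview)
Your proof is correct (with the caveat that the step $\E{K}\leq a/\qt$ needs $\qt>0$; the case $\qt=0$ is trivial since then the conclusion is just $\E{\topl(Y)}>0$, which follows immediately from $\sum_i\E{Y_i}>0$). The argument is genuinely different from the paper's.

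The paper proves the lemma by induction on $\ell+m$: it conditions on whether the last coordinate $Y_m$ exceeds $\qt$, and applies the induction hypothesis to $\topl[\ell-1]$ on $(Y_1,\ldots,Y_{m-1})$ when $Y_m\geq\qt$ and to $\topl$ on $(Y_1,\ldots,Y_{m-1})$ when $Y_m<\qt$, with carefully chosen auxiliary thresholds $\qt_1,\qt_2$. Your approach instead starts from the pointwise averaging inequality $\topl(V)\geq \ell S/\max(\ell,K)$, then exploits independence to factor $\E{V_i/\max(\ell,1+K_{-i})}$ as a product, and finishes with the convexity bound $\max(\ell,1+x)\leq \ell+x$ combined with Jensen. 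Your route is shorter and more conceptual: it isolates exactly where independence is used (the factoring step) and where the constant $2$ comes from (the final inequality $a>\ell\qt\Rightarrow \ell a\qt/(\ell\qt+a)>\ell\qt/2$), whereas the paper's induction handles several edge cases separately and the constant emerges only at the end of the chain of inequalities. On the other hand, the inductive proof is arguably more elementary in that it avoids Jensen and any appeal to convexity.
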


\begin{proof}
The proof is by induction on $\ell+m$. The base case is when $\ell = m = 1$, where 
clearly  
$\E{\topl[1](Y_1)} = \E{Y_1} \geq \E{Y_1^{\geq\qt}}>\qt$.
Another base case that we consider is when $m\leq\ell$: here the $\topl$-norm is simply
the sum of all the $Y_i$s and thus, 
\[
\E{\topl\paren{Y_1,\dots,Y_m}} = \E{Y_1+\dots+Y_m} 
\geq \sum_{i \in [m]} \E{Y_i^{\geq\qt}} > \ell\qt \boldsymbol{>} \ell\qt/2 \ .
\]
Now consider the general case with $m \geq \ell+1$. 
Our induction strategy will be the following.
If $Y_m$ is at least $\qt$, then we include $Y_m$'s contribution
towards the $\topl$-norm and collect the expected $\topl[\ell-1]$-norm from the remaining
coordinates. Otherwise, we simply collect the expected $\topl$-norm from the remaining
coordinates. We use $Y_{-m}$ to denote the vector $(Y_1,\ldots,Y_{m-1})$. 

We handle some easy cases separately.
The case $\E{Y_m^{\geq\qt}}>\ell\qt$ 
causes a hindrance to applying the induction hypothesis. 
But this case is quite easy: we have
\[
\E{\topl(Y_1,\dots,Y_m)} \geq \E{Y_m} \geq \E{Y_m^{\geq\qt}} > \ell\qt \boldsymbol{>} \ell\qt/2 \ .
\]
So assume that $\E{Y_m^{\geq\qt}} \leq\ell\qt$. Let $q := \prob{Y_m \geq \qt}$. 
Another easy case that we handle separately is when $q=0$. Here, we have 
$\sum_{i\in[m-1]}\E{Y_i^{\geq\qt}}=\sum_{i\in[m]}\E{Y_i^{\geq\qt}}$, and we have
$\E{\topl(Y_1,\ldots,Y_m)}\geq\E{\topl(Y_{-m})}>\ell\qt/2$, where the last inequality is
due to the induction hypothesis.

So we are left with the case where $m\geq\ell+1$, $\E{Y_m^{\geq\qt}} \leq\ell\qt$ and 
$q=\prob{Y_m \geq \qt}>0$.  
Let $s :=\E{Y_m | Y_m \geq \qt}$, which is well defined, and is at least $\qt$. 
Note that $qs=\E{Y_m^{\geq\qt}}$.
We define two thresholds $\qt_1,\qt_2 \in (0,\qt]$
to apply the induction hypothesis to smaller cases. 
\[
\qt_1 := \frac{\ell\qt - \E{Y_m^{\geq\qt}}}{\ell} \gaptwo \text{ and } \gaptwo \qt_2 := \Min{\qt,\frac{\ell\qt-\E{Y_m^{\geq\qt}}}{\ell-1}}
\]
Noting that $\E{Y_i^{\geq t}}$ is a non-increasing function of $t$, observe that:
\begin{enumerate}[label=(C\arabic*), topsep=0.25ex, itemsep=0ex, leftmargin=*]
\item \label{indn1} $\sum_{i \in [m-1]} \E{Y_i^{\geq \qt_2}} > (\ell-1) \qt_2$: 
since $\qt_2 \leq \qt$, we have 
$\sum_{i \in [m-1]} \E{Y_i^{\geq \qt_2}} \geq 
\sum_{i\in [m-1]} \E{Y_i^{\geq\qt}} > \paren{\ell\qt - \E{Y_m^{\geq\qt}}}
\geq(\ell-1)\qt_2$.

\item \label{indn2} $\sum_{i \in [m-1]} \E{Y_i^{\geq \qt_1}} > \ell\qt_1$:
since $\qt_1 \leq \qt$, we have 
$\sum_{i \in [m-1]} \E{Y_i^{\geq \qt_1}} \geq 
\sum_{i \in [m-1]} \E{Y_i^{\geq\qt}} > \ell\qt - \E{Y_m^{\geq\qt}} = \ell\qt_1$.
\end{enumerate}  
We now have the following chain of inequalities.
\begin{alignat*}{2}
\E{\topl\paren{Y_1,\dots,Y_m}} & \geq 
q \paren{s + \E{\topl[\ell-1]\paren{Y_1,\dots,Y_{m-1}}}} + (1&&-q) \E{\topl(Y_1,\dots,Y_{m-1})} \\ 
& \boldsymbol{>} q \paren{ s + \frac{(\ell-1)\qt_2}{2} } + \frac{(1-q) \ell\qt_1}{2} 
\tag{induction hypothesis, using~\ref{indn1} and~\ref{indn2}} \\
& = qs + \frac{q}{2} \cdot \Min{(\ell-1)\qt,\ell\qt-\E{Y_m^{\geq\qt}}} && 
+ \frac{1-q}{2} \paren{\ell\qt - \E{Y_m^{\geq\qt}}} \\
& = \frac{\ell\qt}{2} + \frac{qs}{2}+\frac{q}{2}\cdot\Min{qs-\qt,0}
\qquad \qquad && \text{(since $qs = \E{Y_m^{\geq\qt}}$)} \\
& \geq \frac{\ell\qt}{2} + \frac{qs}{2} - \frac{q\qt}{2} \\
& \geq \ell\qt / 2. 
&& \text{(since $s\geq\qt$)} \qedhere
\end{alignat*}
\end{proof}

\paragraph{Proxy function modeling the $\ell$th largest coordinate.}

Our second proxy function is derived by the simple, but quite useful, observation that
for any $x \in \Rp^m$, we have $\topl(x) = \int_0^{\infty}\min\bigl\{\ell,N^{>\qt}(x)\bigr\} d\qt$,
where $N^{>\qt}(x)$ is the number of coordinates of $x$ that are greater than $\qt$.  
Noting that $\E{\min\{\ell,N^{>\qt}(Y)\}}\leq\min\bigl\{\ell,\E{N^{>\qt}(Y)}\bigr\}$, we
therefore obtain that
$\E{\topl(Y)}\leq\eproxl(Y):=\int_0^\infty\min\{\ell,\E{N^{>\qt}(Y)}\}d\qt$ (see Fig.~\ref{histogram}).
Interestingly, we show that $\eproxl(Y)=O(1)\cdot\E{\topl(Y)}$. 

\begin{figure}[ht!]
\scalebox{0.75}{
\begin{tikzpicture}

\draw[line width=0mm,white,name path=dummyx1] (0,0) -- (7.5,0);
\draw[line width=0mm,white,name path=dummyx2] (7.5,0) -- (9.5,0);
\draw[line width=0mm,white,name path=dummyx3] (0,0) -- (9.5,0);

\draw[->,line width=0.4mm] (0,0) -- (10,0); 
\draw[->,line width=0.4mm] (0,0) -- (0,6); 

\node(xaxis) at (9.8,-0.3) {\scalebox{1.2}{$\qt$}};
\node(yaxis) at (-1.2,4.2) {\scalebox{1.2}{$\E{N^{>\theta}(Y)}$}};
\node(origin) at (-0.2,-0.2) {\scalebox{1.2}{$\mathbf{0}$}};
\node(ym) at (-0.4,5) {\scalebox{1.2}{$m$}};

\node(gammal) at (7,5.25) {\scalebox{1.2}{$\gaml$}};
\node(gammalm1) at (7.15,4.5) {\scalebox{1.2}{$\gaml[\ell-1]$}};
\node(excep) at (8.1,3.85) {$\sum_i \E{\bigl(Y_i-\taul[1]\bigr)^+}$};

\draw[pattern=north east lines, pattern color=black] (6,5) rectangle (6.5,5.5);
\draw[pattern=north west lines, pattern color=black] (6,4.3) rectangle (6.5,4.8);
\draw[pattern=dots, pattern color=black] (6,3.6) rectangle (6.5,4.1);

\draw[domain=0:9.5,smooth,variable=\x,white,line width=0.1mm,name path=histo2] plot ({\x},{min(2.36,5*(exp(-\x/4)))});
\draw[domain=0:9.5,smooth,variable=\x,white,line width=0.1mm,name path=histo3] plot ({\x},{min(1.84,5*(exp(-\x/4)))});
\draw[domain=7.5:9.5,smooth,variable=\x,white,line width=0.1mm,name path=histo4] plot ({\x},{min(0.767,5*(exp(-\x/4)))});

\draw[domain=0:9.5,smooth,variable=\x,blue,line width=0.4mm,name path=histo1] plot ({\x},{5*(exp(-\x/4))});

\node(x1) at (7.5,-0.3) {\scalebox{1.2}{$\taul[1]$}};
\draw[domain=0:1.5,smooth,variable=\y,brown,line width=0.4mm] plot ({7.5},{\y});
\draw[domain=0:8.3,smooth,variable=\x,brown,line width=0.4mm] plot ({\x},{0.767});
\node(y1) at (-0.3,0.767) {\scalebox{1.2}{$1$}};

 
\node(xellm1) at (4.2,-0.3) {\scalebox{1.2}{$\taul[\ell-1]$}};
\draw[domain=0:2.5,smooth,variable=\y,red,line width=0.4mm] plot ({4},{\y});
\draw[domain=0:5.1,smooth,variable=\x,red,line width=0.4mm] plot ({\x},{1.84});
\node(yellm1) at (-0.5,1.84) {\scalebox{1.2}{$\ell-1$}};
 

\node(xell) at (3,-0.3) {\scalebox{1.2}{$\taul$}};
\draw[domain=0:3,smooth,variable=\y,green,line width=0.4mm] plot ({3},{\y});
\draw[domain=0:3.7,smooth,variable=\x,green,line width=0.4mm] plot ({\x},{2.36});
\node(yell) at (-0.3,2.36) {\scalebox{1.2}{$\ell$}};

\node(extra) at (9.9,0.3) {\scalebox{1.2}{$\dots$}};

\tikzfillbetween[of=dummyx3 and histo2]{pattern=my north east lines};
\tikzfillbetween[of=dummyx3 and histo3]{pattern=my north west lines};
\tikzfillbetween[of=dummyx2 and histo4]{pattern=dots};
\end{tikzpicture}
}
\caption{The expected histogram curve, $\{\E{N^{>\qt}(Y)}\}_{\qt\in\Rp}$} \label{histogram}
\end{figure}

\begin{theorem} \label{lem:gammaproxytopl} \label{eproxlthm}
For any $\ell\in[m]$, we have $\E{\topl(Y)}\leq\eproxl(Y)\leq 4\cdot\E{\topl(Y)}$, where
$\eproxl(Y):=\int_0^\infty\min\{\ell,\E{N^{>\qt}(Y)}\}d\qt$.
\end{theorem}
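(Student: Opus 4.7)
The plan is to use the layer-cake identity $\topl(x) = \int_0^\infty \min\{\ell, N^{>\qt}(x)\}\, d\qt$ for nonnegative $x$, which gives $\E{\topl(Y)} = \int_0^\infty \E{\min\{\ell, N^{>\qt}(Y)\}}\, d\qt$ and $\eproxl(Y) = \int_0^\infty \min\{\ell, \mu(\qt)\}\, d\qt$ with $\mu(\qt):=\E{N^{>\qt}(Y)}$. The easy direction $\E{\topl(Y)}\leq\eproxl(Y)$ is immediate: the function $x\mapsto\min\{\ell,x\}$ is concave, so Jensen gives $\E{\min\{\ell, N^{>\qt}(Y)\}}\leq\min\{\ell,\mu(\qt)\}$ for every $\qt$, and integrating finishes the argument.

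For the other direction, I set $\tau := \problth(Y)$, so $\mu(\qt)\geq\ell$ for $\qt<\tau$ and $\mu(\qt)<\ell$ for $\qt\geq\tau$, and decompose $\eproxl(Y) = \ell\tau + \int_\tau^\infty\mu(\qt)\,d\qt$. I will prove each piece is at most $2\,\E{\topl(Y)}$, exploiting the crucial fact that because the coordinates of $Y$ are independent, $N:=N^{>\qt}(Y)=\sum_{i\in[m]}\bon_{Y_i>\qt}$ is a sum of independent Bernoullis with mean $\mu(\qt)$ and variance at most $\mu(\qt)$. For the first piece, for any $\qt<\tau$, the classical Jogdeo--Samuels theorem asserts that the median of a Poisson-binomial random variable with mean $\mu$ lies in $\{\lfloor\mu\rfloor,\lceil\mu\rceil\}$, so since $\ell$ is an integer with $\ell\leq\mu(\qt)$ we get $\prob{N\geq\ell}\geq 1/2$, and hence $\E{\min\{\ell,N\}}\geq\ell/2$; integrating over $[0,\tau]$ yields $\ell\tau/2\leq\E{\topl(Y)}$.

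For the second piece, fix $\qt\geq\tau$ and write $\mu=\mu(\qt)\leq\ell$, $N=N^{>\qt}(Y)$; I need $\E{\min\{\ell,N\}}\geq\mu/2$, equivalently $\E{(N-\ell)^+}\leq\mu/2$. When $\mu\geq 1$, the chain $\E{(N-\ell)^+}\leq\E{(N-\mu)^+}=\tfrac12\E{|N-\mu|}\leq\tfrac12\sqrt{\mathrm{Var}(N)}\leq\tfrac12\sqrt{\mu}\leq\mu/2$ does it. When $\mu<1$, I use $\ell\geq 1$ and the integer-valuedness of $N$ to get $(N-\ell)^+\leq(N-1)^+$, and then $\E{(N-1)^+}=\mu-\prob{N\geq 1}\leq\mu-(1-e^{-\mu})\leq\mu^2/2\leq\mu/2$, where the penultimate step uses $\prob{N\geq 1}\geq 1-\prod_i(1-\prob{Y_i>\qt})\geq 1-e^{-\mu}$ together with the Taylor inequality $1-e^{-\mu}\geq\mu-\mu^2/2$. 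Integrating over $[\tau,\infty)$ gives $\int_\tau^\infty\mu(\qt)\,d\qt\leq 2\,\E{\topl(Y)}$, and adding the two pieces proves $\eproxl(Y)\leq 4\,\E{\topl(Y)}$. The main obstacle is the first piece: without independence, a random variable with mean $\geq\ell$ could place almost all its mass at $0$ while concentrating the rest at large values, making $\prob{N\geq\ell}$ arbitrarily small, so one really must exploit the Bernoulli-sum structure, and the median--mean result of Jogdeo--Samuels does this cleanly.
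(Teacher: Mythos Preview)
Your proof is correct, and it takes a genuinely different route from the paper's.

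The paper does not work pointwise in $\qt$. Instead it introduces a different threshold $\rho := \inf\{\qt : \sum_i \E{Y_i^{\geq\qt}} \leq \ell\qt\}$, uses the variational form $\eproxl(Y)=\min_{t\geq 0}\bigl(\ell t+\sum_i\E{(Y_i-t)^+}\bigr)$ (Lemma~\ref{eproxlexpr}) to get $\eproxl(Y)\leq 2\ell\rho$, and then invokes Lemma~\ref{lbproxlower} to conclude $\E{\topl(Y)}\geq\ell\rho/2$. That lemma is proved by an induction on $\ell+m$, conditioning on whether one coordinate exceeds the threshold; independence enters through that conditioning step. Your argument instead fixes $\qt$ and bounds $\E{\min\{\ell,N^{>\qt}(Y)\}}$ from below directly, using the Poisson--binomial structure of $N^{>\qt}(Y)$: the Jogdeo--Samuels median bound handles $\qt<\tau$, and a variance/second-moment bound (with a separate treatment when $\mu<1$) handles $\qt\geq\tau$.

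Your approach is more self-contained for this theorem in isolation and makes the role of independence very transparent (it is exactly what makes $N^{>\qt}(Y)$ Poisson--binomial with variance at most its mean). The paper's approach, on the other hand, routes through Lemma~\ref{lbproxlower}, which together with Lemma~\ref{lbproxupper} establishes the alternate proxy $\sum_i\E{Y_i^{\geq\qt}}$ for $\E{\topl(Y)}$; that proxy is essential later for the load-balancing application, where each $Y_i$ is itself a sum of job sizes and one cannot access $\prob{Y_i>\qt}$ directly. So the paper is amortizing: it proves the inductive Lemma~\ref{lbproxlower} once and reaps both Theorem~\ref{eproxlthm} and the machinery needed in Section~\ref{sec:loadbal}.
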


The proof of the second inequality above relies on a rephrasing of $\eproxl(Y)$
that makes it amenable to relate it to Lemma~\ref{lbproxlower}.
For any $\ell \in [m]$, define 
$\problth(Y) := \inf \bigl\{\qt\in \Rp : \E{N^{>\qt}(Y)} < \ell\bigr\}$. 
This infimum is attained because $\prob{Y_i\leq\qt}$ is a right-continuous
function of $\qt$,\footnote{%
If $\qt_1,\qt_2,\ldots$ is a decreasing sequence converging to $\qt$, then
$\{Y_i\leq\qt\}=\bigcap_{n=1}^{\infty}\{Y_i\leq\qt_n\}$, 
so $\prob{Y_i\leq\qt}=\lim_{n\to\infty}\prob{Y_i\leq\qt_n}$.}
and $\E{N^{>\qt}(Y)}=m-\sum_{i\in[m]}\prob{Y_i\leq\qt}$.
Also, define $\problth[0](Y) := \infty$ and $\problth(Y): = 0$ for $\ell > m$. 
Since $\E{N^{>\qt}(Y)}$ is a non-increasing function of $\qt$, we then have 
$\eproxl(Y)=\ell\problth(Y)+\int_{\problth(Y)}^\infty\E{N^{>\qt}}d\qt$.
We can further rewrite this in a way that quite nicely brings out the similarities between
$\eproxl(Y)$ and the (exact) proxy function for $\topl(x)$ used by~\cite{ChakrabartyS19a}
in the deterministic setting. Claim~\ref{enintl} gives a convenient way of casting the
integral in the second term, which leads to expression for $\eproxl(Y)$ stated in
Lemma~\ref{eproxlexpr}. 

\begin{claim} \label{enintl}
For any $t\in\Rp$, we have 
$\int_t^\infty\E{N^{>\qt}(Y)}d\qt=\sum_{i\in[m]}\E{(Y_i-t)^+}$.
\end{claim}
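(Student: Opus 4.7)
The plan is to reduce the claim to a coordinate-wise identity via linearity of expectation, and then invoke the standard tail-sum formula for the expectation of a nonnegative random variable. Concretely, I would proceed as follows.

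First, I would write $N^{>\qt}(Y) = \sum_{i \in [m]} \mathbbm{1}_{Y_i > \qt}$ and take expectations, giving $\E{N^{>\qt}(Y)} = \sum_{i \in [m]} \Pr[Y_i > \qt]$. Then, applying Tonelli's theorem (everything is nonnegative, so interchanging sum and integral is unconditionally justified), the left-hand side becomes
\[
\int_t^\infty \E{N^{>\qt}(Y)}\,d\qt \;=\; \sum_{i \in [m]} \int_t^\infty \Pr[Y_i > \qt]\,d\qt.
\]

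Second, for each fixed $i \in [m]$, I would verify the identity $\int_t^\infty \Pr[Y_i > \qt]\,d\qt = \E{(Y_i - t)^+}$. This is essentially the tail-integral representation of the expectation of the nonnegative random variable $(Y_i - t)^+$: substituting $\qt = t+s$ yields
\[
\int_t^\infty \Pr[Y_i > \qt]\,d\qt \;=\; \int_0^\infty \Pr[Y_i - t > s]\,ds \;=\; \int_0^\infty \Pr[(Y_i-t)^+ > s]\,ds \;=\; \E{(Y_i - t)^+},
\]
where the middle equality uses that $\{Y_i - t > s\} = \{(Y_i - t)^+ > s\}$ for $s \geq 0$, and the last equality is the standard layer-cake/tail-sum formula $\E{W} = \int_0^\infty \Pr[W > s]\,ds$ for a nonnegative random variable $W$. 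Summing over $i \in [m]$ gives the claim.

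There is no real obstacle here — the identity is a routine application of Tonelli plus the tail-sum formula. The only mildly delicate point is justifying the interchange of sum and integral, but since all integrands are nonnegative and measurable, Tonelli applies immediately.
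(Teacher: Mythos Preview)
Your proof is correct and follows essentially the same approach as the paper: both use linearity of expectation to write $\E{N^{>\qt}(Y)}=\sum_i\Pr[Y_i>\qt]$, then apply the tail-sum formula $\E{(Y_i-t)^+}=\int_0^\infty\Pr[(Y_i-t)^+>s]\,ds$ together with the substitution $\qt=t+s$. The only cosmetic difference is that the paper starts from the right-hand side and works leftward, whereas you go left to right and are slightly more explicit about invoking Tonelli.
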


\begin{proof}
We have 
$$
\E{(Y_i-t)^+}=\int_0^\infty\Pr[(Y_i-t)^+>\qt]d\qt
=\int_0^\infty\Pr[Y_i>t+\qt]d\qt=\int_{t}^\infty\Pr[Y_i>\qt]d\qt.
$$
Therefore, $\sum_{i=1}^m\E{(Y_i-t)^+}
=\int_{t}^\infty\sum_{i=1}^m\Pr[Y_i>\qt]d\qt$, which is precisely 
$\int_{t}^\infty\E{N^{>\qt}(Y)}d\qt$.
\end{proof}  

\begin{lemma} \label{prop:gammatau} \label{eproxlexpr}
Consider any $\ell\in[m]$.
We have 
$\eproxl(Y) = \ell \problth(Y) + \sum_{i \in [m]} \E{\paren{Y_i-\problth(Y)}^+}$
and $\eproxl(Y)=\min_{t\in\Rp}\bigl(\ell t + \sum_{i \in [m]} \E{\paren{Y_i-t}^+}\bigr)$. 
\end{lemma}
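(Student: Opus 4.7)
The plan is to prove both parts by directly working with the integral representation of $\eproxl(Y)$ and splitting at $\problth(Y)$. The core observation is that, since $\E{N^{>\qt}(Y)}$ is non-increasing in $\qt$ and the infimum defining $\problth(Y)$ is attained, we have $\E{N^{>\qt}(Y)} \geq \ell$ for $\qt < \problth(Y)$ and $\E{N^{>\qt}(Y)} < \ell$ for $\qt \geq \problth(Y)$. (The equality case at exactly $\qt = \problth(Y)$ contributes measure zero, so it is harmless.)

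For the first equality, I would split
\[
\eproxl(Y) = \int_0^{\problth(Y)} \min\{\ell, \E{N^{>\qt}(Y)}\}\, d\qt + \int_{\problth(Y)}^\infty \min\{\ell, \E{N^{>\qt}(Y)}\}\, d\qt.
\]
By the above dichotomy, the first integrand equals $\ell$ a.e.\ on $[0,\problth(Y))$ and the second integrand equals $\E{N^{>\qt}(Y)}$ on $[\problth(Y),\infty)$. So the first integral contributes $\ell\problth(Y)$, and the second contributes $\sum_{i\in[m]}\E{(Y_i-\problth(Y))^+}$ directly by Claim~\ref{enintl}. This gives the first formula.

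For the second equality, define $g(t) := \ell t + \sum_{i\in[m]}\E{(Y_i-t)^+}$, and use Claim~\ref{enintl} to rewrite it as $g(t) = \ell t + \int_t^\infty \E{N^{>\qt}(Y)}\, d\qt$. For any $t_1 \leq t_2$,
\[
g(t_2) - g(t_1) = \int_{t_1}^{t_2} \bigl(\ell - \E{N^{>\qt}(Y)}\bigr)\, d\qt.
\]
Applied with $t_1 = t < \problth(Y)$, $t_2 = \problth(Y)$, the integrand is $\leq 0$ throughout (since $\E{N^{>\qt}(Y)} \geq \ell$), so $g(\problth(Y)) \leq g(t)$. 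Applied with $t_1 = \problth(Y)$, $t_2 = t > \problth(Y)$, the integrand is $> 0$ throughout (since $\E{N^{>\qt}(Y)} < \ell$), so $g(t) \geq g(\problth(Y))$. Hence $t = \problth(Y)$ minimizes $g$, and by the first part, $g(\problth(Y)) = \eproxl(Y)$.

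There is no serious obstacle here — the main subtlety is simply being careful that $\problth(Y)$ is attained and that $\E{N^{>\qt}(Y)}$ being non-increasing lets us pass cleanly between $\min\{\ell,\E{N^{>\qt}(Y)}\}$ and its two possible values on the two sides of the threshold. Everything else is bookkeeping via Claim~\ref{enintl}.
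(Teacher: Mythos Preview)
Your proof is correct and follows essentially the same approach as the paper: split the defining integral at $\problth(Y)$, use the threshold behavior of $\E{N^{>\qt}(Y)}$, and invoke Claim~\ref{enintl}. The only minor difference is in the second equality: the paper observes directly that $\min\{\ell,\E{N^{>\qt}(Y)}\}\leq\ell$ on $[0,t]$ and $\min\{\ell,\E{N^{>\qt}(Y)}\}\leq\E{N^{>\qt}(Y)}$ on $[t,\infty)$, which immediately gives $\eproxl(Y)\leq\ell t+\int_t^\infty\E{N^{>\qt}(Y)}\,d\qt$ for every $t$, whereas you argue via the sign of $g(t_2)-g(t_1)$; both routes are equally valid and of comparable length.
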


\begin{proof}
Figure~\ref{histogram} yields a nice pictorial proof.
Algebraically, the first equality follows from the expression 
$\eproxl(Y)=\ell\problth(Y)+\int_{\problth(Y)}^\infty\E{N^{>\qt}}d\qt$ (see
Fig.~\ref{histogram}) and Claim~\ref{enintl}.
The second equality follows again from Claim~\ref{enintl} because we have 
$\eproxl(Y)\leq\ell t+\int_{t}^\infty\E{N^{>\qt}(Y)}d\qt$ for any $t\in\Rp$.
\end{proof}

\begin{remark}
The expression in Lemma~\ref{eproxlexpr} and $\problth(Y)$ nicely mirror similar
quantities in the deterministic setting. When $Y$ is deterministic: 
(i) $\problth(Y)$ is {\em precisely} $Y^{\down}_\ell$; 
the $\ell$-th largest coordinate of $Y$; 
and \linebreak
(ii) Lemma~\ref{eproxlexpr} translates to
$\eproxl(Y)=\min_{t\geq 0}\bigl(\ell t+\sum_{i\in[m]}(Y_i-t)^+\bigr)$;
the RHS is exactly $\topl(Y)$~\cite{ChakrabartyS19a} and is minimized at $Y^\down_\ell$. 
\end{remark}

\begin{proofof}{Theorem~\ref{eproxlthm}}
Let $\rho := \inf \{\qt : \sum_{i \in [m]} \E{Y_i^{\geq\qt}} \leq \ell\qt\}$. 
Note that $\sum_{i\in[m]}\E{Y_i^{>\rho}}\leq\ell\rho$.
By Lemma~\ref{prop:gammatau}, we have 
$\eproxl(Y) \leq \ell\rho + \sum_{i \in [m]} \E{(Y_i-\rho)^+} 
\leq \ell\rho + \sum_{i \in [m]} \E{Y_i^{>\rho}} \leq 2 \ell \rho$.
For any $\qt<\rho$,
Lemma~\ref{lem:lowerboundtopl} implies that $\E{\topl(Y)} > \ell\qt/2$. Thus, 
$\E{\topl(Y)}\geq \ell\rho/2$. 
\end{proofof}

The following result will be useful in Section~\ref{expnorm}.

\begin{lemma} \label{clm:gammadifflower} \label{eproxconsec}
For any $\ell \in \{2,\dots,m\}$, we have 
$\problth[\ell-1](Y)\geq\eproxl(Y)-\eproxl[\ell-1](Y) \geq \problth(Y)$.
\end{lemma}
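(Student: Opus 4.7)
The plan is to analyze the difference $\eproxl(Y) - \eproxl[\ell-1](Y)$ directly from the integral definition of $\eproxl$, by partitioning the real line according to how $\E{N^{>\qt}(Y)}$ compares to $\ell-1$ and $\ell$, and then reading off both bounds as immediate consequences of the sizes of the two partitioning intervals.

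Concretely, I would start from
\[
\eproxl(Y) - \eproxl[\ell-1](Y) \;=\; \int_0^\infty \Bigl(\min\{\ell,\E{N^{>\qt}(Y)}\} - \min\{\ell-1,\E{N^{>\qt}(Y)}\}\Bigr)\,d\qt,
\]
and observe that the integrand equals $1$ when $\E{N^{>\qt}(Y)}\geq\ell$, equals $0$ when $\E{N^{>\qt}(Y)}\leq\ell-1$, and lies in $[0,1)$ when $\ell-1<\E{N^{>\qt}(Y)}<\ell$. Next, I would use the monotonicity and right-continuity of $\qt\mapsto\E{N^{>\qt}(Y)}$ (together with the fact, noted right after Theorem~\ref{eproxlthm}, that the infimum in the definition of $\problth(Y)$ is attained) to translate these three cases into ranges of $\qt$: namely $\E{N^{>\qt}(Y)}\geq\ell$ on $[0,\problth(Y))$; $\E{N^{>\qt}(Y)}<\ell-1$ on $[\problth[\ell-1](Y),\infty)$; and the intermediate regime on $[\problth(Y),\problth[\ell-1](Y))$. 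Note also that $\problth[\ell-1](Y)\geq\problth(Y)$ since the defining set for the latter contains that for the former.

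Plugging this in splits the integral as
\[
\eproxl(Y) - \eproxl[\ell-1](Y) \;=\; \problth(Y) \;+\; \int_{\problth(Y)}^{\problth[\ell-1](Y)}\bigl(\E{N^{>\qt}(Y)}-(\ell-1)\bigr)\,d\qt.
\]
Since the integrand in the remaining integral is in $[0,1)$, its value is sandwiched between $0$ and $\problth[\ell-1](Y)-\problth(Y)$, which immediately yields $\problth(Y)\leq\eproxl(Y)-\eproxl[\ell-1](Y)\leq\problth[\ell-1](Y)$, as desired.

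The proof is essentially a bookkeeping exercise and I do not anticipate a serious obstacle; the only subtlety is to correctly justify the boundary behaviour at $\qt=\problth(Y)$ and $\qt=\problth[\ell-1](Y)$ (i.e., whether the strict/non-strict inequalities flip), which is handled by the attainment-of-infimum remark that follows the definition of $\problth$ in the excerpt.
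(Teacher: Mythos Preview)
Your proposal is correct and follows essentially the same approach as the paper: both express the difference as the integral $\int_0^\infty\bigl(\min\{\ell,\E{N^{>\qt}(Y)}\}-\min\{\ell-1,\E{N^{>\qt}(Y)}\}\bigr)\,d\qt$ and partition the domain into the three ranges determined by $\problth(Y)$ and $\problth[\ell-1](Y)$. Your write-up is in fact slightly more explicit than the paper's, as you carry out the middle-range computation to get the closed-form $\problth(Y)+\int_{\problth(Y)}^{\problth[\ell-1](Y)}\bigl(\E{N^{>\qt}(Y)}-(\ell-1)\bigr)\,d\qt$ and are careful about the boundary behavior, whereas the paper just notes the integrand is in $\{0,1\}$ or at most $1$ on the respective pieces.
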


\begin{proof}
Both inequalities are easily inferred from Fig.~\ref{histogram}.
Algebraically, we have 
$$
\eproxl(Y)-\eproxl[\ell-1](Y)
=\int_{0}^\infty\bigl(\min\{\ell,\E{N^{>\qt}(Y)}\}-\min\{\ell-1,\E{N^{>\qt}(Y)}\}\bigr)d\qt.
$$
For $\qt\geq\problth[\ell-1](Y)$, the integrand is $0$ since $\E{N^{>\qt}(Y)}<\ell-1$;
for $\qt<\problth(Y)$, the integrand is $1$ since $\E{N^{>\qt}(Y)}\geq\ell$; and
for $\problth(Y)\leq\qt<\problth[ell-1](Y)$, the integrand is at most $1$ since
$\ell-1\leq\E{N^{>\qt}(Y)}<\ell$.  
It follows that $\problth[\ell-1](Y)\leq\eproxl(Y)-\eproxl[\ell-1](Y)\geq\problth$.
\end{proof}

\section{Expectation of a monotone, symmetric norm} \label{sec:expectedfnorm} \label{expnorm} 

Let $Y$ follow a product distribution on $\Rp^m$, and $f:\R^m\mapsto\Rp$ be an arbitrary
monotone, symmetric norm. By Theorem~\ref{monnormthm} (a), there is a collection
$\C\sse\Rp^m$ of weight vectors with non-increasing coordinates such 
that $f(x)=\sup_{w\in\C} w^Tx^\down$ for all $x\in\Rp^m$. 
For $w\in\C$, recall that we define $w_\ell:=0$ whenever $\ell > m$. 
We now prove one of our main technical results: {\em the expectation   
of a supremum of ordered norms is within a constant factor of the supremum of the
expectation of ordered norms}. 
Formally, it is clear that 
$\E{f(Y)} = \E{\sup_{w\in \C} w^T Y^\down}
\geq\sup_{w \in \C} \E{w^T Y^\down}=\sup_{w \in \C} w^T\E{Y^\down}=f(\E{Y^\down})$; 
we show that an inequality in the opposite direction also holds.
Note that $Y^\down$ is the vector of order statistics of $Y$, from $\max$ to $\min$: i.e.,
$Y^\down_1$ is the maximum entry of $Y$, $Y^\down_2$ is the second-maximum entry, and so on. 

\begin{theorem} \label{thm:expectedfnorm} \label{thm:expnorm}
Let $Y$ follow a product distribution on $\Rp^m$, and $f:\R^m\to\Rp$ be a
monotone, symmetric norm. 
Then $f(\E{Y^\down})\leq \E{f(Y)}\leq \cexpnorm\cdot f(\E{Y^\down})$.
\end{theorem}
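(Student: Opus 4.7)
The lower bound $f(\E{Y^\down})\leq\E{f(Y)}$ is an immediate consequence of Jensen's inequality applied to Theorem~\ref{monnormthm}(a). Since $\E{Y^\down}$ already has non-increasing coordinates, $f(\E{Y^\down})=\sup_{w\in\C}w^T\E{Y^\down}=\sup_{w\in\C}\E{w^TY^\down}\leq\E{\sup_{w\in\C}w^TY^\down}=\E{f(Y)}$.

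For the upper bound $\E{f(Y)}\leq\cexpnorm\cdot f(\E{Y^\down})$, my plan is to establish an exponential tail estimate $\Pr[f(Y)>\sigma f(\E{Y^\down})]\leq e^{-\Omega(\sigma)}$ valid for $\sigma$ above a universal constant, and then integrate via $\E{f(Y)}=f(\E{Y^\down})\int_0^\infty\Pr[f(Y)>\sigma f(\E{Y^\down})]d\sigma$. The contrapositive of Theorem~\ref{monnormthm}(b) together with $\topl(\E{Y^\down})=\E{\topl(Y)}$ yields the event inclusion $\{f(Y)>\sigma f(\E{Y^\down})\}\subseteq\bigcup_{\ell\in[m]}\{\topl(Y)>\sigma\E{\topl(Y)}\}$, and the standard dyadic sandwich $\topl(Y)\leq\topl[\ell'](Y)\leq 2\topl(Y)$ for $\ell\leq\ell'\leq 2\ell$ lets me restrict $\ell$ to $\POS$ at the cost of halving $\sigma$. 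For each $\ell\in\POS$, I combine the proxy lower bound $\E{\topl(Y)}\geq\eproxl(Y)/4=(\ell\problth[\ell]+\E{R_\ell})/4$ from Theorem~\ref{eproxlthm} and Lemma~\ref{eproxlexpr}, where $R_\ell:=\sum_i(Y_i-\problth[\ell])^+$, with the deterministic upper bound $\topl(Y)\leq\ell\problth[\ell]+R_\ell$---which follows by splitting the representation $\topl(Y)=\int_0^\infty\min\{\ell,N^{>\qt}(Y)\}d\qt$ at $\qt=\problth[\ell]$ and using $\min\leq\ell$ on $[0,\problth[\ell])$ and $\min\leq N^{>\qt}(Y)$ on $[\problth[\ell],\infty)$. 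For $\sigma$ above an absolute constant, these together reduce the event $\topl(Y)>(\sigma/2)\E{\topl(Y)}$ to the concentration event $R_\ell>\Omega(\sigma)\bigl(\ell\problth[\ell]+\E{R_\ell}\bigr)$.

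The key concentration argument leverages the full family of statistics $\problth[\ell']$ and the fact that each $N^{>\problth[\ell']}(Y)=\sum_i\mathbbm{1}[Y_i>\problth[\ell']]$ is a sum of independent Bernoullis with mean strictly less than $\ell'$ (by the infimum definition of $\problth[\ell']$), hence concentrates via Chernoff (Lemma~\ref{chernoff}) as $\Pr[N^{>\problth[\ell']}(Y)>\Omega(\sigma)\ell']\leq e^{-\Omega(\sigma\ell')}$. Using the integral representation $R_\ell=\int_{\problth[\ell]}^\infty N^{>\qt}(Y)d\qt$, I would dyadically partition the integration range with the thresholds $\problth[\ell/2^j]$, $j=0,\ldots,\log_2\ell$: on each bounded strip $[\problth[\ell/2^j],\problth[\ell/2^{j+1}])$ of length $L_j$, the integrand is a sum of independent $[0,L_j]$-bounded random variables with mean of order $(\ell/2^{j+1})\cdot L_j$, so Chernoff applies directly and yields an $e^{-\Omega(\sigma\ell/2^j)}$ tail. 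The terminal unbounded strip $[\problth[1],\infty)$ is handled by further dyadic decomposition of its value range (on which $\E{N^{>\qt}(Y)}<1$), reducing it again to Chernoff tails of Bernoulli sums. Assembling these exponentially small failure probabilities, union bounding over the $O(\log\ell)$ strips and over $\ell\in\POS$, and integrating in $\sigma$ yields the universal constant $\cexpnorm$.

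The principal obstacle is precisely this concentration of $R_\ell$: since $(Y_i-\problth[\ell])^+$ may be heavy-tailed, a naive Chernoff bound fails, and one must stratify $R_\ell$ through the \emph{entire} expected histogram curve $\{\E{N^{>\qt}(Y)}\}_{\qt\in\Rp}$---i.e., through all the thresholds $\problth[\ell']$ simultaneously---so that each piece becomes a sum of bounded independent contributions amenable to Lemma~\ref{chernoff}.
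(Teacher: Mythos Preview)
Your lower bound and the reduction via Theorem~\ref{monnormthm}(b) to controlling the events $\{\topl(Y)>\sigma\,\E{\topl(Y)}\}$ are correct, and your strip decomposition of $R_\ell=\int_{\problth[\ell]}^\infty N^{>\qt}(Y)\,d\qt$ using the thresholds $\problth[\ell']$ is exactly the right mechanism for the \emph{bounded} portion $[\problth[\ell],\problth[1])$: on each strip the integrand is a sum of independent bounded variables with mean $\Theta(\ell')$ after rescaling, so Chernoff (Lemma~\ref{chernoff}) applies and, summed over $\ell'$, yields an $O(e^{-\Omega(\sigma)})$ tail. This part parallels the paper's Lemma~\ref{nconc}, which phrases the same concentration directly in terms of the counts $N^{>\problth[\ell']}(Y)$ rather than the strip integrals.

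The genuine gap is your treatment of the terminal strip $[\problth[1],\infty)$. You claim it can be ``handled by further dyadic decomposition of its value range \ldots\ reducing it again to Chernoff tails of Bernoulli sums.'' This does not work. The contribution $R^{\mathrm{tail}}:=\sum_{i}(Y_i-\problth[1])^+$ is a sum of independent but \emph{unbounded} random variables, and no Chernoff-type inequality yields $\Pr\bigl[R^{\mathrm{tail}}>\Omega(\sigma)\,\E{R^{\mathrm{tail}}}\bigr]\leq e^{-\Omega(\sigma)}$: a single coordinate $Y_i$ with a heavy tail already defeats this. Dyadically subdividing the range using thresholds where $\E{N^{>\qt}(Y)}=2^{-k}$ does not help either, because the Chernoff exponent on the $k$-th sub-strip scales with the mean count $2^{-k}$, giving bounds like $e^{-\Omega(\sigma/2^k)}$ that are useless for large $k$ and cannot be summed.

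The paper sidesteps this obstacle by \emph{not} routing through the $\topl$-event inclusion. It works with $f(Y)$ directly and proves the additive split $f(Y)\leq\sum_i(Y_i-\problth[1])^+ + f_{-1}(Y)$ (Lemma~\ref{efsplit}); the first term is then bounded \emph{in expectation} (using that $f$ is normalized, so $\sup_{w\in\C}w_1=1$), while only the second term $f_{-1}(Y)$---which depends solely on the bounded counts $N^{>\problth[\ell]}(Y)$, $\ell\geq 2$---is controlled via an exponential tail bound (Lemma~\ref{nbound}). Your argument can be repaired along the same lines: observe that $R^{\mathrm{tail}}$ is the \emph{same} random variable for every $\ell$ and that $\eproxl(Y)\geq\eproxl[1](Y)\geq\E{R^{\mathrm{tail}}}$, so after peeling off the good Chernoff event you are left with $\Pr\bigl[R^{\mathrm{tail}}>\Omega(\sigma)\,\eproxl[1](Y)\bigr]$; integrating this in $\sigma$ gives $O\bigl(\E{R^{\mathrm{tail}}}/\eproxl[1](Y)\bigr)=O(1)$ without any concentration. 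But as written, the proposal's handling of the tail is incorrect.
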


Theorem~\ref{thm:expnorm} is the backbone of 
our framework for stochastic minimum-norm optimization.
Since $f(\E{Y^\down})=\sup_{w\in\C}\sum_{\ell\in[m]}(w_\ell-w_{\ell+1})\E{\topl(Y)}$, 
Theorem~\ref{thm:expnorm} gives us a concrete way of bounding $\E{f(Y)}$, namely, by 
bounding all expected $\topl$ norms. In particular, we obtain the following 
{\em corollary}, that we call {\em approximate stochastic majorization}. Recall that 
$\POS_m=\{2^i:\, i\in\Zp, 2^i\leq m\}$.

\begin{theorem}[{\bf Approximate stochastic majorization}] \label{thm:apxstochmajor} \label{stochmaj}
Let $Y$ and $\vecrv$ follow product distributions on $\Rp^m$. Let $f$ be a monotone, symmetric
norm on $\R^m$.
\begin{enumerate}[(a), topsep=0.25ex, itemsep=0ex, leftmargin=*]
\item If $\E{\topl(Y)}\leq\al\cdot\E{\topl(\vecrv)}$ for all $\ell\in[m]$, then
$\E{f(Y)}\leq\cexpnorm\al\cdot\E{f(\vecrv)}$.

\item If $\E{\topl(Y)}\leq\al\cdot\E{\topl(\vecrv)}$ for all $\ell\in\POS_m$, then
$\E{f(Y)}\leq 2\cdot\cexpnorm\al\cdot\E{f(\vecrv)}$.
\end{enumerate}
\end{theorem}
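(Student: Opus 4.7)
The plan is to reduce Theorem~\ref{stochmaj} to a deterministic comparison of the vectors $\E{Y^\down}$ and $\E{\vecrv^\down}$, via Theorem~\ref{monnormthm}(b), and then sandwich the expected norms using Theorem~\ref{thm:expnorm}. The key preliminary observation I would make is that $Y^\down$ has (by definition) non-increasing coordinates almost surely, so the deterministic vector $\E{Y^\down}$ has non-increasing coordinates as well; hence $(\E{Y^\down})^\down=\E{Y^\down}$, and
\[
\topl\bigl(\E{Y^\down}\bigr)=\sum_{i=1}^{\ell}\E{Y^\down_i}=\E{\topl(Y)}
\qquad\text{for every }\ell\in[m],
\]
and likewise for $\vecrv$. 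This is the bridge that turns the hypotheses of the theorem into coordinatewise $\topl$-inequalities between the two deterministic vectors $\E{Y^\down}$ and $\E{\vecrv^\down}$.

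For part (a), the hypothesis $\E{\topl(Y)}\le\al\cdot\E{\topl(\vecrv)}$ for all $\ell\in[m]$ translates (by the observation above) into $\topl(\E{Y^\down})\le\al\cdot\topl(\E{\vecrv^\down})$ for all $\ell\in[m]$. Applying Theorem~\ref{monnormthm}(b) with $\beta=0$ gives $f(\E{Y^\down})\le\al\cdot f(\E{\vecrv^\down})$. Now sandwich: Theorem~\ref{thm:expnorm} gives $\E{f(Y)}\le\cexpnorm\cdot f(\E{Y^\down})$ on one side and $f(\E{\vecrv^\down})\le\E{f(\vecrv)}$ on the other, yielding $\E{f(Y)}\le\cexpnorm\al\cdot\E{f(\vecrv)}$.

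For part (b), where bounds are only available at $\ell\in\POS_m$, I would extend them to all $\ell\in[m]$ with only a factor-$2$ loss. Given any $\ell\in[m]$, pick the largest $\ell'\in\POS_m$ with $\ell'\le\ell$; then $\ell<2\ell'$. Since both $\E{Y^\down}$ and $\E{\vecrv^\down}$ are non-increasing, for any such vector $v$ one has $\topl[\ell'](v)\le\topl(v)\le\topl[2\ell'](v)\le 2\topl[\ell'](v)$ (the last because the top $\ell'$ coordinates dominate the next $\ell'$). Chaining these with the hypothesis at $\ell'$ gives
\[
\topl\bigl(\E{Y^\down}\bigr)\le 2\topl[\ell']\bigl(\E{Y^\down}\bigr)
=2\,\E{\topl[\ell'](Y)}\le 2\al\,\E{\topl[\ell'](\vecrv)}
=2\al\,\topl[\ell']\bigl(\E{\vecrv^\down}\bigr)\le 2\al\cdot\topl\bigl(\E{\vecrv^\down}\bigr),
\]
valid for every $\ell\in[m]$. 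Now part (a)'s argument applies with $\al$ replaced by $2\al$, giving the claimed $2\cdot\cexpnorm\al\cdot\E{f(\vecrv)}$ bound.

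I do not anticipate any real obstacle here: the deterministic structural result Theorem~\ref{monnormthm}(b) does all the heavy lifting once one sees that sortedness of $\E{Y^\down}$ makes $\topl$ of it coincide with $\E{\topl(Y)}$. The only mildly delicate point is the factor-$2$ discretization in part (b), where one must make sure $\POS_m$ is dense enough in $[m]$ on a logarithmic scale that every $\ell$ lies in $[\ell',2\ell')$ for some $\ell'\in\POS_m$, which is immediate from the definition $\POS_m=\{2^i:2^i\le m\}$.
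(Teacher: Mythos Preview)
Your proposal is correct and follows essentially the same approach as the paper: both arguments use Theorem~\ref{thm:expnorm} to sandwich $\E{f(Y)}$ and $\E{f(\vecrv)}$ by $f(\E{Y^\down})$ and $f(\E{\vecrv^\down})$, then compare the latter deterministic vectors via their $\topl$-norms (you invoke Theorem~\ref{monnormthm}(b) explicitly, while the paper writes out the ordered-norm expansion $f(\E{Y^\down})=\sup_{w\in\C}\sum_\ell(w_\ell-w_{\ell+1})\E{\topl(Y)}$ directly, which is the same computation). Your treatment of part~(b) via the factor-$2$ discretization to reduce to part~(a) is likewise identical in substance to the paper's.
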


\begin{proof}
Let $\C\sse\R^m$ be such that $f(x)=\sup_{w\in\C}w^Tx^\down$ for $x\in\Rp^m$.
Part (a) follows directly from Theorem~\ref{thm:expectedfnorm}, since for each random
variable $R\in\{Y,\vecrv\}$,
$f(\E{R^\down})$ depends only on the $\E{\topl(R)}$ quantities. We have 
\begin{equation*}
\begin{split}
\E{f(Y)} & \leq\cexpnorm\cdot f(\E{Y^\down})
=\cexpnorm\cdot\sup_{w\in\C}\sum_{\ell\in[m]}(w_\ell-w_{\ell+1})\E{\topl(Y)} \\
& \leq\cexpnorm\al\cdot\sup_{w\in\C}\sum_{\ell\in[m]}(w_\ell-w_{\ell+1})\E{\topl(\vecrv)}
= \cexpnorm\al\cdot f(\E{\vecrv^\down})\leq\cexpnorm\al\cdot\E{f(\vecrv)}.
\end{split}
\end{equation*}

Part (b) follows from part (a). For any $\ell\in[m]$, let 
$\ell'$ be the largest index in $\POS$ that is at most $\ell$. 
Then, $\ell'\leq\ell\leq 2\ell'$, and so 
So $\topl(Y)\leq 2\topl[\ell'](Y)\leq 2\al\topl[\ell'](\vecrv)\leq 2\al\topl(\vecrv)$. 
It follows that $\E{\topl(Y)}\leq 2\al\E{\topl(\vecrv)}$ for all $\ell\in[m]$.
\end{proof}

\begin{remark}
As should be evident from the proof above, the upper bounds on the ratio
$\E{f(Y)}/f\bigl(\E{Y^\down}\bigr)$ in Theorem~\ref{thm:expnorm}, and the 
ratio $\E{f(Y)}/\bigl(\al\cdot\E{f(\vecrv)}\bigr)$ in Theorem~\ref{stochmaj} are closely
related. 
Let $\kp_f$ be the supremum of $\E{f(Y)}/f\bigl(\E{Y^\down}\bigr)$ over random vectors $Y$
that follow a product distribution on $\Rp^m$.
Let $\zeta_f$ be the supremum of $\E{f(\vecrv^{(1)})}/\E{f(\vecrv^{(2)})}$ over random vectors
$\vecrv^{(1)},\vecrv^{(2)}$ that follow product distributions on $\Rp^m$ and satisfy
$\E{\topl(\vecrv^{(1)})}\leq\E{\topl(\vecrv^{(2)})}$ for all $\ell\in[m]$.
The above proof shows that $\zeta_f\leq\kp_f$. It is worth noting that $\kp_f\leq\zeta_f$
as well. To see this, take $\vecrv^{(1)}=Y$, and $\vecrv^{(2)}$ to be the 
{\em deterministic} vector $\E{Y^\down}$, which trivially follows a product
distribution. By definition, we have 
$\E{\topl(Y))}=\topl(\vecrv^{(2)})$ for all
$\ell\in[m]$. So we have that $\E{f(Y)}\leq\zeta_f\cdot f\bigl(\E{Y^\down}\bigr)$.  
\end{remark}

Theorem~\ref{stochmaj} (b) has immediate applicability 
for a given stochastic minimum-norm optimization problem such as stochastic min-norm load
balancing.  
In our algorithms, we enumerate (say in powers of $2$) all possible sequences
$\{\budg_\ell\}_{\ell\in\POS_m}$  
of estimates of $\bigl\{\E{\topl(\iopt)}\bigr\}_{\ell\in\POS_m}$, 
where $\iopt$ is the cost vector arising from an optimal solution,
and find a solution (if one exists) whose cost vector satisfies (roughly speaking) these
expected-$\topl$-norm estimates. 
A final step 
is to identify which of the solutions so obtained is a
near-optimal solution. While a probabilistic guarantee follows easily since one can
provide a randomized oracle for evaluating the objective $\E{f(Y)}$ 
(as $f(Y)$ enjoys good concentration properties), 
one can do better and {\em deterministically} identify a good solution.
To this end, we show below (Corollary~\ref{detestimate}) that if
$\{\budg_\ell\}_{\ell\in\POS_m}$ 
is a sequence 
that term-by-term well estimates $\bigl\{\E{\topl(Y)}\bigr\}_{\ell\in\POS_m}$, either
from below or or from above, 
then we can define a deterministic vector $\bvec\in\Rp^m$ such that $f(\bvec)$
well-estimates $\E{f(Y)}$, from below or above respectively.
We will apply parts (a) and (b) of Corollary~\ref{detestimate} with the cost
vectors arising from our solution and an optimal solution respectively, to argue the
near-optimality of our solution. 

Corollary~\ref{detestimate} utilizes a slightly technical result stated in
Lemma~\ref{bvecest}. 
We need the following definition.
Let $K:=2^{\floor{\log_2 m}}$ be the largest index in $\POS_m$.
Given a non-decreasing, nonnegative sequence $\{\budg_\ell\}_{\ell\in\POS_m}$, its
{\em upper envelope curve} $\bfun:[0,m]\to\Rp$ 
is defined by $\bfun(x):=\max\,\bigl\{y: (x,y)\in\conv(S)\bigr\}$, where 
$S=\{(\ell,\budg_\ell): \ell\in\POS_m\}\cup\{(0,0),(m,\budg_K)\}$ and $\conv(S)$ denotes
the convex hull of $S$. 

\begin{lemma} \label{bvecest}
Let $f$ be a monotone, symmetric norm on $\Rp^m$, and $y\in\Rp^m$ be a non-increasing
vector. 
Let $\{\budg_\ell\}_{\ell\in\POS_m}$ be a non-decreasing, nonnegative sequence such that
$\budg_{\ell}\leq 2\budg_{\ell/2}$ for all $\ell\in\POS_m, \ell>1$.  
Let $\bfun:[0,m]\mapsto\Rp$ be the {upper envelope curve} of 
$\{\budg_\ell\}_{\ell\in\POS_m}$.
Define $\bvec:=\bigl(\bfun(i)-\bfun(i-1)\bigr)_{i\in[m]}$.
\begin{enumerate}[(a), topsep=0.25ex, itemsep=0ex, leftmargin=*]
\item 
If $\Topl{y} \leq \budg_\ell$ for all $\ell \in \POS_m$, then $f(y) \leq 2 f(\bvec)$.

\item 
If $\Topl{y} \geq \budg_\ell$ for all $\ell \in \POS_m$, then $f(y) \geq f(\bvec)/3$.
\end{enumerate}
\end{lemma}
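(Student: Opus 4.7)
The plan is to deduce both inequalities by establishing suitable comparisons between $\topl(y)$ and $\topl(\bvec)=\bfun(\ell)$ for every $\ell\in[m]$, and then invoking Theorem~\ref{monnormthm}(b) to lift these to bounds involving $f$. I use some structural facts about $\bfun$ throughout: $\bfun$ is concave with $\bfun(0)=0$ (so $\bfun(x)/x$ is non-increasing in $x>0$), and since every point of $S$ has $y$-coordinate at most $\budg_K$, the horizontal line $y=\budg_K$ is an affine upper bound on $S$, forcing $\bfun\leq\budg_K$ on $[0,m]$; combined with $\bfun(K)\geq\budg_K$, this pins $\bfun\equiv\budg_K$ on $[K,m]$. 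Hence $\bvec$ is nonnegative and non-increasing and $\topl(\bvec)=\bfun(\ell)$. The doubling hypothesis moreover implies that $\budg_\ell/\ell$ is non-increasing on $\POS_m\cup\{m\}$ (extending via $\budg_m:=\budg_K$, since $\budg_K/K\geq\budg_K/m$).

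For Part (a), I will show $\topl(y)\leq 2\bfun(\ell)$ for every $\ell\in[m]$. When $\ell\leq K$, pick the smallest $\ell'\in\POS_m$ with $\ell'\geq\ell$, so $\ell\leq\ell'\leq 2\ell$. Then $\topl(y)\leq\topl[\ell'](y)\leq\budg_{\ell'}\leq\bfun(\ell')$, and non-increasingness of $\bfun(x)/x$ gives $\bfun(\ell')\leq(\ell'/\ell)\bfun(\ell)\leq 2\bfun(\ell)$. When $\ell>K$, the tail estimate $y^\down_{K+1},\dots,y^\down_m\leq\topl[K](y)/K$ yields $\topl(y)\leq\topl[m](y)\leq(m/K)\topl[K](y)\leq 2\budg_K=2\bfun(\ell)$, where we used $m<2K$. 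Theorem~\ref{monnormthm}(b) with $\al=2,\beta=0$ then gives $f(y)\leq 2f(\bvec)$.

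For Part (b), the goal is $\bfun(\ell)\leq 3\topl(y)$ for all $\ell\in[m]$. The core technical step---which I expect to be the main obstacle---is the bound $\bfun(\ell')\leq 2\budg_{\ell'}$ for every $\ell'\in\POS_m$. If $(\ell',\budg_{\ell'})$ is an extreme point of the envelope this is immediate; otherwise $\bfun(\ell')$ lies on a chord of the envelope from $(a,\budg_a)$ to $(b,\budg_b)$ with $a<\ell'<b$ and $a,b\in S$. Applying $\budg_a\leq\budg_{\ell'}$ (monotonicity) and $\budg_b\leq(b/\ell')\budg_{\ell'}$ (non-increasingness of $\budg_\ell/\ell$), the chord value at $\ell'$ is bounded by $\budg_{\ell'}\bigl(1+(\ell'-a)(b-\ell')/(\ell'(b-a))\bigr)$, and the bracketed quantity is at most $2$: this is equivalent to $ab+\ell'^2\geq 2a\ell'$, which follows from AM-GM together with $b\geq a$.

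With this central lemma in hand, I handle a general $\ell\in[m]$ by cases. For $\ell>K$, $\bfun(\ell)=\budg_K\leq\topl[K](y)\leq\topl(y)$. For $\ell\leq K$, let $\ell_1:=2^{\floor{\log_2\ell}}$ and $\ell_2:=2\ell_1$; the edge case $\ell=K$ is trivial since $\bfun(K)=\budg_K\leq\topl(y)$, and the main case is $\ell<K$, ensuring $\ell_2\leq K$ and $\ell_1,\ell_2\in\POS_m$. Two complementary bounds on $\bfun(\ell)$ are then available: the concavity bound $\bfun(\ell)\leq(\ell/\ell_1)\bfun(\ell_1)\leq 2(\ell/\ell_1)\budg_{\ell_1}\leq 2(\ell/\ell_1)\topl(y)$ (using $\topl(y)\geq\topl[\ell_1](y)\geq\budg_{\ell_1}$), and the monotonicity bound $\bfun(\ell)\leq\bfun(\ell_2)\leq 2\budg_{\ell_2}\leq(2\ell_2/\ell)\topl(y)$ (using $\topl(y)\geq(\ell/\ell_2)\topl[\ell_2](y)\geq(\ell/\ell_2)\budg_{\ell_2}$, since $\topl[k](y)/k$ is non-increasing in $k$). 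The pointwise minimum of $2\ell/\ell_1$ and $2\ell_2/\ell=4\ell_1/\ell$ is maximized over $\ell\in[\ell_1,\ell_2]$ at $\ell=\sqrt{2}\,\ell_1$, where both expressions equal $2\sqrt{2}<3$. Theorem~\ref{monnormthm}(b) with $\al=3,\beta=0$ then completes the proof.
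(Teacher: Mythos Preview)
Your proof is correct. Both parts reduce to $\topl$-comparisons and invoke Theorem~\ref{monnormthm}(b), exactly as the paper does, and your structural observations about $\bfun$ (concavity, $\bfun(0)=0$, $\bfun\equiv\budg_K$ on $[K,m]$, $\topl(\bvec)=\bfun(\ell)$) match the paper's.

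Part (a) is essentially the paper's argument with the direction of rounding flipped: the paper rounds $i$ \emph{down} to the nearest $\ell\in\POS_m$ and uses monotonicity of $\bfun$, whereas you round $\ell$ \emph{up} to $\ell'\in\POS_m$ and use concavity via $\bfun(x)/x$ non-increasing. These are symmetric variants of the same idea.

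Part (b) is where you take a genuinely different route. The paper works directly with an arbitrary $i\in[m]$: it locates the chord of the envelope containing $(i,\bfun(i))$, observes that the doubling hypothesis makes $\budg_\ell/\ell$ non-increasing over $\POS_m$, bounds the chord's slope by $\budg_\ell/\ell$ (where $\ell$ is the largest power of $2$ at most $i$), and concludes $\bfun(i)\leq\budg_{\ell_1}+(i-\ell_1)\budg_\ell/\ell\leq 3\budg_\ell\leq 3\topl[i](y)$ in one shot. Your approach instead first isolates a lemma on $\POS_m$---namely $\bfun(\ell')\leq 2\budg_{\ell'}$---via a convex-combination calculation on the chord (your AM-GM step is equivalent to $(\ell'-a)^2+a(b-a)\geq 0$), and then sandwiches a general $\ell$ between consecutive powers of $2$, balancing two bounds to get $\min\{2\ell/\ell_1,\,4\ell_1/\ell\}\leq 2\sqrt{2}<3$. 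The paper's argument is shorter and avoids the intermediate lemma; your argument is more modular and in fact yields the slightly sharper constant $2\sqrt{2}$ before you relax it to $3$.
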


\begin{corollary} \label{detestimate}
Let $Y$ follow a product distribution on $\Rp^m$, and $f$ be a 
monotone, symmetric norm on $\R^m$.
Let $\{\budg_\ell\}_{\ell\in\POS_m}$ be a non-decreasing sequence 
such that $\budg_{\ell}\leq 2\budg_{\ell/2}$ for all $\ell\in\POS_m, \ell>1$. 
Let $\bvec\in\Rp^m$ be the vector given by Lemma~\ref{bvecest} for
the sequence $\{\budg_\ell\}_{\ell\in\POS_m}$. 
\begin{enumerate}[(a), topsep=0.75ex, itemsep=0ex, leftmargin=*]
\item If $\E{\topl(Y)}\leq\al\budg_\ell$ for all $\ell\in\POS_m$ (where $\al>0$), then  
$\E{f(Y)}\leq 2\cdot\cexpnorm\al\cdot f(\bvec)$.

\item If $\budg_\ell\leq 2\cdot\E{\topl(Y)}$ for all $\ell\in\POS_m$, 
then $f(\bvec)\leq 6\cdot\E{f(Y)}$.
\end{enumerate}
\end{corollary}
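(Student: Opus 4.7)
The plan is to reduce both parts to short applications of Theorem~\ref{thm:expnorm} (which sandwiches $\E{f(Y)}$ between $f(\E{Y^\down})$ and $\cexpnorm\cdot f(\E{Y^\down})$) combined with Lemma~\ref{bvecest} (which relates $f(y)$ for a non-increasing vector $y$ with prescribed $\topl$-bounds to $f(\bvec)$). The key bridging observation is that the deterministic vector $y:=\E{Y^\down}$ is itself non-increasing---since the coordinates of $Y^\down$ are almost surely non-increasing, their expectations inherit this order---and by linearity of expectation, $\topl(y)=\sum_{i=1}^\ell\E{Y^\down_i}=\E{\topl(Y)}$ for every $\ell\in[m]$. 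Hence Lemma~\ref{bvecest} is directly applicable to $y$.

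For part (a), I would set $y:=\E{Y^\down}$ so that $\topl(y)=\E{\topl(Y)}\leq\al\budg_\ell$ for all $\ell\in\POS_m$. The scaled sequence $\{\al\budg_\ell\}_{\ell\in\POS_m}$ still satisfies the doubling hypothesis $\al\budg_\ell\leq 2\al\budg_{\ell/2}$, and its upper envelope curve is simply $\al\bfun$; therefore the vector produced by Lemma~\ref{bvecest} for this scaled sequence is $\al\bvec$. Applying Lemma~\ref{bvecest}(a) to $y$ and this scaled sequence gives $f(y)\leq 2f(\al\bvec)=2\al\cdot f(\bvec)$, and combining with $\E{f(Y)}\leq\cexpnorm\cdot f(\E{Y^\down})=\cexpnorm\cdot f(y)$ from Theorem~\ref{thm:expnorm} yields $\E{f(Y)}\leq 2\cexpnorm\al\cdot f(\bvec)$.

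For part (b), take $y:=\E{Y^\down}$ once more. The hypothesis $\budg_\ell\leq 2\E{\topl(Y)}=2\topl(y)$ says that the non-increasing vector $2y$ satisfies $\topl(2y)\geq\budg_\ell$ for every $\ell\in\POS_m$, so Lemma~\ref{bvecest}(b) gives $f(2y)\geq f(\bvec)/3$, i.e., $f(y)\geq f(\bvec)/6$. Combined with the lower bound $\E{f(Y)}\geq f(\E{Y^\down})=f(y)$ from Theorem~\ref{thm:expnorm}, this delivers $f(\bvec)\leq 6\cdot\E{f(Y)}$.

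I do not anticipate a real obstacle here---the heavy lifting lives in Theorem~\ref{thm:expnorm} and Lemma~\ref{bvecest}. The only item that deserves explicit mention is the ``scaling linearity'' of the envelope construction used above (multiplying the $\budg_\ell$'s by a positive constant multiplies $\bfun$, and hence $\bvec$, by the same constant), which is immediate from the definition of $\bfun$ as the upper boundary of a convex hull.
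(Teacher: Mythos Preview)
Your proposal is correct and matches the paper's own proof essentially line for line: both set $y=\E{Y^\down}$, invoke Theorem~\ref{thm:expnorm} to sandwich $\E{f(Y)}$ by $f(y)$, and then feed $y$ (with the appropriately scaled $\budg_\ell$-sequence) into Lemma~\ref{bvecest}. The only cosmetic difference is that in part~(b) the paper halves the sequence and works with the resulting vector $\bvec/2$, whereas you double $y$ and keep $\bvec$---these are equivalent by homogeneity.
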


To avoid delaying the proof of Theorem~\ref{thm:expnorm}, we 
defer the proofs of Lemma~\ref{bvecest} and Corollary~\ref{detestimate} to
Section~\ref{proofs-expnorm}. 
The proof of Theorem~\ref{thm:expnorm} will also show that $\E{f(Y)}$ can be
controlled by controlling the $\problth(Y)$ statistics. This is particularly useful in
settings where the $Y_i$s are ``atomic'' random variables, and we have direct access to
their distributions. We discuss this in Section~\ref{efboundtau}.

\subsection*{Proof of Theorem~\ref{thm:expnorm}}
We now delve into the proof of the second (and main) inequality of
Theorem~\ref{thm:expnorm}.  
Since $Y$ and $f$ are fixed throughout, we drop the dependence on these in most items of  
notation in the sequel. 
We may assume without loss of generality that $f$ is normalized (i.e.,
$f(1,0,\ldots,0)=1$) as scaling $f$ to normalize it scales both $\E{f(Y)}$ and $f(\E{Y^\down})$
by the same factor.
It will be easier to work with the proxy function $\eproxl(Y)$ for $\E{\topl(Y)}$
defined in Section~\ref{proxy}. 
Recall that 
$\eproxl=\ell\problth+\int_{\problth}^\infty\E{N^{>\qt}(Y)}d\qt$, 
where $\problth$ is the smallest $\qt$ such that $\E{N^{>\qt}(Y)}<\ell$. 
Define $\problth[0]:=\infty$ and $\eproxl[0]:=0$ for notational convenience.
Define 
$\LB':=\sup_{w\in\C}\sum_{\ell\in[m]}(w_\ell-w_{\ell+1})\eproxl$. Given
Theorem~\ref{eproxlthm}, it suffices to show that $\E{f(Y)}\leq\clbp\cdot\LB'$.

The intuition and the roadmap of the proof are as follows. We have 
$f(Y)=\sup_{w\in\C}\sum_{\ell\in[m]}(w_\ell-w_{\ell+1})\topl(Y)$. Plugging in
$\topl(Y)=\int_0^\infty\min\bigl\{\ell,N^{>\qt}(Y)\bigr\}d\qt\leq\ell\problth+\int_{\problth}^\infty
N^{>\qt}(Y)d\qt$, 
we obtain that
\begin{equation}
f(Y)\leq\sup_{w\in\C}\biggl[\sum_{\ell\in[m]}(w_\ell-w_{\ell+1})\Bigl(\ell\problth
+\int_{\problth}^\infty N^{>\qt}(Y)d\qt\Bigr)\biggr].
\label{fyexpr}
\end{equation}
Comparing \eqref{fyexpr} and $\LB'$ (after expanding out the $\eproxl$ terms),
syntactically, the only difference is that the $N^{>\qt}(Y)$ terms appearing in
\eqref{fyexpr}  
are replaced by their expectations in $\LB'$; however the dependence of $f(Y)$ on the
$N^{>\qt}(Y)$ terms is quite non-linear, due to the $\sup$ operator.  
The chief insight that the $N^{>\qt}(Y)$ terms that really matter are those for 
$\qt=\problth$ for $\ell\in[m]$, and that the $N^{>\problth}(Y)$ quantities are 
{\em tightly concentrated} around their expectations. 
This allows us to, in essence, replace the $N^{>\qt}(Y)$ terms for
$\qt=\problth,\,\ell\in[m]$ with their expectations (roughly speaking) when we consider
$\E{f(Y)}$, incurring a constant-factor loss, and thereby argue that $\E{f(Y)}=O(\LB')$.

To elaborate, since $N^{>\qt}(Y)$ is non-increasing in $\qt$, we can upper bound 
each $\int_{\problth}^\infty N^{>\qt}(Y)d\qt$ expression in terms of $N^{>\problth[i]}(Y)$,
for $i=2,\ldots,\ell$, and $\int_{\problth[1]}^\infty N^{>\qt}(Y)d\qt$.
Consequently, the RHS of \eqref{fyexpr} can be upper bounded in
terms of the $N^{>\problth}(Y)$ quantities for $\ell=2,\ldots,m$, and
a term that depends on $\int_{\problth[1]}^\infty N^{>\qt}(Y)d\qt=\sum_{i\in[m]}(Y_i-\problth[1])^+$.
It is easy to charge the expectation of the latter term directly to $\LB'$.
We use $f_{-1}(Y)$ to denote (an upper bound on) the contribution from the
$N^{>\problth}(Y)$ quantities, $\ell=2,\ldots,m$ (see Lemma~\ref{efsplit}).
We bound $\E{f_{-1}(Y)}$ (and hence $\E{f(Y)}$) by $O(\LB')$ by proving that the tail
probability $\Pr[f_{-1}(Y)>\thresh\cdot\LB']$ decays exponentially with $\thresh$.
The {\em crucial observation} here is that 
$\Pr[f_{-1}(Y)>\thresh\cdot\LB']$ is at most
the probability that $N^{>\problth}(Y)>\Omega(\thresh)\cdot\ell$ for some index 
$\ell=2,\ldots,m$ (see Lemma~\ref{nbound} (a)). Each $N^{>\problth(Y)}$ is tightly
concentrated around its expectation, which is at most $\ell$, and so this probability 
is $O\bigl(e^{-\Omega(\thresh)}\bigr)$ (see Lemma~\ref{nconc} (b)).

We begin by proving convenient lower and upper bounds on $\LB'$ that will also 
relate $\LB'$ to the $f$-norm of the deterministic vector 
$\vec{\problth[]}:=(\problth)_{\ell\in[m]}$.
Define $\vec{\problth[]}_{-1}:=(\problth)_{\ell\in\{2,\ldots,m\}}$.
Recall that $f$ is normalized, and so $\sup_{w\in\C}w_1=1$.

\begin{lemma} \label{lbpbounds}
\begin{enumerate*}[(a)]
\item $\LB'\geq f\bigl(\sum_{i\in[m]}\E{(Y_i-\problth[1])^+}+\problth[1],\vec{\problth[]}_{-1}\bigr)$; \quad
\item $\LB'\leq f\bigl(\sum_{i\in[m]}\E{(Y_i-\problth[1])^+}+2\problth[1],\vec{\problth[]}_{-1}\bigr)$.
\end{enumerate*}
\end{lemma}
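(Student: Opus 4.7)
The plan is to identify a single deterministic, non-increasing vector $v \in \Rp^m$ such that $\LB' = f(v)$, and then use monotonicity of $f$ (for part (a)) together with a short mass-shifting computation on ordered norms (for part (b)) to pass to the target vectors.

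First I would rewrite $\LB'$ via Abel summation. Since $w_{m+1} = 0$ for every $w \in \C$ and $\eproxl[0] = 0$, one has $\sum_{\ell=1}^m (w_\ell - w_{\ell+1}) \eproxl = \sum_{\ell=1}^m w_\ell \bigl(\eproxl - \eproxl[\ell-1]\bigr)$. Define $v_\ell := \eproxl - \eproxl[\ell-1]$ for $\ell \in [m]$. Claim~\ref{enintl} gives $v_1 = \eproxl[1] = \problth[1] + \sum_{i \in [m]} \E{(Y_i - \problth[1])^+}$, and Lemma~\ref{eproxconsec} implies both that $v$ is non-increasing (for $\ell \geq 3$, $v_{\ell-1} \geq \problth[\ell-1] \geq v_\ell$, while $v_1 \geq \problth[1] \geq v_2$) and that $\problth \leq v_\ell \leq \problth[\ell-1]$ for every $\ell \geq 2$. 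Therefore $v = v^\down$, and Theorem~\ref{monnormthm}(a) gives $\LB' = \sup_{w \in \C} w^T v^\down = f(v)$.

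Part (a) is then immediate: setting $v' := \bigl(v_1,\, \problth[2],\, \ldots,\, \problth[m]\bigr)$, the bound $v_\ell \geq \problth$ yields $v \geq v'$ coordinatewise, so monotonicity of $f$ gives $\LB' = f(v) \geq f(v')$, which is precisely the claim in (a).

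For part (b), the other half of Lemma~\ref{eproxconsec} gives $v \leq v''$ coordinatewise, where $v'' := \bigl(v_1,\, \problth[1],\, \problth[2],\, \ldots,\, \problth[m-1]\bigr)$, so monotonicity yields $\LB' \leq f(v'')$. The one place where a little care is needed is comparing $f(v'')$ with $f(u)$ for $u := \bigl(v_1 + \problth[1],\, \problth[2],\, \ldots,\, \problth[m]\bigr)$, since $v''$ is \emph{not} coordinatewise dominated by $u$ (position $2$ drops from $\problth[1]$ to $\problth[2]$). I would settle this by a direct computation on ordered norms: both $v''$ and $u$ are already in non-increasing order (using $v_1 \geq \problth[1]$), so for every non-increasing $w \in \C$,
\begin{equation*}
w^T u - w^T v'' \;=\; \sum_{\ell=1}^{m-1} (w_\ell - w_{\ell+1})\,\problth \;+\; w_m\, \problth[m] \;\geq\; 0.
\end{equation*}
Hence $w^T v'' \leq w^T u \leq f(u)$ for every such $w$, yielding $f(v'') \leq f(u)$. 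Combined with $\LB' \leq f(v'')$ and the identity $v_1 + \problth[1] = 2\problth[1] + \sum_{i \in [m]} \E{(Y_i - \problth[1])^+}$, this gives (b).
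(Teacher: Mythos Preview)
Your proof is correct and follows essentially the same route as the paper: rewrite $\LB'$ as $f(v)$ for $v_\ell=\eproxl-\eproxl[\ell-1]$, use Lemma~\ref{eproxconsec} to sandwich $v$ coordinatewise between $(\eproxl[1],\problth[2],\ldots,\problth[m])$ and $(\eproxl[1],\problth[1],\ldots,\problth[m-1])$, and then in part~(b) pass from the latter to $(\eproxl[1]+\problth[1],\problth[2],\ldots,\problth[m])$. The only cosmetic difference is that the paper finishes part~(b) by noting $\topl(v'')\leq\topl(u)$ for all $\ell$ and invoking Theorem~\ref{monnormthm}(b), whereas you verify $w^Tu\geq w^Tv''$ directly via the telescoping identity; these are the same computation.
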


\begin{proof}
We have
$\LB'=\sup_{w\in\C}\sum_{\ell\in[m]}(w_\ell-w_{\ell+1})\eproxl=\sum_{\ell\in[m]}w_\ell(\eproxl-\eproxl[\ell-1])$.
Let $\vec{v}$ be the vector $(\eproxl-\eproxl[\ell-1])_{\ell\in[m]}$. (Recall that
$\eproxl[0]:=0$, so $v_1=\eproxl[1]$.)
Lemma~\ref{eproxconsec} combined with the fact that $\eproxl[1]\geq\problth[1]$, shows
that $\vec{v}$ has non-increasing coordinates, and so we have $\LB'=f(\vec{v})$.
The bounds on $\LB'$ now easily follow from the bounds on $\eproxl-\eproxl[\ell-1]$
in Lemma~\ref{eproxconsec}. 

Part (a) follows immediately from the lower bound on $\eproxl-\eproxl[\ell-1]$ in
Lemma~\ref{eproxconsec}, since it shows that $\vec{v}$ is coordinate-wise larger than the
vector  
$(\eproxl[1],\problth[2],\ldots,\problth[m])=
\bigl(\sum_{i\in[m]}\E{(Y_i-\problth[1])^+}+\problth[1],\vec{\problth[]}_{-1}\bigr)$.

For part (b), the upper bound on $\eproxl-\eproxl[\ell-1]$ from Lemma~\ref{eproxconsec}
shows that $\vec{v}$ is coordinate-wise smaller than the vector
$\vec{u}:=(\eproxl[1],\problth[1],\problth[2],\ldots,\problth[m-1])$. Therefore, 
$\LB'=f(\vec{v})\leq f(\vec{u})$.
Let $\vec{r}$ denote the vector $(\eproxl[1]+\problth[1],\vec{\problth[]}_{-1})$.
Finally, due to Theorem~\ref{monnormthm} (b), note that $f(\vec{u})\leq f(\vec{r})$ 
since it is easy to see that $\topl(\vec{u})\leq\topl(\vec{r})$ for all $\ell\in[m]$.
\end{proof}

For a given $w\in\C$, the term $\sum_{\ell\in[m]}(w_\ell-w_{\ell+1})\ell\problth$ will
figure frequently in our expressions, so to prevent cumbersome notation, we denote this by
$\tone(w)$ in the sequel. 
Define the quantity \linebreak
$f_{-1}(Y):=\sup_{w\in\C}\bigl(\tone(w)+\sum_{\ell=2}^mw_\ell(\problth[\ell-1]-\problth)N^{>\problth}(Y)\bigr)$. 

\begin{lemma} \label{efsplit}
We have $f(Y)\leq\sum_{i\in[m]}(Y_i-\problth[1])^++f_{-1}(Y)$.
\end{lemma}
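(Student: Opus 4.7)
The plan is to take the explicit upper bound on $f(Y)$ in \eqref{fyexpr} and manipulate the residual integrals $\int_{\problth}^\infty N^{>\qt}(Y)\,d\qt$ so that, for $\ell\geq 2$, they become expressions in the quantities $N^{>\problth}(Y)$ that appear inside $f_{-1}(Y)$. The only obstacle is that for $\ell=1$ the domain $[\problth[1],\infty)$ is unbounded, so no such replacement is possible; this is exactly where the term $\sum_{i\in[m]}(Y_i-\problth[1])^+$ on the right-hand side must absorb the contribution.

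Concretely, set $a_\ell := \int_{\problth}^\infty N^{>\qt}(Y)\,d\qt$ with the convention $a_0:=0$. Since $w_{m+1}=0$, Abel summation gives
\[
\sum_{\ell=1}^m (w_\ell-w_{\ell+1})\,a_\ell \;=\; \sum_{\ell=1}^m w_\ell\,(a_\ell - a_{\ell-1}).
\]
For $\ell=1$, $a_1 - a_0 = \int_{\problth[1]}^\infty N^{>\qt}(Y)\,d\qt = \sum_{i\in[m]}(Y_i-\problth[1])^+$, and I would combine this with the bound $w_1\leq \sup_{w'\in\C}w'_1 = f(1,0,\ldots,0)=1$ coming from the normalization of $f$. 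For $\ell\geq 2$, Lemma~\ref{eproxconsec} (or the direct observation that $\qt\mapsto\E{N^{>\qt}(Y)}$ is non-increasing) gives $\problth[\ell-1]\geq\problth$, and the monotonicity of $N^{>\qt}(Y)$ in $\qt$ then yields
\[
a_\ell - a_{\ell-1} \;=\; \int_{\problth}^{\problth[\ell-1]} N^{>\qt}(Y)\,d\qt \;\leq\; (\problth[\ell-1]-\problth)\,N^{>\problth}(Y).
\]

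Substituting these two estimates into the right-hand side of \eqref{fyexpr} shows that, for every $w\in\C$,
\[
\tone(w) + \sum_{\ell=1}^m (w_\ell-w_{\ell+1})\,a_\ell \;\leq\; \sum_{i\in[m]}(Y_i-\problth[1])^+ \,+\, \Bigl[\tone(w) + \sum_{\ell=2}^m w_\ell\,(\problth[\ell-1]-\problth)\,N^{>\problth}(Y)\Bigr].
\]
The first summand is independent of $w$, so taking $\sup_{w\in\C}$ on both sides converts the bracketed expression into $f_{-1}(Y)$ by definition and finishes the proof. The ``hard'' part is really just the bookkeeping around the Abel summation that isolates the $\ell=1$ contribution; beyond that, the argument uses only monotonicity of $N^{>\qt}(Y)$ in $\qt$ and the normalization bound $w_1\leq 1$.
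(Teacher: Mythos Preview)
Your proposal is correct and follows essentially the same route as the paper's proof: start from \eqref{fyexpr}, Abel-sum the integrals $a_\ell=\int_{\problth}^\infty N^{>\qt}(Y)\,d\qt$, separate the $\ell=1$ term (using $w_1\leq 1$ and $\int_{\problth[1]}^\infty N^{>\qt}(Y)\,d\qt=\sum_i(Y_i-\problth[1])^+$), and bound each $\ell\geq 2$ difference by $(\problth[\ell-1]-\problth)N^{>\problth}(Y)$ via monotonicity of $N^{>\qt}(Y)$. One minor remark: citing Lemma~\ref{eproxconsec} for $\problth[\ell-1]\geq\problth$ is a slight mismatch (that lemma concerns $\eproxl$), but your parenthetical direct argument from the monotonicity of $\qt\mapsto\E{N^{>\qt}(Y)}$ is exactly what is needed.
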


\begin{proof}
We have $f(Y)=\sup_{w\in\C}\sum_{\ell\in[m]}(w_\ell-w_{\ell+1})\topl(Y)$.
Recall that $\problth[0]:=\infty$.
Plugging in the inequality 
$\topl(Y)\leq\ell\problth+\int_{\problth}^\infty N^{>\qt}(Y)d\qt$ for all $\ell\in[m]$, we
obtain that
\begin{equation*}
\begin{split}
f(Y) & \leq\sup_{w\in\C}\Bigl(\tone(w)+
\sum_{\ell\in[m]}(w_\ell-w_{\ell+1})\int_{\problth}^\infty N^{>\qt}(Y)d\qt\Bigr) \\ 
& =\sup_{w\in\C}\Bigl(\tone(w)+\sum_{\ell\in[m]}w_\ell
\Bigl(\int_{\problth}^\infty N^{>\qt}(Y)d\qt-\int_{\problth[\ell-1]}^\infty N^{>\qt}(Y)d\qt\Bigr)\Bigr) \\
& =\sup_{w\in\C}\Bigl(\tone(w)+w_1\int_{\problth[1]}^\infty N^{>\qt}(Y)d\qt+
\sum_{\ell=2}^m w_\ell\int_{\problth}^{\problth[\ell-1]} N^{>\qt}(Y)d\qt\Bigr) \\
& \leq \Bigl(\sup_{w\in\C}w_1\Bigr)\sum_{i\in[m]}(Y_i-\problth[1])^++
\sup_{w\in\C}\Bigl(\tone(w)+\sum_{\ell=2}^m w_\ell(\problth[\ell-1]-\problth)N^{>\problth}(Y)\Bigr) \\
& =\sum_{i\in[m]}(Y_i-\problth[1])^++f_{-1}(Y).
\end{split}
\end{equation*}
The final inequality follows since 
$\int_{t}^\infty N^{>\qt}(Y)d\qt=\sum_{i\in[m]}(Y_i-t)^+$ for any $t\geq 0$, and since
$N^{>\qt}(Y)$ is a non-increasing function of $\qt$.
\end{proof}

\begin{lemma} \label{nbound} \label{nconc}
Let $\thresh\geq 3$.
(a) $\Pr[f_{-1}(Y)>\thresh\cdot\LB']$ is at most 
$\Pr[\exists\ell\in\{2,\ldots,m\}\ \text{s.t.}\ N^{>\problth}(Y)>\thresh(\ell-1)]$; and
(b) the latter probability is at most $2.55\cdot e^{-\thresh/3}$.
\end{lemma}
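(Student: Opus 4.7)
The lemma splits into an algebraic implication (a) and a tail bound (b). For (a), my goal is to prove the implication ``if $N^{>\problth}(Y)\leq\thresh(\ell-1)$ for every $\ell\in\{2,\ldots,m\}$, then $f_{-1}(Y)\leq\thresh\cdot\LB'$,'' whose contrapositive yields the stated probabilistic inequality.

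The first step would be to repackage $f_{-1}(Y)$ so that the role of the $N^{>\problth}(Y)$ statistics is transparent. Rewriting $\tone(w)=\sum_{\ell=1}^m(w_\ell-w_{\ell+1})\ell\problth$ via Abel summation (with $w_{m+1}=0$) gives $\tone(w)=w_1\problth[1]+\sum_{\ell=2}^m w_\ell\bigl(\ell\problth-(\ell-1)\problth[\ell-1]\bigr)$. Combining with the $N^{>\problth}(Y)$ summand in the definition of $f_{-1}(Y)$, and using the identity $\ell\problth-(\ell-1)\problth[\ell-1]+(\ell-1)(\problth[\ell-1]-\problth)=\problth$, yields
\[
f_{-1}(Y)=\sup_{w\in\C}\Biggl[w_1\problth[1]+\sum_{\ell=2}^m w_\ell\Bigl(\problth+(\problth[\ell-1]-\problth)\bigl(N^{>\problth}(Y)-(\ell-1)\bigr)\Bigr)\Biggr].
\]
Under the hypothesis, each bracketed term is at most $\problth+(\thresh-1)(\ell-1)(\problth[\ell-1]-\problth)$. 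Comparing against the lower bound $\LB'\geq\sup_{w\in\C}\bigl[w_1\eproxl[1]+\sum_{\ell=2}^m w_\ell\problth\bigr]$ from Lemma~\ref{lbpbounds}(a), and using $\eproxl[1]\geq\problth[1]$, the desired inequality reduces (for each fixed $w$) to showing $\sum_{\ell=2}^m w_\ell\bigl[(\ell-1)\problth[\ell-1]-\ell\problth\bigr]\leq w_1\problth[1]$. The key observation is that the partial sums $A_k:=\sum_{\ell=2}^k\bigl[(\ell-1)\problth[\ell-1]-\ell\problth\bigr]$ telescope to $\problth[1]-k\problth[k]\leq\problth[1]$; Abel summation with the non-negative coefficients $w_m$ and $w_\ell-w_{\ell+1}$ (since $w$ is non-increasing) then bounds the sum by $w_2\problth[1]\leq w_1\problth[1]$.

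For (b), the plan is a union bound combined with the Chernoff bound of Lemma~\ref{chernoff} applied to $N^{>\problth}(Y)=\sum_i\bon[Y_i>\problth]$, which is a sum of $m$ independent Bernoullis. By the definition of $\problth$ together with right-continuity of the CDFs, its mean is at most $\ell$. For indices $\ell$ for which the relative deviation $(\thresh(\ell-1)-\ell)/\ell$ exceeds $1$ (which holds for all $\ell\geq 4$ when $\thresh\geq 3$), the simpler form $e^{-(\thresh(\ell-1)-\ell)/3}$ of the Chernoff bound produces rapid geometric decay and a summable tail. For the handful of small indices ($\ell=2,3$), I would use the full Chernoff expression $\bigl(e^\epsilon/(1+\epsilon)^{1+\epsilon}\bigr)^\mu$ directly, and additionally sharpen via integer-valuedness of $N^{>\problth}(Y)$ (so that $\{N^{>\problth}(Y)>\thresh(\ell-1)\}=\{N^{>\problth}(Y)\geq\lceil\thresh(\ell-1)\rceil+1\}$). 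Summing these contributions yields a bound of the form $C\,e^{-\thresh/3}$, and a careful constant-chase produces $C=2.55$.

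The main obstacle is the algebraic repackaging in part (a): spotting the cancellation $\ell\problth-(\ell-1)\problth[\ell-1]+(\ell-1)(\problth[\ell-1]-\problth)=\problth$ that collapses the correction into the clean form above, and then exploiting both the telescoping structure of $A_k$ and the monotonicity of $w$ in the Abel summation so that the bound lands cleanly at $w_1\problth[1]$. Part (b) is more routine in spirit but requires attention to constants at small $\ell$, where the Chernoff deviation is only moderate and the integer-valued sharpening is needed to match the stated constant.
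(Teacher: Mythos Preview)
Your proposal is correct and follows essentially the same route as the paper, though part (a) is organized a bit differently. The paper bounds $N^{>\problth}(Y)\leq\thresh(\ell-1)$ and, crucially, also uses the crude inequality $\tone(w)\leq\thresh\cdot\tone(w)$ (valid since $\thresh\geq 1$) to pull out the factor $\thresh$ immediately; it then collapses via the single identity $\tone(w)+\sum_{\ell\geq 2}w_\ell(\ell-1)(\problth[\ell-1]-\problth)=\sum_{\ell}w_\ell\problth=f(\vec{\problth[]})$, and invokes $f(\vec{\problth[]})\leq\LB'$ from Lemma~\ref{lbpbounds}(a). Your version keeps the sharper $(\thresh-1)(\ell-1)$ correction and therefore needs the extra telescoping/Abel step to show $\sum_{\ell\geq 2}w_\ell[(\ell-1)\problth[\ell-1]-\ell\problth]\leq w_1\problth[1]$; this inequality is just a rearrangement of the same identity, so the two arguments are equivalent at the algebraic level, with the paper's presentation being a little more direct.

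For part (b), both proofs are union bound plus Chernoff on $N^{>\problth}(Y)$; the paper uses $\ell-1\geq\ell/2$ to claim $\Pr[N^{>\problth}(Y)>\thresh(\ell-1)]\leq e^{-\thresh\ell/6}$ uniformly and then sums the geometric series, whereas you propose treating small $\ell$ with the full Chernoff form and integer-valuedness. Your extra care is not needed for the paper's bookkeeping but is a legitimate (and arguably more transparent) way to land on the same constant.
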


\begin{proof}
For part (a), we show that if $N^{>\problth}(Y)\leq\thresh(\ell-1)$ for all
$\ell=2,\ldots,m$, then $f_{-1}(Y)\leq\thresh f(\vec{\problth[]})\leq\thresh\cdot\LB'$.
The upper bounds on $N^{>\problth}(Y)$ (and since $\tone(w)\geq 0$ for all $w\in\C$) imply that 
$f_{-1}(Y)$ is at most $\thresh\cdot
\sup_{w\in\C}\bigl(\tone(w)+\sum_{\ell=2}^m w_\ell(\problth[\ell-1]-\problth)(\ell-1)\bigr)$.
Expanding $\tone(w)=\sum_{\ell\in[m]}(w_\ell-w_{\ell+1})\ell\problth$, this bound simplifies to 
$\thresh\cdot\sup_{w\in\C}\sum_{\ell\in[m]}w_\ell\problth=\thresh\cdot f(\vec{\problth[]})$.

For part (b), we note that $\E{N^{>\problth}(Y)}<\ell$ by the choice of $\problth$, and
$N^{>\problth}(Y)=\sum_{i\in[m]}\bon_{Y_i>\problth}$, where the $\bon_{Y_i>\problth}$ are
{\em independent} random variables. Noting that $\ell-1\geq\ell/2$ (as $\ell\geq 2$),
by Chernoff bounds (Lemma~\ref{chernoff}), 
we have $\Pr[N^{>\problth}(Y)>\thresh(\ell-1)]\leq e^{-\thresh\ell/6}$.
So 
\[
\Pr[\exists\ell\in\{2,\ldots,m\}\ \text{s.t.}\ N^{>\problth}(Y)>\thresh(\ell-1)]
\leq\sum_{\ell=2}^me^{-\thresh\ell/6}
\leq\frac{e^{-\thresh/3}}{1-e^{-\thresh/6}}\leq 2.55\cdot e^{-\thresh/3}.
\]
The final inequality follows because $\thresh\geq 3$.
\end{proof}

\begin{proof}[{\bf Finishing up the proof of Theorem~\ref{thm:expnorm}}]
We have 
\begin{equation*}
\begin{split}
\E{f_{-1}(Y)}=\int_0^\infty\Pr[f_{-1}(Y)>\qt]d\qt
& \leq 3\cdot\LB'+\int_{3\LB'}^\infty\Pr[f_{-1}(Y)>\qt]d\qt \\
& \leq 3\cdot\LB'+\LB'\cdot\int_3^\infty\prob{f_{-1}(Y)>\thresh\cdot\LB'}d\thresh
\end{split}
\end{equation*}
where the final inequality follows by the variable-change $\thresh=\qt/\LB'$.
Using Lemma~\ref{nconc}, the factor multiplying $\LB'$ in the second term is at most
$\int_3^\infty 2.55e^{-\thresh/3}d\thresh=3\cdot 2.55\cdot e^{-1}\leq 3$.
Combining this with Lemmas~\ref{efsplit} and~\ref{lbpbounds}, we obtain that
$\E{f(Y)}\leq\clbp\cdot\LB'$. Hence, by Theorem~\ref{eproxlthm}, 
$\E{f(Y)}\leq\cexpnorm\cdot f(\E{Y^\down})$.
\end{proof}

\subsection{Proofs of Lemma~\ref{bvecest} and Corollary~\ref{detestimate}} 
\label{proofs-expnorm}

\begin{duplicate}[Lemma~\ref{bvecest} (restated)]
Let $f$ be a monotone, symmetric norm on $\Rp^m$, and $y\in\Rp^m$ be a non-increasing
vector. 
Let $\{\budg_\ell\}_{\ell\in\POS_m}$ be a non-decreasing, nonnegative sequence such that
$\budg_{\ell}\leq 2\budg_{\ell/2}$ for all $\ell\in\POS_m, \ell>1$.  
Let $\bfun:[0,m]\mapsto\Rp$ be the {upper envelope curve} of the sequence 
$\{\budg_\ell\}_{\ell\in\POS_m}$.
Define $\bvec:=\bigl(\bfun(i)-\bfun(i-1)\bigr)_{i\in[m]}$.
\begin{enumerate}[(a), topsep=0.25ex, itemsep=0ex, leftmargin=*]
\item 
If $\Topl{y} \leq \budg_\ell$ for all $\ell \in \POS_m$, then $f(y) \leq 2 f(\bvec)$.

\item 
If $\Topl{y} \geq \budg_\ell$ for all $\ell \in \POS_m$, then $f(y) \geq f(\bvec)/3$.
\end{enumerate}
\end{duplicate}

\begin{proof} 
Recall that the upper envelope curve $\bfun:[0,m]\to\Rp$ 
of $\{\budg_\ell\}_{\ell\in\POS_m}$ 
is given by $\bfun(x):=\max\,\bigl\{y: (x,y)\in\conv(S)\bigr\}$, where 
$S=\{(\ell,\budg_\ell): \ell\in\POS_m\}\cup\{(0,0),(m,\budg_K)\}$ and $\conv(S)$ denotes
the convex hull of $S$. 
Define $\budg_0:=0$, and $\budg_\ell=\budg_{K}$ for all $\ell>K$, for notational
convenience. 

We may assume that $\budg_1>0$, as otherwise $\bfun(x)=0$ for all $x\in[0,m+1]$; so
$\bvec=\vec{0}$, and in part (a), we have $y=\vec{0}$, so both parts hold trivially.
By Theorem~\ref{monnormthm} (b), it suffices to show that $\topl[i](y)\leq 2\topl[i](\bvec)$
for all $i\in[m]$ for part (a), and to show that $\topl[i](y)\geq\topl[i](\bvec)/3$ for all
$i\in[m]$ for part (b). 

It is easy to see that $\bfun$ is a {\em non-decreasing, concave function}: consider 
$0\leq x<x'\leq m$. By Carath\'eodory's theorem, there at most two points
$(\ell_1,\budg_{\ell_1})$, $(\ell_2,\budg_{\ell_2})$ in $S$ such that $(x,\bfun(x))$ lies
in the convex hull of these two points, which is a line segment $L$. Without loss of
generality, suppose $\ell_1\leq x\leq \ell_2$, and so
$\budg_{\ell_1}\leq\bfun(x)\leq\budg_{\ell_2}$. If $x'\leq\ell_2$, then the point 
$(x',\cdot)$ on $L$ has $y$-coordinate at least $\bfun(x)$. If $x'>\ell_2$, then
$\bfun(x')$ is at least the $y$-coordinate of the point $(x',\cdot)$ lying on the line
segment joining $(\ell_2,\budg_{\ell_2})$ and $(m,\budg_K)$; this value is at least
$\budg_{\ell_2}\geq\bfun(x)$.

Concavity follows simply from the fact that for any $x_1,x_2\in[0,m+1]$ and $\ld\in[0,1]$, 
since the points $(x_1,\bfun(x_1))$ and $(x_2,\bfun(x_2))$ lie in $\conv(S)$, the point
$\bigl(\ld x_1+(1-\ld)x_2,\ld\bfun(x_1)+(1-\ld)\bfun(x_2)\bigr)$ is also in
$\conv(S)$. Hence, $\bfun(\ld x_1+(1-\ld)x_2)\geq\ld\bfun(x_1)+(1-\ld)\bfun(x_2)$.

It follows that $\bvec$ is a non-increasing vector, and
$\topl[i](\bvec)=\sum_{j=1}^i\bvec_j=\bfun(i)$ for all $i\in[m]$.

Part (a) now follows easily. 
Consider any $i\in[m]$, and let $\ell$ be the largest index in $\POS_m$ that is at
most $i$. We have 
$$
\topl[i](\bvec)=\bfun(i)\geq\bfun(\ell)\geq\topl(y)\geq\topl[i](y)/2.
$$
The first inequality is due to the monotonicity of $\bfun$; the second follows from our
assumptions, and the last inequality is because $i\leq 2\ell$.

For part (b), consider any $i\in[m]$, and again let $\ell$ be the largest index in
$\POS_m$ that is at most $i$. If $(i,\bfun(i))$ is an {\em extreme point} of $\conv(S)$,
then it must be that $i\in\POS_m\cup\{m\}$, and $\bfun(i)=\budg_i$. Also, we have
$\budg_i\leq\budg_\ell\leq\topl(y)\leq\topl[i](y)$.
So suppose $(i,\bfun(i))$ is not an extreme point of $\conv(S)$.
Then there are two points $(\ell_1,\budg_{\ell_1})$, $(\ell_2,\budg_{\ell_2})$ in $S$, which
are extreme points of $\conv(S)$, such that $(i,\bfun(i))$ lies on the line segment $L$
joining these two points. Let $\ell_1\leq i\leq\ell_2$.
The slope of $L$ is $s:=\frac{\budg_{\ell_2}-\budg_{\ell_1}}{\ell_2-\ell_1}$.
Noting that $\bfun(x)=\budg_x$ for $x\in\{\ell_1,\ell_2\}$, we have
$\bfun(i)=\budg_{\ell_1}+(i-\ell_1)s$. 
We argue that $s\leq\frac{\budg_\ell}{\ell}$, which implies that 
$$
\bfun(i)\leq\budg_{\ell_1}+(i-\ell_1)\cdot\frac{\budg_\ell}{\ell}\leq 
3\budg_\ell\leq 3\topl(y)\leq 3\topl[i](y).
$$
Since $\budg_{\ell'}\leq 2\budg_{\ell'/2}$ for all $\ell'\in\POS_m, \ell'>1$, it
follows that $\bigl\{\frac{\budg_{\ell'}}{\ell'}\bigr\}_{\ell'\in\POS_m}$ is a
non-increasing sequence. It follows that
$\budg_{\ell_1}\geq\ell_1\cdot\frac{\budg_{\ell_2}}{\ell_2}$, and therefore 
$s\leq\frac{\budg_{\ell_2}}{\ell_2}\leq\frac{\budg_\ell}{\ell}$. 
\end{proof}

\begin{duplicate}[Corollary~\ref{detestimate} (restated)]
Let $Y$ follow a product distribution on $\Rp^m$, and $f$ be a 
monotone, symmetric norm on $\R^m$.
Let $\{\budg_\ell\}_{\ell\in\POS_m}$ be a non-decreasing sequence 
such that $\budg_{\ell}\leq 2\budg_{\ell/2}$ for all $\ell\in\POS_m, \ell>1$. 
Let $\bvec\in\Rp^m$ be the vector given by Lemma~\ref{bvecest} for
the sequence $\{\budg_\ell\}_{\ell\in\POS_m}$. 
\begin{enumerate}[(a), topsep=0.5ex, itemsep=0ex, leftmargin=*]
\item If $\E{\topl(Y)}\leq\al\budg_\ell$ for all $\ell\in\POS_m$ (where $\al>0$), then  
$\E{f(Y)}\leq 2\cdot\cexpnorm\al\cdot f(\bvec)$.

\item If $\budg_\ell\leq 2\cdot\E{\topl(Y)}$ for all $\ell\in\POS_m$, 
then $f(\bvec)\leq 6\cdot\E{f(Y)}$.
\end{enumerate}
\end{duplicate}

\begin{proof} 
The proof follows by combining Theorem~\ref{thm:expnorm} and Lemma~\ref{bvecest}.
Consider the vector $y=\E{Y^\down}$.
By Theorem~\ref{thm:expnorm}, we have $\E{f(Y)}\leq\cexpnorm f(y)$. 
For part (a), we apply Lemma~\ref{bvecest} (a) with the vector $y$ and the sequence
$\{\al\budg_\ell\}_{\ell\in\POS_m}$. Note that the vector given by Lemma~\ref{bvecest} for
$\{\al\budg_\ell\}_{\ell\in\POS_m}$ is exactly $\al\bvec$. Thus, we obtain that 
$f(y)\leq 2f(\al\bvec)=2\al\cdot f(\bvec)$.

For part (b), we have $\E{f(Y)}\geq f(y)$. 
Applying Lemma~\ref{bvecest} (b) with the vector $y$ and the sequence
$\{\budg_\ell/2\}_{\ell\in\POS_m}$, we obtain that $f(y)\geq f(\bvec/2)/3=f(\bvec)/6$.
\end{proof}

\subsection{Controlling $\E{f(Y)}$ using the $\problth(Y)$ statistics} \label{efboundtau}
It is implicit from the proof of Theorem~\ref{thm:expnorm} that the $\problth$ statistics
control $\E{f(Y)}$. We now make this explicit, so that we can utilize this in settings 
where 
we have direct access to the distributions of the $Y_i$ random variables. 
As before, 
we drop the dependence on $Y$ and $f$ in most items of notation. 
The proof of Theorem~\ref{thm:expnorm} coupled with Lemma~\ref{lbpbounds} and the fact
that $\LB'=\Theta\bigl(f(\E{Y^\down})\bigr)$, yields the following useful 
expression involving $\problth$s for estimating $\E{f(Y)}$.
As before, $\vec{\problth[]}(Y)$ is the vector $\bigl(\problth(Y)\bigr)_{\ell\in[m]}$.

\begin{lemma} \label{tauexpr}
Let $Y$ follow a product distribution on $\Rp^m$. Let $f$ be a normalized, 
monotone, symmetric norm on $\R^m$. 
We have
$\frac{1}{\ctauexpr}\cdot\E{f(Y)}
\leq\sum_{i\in[m]}\E{\bigl(Y_i-\problth[1](Y)\bigr)^+}+f\bigl(\vec{\problth[]}(Y)\bigr)
\leq 8\cdot\E{f(Y)}$.
\end{lemma}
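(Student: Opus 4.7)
The plan is to assemble the statement from three ingredients already in place: Lemma~\ref{lbpbounds}, which sandwiches $\LB' := \sup_{w\in\C}\sum_{\ell}(w_\ell-w_{\ell+1})\eproxl(Y)$ between $f$ applied to two closely related vectors; Theorem~\ref{eproxlthm} and Theorem~\ref{thm:expnorm}, which relate $\LB'$ and $\E{f(Y)}$; and the fact from the proof of Theorem~\ref{thm:expnorm} that $\E{f(Y)} \leq \clbp \cdot \LB'$. Throughout, write $A := \sum_{i\in[m]} \E{(Y_i - \problth[1])^+}$ and $P := A + f(\vec{\problth[]})$, so that the claim is $\E{f(Y)}/\ctauexpr \leq P \leq 8\,\E{f(Y)}$.

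First I would establish the upper bound $P \leq 8\,\E{f(Y)}$. By Lemma~\ref{lbpbounds}(a), $\LB' \geq f(A+\problth[1],\vec{\problth[]}_{-1})$. The argument vector dominates $(A,0,\ldots,0)$ coordinate-wise, and also dominates $\vec{\problth[]} = (\problth[1],\vec{\problth[]}_{-1})$ coordinate-wise; by monotonicity of $f$ and the normalization $f(1,0,\ldots,0)=1$, $\LB' \geq \max\{A,\, f(\vec{\problth[]})\} \geq P/2$. On the other hand, Theorem~\ref{eproxlthm} gives $\eproxl(Y) \leq 4\E{\topl(Y)}$ for all $\ell$, so taking the supremum over $\C$ after substituting into the definition of $\LB'$ yields $\LB' \leq 4\sup_{w\in\C} w^T \E{Y^\down} = 4\, f(\E{Y^\down})$, and the easy direction of Theorem~\ref{thm:expnorm} gives $f(\E{Y^\down}) \leq \E{f(Y)}$. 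Chaining: $P \leq 2\LB' \leq 8\,f(\E{Y^\down}) \leq 8\,\E{f(Y)}$.

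Next I would prove the lower bound $\E{f(Y)} \leq 14 P$. The proof of Theorem~\ref{thm:expnorm} (before its final invocation of Theorem~\ref{eproxlthm}) shows $\E{f(Y)} \leq \clbp \cdot \LB'$. By Lemma~\ref{lbpbounds}(b), $\LB' \leq f(A + 2\problth[1],\,\vec{\problth[]}_{-1})$. Now apply the triangle inequality twice: first decompose $(A+2\problth[1],\vec{\problth[]}_{-1}) = (A,0,\ldots,0) + (2\problth[1],\vec{\problth[]}_{-1})$ and use $f(A,0,\ldots,0) = A$; second, decompose $(2\problth[1],\vec{\problth[]}_{-1}) = (\problth[1],0,\ldots,0) + \vec{\problth[]}$ to get $f(2\problth[1],\vec{\problth[]}_{-1}) \leq \problth[1] + f(\vec{\problth[]}) \leq 2\,f(\vec{\problth[]})$, where the last step uses Claim~\ref{sandwich} (the max coordinate $\problth[1]$ lower-bounds $f(\vec{\problth[]})$). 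Combining, $\LB' \leq A + 2 f(\vec{\problth[]}) \leq 2P$, so $\E{f(Y)} \leq 7\cdot 2P = 14 P = \ctauexpr\cdot P$.

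There is no real obstacle here; the content is entirely a bookkeeping exercise on top of Lemma~\ref{lbpbounds} and Theorem~\ref{thm:expnorm}. The only small point to watch is ensuring the triangle-inequality decompositions keep all vectors nonnegative (so that monotonicity plus the normalization $f(1,0,\ldots,0)=1$ applies cleanly) and that the chain of constants matches $\clbp = 7$ and the claimed $\ctauexpr = 14$.
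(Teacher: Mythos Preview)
The proposal is correct and follows essentially the same approach as the paper: both use $\E{f(Y)}\leq\clbp\cdot\LB'$ together with Lemma~\ref{lbpbounds}(b) for the lower bound, and $\LB'\leq 4\,f(\E{Y^\down})\leq 4\,\E{f(Y)}$ together with Lemma~\ref{lbpbounds}(a) for the upper bound. Your proof simply spells out the details (the coordinate-wise domination and the triangle-inequality decompositions) that the paper leaves implicit when it says ``follows from Lemma~\ref{lbpbounds}.''
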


\begin{proof} 
We abbreviate $\problth(Y)$ to $\problth$ for all $\ell\in[m]$, and $\vec{\problth[]}(Y)$
to $\vec{\problth[]}$.
We have $\E{f(Y)}\geq f(\E{Y^\down})\geq\LB'/4\geq\frac{1}{8}\cdot
\bigl(\sum_{i\in[m]}\E{(Y_i-\problth[1])^+}+f(\vec{\problth[]})\bigr)$. The second
inequality is due to Theorem~\ref{eproxlthm}, and the third follows from 
Lemma~\ref{lbpbounds} (a). 

We also have 
$\E{f(Y)}\leq\clbp\cdot\LB'\leq
2\cdot\clbp\bigl(\sum_{i\in[m]}\E{(Y_i-\problth[1])^+}+f(\vec{\problth[]})\bigr)$.
The first inequality is what we show in the proof of Theorem~\ref{thm:expnorm}; the second
inequality is follows from Lemma~\ref{lbpbounds} (b). 
\end{proof}

In utilizing Lemma~\ref{tauexpr} in an application, 
we work with estimates $\{\bproblth\}_{\ell\in\POS_m}$ of the $\problth^*$ values,
for $\ell\in\POS_m$, of the cost vector arising from an optimal solution,  
and seek a solution whose cost vector $Y$ minimizes
$\sum_{i\in[m]}\E{(Y_i-\bproblth[1])^+}$ subject to the constraint that (roughly speaking)
the $\problth(Y)$s are bounded by the corresponding $\bproblth$ values, for all
$\ell\in\POS_m$. 
Lemma~\ref{tauexpr} then indicates that if the $\bproblth$s are good estimates of the
$\problth^*$ values, then our solution is a near-optimal solution.
We state a more-robust such result below that incorporates various approximation factors, 
which will be particularly convenient to apply since
such approximation factors inevitably arise in the process of finding $Y$.
We need the following notation: given a non-increasing sequence
$\{v_\ell\}_{\ell\in\POS_m}$, we define its {\em expansion} to be the
vector $v'\in\R^m$ given by $v'_i:=v_{2^{\floor{\log_2 i}}}$ for all $i\in[m]$.

\begin{theorem} \label{apxstatsthm}
Let $Y$ follow a product distributions on $\Rp^m$. Let $f$ be a normalized, 
monotone, symmetric norm on $\R^m$. 
Let $\{\bproblth\}_{\ell\in\POS_m}$ be a nonnegative, non-increasing sequence, and
$\vtp$ be its expansion. 
\begin{enumerate}[(a), topsep=0.5ex, itemsep=0ex, leftmargin=*]
\item Suppose that $\problth[\beta\ell](Y)\leq\al\bproblth$ for all $\ell\in\POS_m$, where 
$\al,\beta\geq 1$. Then, we have that \linebreak
$\E{f(Y)}\leq 2\cdot\clbp\cdot(\al+2)\beta\cdot
\bigl(\sum_{i\in[m]}\E{(Y_i-\al\bproblth[1])^+}+f(\vtp)\bigr)$.

\item Suppose that $\problth(Y)\leq\bproblth\leq 2\problth(Y)+\kp$
for all $\ell\in\POS_m$, where $\kp\geq 0$. 
Then, we have that \linebreak
$\sum_{i\in[m]}\E{(Y_i-\bproblth[1])^+}+f(\vtp)\leq 32\cdot\E{f(Y)}+m\kp$.
\end{enumerate}
\end{theorem}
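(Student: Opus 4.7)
The plan is to reduce both parts to bounding the quantity $S(Y,f):=\sum_{i\in[m]}\E{(Y_i-\problth[1](Y))^+}+f\bigl(\vec{\problth[]}(Y)\bigr)$ from above or below, and then to transfer the inequality to $\E{f(Y)}$ via Lemma~\ref{tauexpr}, which yields $\E{f(Y)}/\ctauexpr\leq S(Y,f)\leq 8\cdot\E{f(Y)}$. Part (a) demands an upper bound on $S(Y,f)$ proportional to $A+B:=\sum_i\E{(Y_i-\al\bproblth[1])^+}+f(\vtp)$, while part (b) requires the reverse inequality between the analogous quantity with $\bproblth[1]$ and $\vtp$, and $S(Y,f)$.

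For part (a), I will handle the two summands of $S(Y,f)$ separately. The expectation sum is controlled via Lemma~\ref{eproxlexpr}: specialising the $\min_t$ formula for $\eproxl[1](Y)$ to $t=\al\bproblth[1]$ gives $\problth[1](Y)+\sum_i\E{(Y_i-\problth[1](Y))^+}=\eproxl[1](Y)\leq\al\bproblth[1]+A$, so the expectation sum is at most $A+\al\bproblth[1]\leq A+\al B$ (using the normalization of $f$ to conclude $\bproblth[1]=\topl[1](\vtp)\leq f(\vtp)=B$). For $f(\vec{\problth[]}(Y))$, I will dominate $\vec{\problth[]}(Y)$ coordinate-wise by a non-increasing vector $\vec{u}$: for $j\geq\beta$, set $u_j:=\al\bproblth[2^{\floor{\log_2(j/\beta)}}]$, which by the hypothesis together with monotonicity of $\problth[\cdot](Y)$ in the index dominates $\problth[j](Y)$; for $j<\beta$, where the hypothesis gives no direct bound, I fall back on $u_j:=\problth[1](Y)\leq A+\al\bproblth[1]$. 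Splitting $\vec{u}=\vec{u}^{(1)}+\vec{u}^{(2)}$ into the first $\beta-1$ positions versus the rest, the triangle inequality and the above bound on $\problth[1](Y)$ yield $f(\vec{u}^{(1)})\leq(\beta-1)(A+\al B)$; for $\vec{u}^{(2)}$, I will compute prefix sums bin-by-bin to verify that $\topl(\vec{u}^{(2)})\leq\al\beta\cdot\topl(\vtp)$ holds for every $\ell\in[m]$, so that Theorem~\ref{monnormthm}(b) delivers $f(\vec{u}^{(2)})\leq\al\beta B$. Assembling everything bounds $S(Y,f)$ by $O((\al+2)\beta)(A+B)$, and applying Lemma~\ref{tauexpr} yields the target inequality.

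For part (b), the plan is the reverse direction. Lemma~\ref{tauexpr} directly gives $S(Y,f)\leq 8\E{f(Y)}$. Since $\bproblth[1]\geq\problth[1](Y)$ by hypothesis, $\sum_i\E{(Y_i-\bproblth[1])^+}\leq\sum_i\E{(Y_i-\problth[1](Y))^+}$ is immediate. The bulk of the work is to bound $f(\vtp)$: for each $i\in[m]$, write $\vtp_i=\bproblth[\ell']$ with $\ell':=2^{\floor{\log_2 i}}\in[\ceil{i/2},i]$, so by the upper-bound hypothesis $\vtp_i\leq 2\problth[\ell'](Y)+\kp\leq 2\problth[\ceil{i/2}](Y)+\kp$, where the second inequality uses monotonicity of $\problth$ in its index. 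Summing over $i\in[k]$, each $\problth[j](Y)$ can appear at most twice as $\problth[\ceil{i/2}](Y)$, so $\topl(\vtp)\leq 4\topl(\vec{\problth[]}(Y))+k\kp\leq 4\topl(\vec{\problth[]}(Y))+m\kp$ for every $\ell\in[m]$. Theorem~\ref{monnormthm}(b) then yields $f(\vtp)\leq 4f(\vec{\problth[]}(Y))+m\kp$. Combining these, $\sum_i\E{(Y_i-\bproblth[1])^+}+f(\vtp)\leq 4\cdot S(Y,f)+m\kp\leq 32\E{f(Y)}+m\kp$, as claimed.

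The main technical obstacle, concentrated in part (a), is the verification that the stretched step-function vector $\vec{u}^{(2)}$ satisfies $\topl(\vec{u}^{(2)})\leq\al\beta\cdot\topl(\vtp)$ at every index $\ell\in[m]$, not merely at the bin boundaries. The cleanest case is $\ell=\beta 2^{s+1}-1$, where the identity $\topl[\beta 2^{s+1}-1](\vec{u}^{(2)})=\al\beta\cdot\topl[2^{s+1}-1](\vtp)$ drops out of the prefix-sum computation; for intermediate $\ell$ lying inside a bin $[\beta 2^s,\beta 2^{s+1}-1]$, a short case analysis is needed that absorbs the partial-bin contribution into $\al\beta\topl(\vtp)$ by exploiting that $\topl(\vtp)/\ell$ is non-increasing in $\ell$.
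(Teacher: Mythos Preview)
Your approach is correct and essentially identical to the paper's. Both proofs reduce via Lemma~\ref{tauexpr} to controlling $S(Y,f)=\sum_i\E{(Y_i-\problth[1](Y))^+}+f(\vec{\problth[]}(Y))$, handle the expectation sum through the $\eproxl[1]$ characterization in Lemma~\ref{eproxlexpr}, and compare $f(\vec{\problth[]}(Y))$ with $f(\vtp)$ by bounding $\topl$-norms and invoking Theorem~\ref{monnormthm}(b); part~(b) is line-for-line the same. The only cosmetic difference in part~(a) is that the paper bounds $\topl(\vec{\problth[]}(Y))$ directly by counting how many indices $i\in[\ell]$ map to each $\bproblth[j]$ (at most $\beta j$ occurrences, giving $\topl(\vec{\problth[]}(Y))\leq\beta\problth[1](Y)+2\beta\cdot\topl(\vtp)$), whereas you construct an explicit dominating vector $\vec{u}$ and split it; your bin-boundary identity plus concavity of $\ell\mapsto\topl(\vtp)$ is exactly what makes the paper's count go through, so the two arguments are repackagings of each other with slightly different constant bookkeeping.
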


\begin{proof}
We abbreviate $\POS_m$ to $\POS$.
We abbreviate $\problth(Y)$ to $\problth$ for all $\ell\in[m]$.
Define $\problth[0]:=\infty$ 
for notational convenience.
Let $K:=2^{\floor{\log_2 m}}$ be the largest index in $\POS$.
Let $\vec{\problth[]}=(\problth[1],\problth[2],\ldots,\problth[m])$.

\vspace*{-1ex}
\paragraph{Part (a).}
We begin with the bound 
$\E{f(Y)}\leq 2\cdot\clbp\cdot\bigl(\sum_{i\in[m]}\E{(Y_i-\problth[1])^+}+f(\vec{\problth[]})\bigr)$
given by Lemma~\ref{tauexpr}:
We proceed to bound the RHS expression in terms of the quantities in the stated
upper bound.

By Lemma~\ref{eproxlexpr}, we have
\begin{equation}
\problth[1]+\sum_{i\in[m]}\E{(Y_i-\problth[1])^+}=\eproxl[1](Y)\leq
\al\bproblth[1]+\sum_{i\in[m]}\E{(Y_i-\al\bproblth[1])^+}.
\label{apxstatsineq1}
\end{equation}
To compare $f(\vec{\problth[]})$ and $f(\vtp)$, by Theorem~\ref{monnormthm}
(b), it suffices to compare the $\topl$-norms of $\vec{\problth[]}$ and
$\vtp$
Consider $i\in[m]$ with $i\geq\beta$. Note that $2^{\floor{\log_2(\floor{i/\beta})}}\in\POS$.
We have
$$
\problth[i]\leq\problth[\beta\cdot\floor{i/\beta}]
\leq\problth[\beta\cdot 2^{\floor{\log_2(\floor{i/\beta})}}]
\leq\al\bproblth[2^{\floor{\log_2(\floor{i/\beta})}}].
$$
Therefore, for any $\ell\in[m]$, we have that
$$
\topl(\vec{\problth[]})=\sum_{i=1}^\ell\problth[i]
\leq\beta\problth[1]+\sum_{i=\ceil{\beta}}^\ell\problth[i]
\leq\beta\problth[1]+\sum_{i=\ceil{\beta}}^\ell\al\bproblth[2^{\floor{\log_2(\floor{i/\beta})}}].
$$
The $\bproblth[j]$ terms that appear in the second summation are for indices $j\in\POS$,
$j\leq\ell/\beta$, and each such $\bproblth[j]$ term appears at most $\beta j$ times. 
Therefore, we have 
$\topl(\vec{\problth[]})\leq\beta\problth[1]+\beta\sum_{j\in\POS: j\leq\ell/\beta}j\bproblth[j]$.
Observe that in $\vtp$, each $t'_j$ term for $j\in\POS\sm\{K\}$ appears
$j$ times, 
and $\sum_{j\in\POS: j\leq\ell/\beta}j\leq 2\ell/\beta\leq 2\ell$. 
This implies that $\sum_{j\in\POS: j\leq\ell/\beta}j\bproblth[j]$ is at most
$2\cdot\topl(\vtp)$, and we obtain that  
$\topl(\vec{\problth[]})\leq\beta\problth[1]+2\beta\cdot\topl(\vtp)$.

By Theorem~\ref{monnormthm} (b) (and since $f$ is normalized), it follows that 
$f(\vec{\problth[]})\leq\beta\problth[1]+2\beta\cdot f(\vtp)$.
Adding this to $\beta$ times \eqref{apxstatsineq1} (recall that $\beta\geq 1$), noting
that $\bproblth[1]\leq f(\vtp)$, and
simplifying gives
$\bigl(\sum_{i\in[m]}\E{(Y_i-\problth[1])^+}+f(\vec{\problth[]})\bigr)
\leq(\al+2)\beta\bigl(\sum_{i\in[m]}\E{(Y_i-\al\bproblth[1])^+}+f(\vtp)\bigr)$.
Combining this with the upper bound on $\E{f(Y)}$ mentioned at the beginning proves part 
(a). 

\vspace*{-1ex}
\paragraph{Part (b).}
We now start with the bound 
$\sum_{i\in[m]}\E{(Y_i-\problth[1])^+}+f(\vec{\problth[]})\leq 8\cdot\E{f(Y)}$
given by Lemma~\ref{tauexpr}, and
proceed to upper bound $\sum_{i\in[m]}\E{(Y_i-\bproblth[1])^+}+f(\vtp)$ in
terms of the LHS of this inequality.

Since $\bproblth[1]\geq\problth[1]$, it is immediate that
$\sum_{i\in[m]}\E{(Y_i-\bproblth[1])^+}\leq\sum_{i\in[m]}\E{(Y_i-\problth[1])^+}$.
As should be routine by now, we compare $f(\vtp)$ and $f(\vec{\problth[]})$
by comparing the $\topl$-norms of $\vtp$ and $\vec{\problth[]}$.
First, note that for any $i\in[m]$, we have
$$ 
t'_i=\bproblth[2^{\floor{\log_2 i}}]
\leq 2\problth[2^{\floor{\log_2 i}}]+\kp\leq 2\problth[\ceil{i/2}]+\kp
$$
where the last inequality follows since $2^{\floor{\log_2 i}}>i/2$.
Now consider any $\ell\in[m]$. We have 
\begin{equation*}
\begin{split}
\topl(\vtp) & = \sum_{i=1}^\ell t'_i
\leq\sum_{i=1}^\ell\bigl(2\problth[\ceil{i/2}]+\kp\bigr) \\
& \leq 4\sum_{j=1}^{\ceil{\ell/2}}\problth[j]+m\kp\leq 4\topl(\vec{\problth[]})+m\kp.
\end{split}
\end{equation*}

It follows that $f(\vtp)\leq 4 \cdot f(\vec{\problth[]})+m\kp$, and hence we have
\begin{equation*}
\sum_{i\in[m]}\E{(Y_i-\bproblth[1])^+}+f(\vtp)
\leq 4\bigl(\sum_{i\in[m]}\E{(Y_i-\problth[1])^+}+f(\vec{\problth[]})\bigr)+m\kp
\leq 32\cdot\E{f(Y)}+m\kp. \qedhere
\end{equation*}
\end{proof}

\section{Load Balancing} \label{sec:loadbal}
We now apply our framework to devise approximation algorithms for  
{\em stochastic minimum norm load balancing on unrelated machines}.  
We are given $n$ \emph{stochastic} jobs that need to be assigned to $m$ unrelated
machines. Throughout, we use $[n]$ and $[m]$ to denote the set of jobs and machines
respectively; we use $j$ to index jobs, and $i$ to index machines.
For each $i \in [m],j \in [n]$, we have a nonnegative r.v. $X_{ij}$ that denotes the
processing time of job $j$ on machine $i$, whose distribution is specified in the input.  
Jobs are independent, so $X_{ij}$ and $X_{i'j'}$ are independent whenever $j \neq j'$; 
however, $X_{ij}$ and $X_{i'j}$ could be correlated.
A feasible solution is an assignment $\sg : [n] \to [m]$ of jobs to machines. 
This induces a random load vector $\load_\sg$ where $\load_\sg(i) := \sum_{j : \sg(j) = i}
X_{ij}$ for each $i \in [m]$; note that $\load_\sg$ follows a product distribution on
$\Rp^m$. 
The goal is to find an assignment $\sg$ that minimizes $\E{f(\load_\sg)}$ for a given
monotone symmetric norm $f$. 

We often use $j \mapsto i$ as a shorthand for denoting $\sg(j) = i$, when $\sg$ is clear
from the context. 
We use $\ssg$ to denote an optimal solution, and 
$\OPT :=\E{f(\load_{\ssg})}$ to denote the optimal value. 
Let $\POS=\POS_m := \{ 1,2,4, \dots, 2^{\floor{\log_2 m}}\}$.  

\paragraph{Overview.}
Recall that Theorem~\ref{stochmaj} underlying our framework 
shows that in order to obtain
an $\bigo{\alpha}$-approximation for stochastic $f$-norm load balancing, it suffices to
find an assignment $\sg$ that satisfies \linebreak $\E{\Topl{\load_\sg}} \leq \alpha \,
\E{\topl(\load_{\ssg})}$ for all $\ell \in \POS$. 
First, in Section~\ref{sec:loadbaltopl}, we consider the simpler problem where we only
have one expected-$\topl$ budget constraint, or equivalently, where $f$ 
is a $\topl$-norm, and obtain an $O(1)$-approximation algorithm in this case. 

\begin{theorem} \label{thm:loadbaltopl}
There is a constant-factor approximation algorithm for stochastic $\topl$-norm load balancing on unrelated machines with arbitrary job size distributions.
\end{theorem}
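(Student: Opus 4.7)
My plan is to reduce stochastic $\topl$-norm load balancing to controlling the single statistic $\sum_i\E{(\load_\sg)_i^{\geq\qt}}$ of the induced load vector and then solve that via an LP-plus-iterative-rounding pipeline. By Lemmas~\ref{lbproxupper}--\ref{lbproxlower}, the threshold $t^*_\ell := \inf\{\qt\geq 0 : \sum_i \E{(\load_\ssg)_i^{\geq\qt}} \leq \ell\qt\}$ satisfies $\ell t^*_\ell = \Theta(\OPT)$; conversely, any integer assignment $\sg$ with $\sum_i \E{(\load_\sg)_i^{\geq O(t^*_\ell)}} = O(\ell t^*_\ell)$ automatically satisfies $\E{\topl(\load_\sg)} = O(\OPT)$ via Lemma~\ref{lbproxupper}. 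I would first guess $t^*_\ell$ within a factor of~$2$ from a polynomial-sized grid of powers of~$2$, and the remaining task is to construct such an $\sg$ for the correct guess.

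For the correct guess, I would write an LP on fractional-assignment variables $z_{ij}\in[0,1]$, $\sum_i z_{ij}=1$, that enforces constraints derived from the decomposition~\eqref{expgeqbnd} together with the effective-size machinery of Section~\ref{largevalbnd}: an aggregate ``exceptional-mass'' budget $\sum_{i,j} z_{ij}\E{X_{ij}^{\geq t^*_\ell}} \leq O(\ell t^*_\ell)$ that covers the first term of~\eqref{expgeqbnd}, together with per-machine ``effective-size'' budgets $\sum_j z_{ij}\,\beta_\ld(X_{ij}^{<t^*_\ell}/t^*_\ell) \leq c$ for each $i\in[m]$ at a carefully chosen scale~$\ld$, which control the second (truncated) term of~\eqref{expgeqbnd} through Lemma~\ref{smallbeta}. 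Feasibility of this LP for the optimal $\ssg$ (up to constants) is certified by Lemma~\ref{largebeta}, which converts any violation of the per-machine effective-size budget by $\ssg$ into a proportional contribution to the already-controlled quantity $\sum_i\E{(\load_\ssg)_i^{\geq t^*_\ell}} = \Theta(\ell t^*_\ell)$. A pre-processing step prunes $(i,j)$-pairs where $X_{ij}$ is too large to participate cleanly in either part of~\eqref{expgeqbnd} at scale~$t^*_\ell$, and one may need auxiliary ``overload'' indicator variables or multi-scale budgets to ensure that only few machines carry significant truncated effective size in the LP solution.

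For the rounding, the polytope $\{z\geq 0 : \sum_i z_{ij}=1\}$ is the base polytope of a partition matroid on the ground set $[m]\times[n]$, so Theorem~\ref{iterrndthm} applies directly; each column $(i,j)$ of the budget matrix touches only the per-machine effective-size constraint for~$i$ and the one aggregate exceptional constraint, giving column-sum parameter $\viol = O(1)$, and iterative rounding thus preserves both budgets up to an additive $O(\max_{ij}(\text{entry}))$, which the pre-pruning keeps $O(1)$. Converting the rounded budgets back through Lemma~\ref{smallbeta} and combining with~\eqref{expgeqbnd} yields $\sum_i \E{(\load_\sg)_i^{\geq O(t^*_\ell)}} = O(\ell t^*_\ell)$, whence Lemma~\ref{lbproxupper} completes the proof. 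The main obstacle will be the joint calibration of the scale $\ld$, the per-machine budget $c$, and the pre-processing: on one hand $\ld$ must be large enough that Lemma~\ref{smallbeta} turns an $O(1)$ per-machine budget into a per-machine truncated-tail contribution $O(t^*_\ell/\ld)$ summing to $O(\ell t^*_\ell)$ across all $m$ machines (pushing toward $\ld = \Omega(m/\ell)$), and on the other hand $\ld$ must be small enough that Lemma~\ref{largebeta} certifies feasibility of the per-machine constraint for $\ssg$ on essentially every machine. Reconciling these opposing pressures --- likely via overload-indicator variables that allow a controlled number of ``bad'' machines whose load is folded into the exceptional budget, or via multi-scale constraints indexed by the candidate tail-thresholds --- is the crux of the argument and will drive the detailed LP design.
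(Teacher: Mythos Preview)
Your plan matches the paper's approach, and you have correctly located the crux: no single global $\lambda$ works, since machines where $\ssg$ piles up truncated load need small $\lambda$ for feasibility (via Lemma~\ref{largebeta}) while lightly-loaded machines need large $\lambda$ for the tail bound from Lemma~\ref{smallbeta} to sum to $O(\ell t)$ across all $m$ machines. The paper's resolution is precisely your ``multi-scale / overload-variable'' speculation made concrete: introduce auxiliary variables $\xi_i\geq 0$ meant to model $\E{\bigl(\sum_{j\mapsto i}X_{ij}^{<t}/t\bigr)^{\geq 1}}$, impose
\[
\sum_j \beta_\lambda\bigl(X_{ij}^{<t}/4t\bigr)\,z_{ij}\ \leq\ 4\lambda\,\xi_i+6
\qquad\text{for \emph{every} integer }\lambda\in\{1,\ldots,100m\},
\]
and add the single global budget $\sum_i\xi_i\leq\ell$. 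Lemma~\ref{largebeta} then certifies feasibility of all of these constraints simultaneously for $\ssg$ (with $\xi_i$ set to the true tail expectation), so there is no tension at the LP level. After solving, choose $\lambda_i:=\min\bigl(100m,\lfloor 1/\bar\xi_i\rfloor\bigr)$ \emph{per machine} and retain only that one constraint for machine $i$ in the auxiliary rounding LP, with the exceptional-mass term placed in the objective rather than as a constraint (so no pre-pruning of $(i,j)$ pairs is needed in the $\topl$ case). This gives exactly one budget row per machine with entries bounded by $1/4$, so Theorem~\ref{iterrndthm} applies with $\viol\leq 1$. The analysis then splits machines into $\M_\low=\{i:\bar\xi_i<1/2\}$, where $\lambda_i\geq 2$ and Lemma~\ref{smallbeta} turns the $O(1)$ effective-size budget into a per-machine truncated tail of $O(1)/\lambda_i=O(\bar\xi_i)+O(1/m)$, and $\M_\hi$, where the $\lambda_i=1$ constraint is simply an expected-load bound of $O(\bar\xi_i)\cdot t$; summing either case against $\sum_i\bar\xi_i\leq\ell$ yields the desired $O(\ell t)$ bound.
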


Section~\ref{sec:loadbaltopl} will introduce many of the techniques that we build upon and
refine in 
Section~\ref{sec:loadbalf}, where we deal with an arbitrary monotone, symmetric norm.  
In Section~\ref{sec:loadbalber}, we devise a constant-factor approximation when job sizes
are {\em Bernoulli random variables}. Observe that since a deterministic job size can also
be viewed as a trivial Bernoulli random variable, modulo constant factors, this result
strictly generalizes the $O(1)$-approximation for deterministic min-norm load
balancing obtained by~\cite{ChakrabartyS19a,ChakrabartyS19b}.
In Section~\ref{sec:loadbalgen}, we consider the most general setting, (i.e., arbitrary
norm and arbitrary distributions), and give an $O(\log m/\log\log m)$-approximation. 

\begin{theorem} \label{thm:loadbalber}
There is a constant-factor approximation algorithm for stochastic minimum-norm load balancing
on unrelated machines with an arbitrary monotone, symmetric norm and Bernoulli job-size
\linebreak distributions.
\end{theorem}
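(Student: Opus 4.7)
The plan is to deploy the paper's general framework. By Theorem~\ref{stochmaj}(b), it suffices to produce an assignment $\sg$ with $\E{\topl(\load_\sg)} = O(1) \cdot \E{\topl(\load_{\ssg})}$ for every $\ell \in \POS_m$ simultaneously, where $f$ then only loses another constant. Using the proxy in Lemmas~\ref{lbproxupper} and \ref{lbproxlower}, for each $\ell \in \POS$ let $t^*_\ell := \inf\{\qt : \sum_i \E{\load_{\ssg}(i)^{\geq\qt}} \leq \ell\qt\}$; then $\ell t^*_\ell = \Theta\bigl(\E{\topl(\load_{\ssg})}\bigr)$, and it is enough to produce $\sg$ with $\sum_i \E{\load_\sg(i)^{\geq t^*_\ell}} = O(\ell t^*_\ell)$ for every $\ell \in \POS$.

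First I would guess $\{t^*_\ell\}_{\ell\in\POS}$ up to a factor of $2$. The sequence is non-increasing in $\ell$ and each entry lies in a logarithmically bounded range (between $\min_{ij} p_{ij}s_{ij}$ and $\sum_{ij} p_{ij}s_{ij}$), so by Claim~\ref{monseq} there are only polynomially many candidates. Fixing a guess $\vt=(t_\ell)_{\ell\in\POS}$, I would write an LP with variables $x_{ij}\in[0,1]$ satisfying $\sum_i x_{ij}=1$ for every job $j$, plus $O(\log m)$ budget constraints that linearize the proxy. For a Bernoulli job with $X_{ij}\in\{0,s_{ij}\}$ and $\Pr[X_{ij}=s_{ij}]=p_{ij}$, the decomposition \eqref{expgeqbnd} splits $\E{\load(i)^{\geq t_\ell}}$ into an exceptional contribution $\sum_{j:s_{ij}\geq t_\ell} p_{ij}s_{ij}x_{ij}$ (linear in $x$) and a contribution from $\bigl(\sum_{j:s_{ij}<t_\ell} X_{ij}^{<t_\ell}\bigr)^{\geq t_\ell}$ that, via Lemma~\ref{smallbeta} applied to the $\ld$-effective sizes for a suitable constant $\ld>1$, is controlled by the linear per-machine constraint $\sum_{j:s_{ij}<t_\ell}\beta_\ld(s_{ij}/t_\ell)p_{ij}x_{ij} = O(1)$. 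Lemma~\ref{largebeta}, applied to $\ssg$, shows that these linearizations are only a constant-factor relaxation of the true $t^*_\ell$-based proxy, so the LP is feasible under the correct guess.

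Next I would round the fractional solution by iterative rounding (Theorem~\ref{iterrndthm}). The assignment polytope is the base polytope of a partition matroid (one machine per job), and the constraint matrix has $O(\log m)$ budget rows. The crux is to show that, after normalizing each row to unit right-hand side, the sum of entries in the column of any $(i,j)$ in the support of an extreme point is $O(1)$. This is where the Bernoulli assumption pays off: the exceptional contribution of $(i,j)$ to scale $\ell$ is $p_{ij}s_{ij}/(\ell t_\ell)$ and is nonzero only for scales with $t_\ell \leq s_{ij}$, and since $\ell t_\ell$ is non-decreasing in $\ell$ (indeed, $\Theta(\E{\topl(\load_{\ssg})})$) these contributions form a geometric series dominated by the single scale $\ell$ with $t_\ell\approx s_{ij}$; likewise $\beta_\ld(s_{ij}/t_\ell)$ decays geometrically in $t_\ell/s_{ij}$ for small jobs, so its contributions across scales also sum to $O(1)$. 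Applying Theorem~\ref{iterrndthm} yields an integral assignment $\sg$ violating each of the $O(\log m)$ budget constraints by an additive constant.

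Finally I would verify the guarantee: for the integral $\sg$, the upper-bound half of the effective-size toolkit (Lemma~\ref{smallbeta}) combined with the rounded exceptional bucket gives $\sum_i \E{\load_\sg(i)^{\geq t^*_\ell}} = O(\ell t^*_\ell)$, whence Lemma~\ref{lbproxupper} yields $\E{\topl(\load_\sg)} = O\bigl(\E{\topl(\load_{\ssg})}\bigr)$ for every $\ell\in\POS$. Theorem~\ref{stochmaj}(b) then converts this to an $O(1)$-approximation for $\E{f(\load_\sg)}$. Among the polynomially many guesses, the best one is selected via the deterministic surrogate of Corollary~\ref{detestimate} using the guessed $\{\ell t_\ell\}$ as estimates of $\{\E{\topl(\load_\sg)}\}$. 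The main obstacle is the column-sum analysis underlying iterative rounding: under arbitrary distributions, a single job can contribute nontrivially to $\Theta(\log m)$ effective-size scales at once, destroying the $O(1)$ column-sum property; the two-point structure of Bernoulli variables is precisely what makes the exceptional and truncated contributions scale-local enough for iterative rounding to preserve all $O(\log m)$ budget constraints with only constant slack.
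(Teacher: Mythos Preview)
Your high-level plan matches the paper's strategy, but the LP you sketch for the truncated-job contribution has a genuine gap. You propose a per-machine constraint $\sum_{j:s_{ij}<t_\ell}\beta_\ld(\cdot)\,x_{ij}=O(1)$ with a \emph{single fixed} $\ld>1$ and a uniform $O(1)$ right-hand side. This is not a valid relaxation: in the optimal assignment a single machine $i$ may carry truncated load with $\E{L_i^{\geq 1}}$ as large as $\Theta(\ell)$ (only the \emph{sum} $\sum_i\E{L_i^{\geq 1}}\leq\ell$ is controlled), and then by Lemma~\ref{largebeta} its effective-size sum is $\Theta(\ld\cdot\ell)$, not $O(1)$. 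Conversely, if you weaken the right-hand side enough to make the constraint valid, a single fixed $\ld$ no longer yields the tail bound you need: Lemma~\ref{smallbeta} converts $\beta_\ld(L_i)\leq b$ into $\E{L_i^{\geq b+1}}\leq O(b)/\ld$, and summing over $m$ machines gives $O(m/\ld)$ rather than $O(\ell)$.

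The paper's fix is to introduce auxiliary variables $\beload[i,\ell]$ modeling $\E{L_i^{\geq 1}}$ with the aggregate constraint $\sum_i\beload[i,\ell]\leq\ell$, and then in the auxiliary rounding LP \eqref{berauxlp} to pick a \emph{machine-dependent} parameter $\ld_{i,\ell}\approx 1/\beload[i,\ell]$ for ``low'' machines (RHS $=10$) while using expectations ($\ld=1$) with RHS $4\beload[i,\ell]+6$ for ``high'' machines. After rounding, Lemma~\ref{smallbeta} then gives $\E{L_i^{\geq O(1)}}=O(\beload[i,\ell])$ on low machines, so the sum over all machines is $O(\ell)$. Your Bernoulli column-sum intuition is exactly what makes Claim~\ref{clm:lowcolsumbudget} work---via the crude bound $\beta_\ld(X_{ij}^{<t_\ell}/4t_\ell)\leq s_{ij}/4t_\ell\cdot\bon_{s_{ij}<t_\ell}$, which holds for \emph{any} $\ld$ and hence survives the machine-dependent choice---but it must be applied to this more delicate LP. A secondary detail you gloss over: the exceptional-job column sum (Claim~\ref{clm:lowcolsumexcep}) needs two ingredients you do not mention, namely the restriction to indices $\ell\in\POSl\cap\POSe$ (so that $\ell t_\ell$ at least doubles across retained scales) and the validity constraint \eqref{eq:largeexcep}, which caps $\E{X_{ij}^{\geq t_1}}\leq t_1$ on the support of $\bz$; without the latter, $p_{ij}s_{ij}/(\ell_0 t_{\ell_0})$ at the first active scale $\ell_0$ need not be $O(1)$.
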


\begin{theorem} \label{thm:loadbalgen}
There is an $\bigo{\log m/\log \log m}$-approximation algorithm for the general stochastic
minimum-norm load balancing problem on unrelated machines, where the underlying monotone,
symmetric norm and job-size distributions are arbitrary.
\end{theorem}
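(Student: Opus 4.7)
The plan is to combine the LP-and-iterative-rounding methodology used for the single-$\topl$-norm case of Section~\ref{sec:loadbaltopl} with the approximate stochastic majorization reduction of Theorem~\ref{stochmaj}(b). Concretely, by Theorem~\ref{stochmaj}(b) it suffices to produce an assignment $\sg$ for which $\E{\topl(\load_\sg)}\leq\apxrat\cdot\E{\topl(\load_{\ssg})}$ holds simultaneously for every $\ell\in\POS_m$, with target ratio $\apxrat=O(\log m/\log\log m)$. Since $|\POS_m|=O(\log m)$, we are asking for one integer solution that approximately satisfies a logarithmic family of expected-$\topl$-norm budgets.

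Following the $\topl$-norm approach of Section~\ref{sec:loadbaltopl}, for each $\ell\in\POS_m$ I would guess the threshold $\st_\ell$ for which the proxy $\sum_i\E{\load_{\ssg,i}^{\geq \st_\ell}}=\Theta(\ell\st_\ell)$ of Lemmas~\ref{lbproxupper}--\ref{lbproxlower} holds at the optimum; these thresholds form a nonincreasing sequence of $O(\log m)$ entries taking $O(\log m)$ distinct values (after an initial scaling), so Claim~\ref{monseq} yields polynomially many guesses. For each valid guess I would write an LP, analogous to $\LPv$ in Section~\ref{sec:loadbalf}, with assignment constraints $\sum_i x_{ij}=1$, and one budget constraint per $\ell\in\POS_m$ that encodes (via the effective-size decomposition of Section~\ref{largevalbnd} and the $\scrv_{ij}^{<\st_\ell}$/$\scrv_{ij}^{\geq\st_\ell}$ split of \eqref{expgeqbnd}) that the fractional assignment's proxy for $\E{\topl(\load)}$ is $O(\ell\st_\ell)$, together with the usual per-machine effective-size constraints. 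For the correct guess, the LP is feasible with value encoding the optimal structure.

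The fractional LP solution $\bz$ lies in the base polytope of the job-assignment partition matroid and satisfies $O(\log m)$ budget constraints. I would then apply the iterative-rounding tool of Theorem~\ref{iterrndthm}: since each job--machine variable $x_{ij}$ appears in at most $|\POS_m|=O(\log m)$ budget rows with normalized entries at most $1$, the column-sum parameter is $\viol=O(\log m)$, and we obtain an integer assignment that violates each of the $O(\log m)$ LP budgets by at most an additive $O(\log m)$ in normalized units. Translating back via Lemmas~\ref{lbproxupper}--\ref{lbproxlower}, this yields $\E{\topl(\load_\sg)}=O\bigl(\log m\bigr)\cdot\E{\topl(\load_{\ssg})}$ for every $\ell\in\POS_m$, and Theorem~\ref{stochmaj}(b) closes out an $O(\log m)$-approximation for $\E{f(\load_\sg)}$.

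The main obstacle is shaving the last factor to obtain $O(\log m/\log\log m)$. The approach I would pursue is to tune the effective-size scale parameter: rather than using a single $\lambda=O(1)$ as in Section~\ref{sec:loadbaltopl}, I would use $\lambda=\Theta(\log m)$ inside the LP's effective-size budgets so that Lemma~\ref{smallbeta} gives a per-machine, per-threshold tail bound strong enough to survive a union bound over the $O(\log m)$ thresholds in $\POS_m$. The delicate part is showing that under this stronger scale choice, the iterative-rounding additive slack of $O(\log m)$ can be absorbed into a single $O(\log m/\log\log m)$-factor multiplicative slack on the expected top-$\ell$ norms for all $\ell\in\POS_m$ simultaneously: this requires demonstrating that when additive deviation $c$ is converted via $\lambda^{-c}$-tail bounds into bounds on $\E{\load_i^{\geq\st_\ell}}$, the deviation $c=O(\log m/\log\log m)$ already suffices to dominate the union-bound loss of $\log|\POS_m|=O(\log\log m)$. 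Balancing these two scales---the deviation parameter $c$ and the effective-size base $\lambda$---is the technical crux, since any looseness in either direction costs an extra logarithmic factor and collapses the bound back to $O(\log m)$.
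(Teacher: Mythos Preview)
Your plan for an $O(\log m)$-approximation is essentially sound: with effective-size budget constraints \eqref{eq:budgetlow} for each $(i,\ell)$ with $\ell\in\POS_m$, each entry is at most $1/4$, so the column sum is indeed $O(\log m)$, and Theorem~\ref{iterrndthm} yields an additive $O(\log m)$ violation. After rounding you get $\sum_{\jtoi}\eff[\ld_{i,\ell}]{\trv{X_{ij}}{t_\ell}/4t_\ell}=O(\log m)$, so by Lemma~\ref{smallbeta} the truncated load is controlled only above the threshold $\Theta(\log m)\cdot t_\ell$, which via Lemma~\ref{lbproxupper} gives $\E{\topl(\load_\sg)}=O(\log m)\cdot\ell t_\ell$.

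Where your proposal has a genuine gap is in the last step. Tuning $\lambda$ does not help: increasing $\lambda$ sharpens the tail bound $\Pr[L_i\geq b+c]\leq\lambda^{-c}$ in Lemma~\ref{smallbeta}, but it does nothing to reduce the threshold $b$, which is pinned at $O(\log m)$ by the additive slack from iterative rounding. The load you control is still $\erv{L_i}{\Theta(\log m)}$, so you cannot get below $\E{\topl(\load_\sg)}=\Theta(\log m)\cdot\ell t_\ell$ along this route, regardless of $\lambda$.

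The paper takes a different tack. It \emph{drops the effective-size constraints entirely} in the auxiliary LP, retaining only the expectation constraints \eqref{eq:budgethi}, i.e., $\sum_j\E{\trv{X_{ij}}{t_\ell}/4t_\ell}\z_{ij}\leq 4\beload[i,\ell]+6$ for every $i$ and $\ell\in\POSl$. The key observation (Claim~\ref{clm:lowcolsumexpct}) is that these constraints have $O(1)$ column sums for \emph{arbitrary} distributions, because for any fixed $(i,j)$ the entries $\E{\trv{X_{ij}}{t_\ell}}/4t_\ell$ form a geometric-like series as $t_\ell$ halves over $\POSl$. Hence iterative rounding gives only $O(1)$ additive violation, and after rounding each machine's expected truncated load satisfies $\sum_\jtoi\E{\trv{X_{ij}}{t_\ell}}=O(t_\ell)$. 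Now $Y^{\bd,\ell}_i=\sum_\jtoi\trv{X_{ij}}{t_\ell}$ is a sum of independent $[0,t_\ell)$-bounded random variables with mean $O(t_\ell)$, so the Chernoff bound of Lemma~\ref{chernoff} gives $\Pr[Y^{\bd,\ell}_i>\sigma\cdot O(t_\ell)]\leq e^{-\Omega(\sigma\ln\sigma)}$; taking $\sigma=\Theta(\log m/\log\log m)$ makes this $1/\poly(m)$, which survives a union bound over all $m$ machines. Integrating the tail then yields $\E{\topl(Y^{\bd,\ell})}=O(\log m/\log\log m)\cdot\ell t_\ell$. The $\log m/\log\log m$ factor thus arises from the $\sigma\ln\sigma$ exponent in Chernoff, not from iterative-rounding slack or from the choice of $\lambda$.
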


\subsection{Stochastic \boldmath{$\topl$}-norm load balancing} \label{sec:loadbaltopl}
In this section, we prove Theorem~\ref{thm:loadbaltopl} and devise an $O(1)$-approximation
for stochastic $\topl$-norm load balancing.
The key to our approach are Lemmas~\ref{lbproxupper} and~\ref{lbproxlower}, which
together imply that, for $t$ chosen suitably, $\sum_i\E{{\load_{\sg}(i)}^{\geq t}}$ acts as
a convenient proxy for $\E{\topl(\load_\sg)}$: in particular, we have
\linebreak $\E{\topl(\load_\sg)}=O(\ell t)$ if and only if 
$\sum_i\E{\load_{\sg}(i)^{\geq t}}=O(\ell t)$. 
Lemma~\ref{lbproxlower} shows
that if $t\geq\frac{2 \, \OPT}{\ell}$, then $\sum_i \E{{\load_{\sg*}(i)}^{\geq t}}\leq\ell t$. 
We write a linear program, $\LPv[\ell,t]$, to find such an assignment (roughly speaking), and round its solution to obtain an assignment $\sg$ such that $\sum_i\E{{\load_{\sg}(i)}^{\geq t}} = O(\ell t)$; by
Lemma~\ref{lbproxupper}, this implies that $\E{\topl(\load_{\sg})}=O(\ell t)$. 
Hence, if we work with $t=O\bigl(\frac{\OPT}{\ell}\bigr)$ such that $\LPv[\ell,t]$ is
feasible---which we can find via binary search---then we obtain an $O(1)$-approximation.

\paragraph{LP relaxation.}
Let $t\geq 0$ be a given parameter.
Our LP seeks a fractional assignment satisfying \linebreak
$\sum_i\E{\load(i)^{\geq t}}=O(\ell t)$. 
As usual, we have $z_{ij}$ variables indicating if job $j$
is assigned to machine $i$, so $z$ belongs to the assignment polytope 
$\Q^\asgn:= \{ z\in \Rp^{m \times n}:\ \sum_{i \in [m]} z_{ij} = 1\quad \forall i \in [m]\}$.

As alluded to in Section~\ref{largevalbnd}, $\E{\erv{\load(i)}{t}}$ can be controlled by separately handling the contribution from exceptional jobs $\erv{X_{ij}}{t}$ and truncated jobs $\trv{X_{ij}}{t}$. 
Our LP enforces that both these contributions (across all machines) are at most $\ell t$, thereby ensuring that $\sum_i\E{\load(i)^{\geq t}}=O(\ell t)$ (due to \eqref{expgeqbnd}). 
Constraint \eqref{eq:excepineqtopl} directly encodes that the total contribution from
exceptional jobs is at most $\ell t$.
Handling the contribution from truncated jobs is more complicated.
We utilize Lemma~\ref{largebeta} here, which uses the notion of effective sizes.
For each machine $i$, let $L_i := \sum_\jtoi \trv{X_{ij}}{t}/t$ denote
the scaled load on machine $i$ due to the truncated jobs assigned to it. 
We use an auxiliary variable $\eload$ to model $\E{\erv{L_i}{1}}$, so that $t \eload$ models
$\E{(\sum_\jtoi X_{ij}^{<t})^{\geq t}}$. 
Since $L_i$ is a sum of independent $[0,1]$-random variables, Lemma~\ref{largebeta} yields various
lower bounds on $\E{L_i^{\geq 1}}$; these are incorporated by constraints
\eqref{eq:betaineqtopl}. 
A priori, we do not know which is the \emph{right} choice of $\ld$ in Lemma~\ref{largebeta}, 
so we simply include constraints for a sufficiently large collection of $\ld$ values so that one of them is close enough to the right choice.
Finally, constraint~\eqref{eq:truncineqtopl} ensures that 
$\sum_i\E{(\sum_\jtoi X_{ij}^{<t})^{\geq t}}\leq\ell t$, thereby bounding the
contribution from the truncated jobs. 
We obtain the following feasibility program.

\begin{minipage}{0.2\textwidth}
\leqnomode
\begin{equation}
\tag*{\LPv[\ell,t]} \label{topllp}
\end{equation}
\reqnomode
\end{minipage}
\ 
\begin{minipage}{0.75\textwidth}
\begin{alignat}{1}
\sum_{i \in [m],j \in [n]} \E{\erv{X_{ij}}{t}} z_{ij} & \leq \ell t \label{eq:excepineqtopl} \\
\frac{\sum_{j \in [n]} \eff{\trv{X_{ij}}{t}/4t} z_{ij} - 6}{4\ld} & \leq \eload \qquad 
\forall i \in [m], \forall \ld\in\{1,\ldots,100m\} \label{eq:betaineqtopl} \\
\sum_{i \in [m]} \eload & \leq \ell  \label{eq:truncineqtopl} \\
\eload[] & \geq 0, \gaptwo z \in \Q^\asgn.  
\end{alignat}
\end{minipage}

\begin{claim} \label{clm:LBfeastopl}
$\LPv[\ell,t]$ is feasible for any $t$ satisfying $\E{\Topl{\load_{\ssg}}}\leq\ell t/2$.
\end{claim}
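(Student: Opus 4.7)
The plan is to construct an explicit feasible solution to $\LPv[\ell,t]$ from the optimal integer assignment $\ssg$. Set $z^*_{ij} = 1$ if $\ssg(j)=i$ and $0$ otherwise (so $z^*\in\Q^\asgn$), and set $\eload^* := \E{L_i^{\geq 1}}/1 = \E{\paren{\sum_{\jtoi}X_{ij}^{<t}}^{\geq t}}/t$ for each $i\in[m]$. I then need to verify the three constraints \eqref{eq:excepineqtopl}, \eqref{eq:betaineqtopl}, \eqref{eq:truncineqtopl}.

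Constraint \eqref{eq:betaineqtopl} is the easiest: applying Lemma~\ref{largebeta} to the $[0,t]$-bounded independent variables $\{X_{ij}^{<t}\}_{\jtoi}$ gives, after dividing by $t$, the inequality $\E{L_i^{\geq 1}} \geq \bigl(\sum_{\jtoi}\beta_\ld(X_{ij}^{<t}/4t) - 6\bigr)/(4\ld)$ for every integer $\ld\geq 1$, which matches \eqref{eq:betaineqtopl} under the integer assignment $z^*$. The remaining constraints \eqref{eq:excepineqtopl} and \eqref{eq:truncineqtopl} require bounding the exceptional and truncated contributions, respectively, by $\ell t$ each; this is where the main work lies.

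For these I start from the hypothesis $\E{\Topl{\load_{\ssg}}}\leq \ell t/2$, which by the contrapositive of Lemma~\ref{lbproxlower} gives $\sum_i \E{\load_{\ssg}(i)^{\geq t}}\leq \ell t$. The key observation I will then establish is the pointwise inequality
\[
\sum_j X_{ij}^{\geq t} \;+\; \bigl(\textstyle\sum_j X_{ij}^{<t}\bigr)^{\geq t} \;\leq\; \load_{\ssg}(i)^{\geq t},
\]
which holds (valid under the integer assignment, summing only over $j\mapsto i$) because each $X_{ij}^{\geq t}$ is either $0$ or $\geq t$, so $c:=\sum_j X_{ij}^{\geq t}$ is either $0$ or already $\geq t$, and a short case analysis on whether $b:=\sum_j X_{ij}^{<t}$ exceeds $t$ verifies the inequality in every case. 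Taking expectations and summing over $i$ then yields
\[
\sum_i\sum_{\jtoi}\E{X_{ij}^{\geq t}} \;+\; \sum_i \E{\bigl(\textstyle\sum_{\jtoi}X_{ij}^{<t}\bigr)^{\geq t}} \;\leq\; \ell t,
\]
and since both summands on the left are nonnegative, each is individually at most $\ell t$. The first bound is exactly \eqref{eq:excepineqtopl} under $z^*$, and the second, after dividing by $t$, gives $\sum_i \eload^* \leq \ell$, which is \eqref{eq:truncineqtopl}.

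The only real obstacle here is the pointwise inequality above, which could easily be overlooked in favor of the weaker $\Theta$-style bound from \eqref{expgeqbnd}; but exploiting the fact that the exceptional part $c$ is atomic (either $0$ or $\geq t$) turns what looks like a constant-factor loss into an exact inequality, allowing both LP constraints to be satisfied by the integer solution without any slack or rescaling.
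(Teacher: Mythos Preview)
Your proof is correct and follows essentially the same approach as the paper: both construct the feasible point from the integer assignment $\ssg$ with $\eload^*=\E{(\sum_{\jtoi}X_{ij}^{<t})^{\geq t}}/t$, invoke the contrapositive of Lemma~\ref{lbproxlower} to get $\sum_i\E{\load_{\ssg}(i)^{\geq t}}\leq\ell t$, use the pointwise inequality $\load_{\ssg}(i)^{\geq t}\geq\sum_{\jtoi}X_{ij}^{\geq t}+(\sum_{\jtoi}X_{ij}^{<t})^{\geq t}$ to handle \eqref{eq:excepineqtopl} and \eqref{eq:truncineqtopl}, and appeal to Lemma~\ref{largebeta} for \eqref{eq:betaineqtopl}. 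Your case analysis justifying the pointwise inequality is more explicit than the paper's one-line assertion, but the argument is otherwise identical.
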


\begin{proof}
Consider the solution $(z^*,\eload[]^*)$ induced by the optimal solution $\ssg$, where 
$z^*$ is the indicator vector of $\ssg$, and
$\eload^*:= \E{\erv{L_i}{1}}$ for any $i \in [m]$, with
$L_i := \sum_{j: \ssg(j)=i}\trv{X_{ij}}{t}/t$.
The choice of $t$ implies that by Lemma~\ref{lbproxlower}, we have
$\sum_i\E{\load_{\ssg}(i)^{\geq t}}\leq\ell t$.
Notice that 
$\load_{\ssg}(i)^{\geq t}\geq\sum_\jtoi X_{ij}^{\geq t}+\bigl(\sum_\jtoi X_{ij}^{<t}\bigr)^{\geq t}$.
Therefore, $\ell t\geq \sum_i \E{ \erv{\load(i)}{t} } \geq \sum_i \sum_\jtoi \E{\erv{X_{ij}}{t}}$, 
showing that \eqref{eq:excepineqtopl} holds.
Next Lemma~\ref{lem:largebetalargetail} applied to the composite r.v. $L_i$, which is a
sum of independent $[0,1]$-bounded r.v.s, 
shows that constraint \eqref{eq:betaineqtopl} holds for any integral $\ld \geq 1$.
Finally, $t\eload^* = \E{\bigl(\sum_\jtoi X_{ij}^{<t}\bigr)^{\geq t}}$ for each $i$, and the upper
bound of $\ell t$ on $\sum_i\E{\load_{\ssg}(i)^{\geq t}}$ implies that $\sum_i \eload^* \leq \ell$.
\end{proof}

\paragraph{Rounding algorithm.}
We assume now that 
$\LPv[\ell,t]$ is feasible.
Let $(\bz,\beload[])$ be a feasible fractional solution to this LP. 
For each machine $i$, we carefully choose a suitable budget constraint from
among the constraints \eqref{eq:betaineqtopl}, and round the fractional assignment $\bz$ to
obtain an assignment $\sg$ such that:
(i) these budget constraints for the machines are satisfied approximately; and 
(ii) the total contribution from the exceptional jobs (across all machines) remains at
most $\ell t$.
The rounding step amounts to rounding a fractional solution to an instance of the
{\em generalized assignment problem} (GAP), for which we can utilize the algorithm
of~\cite{ShmoysT93}, or use the iterative-rounding result from Theorem~\ref{iterrndthm}.

The budget constraint that we include for a machine is tailored to ensure that the
total $\beta_\ld\bigl(\trv{X_{ij}}{t}/4t\bigr)$-effective load on a machine under the
assignment $\sg$ is 
not too large; via Lemma~\ref{lem:smallbetasmalltail}, this will imply a suitable bound on
$\E{L_i^{\geq\Omega(1)}}$, where $L_i=\sum_\jtoi X_{ij}^{<t}/4t$.
Ideally, for each machine $i$ we would like to choose constraint \eqref{eq:betaineqtopl}
for $\ld_i = 1/\beload$.
This yields $\sum_{j} \beta_{\ld_i}\bigl(\trv{X_{ij}}{t}/4t\bigr) \bz_{ij} \leq 4 \ld_i\beload+6 =10$.
So if this budget constraint is approximately satisfied in the rounded solution, say with
RHS equal to some {\em constant} $b$, then 
Lemma~\ref{lem:smallbetasmalltail} roughly gives us 
$\E{\erv{L_i}{b+1}}\leq\paren{b+3}/\ld_i=(b+3)\beload$.
This in turn implies that 
$$
\sum_i{\boldsymbol\Exp}\Bigl[\erv{\Bigl(\sum_\jtoi \trv{X_{ij}}{t}\Bigr)}{4(b+1)t}\Bigr]=4t\cdot\sum_i\E{L_i^{\geq b+1}}\leq 
4t(b+3)\sum_i\beload\leq 4(b+3)\ell t
$$
where the last inequality follows due to \eqref{eq:truncineqtopl}.
The upshot is that $\sum_i\E{(\sum_\jtoi X_{ij}^{<t})^{\geq\Omega(t)}}=O(\ell t)$;
coupled with the fact that $\sum_j \E{X_{\sg(j),j}^{\geq t}}\leq \ell t$, we obtain
that $\sum_i\E{\erv{\load_\sg(i)}{\Omega(t)}}=O(\ell t)$, and hence
$\E{\topl(\load_\sg)}=O(\ell t)$.
The slight complication is that $1/\beload$ need not be an integer in 
$[100m]$, 
so we modify the choice of $\ld_i$s appropriately to deal with this. 

We remark that, whereas we work with a more general norm than $\ell_\infty$, our
entire approach---polynomial-size LP-relaxation, rounding algorithm, and analysis---is
in fact simpler and cleaner than the one used in~\cite{GuptaKNS18} for the special
case of $\topl[1]$ norm. 
Our savings can be traced to
the fact that we leverage the notion of effective size in a more powerful way, by
utilizing it at multiple scales to obtain lower bounds on 
$\E{\bigl(\sum_{j\mapsto i}X_{ij}^{<t}/t\bigr)^{\geq 1}}$ 
(Lemma~\ref{largebeta}). 
Our rounding algorithm is summarized below.
\begin{enumerate}[label=T\arabic*., ref=T\arabic*, topsep=1ex, itemsep=0.5ex, leftmargin=*]
\item \label{lddefs}
Define $\M_\low := \{ i \in [m] : \beload < 1/2 \}$, 
and let $\M_\hi := [m] \setminus \M_\low$. 
For every $i\in[m]$, define
$\ld_i:=\min(100m,\floor{1/\beload}) \in \{2,\dots,100m\}$ if $i\in\M_{\low}$,  
and $\ld_i:=1$ (so
$\beta_{\ld_i}\bigl(\trv{X_{ij}}{t}/4t\bigr)=\E{\trv{X_{ij}}{t}/4t}$) otherwise.

\item \label{budglp}
Consider the following LP that has one budget constraint for each machine
$i$ corresponding to constraint \eqref{eq:betaineqtopl} (slightly simplified) for
parameter $\ld_i$.
\begin{alignat}{3}
\min & \quad & \sum_{i \in [m],j \in [n]} \E{\erv{X_{ij}}{t}}\z_{ij} & \qquad && 
\tag{B-LP} \label{LBtoplobj} \\
\text{s.t.} && \sum_{j \in [n]} \eff[\ld_i]{\trv{X_{ij}}{t}/4t}\z_{ij} & \leq 10 \qquad && \forall \, i \in \M_\low \label{LBtoplbudgetlow} \\
&& \sum_{j \in [n]} \E{\trv{X_{ij}}{t}/4t}\z_{ij} & \leq 4 \beload + 6 \qquad && \forall \, i \in \M_\hi \label{LBtoplbudgethi} \\
&& \z & \in \Q^\asgn \ . \notag 
\end{alignat}
Clearly, $\bz$ is a feasible solution to \eqref{LBtoplobj}. 
Observe that $\Q^\asgn$ is the base polytope of the partition matroid encoding that each job is
assigned to at most one machine. 
We round $\bz$ to obtain an integral assignment $\sg$, either by using GAP rounding, or by
invoking Theorem~\ref{iterrndthm}.
\end{enumerate}

\paragraph{Analysis.} \label{analysistopl}
We now show that $\E{\Topl{\load_\sg}}=O(\ell t)$.
We first note that Theorem~\ref{iterrndthm} directly shows that $\sg$ satisfies constraints
\eqref{LBtoplbudgetlow} and \eqref{LBtoplbudgethi} with an additive violation of at most
$1$, and the total contribution from exceptional jobs is at most $\ell t$.

\begin{claim} \label{clm:proproundtopl}
The assignment $\sigma$ satisfies:
\begin{enumerate*}[(a)]
\item $\sum_j \E{\erv{X_{\sg(j),j}}{t}} \leq \ell t$; \ \
\item $\sum_\jtoi \eff[\ld_i]{\trv{X_{ij}}{t}/4t} \leq 11$ for all
  $i\in\M_\low$; and
\item $\sum_\jtoi \E{\trv{X_{ij}}{t}/4t} \leq 4 \beload + 7$ for all $i\in\M_\hi$.
\end{enumerate*}
\end{claim}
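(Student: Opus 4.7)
The plan is to invoke Theorem~\ref{iterrndthm} on the LP \eqref{LBtoplobj}. The constraint $\z \in \Q^\asgn$ describes the base polytope of the partition matroid that enforces each job being assigned to exactly one machine, so the theorem applies. As noted in step~\ref{budglp}, $\bz$ is feasible for \eqref{LBtoplobj}: feasibility of $\bz$ for $\LPv[\ell,t]$ means that constraint \eqref{eq:betaineqtopl} holds for $\ld = \ld_i$, and the choice $\ld_i \leq 1/\beload$ for $i \in \M_\low$ (respectively $\ld_i = 1$ for $i \in \M_\hi$) converts \eqref{eq:betaineqtopl} into \eqref{LBtoplbudgetlow} (respectively \eqref{LBtoplbudgethi}). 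The theorem will then produce an integral basis $z^\sg$---i.e., an assignment $\sg$---whose objective value is at most that of $\bz$, and which violates each budget constraint by at most $\viol$, where $\viol$ is any upper bound on the column sums of the budget-constraint matrix restricted to $\supp(\bz)$.

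The key step---and the main obstacle---is to certify that one may take $\viol = 1$. Each column of the budget-constraint matrix is indexed by a pair $(i,j)$ and has exactly one non-zero entry: the coefficient of $z_{ij}$ in the budget constraint attached to machine $i$. For $i \in \M_\hi$, that entry is $\E{\trv{X_{ij}}{t}/4t}$, which is at most $1/4$ since $\trv{X_{ij}}{t} \in [0,t)$. For $i \in \M_\low$, the entry is $\eff[\ld_i]{\trv{X_{ij}}{t}/4t} = \log_{\ld_i}\E{\ld_i^{\trv{X_{ij}}{t}/4t}}$; since $\trv{X_{ij}}{t}/4t \in [0,1/4)$ and $\ld_i \geq 2$, the integrand $\ld_i^{\trv{X_{ij}}{t}/4t}$ is pointwise bounded by $\ld_i^{1/4}$, and this entry is again at most $1/4$. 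Hence every column sums to at most $1$, and $\viol = 1$ is valid. (Note that the factor $4t$---rather than $t$---in the normalization of the truncated sizes is precisely what buys this clean bound.)

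With $\viol = 1$ in hand, the three parts of the claim follow immediately from the guarantees of Theorem~\ref{iterrndthm}. For (a), the objective guarantee combined with \eqref{eq:excepineqtopl} yields $\sum_j \E{\erv{X_{\sg(j),j}}{t}} \leq \sum_{i,j}\E{\erv{X_{ij}}{t}}\bz_{ij} \leq \ell t$. For (b), the additive violation of $1$ on \eqref{LBtoplbudgetlow} yields $\sum_\jtoi \eff[\ld_i]{\trv{X_{ij}}{t}/4t} \leq 10 + 1 = 11$ for every $i \in \M_\low$. For (c), the same additive violation applied to \eqref{LBtoplbudgethi} yields $\sum_\jtoi \E{\trv{X_{ij}}{t}/4t} \leq (4\beload + 6) + 1 = 4\beload + 7$ for every $i \in \M_\hi$. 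The bulk of the work lies in the column-sum verification; everything else is bookkeeping around the statement of Theorem~\ref{iterrndthm}.
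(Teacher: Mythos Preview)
Your proposal is correct and follows exactly the same approach as the paper's proof: both invoke Theorem~\ref{iterrndthm} after verifying that the column sums of the budget-constraint matrix are at most $1$ (indeed at most $1/4$), and then read off the three conclusions from the objective and additive-violation guarantees. Your write-up is more detailed than the paper's---which dispatches the claim in one sentence---but the argument is identical, including the observation that each column has a single non-zero entry bounded by $1/4$ because the truncated sizes are normalized by $4t$ rather than $t$.
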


\begin{proof}
This follows directly from Theorem~\ref{iterrndthm} by noting that the parameter $\nu$,
denoting an upper bound on the column sum of a variable, is at most $1$ (in fact at most
$1/4$), and since $\bz$ is a feasible solution to \eqref{LBtoplobj} of objective value at
most $\ell t$. 
\end{proof}

Next, we bound $\E{\topl(\load_\sg)}$ by bounding the expected $\topl$-norm of the load induced by
three different sources: exceptional jobs, truncated jobs in $\M_\low$, and truncated jobs
in $\M_\hi$. 
Observe that \linebreak $\load_\sg = Y^\low + Y^\hi + Y^\excep$, where 
\begin{equation*}
Y^\low_i :=
\begin{cases}
\sum_\jtoi \trv{X_{ij}}{t}; & \text{if } i \in \M_\low \\
0; & \text{otherwise}
\end{cases}
\qquad
Y^\hi_i :=
\begin{cases}
\sum_\jtoi \trv{X_{ij}}{t}; & \text{if } i \in \M_\hi \\
0; & \text{otherwise}
\end{cases}
\qquad
Y^\excep_i := \sum_\jtoi \erv{X_{ij}}{t} \ .
\end{equation*}
All three random vectors follow a product distribution on $\Rp^m$.
By the triangle inequality, it suffices to bound the expected $\topl$ norm of each vector by
$\bigo{\ell t}$. 
It is easy to bound even the expected $\topl[m]$ norms of $Y^\hi$ and $Y^\excep$
(Lemma~\ref{lem:boundedtoplexcephi}); to bound $\E{\topl(Y^\low)}$
(Lemma~\ref{lem:boundedtopllow}), we utilize properties of effective sizes.

\begin{lemma} \label{lem:boundedtoplexcephi}
We have (i) $\E{\Topl{Y^\excep}} \leq \ell t$, and 
(ii) $\E{\Topl{Y^\hi}} \leq 72 \, \ell t$.
\end{lemma}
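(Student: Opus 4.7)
The plan is to bound the expected $\topl$-norm of each of $Y^\excep$ and $Y^\hi$ by the crude route of bounding the expected $\topl[m]$-norm, i.e.\ the sum of all coordinates; this works because the relevant contributions per machine are already small enough in these cases. I will invoke the upper bound $\topl(x) \leq \sum_i x_i$ from Claim~\ref{lem:sandwich} together with linearity of expectation.

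For part (i), I would apply Claim~\ref{lem:sandwich} to write $\E{\topl(Y^\excep)} \leq \sum_{i\in[m]} \E{Y^\excep_i} = \sum_{i\in[m]} \sum_\jtoi \E{\erv{X_{ij}}{t}}$. The right-hand side is exactly $\sum_j \E{\erv{X_{\sg(j),j}}{t}}$, which is at most $\ell t$ by Claim~\ref{clm:proproundtopl}(a). This step is immediate once the right reindexing is applied.

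For part (ii), I again pass to the sum: $\E{\topl(Y^\hi)} \leq \sum_{i \in \M_\hi} \E{Y^\hi_i} = \sum_{i \in \M_\hi} \sum_\jtoi \E{\trv{X_{ij}}{t}}$. By Claim~\ref{clm:proproundtopl}(c), for each $i \in \M_\hi$ the inner sum is at most $4t(4\beload + 7) = 16t\beload + 28t$. Summing over $i \in \M_\hi$ and using constraint~\eqref{eq:truncineqtopl} to bound $\sum_{i\in\M_\hi} \beload \leq \ell$, the first piece contributes at most $16\ell t$. For the second piece I need a bound on $|\M_\hi|$: by definition of $\M_\hi$, every $i \in \M_\hi$ has $\beload \geq 1/2$, so $|\M_\hi| \leq 2 \sum_{i\in\M_\hi} \beload \leq 2\ell$. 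Thus the second piece contributes at most $28t \cdot 2\ell = 56\ell t$, and the two pieces combine to $72\ell t$, as required.

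There is essentially no hard step here; the main thing to keep straight is the interplay between the per-machine budget from Claim~\ref{clm:proproundtopl}(c) (which gives an $O(\beload)$ bound, accounting for the $\ell$-budget via~\eqref{eq:truncineqtopl}) and the cardinality bound on $\M_\hi$ (which is what the definition of $\M_\hi$ purchases us, and is essential since the additive ``$+7$'' per machine must be summed over machines). Both ingredients are already in place, so the proof is simply a careful accounting that yields the constant $72$.
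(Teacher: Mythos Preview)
Your proposal is correct and essentially matches the paper's proof. The only cosmetic difference is in part~(ii): the paper absorbs the additive $7$ directly into the $\beload$ term via $4\beload + 7 \leq 18\beload$ (using $\beload \geq 1/2$) before summing, whereas you keep the two pieces separate and bound $|\M_\hi| \leq 2\ell$ from the same inequality; both routes use identical ingredients and land on the same constant $72$.
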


\begin{proof}
Part (i) follows immediately from Claim~\ref{clm:proproundtopl} (a) since 
$\E{\topl(Y^\excep)}\leq\E{\topl[m](Y^\excep)}=\sum_j\E{\erv{X_{\sg(j),j}}{t}}$.

For part (ii), we utilize Claim~\ref{clm:proproundtopl} (c) which gives $\sum_\jtoi \E{\trv{X_{ij}}{t}/4t}\leq 4\beload+7\leq 18\beload$ for every $i\in\M_\hi$, where the last inequality follows since $\beload\geq 1/2$ as $i\in\M_\hi$. 
It follows that $\E{Y_i^\hi}=\sum_\jtoi \E{\trv{X_{ij}}{t}}\leq 72 \, t \, \beload$.
Therefore, 
$\E{\topl(Y^\hi)}\leq\E{\topl[m](Y^\hi)}=72 \, t \, \sum_{i\in\M_\hi}\beload\leq 72 \, \ell t$.
\end{proof}

\begin{lemma} \label{lem:boundedtopllow}
$\E{\Topl{Y^\low}} \leq 232 \, \ell t$.
\end{lemma}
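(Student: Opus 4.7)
The plan is to reduce the claim to Lemma~\ref{lbproxupper} by showing that
$\sum_{i \in [m]}\E{(Y^\low_i)^{\geq \qt}} \leq \ell\qt$
for a suitable threshold $\qt = \Theta(t)$, which will then give
$\E{\topl(Y^\low)} \leq 2\ell\qt = O(\ell t)$. Concretely, I will aim for
$\qt = 116t$, so as to reach the target constant $232$. The heart of the
argument is a per-machine bound on $\E{(Y^\low_i)^{\geq \qt}}$ obtained
through the $\ld_i$-effective size of the truncated load on machine $i$, followed
by aggregation using constraint \eqref{eq:truncineqtopl}.

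For each $i \in \M_\low$, set $L_i := \sum_\jtoi X_{ij}^{<t}/(4t)$, so that
$Y^\low_i = 4t\,L_i$. Because $\beta_{\ld_i}$ is additive on sums of
independent random variables, Claim~\ref{clm:proproundtopl}(b) yields
$\beta_{\ld_i}(L_i) \leq 11$. Since $\ld_i \geq 2$ on $\M_\low$, the
second part of Lemma~\ref{smallbeta} gives $\E{L_i^{\geq 12}} \leq 14/\ld_i$,
and hence $\E{(Y^\low_i)^{\geq 48t}} \leq 56t/\ld_i$.

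It remains to sum $56t/\ld_i$ over $i \in \M_\low$, which I will do by splitting according to
whether the cap at $100m$ is active in the definition of $\ld_i$. If
$\ld_i = \floor{1/\beload}$, then $\beload < 1/2$ forces $1/\beload \geq 2$,
so $\ld_i \geq (1/\beload)/2$ and $1/\ld_i \leq 2\beload$; otherwise
$\ld_i = 100m$ and $1/\ld_i = 1/(100m)$. Combining these estimates with
$\sum_{i\in[m]}\beload \leq \ell$ from \eqref{eq:truncineqtopl} will give
\[
\sum_{i \in \M_\low}\frac{1}{\ld_i}
\leq 2\sum_{i \in \M_\low}\beload + \frac{|\M_\low|}{100m}
\leq 2\ell + \frac{1}{100},
\]
so $\sum_{i \in \M_\low}\E{(Y^\low_i)^{\geq 48t}} \leq 56t(2\ell + 1/100) \leq 113\,\ell t$ (using $\ell \geq 1$). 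Since $\E{(Y^\low_i)^{\geq \qt}}$
is non-increasing in $\qt$ and $Y^\low_i \equiv 0$ for $i \in \M_\hi$, taking
$\qt = 116t$ yields $\sum_{i \in [m]}\E{(Y^\low_i)^{\geq 116t}} \leq 113\,\ell t \leq 116\,\ell t$, and Lemma~\ref{lbproxupper} will complete the bound
$\E{\topl(Y^\low)} \leq 232\,\ell t$. The only mildly delicate point is the
$100m$ cap on $\ld_i$, needed because \eqref{eq:betaineqtopl} is only enumerated
over $\ld\in\{1,\ldots,100m\}$; this cap contributes at most an additive $O(t)$ that
is absorbed into the constants.
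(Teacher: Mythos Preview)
Your proof is correct and follows essentially the same approach as the paper: both use Claim~\ref{clm:proproundtopl}(b) to bound $\beta_{\ld_i}(L_i)\leq 11$, invoke Lemma~\ref{smallbeta} to obtain $\E{L_i^{\geq 12}}\leq 14/\ld_i$, bound $1/\ld_i\leq 2\beload+1/(100m)$, sum using \eqref{eq:truncineqtopl}, and finish via Lemma~\ref{lbproxupper}. The only cosmetic difference is that the paper works with the scaled vector $W=Y^\low/(4t)$ and threshold $29$, whereas you work directly with $Y^\low$ and threshold $116t=4t\cdot 29$; the arithmetic is identical.
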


\begin{proof}
Let $\vecrv = Y^\low/4t$. 
It suffices to show that 
$\sum_i\E{\erv{\vecrv_i}{29}}=\sum_{i \in \M_\low} \E{\erv{\vecrv_i}{29}} \leq 29\ell$,
since then by Lemma~\ref{lem:upperboundtopl}, we have $\E{\topl(\vecrv)}\leq 58\,\ell$, or 
equivalently, $\E{\topl(Y^\low)}\leq 232\,\ell t$.
By Claim~\ref{clm:proproundtopl} (c), we have $\eff[\ld_i]{\vecrv_i} \leq 11$, where 
$\ld_i = \min(100m,\floor{1/\beload})$.  
Using Lemma~\ref{lem:smallbetasmalltail},
\[
\Eb{\erv{\vecrv_i}{12}} \leq 14/\ld_i = 14 \, \max(1/100m,1/\floor{1/\beload}) \leq 28\beload + 14/100m \ ,
\]
where the last inequality is because $\beload<1/2$.
Summing over all machines in $\M_\low$ gives $\sum_{i \in \M_\low} \E{\vecrv_i^{\geq 12}} \leq 29 \ell$, since $\sum_i \beload \leq \ell$. 
\end{proof}

Combining the two lemmas above yields the following result.

\begin{theorem} \label{thm:boundedtopl}
The assignment $\sg$ satisfies $\E{\Topl{\load_\sg}} \leq 305 \, \ell t$.
\end{theorem}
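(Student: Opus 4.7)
The plan is immediate given the groundwork laid by Lemmas~\ref{lem:boundedtoplexcephi} and \ref{lem:boundedtopllow}: I would simply combine them via the triangle inequality for the $\topl$-norm. The first step is to invoke the coordinate-wise decomposition $\load_\sg = Y^\low + Y^\hi + Y^\excep$ already exhibited immediately before Lemma~\ref{lem:boundedtoplexcephi}. This holds because on each machine $i$ the load splits as $\sum_{j\mapsto i} \trv{X_{ij}}{t} + \sum_{j\mapsto i} \erv{X_{ij}}{t}$, with the truncated part being attributed to $Y^\low_i$ or $Y^\hi_i$ according as $i \in \M_\low$ or $i \in \M_\hi$ (and the other coordinate being zero by construction), while the exceptional part is collected into $Y^\excep_i$ for every $i$.

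Next, since $\topl$ is a monotone, symmetric norm, and the three vectors are nonnegative, I would apply the triangle inequality pointwise to obtain $\topl(\load_\sg) \leq \topl(Y^\low) + \topl(Y^\hi) + \topl(Y^\excep)$, then take expectations and use linearity to get
\[
\E{\topl(\load_\sg)} \;\leq\; \E{\topl(Y^\low)} + \E{\topl(Y^\hi)} + \E{\topl(Y^\excep)}.
\]
Plugging in the bounds $\E{\topl(Y^\excep)} \leq \ell t$ and $\E{\topl(Y^\hi)} \leq 72\,\ell t$ from Lemma~\ref{lem:boundedtoplexcephi}, together with the bound $\E{\topl(Y^\low)} \leq 232\,\ell t$ from Lemma~\ref{lem:boundedtopllow}, yields the advertised total $(1 + 72 + 232)\,\ell t = 305\,\ell t$.

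There is really no obstacle in this final step; all of the substantive work has been done upstream---in Claim~\ref{clm:proproundtopl} (which translates the iterative-rounding guarantees of Theorem~\ref{iterrndthm} into per-machine effective-size control and a global exceptional-load control), in Lemma~\ref{lem:boundedtoplexcephi} (trivial once $\E{\topl} \leq \E{\topl[m]}$ is combined with the LP-encoded constraints \eqref{eq:truncineqtopl} and \eqref{LBtoplbudgethi}), and in Lemma~\ref{lem:boundedtopllow} (which is the one place effective sizes are genuinely exploited, via Lemma~\ref{lem:smallbetasmalltail}, to turn a $\beta_{\ld_i}$-budget into an $\E{Y_i^{\geq \Omega(t)}}$ bound that is then fed into Lemma~\ref{lbproxupper}). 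The only minor point to note is that the triangle inequality is a property of norms on $\R^m$, but since all three summands take values in $\Rp^m$ almost surely, no sign issue arises.
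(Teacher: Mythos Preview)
Your proposal is correct and matches the paper's approach exactly: the paper simply states that Theorem~\ref{thm:boundedtopl} follows by combining Lemmas~\ref{lem:boundedtoplexcephi} and~\ref{lem:boundedtopllow}, which is precisely the triangle-inequality argument you spell out, yielding $1+72+232=305$.
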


\begin{proof}[{\bf Finishing up the proof of Theorem~\ref{thm:loadbaltopl}}]
Given Theorem~\ref{thm:boundedtopl} and Claim~\ref{clm:LBfeastopl}, it is clear that if we
work with $t=O\bigl(\frac{\OPT}{\ell}\bigr)$ such that $\LPv[\ell,t]$ is feasible and run
our algorithm, 
then we obtain an $O(1)$-approximation. As is standard, we can find such a $t$, within a 
$(1+\ve)$-factor, via binary search. To perform this binary search, we show that we can 
come up with an upper bound $\UB$ such that $\frac{\UB}{m}\leq\OPT\leq\UB$. We show this
even in the general setting where we have an arbitrary monotone, symmetric norm $f$.

\begin{lemma} \label{lem:optrangeloadbal} \label{binsearch}
Let $f$ be a normalized, monotone, symmetric norm, and let $\OPT_f$ be the optimal value
of the stochastic $f$-norm load balancing problem.
Define $\UB:=\sum_j\min_i\E{X_{ij}}$, which can be easily computed from the input data.
We have $\frac{\UB}{m}\leq\OPT_f\leq\UB$.
\end{lemma}

\begin{proof}
Notice that $\UB$ is the optimal value of stochastic $\topl[m]$-norm load-balancing, i.e.,
it is the objective value of the assignment that minimizes the sum of the expected machine
loads. 
For any assignment $\sg':[n]\to[m]$, 
by Lemma~\ref{lem:sandwich}, we have
$$
\Eb{\tfrac{\topl[m](\load_{\sg'})}{m}}\leq\E{\topl[1](\load_{\sg'})}\leq\E{f(\load_{\sg'})}
\leq\E{\topl[m](\load_{\sg'})}.
$$
Taking the minimum over all assignments, and plugging in that 
$\UB=\leq\min_{\sg':[n]\to[m]}\E{\topl[m](\load_{\sg'})}$, as noted above, 
it follows that
\begin{equation*}
\tfrac{\UB}{m} 
\leq\min_{\sg':[n]\to[m]}\E{f(\load_{\sg'})}=\OPT_f
\leq 
\UB. \qedhere
\end{equation*}
\end{proof}

Thus, if we binary search in the interval $\bigl[0,\frac{2\,\UB}{\ell}\bigr]$, for any 
$\ve>0$, we can find in $\poly\bigl(\log\frac{m}{\ve}\bigr)$ iterations
$t\leq\frac{2\,\OPT}{\ell}+\frac{\ve\,\UB}{m^2}\leq(2+\ve)\cdot\frac{\OPT}{\ell}$ such that
$\LPv[\ell,t]$ is feasible. 
By Theorem~\ref{thm:boundedtopl}, we obtain an assignment whose expected
$\topl$ norm is at most $305\,\ell t\leq\bigl(610+O(\ve)\bigr)\OPT$.
\end{proof}

\subsection{Stochastic \boldmath{$f$}-norm load balancing} \label{sec:loadbalf}
We now focus on stochastic load balancing when $f$ is a general monotone symmetric norm.
Recall that $\ssg$ denotes an optimal solution, and $\OPT=\OPT_f$ denotes the optimal
value.

\paragraph{Overview.}
As noted earlier, Theorem~\ref{thm:apxstochmajor} guides our strategy: we seek an
assignment $\sg$ that simultaneously satisfies 
$\E{\Topl{\load_{\ssg}}} = \bigo{\alpha \,\E{\Topl{\load_{\ssg}}}}$ for all 
$\ell \in \POS_m$ 
for some small factor $\alpha$. 
It is natural to therefore leverage the insights gained in Section~\ref{sec:loadbaltopl} 
from the study of the $\topl$-norm problem. 

Since we need to simultaneously work with all $\topl$ norms, we now work with a guess
$t_\ell$ of the quantity $2\,\E{\topl(\load_{\ssg})}/\ell$, 
for every $\ell\in\POS$.
For each $\vec{t}=(t_\ell)_{\ell\in\POS}$ vector, we write an LP-relaxation $\LPv$ that
generalizes \ref{topllp}, and if it is feasible, we round its feasible solution to obtain an
assignment of jobs to machines. We argue that one can limit the number of $\vec{t}$ vectors to
consider to a polynomial-size set, so this
yields a polynomial number of candidate solutions. 
We remark that, interestingly, $\LPv$, the rounding algorithm, and the resulting set of
solutions generated, are {\em independent} of the norm $f$: they only depend only on the 
underlying $\vec{t}$ vector. The norm $f$ is used only in the final step 
to select one of the candidate solutions as the desired near-optimal solution, utilizing 
Corollary~\ref{detestimate}. 

The LP-relaxation we work with is an easy generalization of \ref{topllp}.
We have the usual $z_{ij}$ variables encoding a fractional assignment.
For each $\ell\in\POS$, there is a different definition of truncated
r.v. $\trv{X_{ij}}{t_\ell}$ and exceptional r.v. $\erv{X_{ij}}{t_\ell}$. 
Correspondingly, for each index $\ell\in\POS$, we have a separate set of constraints 
\eqref{eq:excepineqtopl}--\eqref{eq:truncineqtopl} involving (the $z_{ij}$s), a variable
$\eload[i,\ell]$ (that represents $\eload$ for the index $\ell$, i.e., 
$\E{\erv{(\sum_\jtoi \trv{X_{ij}}{t_\ell}/t_\ell)}{1}}$), and the guess $t_\ell$.      
For technical reasons that will become clear when we discuss the rounding algorithm (see
Claim~\ref{clm:lowcolsumexcep}), we 
include additional constraints \eqref{eq:largeexcep}, which enforce that a job $j$ cannot be 
assigned to a machine $i$ if $\E{\erv{X_{ij}}{t_1}}>t_1$; observe that this is valid for
the optimal integral solution whenever $t_1 \geq 2\,\E{\Topl[1]{\load_{\ssg}}}$.
This yields the following LP relaxation.

\noindent
\begin{minipage}{0.1\textwidth}
\leqnomode
\begin{equation}
\tag*{\LPv} \label{genflp}
\end{equation}
\reqnomode
\end{minipage}
\ 
\begin{minipage}{0.9\textwidth}
\begin{alignat}{2}
\sum_{i \in [m],j \in [n]} \E{\erv{X_{ij}}{t_\ell}} z_{ij} & \leq \ell t_\ell \qquad &&
\forall \ell\in\POS \label{eq:excepineqgen} \\
\frac{\sum_{j \in [n]} \eff{\trv{X_{ij}}{t_\ell}/4t_\ell} z_{ij} - 6}{4\ld} & \leq \eload[i,\ell] \qquad 
&& \forall i \in [m], \forall \ld\in\{1,\ldots,100m\},\,\forall \ell\in\POS \label{eq:betaineqgen} \\
\sum_{i \in [m]} \eload[i,\ell] & \leq \ell && \forall \ell\in\POS \label{eq:truncineqgen} \\
z_{ij} & = 0  && \forall i \in [m], j \in [n] \text{ with } \E{\erv{X_{ij}}{t_1}} > t_1 \label{eq:largeexcep} \\
\eload[] \geq 0, \gaptwo z & \in \Q^\asgn.  \notag 
\end{alignat}
\end{minipage}

\bigskip

Claim~\ref{clm:LBfeastopl} easily generalizes to the following.

\begin{claim} \label{clm:LBfeasgen}
Let $\vt$ be such that $t_\ell \geq 2\E{\Topl{\load_{\ssg}}}/\ell$ for all $\ell \in \POS$. 
Then, $\LPv$ is feasible.
\end{claim}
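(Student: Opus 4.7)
The plan is to mimic the proof of Claim~\ref{clm:LBfeastopl}, constructing the desired feasible solution from the optimal integral assignment $\ssg$ itself, and then verifying all constraints by leveraging Lemma~\ref{lbproxlower} separately for each index $\ell\in\POS$.

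First I would take $z^*$ to be the indicator vector of $\ssg$ (so $z^*_{ij}=1$ iff $\ssg(j)=i$), and for each $i\in[m]$ and $\ell\in\POS$ set $\eload[i,\ell]^* := \E{\erv{L_{i,\ell}}{1}}$, where $L_{i,\ell}:=\sum_{j:\ssg(j)=i}\trv{X_{ij}}{t_\ell}/t_\ell$. The rest is a constraint-by-constraint verification. The hypothesis $t_\ell\geq 2\E{\Topl{\load_{\ssg}}}/\ell$ combined with the contrapositive of Lemma~\ref{lbproxlower} yields the key inequality
\begin{equation*}
\sum_{i\in[m]}\E{\erv{\load_{\ssg}(i)}{t_\ell}}\leq \ell t_\ell \qquad\text{for all }\ell\in\POS.
\end{equation*}

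Next I would use the pointwise inequality $\erv{\load_{\ssg}(i)}{t_\ell}\geq \sum_{j:\ssg(j)=i}\erv{X_{ij}}{t_\ell}+\erv{\bigl(\sum_{j:\ssg(j)=i}\trv{X_{ij}}{t_\ell}\bigr)}{t_\ell}$ (the two summands on the right are supported on disjoint events and their sum is dominated by $\load_{\ssg}(i)$ on the event $\{\load_{\ssg}(i)\geq t_\ell\}$). Splitting the bound above into the exceptional and truncated parts immediately gives constraint \eqref{eq:excepineqgen}, and, after dividing by $t_\ell$, the upper bound $\sum_i \eload[i,\ell]^*\leq\ell$ required by \eqref{eq:truncineqgen}. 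For \eqref{eq:betaineqgen}, each $L_{i,\ell}$ is a sum of independent $[0,1]$-bounded random variables, so Lemma~\ref{largebeta} applied with threshold $1$ (and with each summand being $\trv{X_{ij}}{t_\ell}/t_\ell$) gives, for every integer $\ld\in\{1,\dots,100m\}$,
\begin{equation*}
\eload[i,\ell]^* \;=\; \E{\erv{L_{i,\ell}}{1}} \;\geq\; \frac{\sum_{j:\ssg(j)=i}\eff{\trv{X_{ij}}{t_\ell}/4t_\ell}-6}{4\ld},
\end{equation*}
which is exactly \eqref{eq:betaineqgen}.

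Finally, I would verify \eqref{eq:largeexcep}. The $\ell=1$ instance of the displayed inequality above yields $\sum_{i}\E{\erv{\load_{\ssg}(i)}{t_1}}\leq t_1$. If there were a pair $(i,j)$ with $\ssg(j)=i$ and $\E{\erv{X_{ij}}{t_1}}>t_1$, then since $\erv{\load_{\ssg}(i)}{t_1}\geq \erv{X_{ij}}{t_1}$ (as $X_{ij}\geq t_1$ forces $\load_{\ssg}(i)\geq t_1$), we would obtain $\sum_i\E{\erv{\load_{\ssg}(i)}{t_1}}>t_1$, a contradiction; so $z^*_{ij}=0$ whenever $\E{\erv{X_{ij}}{t_1}}>t_1$. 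The only step requiring any genuine care is lining up the inequality in Lemma~\ref{lbproxlower} in contrapositive form and routing through the decomposition of $\erv{\load_{\ssg}(i)}{t_\ell}$ into exceptional-plus-truncated parts; everything else is bookkeeping.
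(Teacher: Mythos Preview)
Your proposal is correct and follows essentially the same approach as the paper: the paper simply states that the claim is an easy generalization of Claim~\ref{clm:LBfeastopl} (without giving a separate proof), and your argument is exactly that generalization applied index-by-index over $\ell\in\POS$, together with the verification of the new constraint~\eqref{eq:largeexcep}, which the paper handles by the one-line remark preceding the claim that this constraint ``is valid for the optimal integral solution whenever $t_1 \geq 2\,\E{\Topl[1]{\load_{\ssg}}}$.''
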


Designing an LP-rounding algorithm is substantially more challenging now. In the
$\topl$-norm case, we set up an auxiliary LP \eqref{LBtoplobj} by extracting a single
budget constraint for each machine that served to bound the contribution from the
truncated jobs on that machine. 
This LP was quite easy to round (e.g., using Theorem~\ref{iterrndthm}) because each
$z_{ij}$ variable participated in exactly one constraint (thereby trivially yielding an
$O(1)$ bound on the column sum for each variable). 
Since we now have to simultaneously control multiple $\topl$-norms, for each machine $i$,
we will now need to include 
a budget constraint {\em for every index $\ell\in\POS$} so as to
bound the contribution from the truncated jobs for index $\ell$ (i.e., 
$\E{\erv{\bigl(\sum_\jtoi \trv{X_{ij}}{t_\ell}\bigr)}{\Omega(t_\ell)}}$).
Additionally, unlike \eqref{LBtoplobj}, wherein 
the contribution from the exceptional jobs was bounded by incorporating it in the
objective,  
we will now need, for each $\ell\in\POS$, a separate constraint to bound the total
contribution from the exceptional jobs for index $\ell$.

Thus, while we can set up an auxiliary LP similar to \eqref{LBtoplobj} containing these
various budget constraints (see \eqref{eq:budgetlow} and \eqref{eq:budgethi}), rounding a
fractional solution to this LP to obtain an assignment that approximately satisfies these
various budget constraints presents a significant technical hurdle. As alluded to above,
every $z_{ij}$ variable now participates in {\em multiple} budget constraints, which makes
it difficult to argue a bounded-column-sum property for this variable and thereby leverage
Theorem~\ref{iterrndthm}. 
(The multiple budget constraints included for each machine $i$ to bound the
contribution from the truncated jobs for each $\ell\in\POS$ present the main difficulty; 
one can show that the column sum if we only consider the constraints bounding the
contribution from the exceptional jobs is $O(1)$ (see Claim~\ref{clm:lowcolsumexcep}).)

In Sections~\ref{sec:loadbalber} and~\ref{sec:loadbalgen}, we show how to partly overcome
this obstacle. 
Section~\ref{sec:loadbalber} considers the setting where job sizes are
Bernoulli random variables. Here, we show 
that an auxiliary LP as above (see \eqref{berauxlp}) yields a constraint matrix with
$O(1)$-bounded column sums.   
Consequently, this auxiliary LP can be rounded with an $O(1)$ violation of all budget
constraints (using Theorem~\ref{iterrndthm}), which then (with the right choice of $\vt$)
leads to an $O(1)$-approximation algorithm (Theorem~\ref{thm:loadbalber}).  
In Section~\ref{sec:loadbalgen}, we consider the general case. We argue that we can
set up an auxiliary LP that imposes a {\em weaker} form of budget constraints involving
expected truncated job sizes, and does have $O(1)$ column sums.  
Via a suitable use of Chernoff bounds, this then leads to an 
$O(\log m/\log\log m)$-approximation for general stochastic $f$-norm load balancing
(Theorem~\ref{thm:loadbalgen}). 

We do not know how to overcome the impediment discussed above in setting up the auxiliary LP
for the general setting, 
Specifically, we do not know how to set up an auxiliary LP with a suitable choice of
budget constraints for each machine $i$ and $\ell\in\POS$ that: (a) imposes the desired
bound on the contribution from the truncated jobs for index $\ell$ within $O(1)$ factors;
and (b) yields a constraint matrix with $O(1)$-bounded column sums. 
We leave the question of determining the integrality gap of \ref{genflp}, as 
also the technical question of setting up a suitable auxiliary LP, 
in the general case, as intriguing open problems.  
We believe however that our LP-relaxation \ref{genflp} is in fact better than what we
have accounted for. 
In particular, we remark that if the Koml\'os conjecture in discrepancy theory is
true, then it is quite likely that the auxiliary LP \eqref{berauxlp} that we formulate in the
Bernoulli case can be rounded with {\em at most} an $O(\sqrt{\log m})$-violation of the
budget constraints, which would yield an $O(\sqrt{\log m})$ upper bound on the the
integrality gap of \ref{genflp}. 

The final step involved is choosing the ``right'' $\vt$ vector, 
and this is the only place where we use the (value oracle for) the norm $f$.
Clearly, binary search is not an option now. Noting that the vector $\vec{t^*}$
corresponding to the optimal solution (so $t^*_\ell$ is supposed to be a guess of 
$2\cdot\E{\topl(\load_{\ssg})}/\ell$) is a vector with non-increasing coordinates, and the
bounds in Lemma~\ref{binsearch} yield an $O(\log m)$ search space for each $t^*_\ell$ (by
considering powers of $2$), 
there are only a polynomial number of $\vt$
vectors to consider (see Claim~\ref{clm:needleinapolyhaystack}). We can give a randomized
value oracle for evaluating 
$\E{f(Y)}$ for the cost vector $Y$ 
resulting from each $\vt$ vector (for which $\LPv$ is feasible), and therefore arrive at
the best solution computed; this yields a randomized approximation guarantee.%
\footnote{We would obtain an approximation guarantee that holds with high probability, or
for the expected cost of the solution, where the randomness is now over the random choices
involved in the randomized value oracle.}
We can do better and obtain a {\em deterministic} (algorithm and) approximation guarantee
by utilizing Corollary~\ref{detestimate}: this shows that for each $\vt$ (for which
$\LPv$ is feasible), we can define a corresponding vector $\bvec=\bvec(\vt)\in\Rp^m$
such that  $f\bigl(\bvec(\vt)\bigr)$ acts as a good proxy for 
the objective value $\E{f(Y)}$ of the solution computed for $\vt$, provided that 
$\E{\topl(Y)}=O\bigl(\topl(\bvec)\bigr)$ for all $\ell\in\POS$.
It follows that the solution corresponding to the smallest $f\bigl(\bvec(\vt)\bigr)$ 
is a near-optimal solution.

\subsubsection{Rounding algorithm and analysis: Bernoulli jobs} 
\label{sec:loadbalber} 
Assume that $\vt\in\Rp^\POS$ is such that $\LPv$ is feasible.
We will further assume that 
$t_\ell$ is a power of $2$ (possibly smaller than $1$) for all $\ell \in \POS$,
and $t_\ell / t_{2\ell} \in \{1,2\}$ for all $\ell$ such that $\ell,2\ell \in\POS$.
The rationale here is that these properties are satisfied when $\vt_\ell$ is the smallest power of $2$ that is at least $2\,\E{\topl(\load_{\ssg})}/\ell$.
Let $(\bz,\beload[])$ be a feasible fractional solution to this LP.
We show how to round $\bz$ to obtain an assignment $\sg$ satisfying
$\E{\topl(\load_\sg)}=O(1)\cdot\ell t_\ell$ for all $\ell\in\POS$. 

Our strategy is similar to that in Section~\ref{sec:loadbaltopl}. As discussed earlier, we 
set up an auxiliary LP \eqref{berauxlp} 
where we include a budget constraint for: 
(i) each machine $i$ and index $\ell\in\POS$ from \eqref{eq:betaineqgen}, 
to bound the contribution from the truncated jobs for index $\ell$ on machine $i$;
(ii) each $\ell\in\POS$ to bound the total contribution from the exceptional jobs for
index $\ell$.
We actually refine the above set of constraints to drop various redundant budget
constraints (i.e., constraints implied by other included constraints), so as to enable us
to prove an $O(1)$ column sum for the resulting constraint matrix. 
The fractional assignment $\bz$ yields a feasible solution to \eqref{berauxlp}. 
Now we really need to utilize the full power of Theorem~\ref{iterrndthm} to round 
$\bz$
to obtain an assignment that approximately satisfies these constraints. 
In the analysis, we show that approximately satisfying the budget
constraints of \eqref{berauxlp} ensures that $\E{\topl(\load_\sg)}=O(\ell t_\ell)$. This
argument is slightly more complicated now, due to the fact that we drop some redundant
constraints, 
but it is along the same lines as that in
Section~\ref{sec:loadbaltopl}. We now delve into the details.

\begin{enumerate}[label=B\arabic*., ref=T\arabic*, topsep=0.5ex, itemsep=0.5ex, leftmargin=*]
\item 
For each $\ell \in \POS$, define the following quantities.
Define $\M^\ell_\low := \{ i \in [m] : \beload[i,\ell] < 1/2 \}$, 
and let $\M^\ell_\hi := [m] \setminus \M^\ell_\low$. 
For every $i \in [m]$, set 
$\ld_{i,\ell} := \Min{100m,\floor{1/\beload[i,\ell]}}\in\{2,\dots,100m\}$ if $i\in\M^\ell_\low$,
and $\ld_{i,\ell}:=1$ otherwise.

Define $\POSl := \{\ell \in \POS : \ell = 1 \text{ or } t_\ell = \frac{t_{\ell/2}}{2} \}$,
and $\POSe := \{ \ell \in \POS: 2\ell \notin \POS \text{ or } t_\ell = t_{2\ell} \}$.

\item The auxiliary LP will enforce constraints
\eqref{eq:excepineqgen}, and constraint \eqref{eq:betaineqgen} for parameter $\ld_{i,\ell}$,
for every machine $i$ and $\ell\in\POS$, but we remove various redundant constraints.
Notice that if $\ell\notin\POSl$ (so $t_\ell=t_{\ell/2}$), then
constraints \eqref{eq:betaineqgen} for $\ell$ and $\ell/2$ have the same LHS. We may
therefore assume that $\beload[i,\ell/2]=\beload[i,\ell]$ for every machine $i$, and hence
constraint \eqref{eq:truncineqgen} for index $\ell$ is implied by \eqref{eq:truncineqgen}
for index  $\ell/2$. 
Similarly, \eqref{eq:excepineqgen} for index $\ell$ is implied by 
\eqref{eq:excepineqgen} for index $\ell/2$. Also, if $\ell\notin\POSe$ (so 
$t_{\ell}=2t_{2\ell}$), then constraint \eqref{eq:excepineqgen} for index $\ell$ is implied by this constraint for index $2\ell$. 
The auxiliary LP is therefore as follows.

\begin{minipage}{0.15\textwidth}
\leqnomode
\begin{equation}
\tag{Ber-LP} \label{berauxlp}
\end{equation}
\reqnomode
\end{minipage}
\
\begin{minipage}{0.8\textwidth}
\begin{alignat}{2}
\sum_{i \in [m],j \in [n]}\Bigl(\E{\erv{X_{ij}}{t_\ell}}/(\ell t_\ell)\Bigr)\z_{ij} & \leq 1
\qquad && \forall \ell \in \POSl\cap\POSe \label{eq:excepl} \\
\sum_{j \in [n]} \eff[\ld_{i,\ell}]{\trv{X_{ij}}{t_\ell}/4t_\ell}\z_{ij} & \leq 10 
\qquad && \forall \, \ell \in \POSl, \, \forall\,i \in \M^\ell_\low \label{eq:budgetlow} \\
\sum_{j \in [n]} \E{\trv{X_{ij}}{t_\ell}/4t_\ell}\z_{ij} & \leq 4 \, \beload[i,\ell] + 6 
\qquad && \forall \, \ell \in \POSl, \, \forall\,i \in \M^\ell_\hi \label{eq:budgethi} \\
\z & \in \Q^\asgn. \notag 
\end{alignat}
\end{minipage}

We have scaled constraints \eqref{eq:excepineqgen} for the exceptional jobs
to reflect the fact that we can afford to incur an $O(\ell t_\ell)$ error in the
constraint for index $\ell$. 

\item The fractional assignment $\bz$ is a feasible solution to \eqref{berauxlp}. We apply
Theorem~\ref{iterrndthm} with the above system to round $\bz$ and obtain an integral
assignment $\sg$.
\end{enumerate} 

\paragraph{Analysis.}
We first show that the constraint
matrix of \eqref{berauxlp} has $O(1)$-bounded column sums for the support of $\bz$. 
Claim~\ref{clm:lowcolsumexcep} shows this for the budget constraints for the exceptional
jobs; 
this bound holds for any distribution. 

\begin{claim} \label{clm:lowcolsumexcep}
Let $i,j$ be such that $\bz_{ij} > 0$. 
Then $\sum_{\ell \in \POSl\,\cap\,\POSe} \bigl(\E{\erv{X_{ij}}{t_\ell}}/\ell t_\ell\bigr) \leq 4$.
\end{claim}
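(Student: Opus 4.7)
Since each $X_{ij}$ is Bernoulli, write $X_{ij}=a_{ij}B_{ij}$ with $B_{ij}\sim\text{Bernoulli}(p_{ij})$, so $\E{X_{ij}^{\geq t}}$ equals $a_{ij}p_{ij}$ when $a_{ij}\geq t$ and is $0$ otherwise. Let $\ell^*:=\min\{\ell\in\POS: a_{ij}\geq t_\ell\}$; only indices $\ell\geq\ell^*$ contribute, so the sum equals $a_{ij}p_{ij}\sum_{\ell\in\POSl\cap\POSe,\,\ell\geq\ell^*} 1/(\ell t_\ell)$. The hypothesis $\bz_{ij}>0$ combined with \eqref{eq:largeexcep} gives $\E{X_{ij}^{\geq t_1}}\leq t_1$. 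If $\ell^*=1$ this reads $a_{ij}p_{ij}\leq t_1=t_{\ell^*}$; if $\ell^*>1$ then $\ell^*/2\in\POS$ and by minimality of $\ell^*$, $a_{ij}<t_{\ell^*/2}\leq 2t_{\ell^*}$, so $a_{ij}p_{ij}\leq a_{ij}<2t_{\ell^*}$. In either case, $a_{ij}p_{ij}\leq 2t_{\ell^*}$.

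Next I establish a structural fact about $(t_\ell)_{\ell\in\POS}$. Because $t_\ell/t_{2\ell}\in\{1,2\}$, the map $\ell\mapsto \ell t_\ell$ is non-decreasing on $\POS$, doubling or staying constant as $\ell$ doubles. From this I derive a doubling lemma: for any two consecutive elements $\ell_k<\ell_{k+1}$ of $\POSl\cap\POSe$, $\ell_{k+1}t_{\ell_{k+1}}\geq 2\ell_k t_{\ell_k}$. Indeed, the existence of $\ell_{k+1}$ forces $\ell_k$ not to be the maximum of $\POS$, so $\ell_k\in\POSe$ yields $t_{\ell_k}=t_{2\ell_k}$; combined with $\ell_{k+1}\geq 2\ell_k$ (immediate since both lie in $\POS$) and the monotonicity of $\ell t_\ell$, this gives $\ell_{k+1}t_{\ell_{k+1}}\geq (2\ell_k)t_{2\ell_k}=2\ell_k t_{\ell_k}$.

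To conclude, enumerate $\POSl\cap\POSe\cap[\ell^*,\infty)$ as $\ell_1<\ell_2<\cdots$; if this set is empty the sum vanishes and the bound holds trivially. Otherwise, the doubling lemma and the geometric-series bound give $\sum_k 1/(\ell_k t_{\ell_k})\leq 2/(\ell_1 t_{\ell_1})$, while monotonicity of $\ell t_\ell$ combined with $\ell_1\geq\ell^*$ yields $\ell_1 t_{\ell_1}\geq\ell^*t_{\ell^*}\geq t_{\ell^*}$. Combining with $a_{ij}p_{ij}\leq 2t_{\ell^*}$ gives $a_{ij}p_{ij}\cdot 2/(\ell_1 t_{\ell_1})\leq 4$, as required. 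I expect the main obstacle to be the doubling lemma: one must unpack $\POSl$ and $\POSe$ simultaneously to see that consecutive surviving indices force a genuine geometric gap in $\ell t_\ell$. The rest is clean once the Bernoulli structure collapses the sum to a tail and \eqref{eq:largeexcep} pins down $a_{ij}p_{ij}$.
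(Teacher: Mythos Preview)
Your proof is correct. The doubling lemma you state for consecutive indices of $\POSl\cap\POSe$ is exactly the structural fact the paper uses (the paper phrases it as ``if $\ell\in\POSl\cap\POSe$ then $2\ell\notin\POSl$,'' which combined with the fact that $t_\ell$ halves along $\POSl$ yields the same doubling of $\ell t_\ell$), and the geometric-series conclusion is identical.

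The difference lies in how the numerator is controlled. You exploit the Bernoulli structure to identify a threshold $\ell^*$ below which all terms vanish and bound $a_{ij}p_{ij}\leq 2t_{\ell^*}$. The paper instead gives a distribution-free bound: from \eqref{eq:largeexcep} it has $\E{X_{ij}^{\geq t_1}}\leq t_1$, and then for every $\ell$ it writes $\E{X_{ij}^{\geq t_\ell}}\leq\E{X_{ij}^{\geq t_1}}+\E{X_{ij}^{<t_1}}\leq 2t_1$; summing the geometric series from the smallest index (where $\ell t_\ell\geq t_1$) gives the same constant~$4$. Your argument is marginally sharper (it anchors at $t_{\ell^*}$ rather than $t_1$), but it relies on job sizes being two-point. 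The paper's argument is deliberately distribution-free because this same claim is invoked again in Section~\ref{sec:loadbalgen} for arbitrary job-size distributions; your version would not carry over there.
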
 

\begin{proof}
Crucially, we first note that $\E{X_{ij}^{\geq t_1}} \leq t_1$ due to \eqref{eq:largeexcep}. 
So, $\E{X_{ij}^{\geq t_\ell}} \leq \E{X_{ij}^{\geq t_1}} + \E{X_{ij}^{< t_1}} \leq 2 t_1$. 
Observe that if $\ell \in \POSl\cap\POSe$, then $2\ell \not\in \POSl$.
Also, $t_\ell$ drops by a factor of exactly $2$ as $\ell$ steps over the indices in $\POSl$.
So, $\ell t_\ell$ increases by a factor of at least $2$ as $\ell$ steps over the indices
in $\POSl\cap\POSe$. Thus,
\begin{equation*}
\sum_{\ell \in \POSl\cap\POSe} \frac{\E{X_{ij}^{\geq t_\ell}}}{\ell t_\ell} \leq 2 t_1
\paren{ \frac{1}{t_1} + \frac{1}{2t_1} + \frac{1}{4t_1} + \dots  } \leq 4 \ . \qedhere
\end{equation*}
\end{proof}

Claim~\ref{clm:lowcolsumbudget} shows the $O(1)$ column sum for
constraints \eqref{eq:budgetlow} and \eqref{eq:budgethi}.
The $O(1)$ column sum for constraints \eqref{eq:budgethi} in fact holds for any
distribution (see Claim~\ref{clm:lowcolsumexpct}); the $O(1)$ column sum for constraints
\eqref{eq:budgetlow} relies on the fact that the job sizes are Bernoulli random random
variables. 
Let $a^\ell_{i,j}$ denote the coefficient of $\bz_{ij}$ in constraints
\eqref{eq:betaineqgen} and \eqref{eq:truncineqgen}.

\begin{claim} \label{clm:lowcolsumbudget}
When the $X_{ij}$s are Bernoulli random variables, we have 
$\sum_{\ell \in \POSl} a^\ell_{ij} \leq 1$ for any $i,j$. 
\end{claim}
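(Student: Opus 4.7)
The plan is to exploit the two-point structure of Bernoulli variables in two steps: first, establish the pointwise bound $a^\ell_{ij} \leq s_{ij}/(4t_\ell)$ whenever the coefficient is nonzero (where $s_{ij}$ denotes the unique positive value in the support of $X_{ij}$); and second, telescope using the geometric halving of $t_\ell$ across $\POSl$.

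For the first step I would write each Bernoulli $X_{ij}$ as taking value $s_{ij} > 0$ with some probability $p_{ij}$, and $0$ otherwise (the all-zero case is trivial). The key observation, which fails for non-Bernoulli distributions, is that $\trv{X_{ij}}{t_\ell} = X_{ij}\cdot\bon_{X_{ij}<t_\ell}$ is \emph{identically zero} whenever $s_{ij}\geq t_\ell$: on the atom $X_{ij}=0$ it vanishes trivially, and on the atom $X_{ij}=s_{ij}$ the indicator vanishes. Thus only $\ell\in\POSl$ with $t_\ell>s_{ij}$ contribute to the sum. When $t_\ell>s_{ij}$, we have $\trv{X_{ij}}{t_\ell}=X_{ij}$ pointwise, and both candidate coefficients are bounded uniformly by $s_{ij}/(4t_\ell)$: in the case $i\in\M^\ell_\hi$, the coefficient is $\E{X_{ij}/(4t_\ell)}=p_{ij}s_{ij}/(4t_\ell)\leq s_{ij}/(4t_\ell)$; in the case $i\in\M^\ell_\low$, since $X_{ij}\leq s_{ij}$ almost surely we have $\ld^{X_{ij}/(4t_\ell)}\leq\ld^{s_{ij}/(4t_\ell)}$ pointwise for any $\ld\geq 1$, and taking expectations and applying $\log_\ld$ gives $\beta_{\ld_{i,\ell}}(X_{ij}/(4t_\ell))\leq s_{ij}/(4t_\ell)$.

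For the second step I would enumerate $\POSl=\{\ell_1<\ell_2<\cdots\}$. Because the working assumption ensures $t_\ell/t_{2\ell}\in\{1,2\}$ and, by definition of $\POSl$, the $t$-value strictly halves precisely at $\POSl$-indices, the subsequence $t_{\ell_1}>t_{\ell_2}>\cdots$ is a geometric progression with common ratio $1/2$. Letting $k^*$ be the largest index with $s_{ij}<t_{\ell_{k^*}}$ (or $k^*=0$ with an empty sum if no such index exists), the geometric series bound gives $\sum_{k=1}^{k^*} 1/t_{\ell_k}\leq 2/t_{\ell_{k^*}}$. Combining with Step 1,
\[
\sum_{\ell\in\POSl} a^\ell_{ij} \;\leq\; \frac{s_{ij}}{4}\sum_{k=1}^{k^*}\frac{1}{t_{\ell_k}} \;\leq\; \frac{s_{ij}}{2\,t_{\ell_{k^*}}} \;<\; \frac{1}{2} \;\leq\; 1,
\]
using $s_{ij}<t_{\ell_{k^*}}$ in the penultimate step.

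The proof is not really obstructed; the only subtlety worth isolating is why the Bernoulli hypothesis is essential. The bound $\beta_\ld(X/c)\leq s/c$ from Step 1 remains valid for any random variable supported on $[0,s]$, but for a general distribution the truncation $\trv{X}{t_\ell}$ need not vanish once $s\geq t_\ell$ and can retain mass up to $t_\ell$, making the coefficient as large as $\Theta(1)$ instead of $\Theta(s/t_\ell)$ and breaking the telescoping. The all-or-nothing nature of $\trv{X_{ij}}{t_\ell}$ for scaled-Bernoulli $X_{ij}$ is precisely what makes every ``large'' job contribute zero and permits the geometric sum to collapse to a constant.
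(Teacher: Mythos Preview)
Your proof is correct and takes essentially the same approach as the paper: both use that the truncated Bernoulli variable vanishes once $s_{ij}\geq t_\ell$, bound the remaining coefficients by $s_{ij}/(4t_\ell)$ via the generic inequality $\beta_\lambda(Z)\leq\theta$ for $[0,\theta]$-bounded $Z$, and then sum the geometric series in $1/t_\ell$ over $\POSl$ to get a bound of $1/2$. Your explicit case split between $\M^\ell_\hi$ and $\M^\ell_\low$ just unpacks what the paper handles in one line with the uniform effective-size bound.
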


\begin{proof}
Fix some machine $i$ and job $j$.
For any $[0,\scal]$-bounded r.v. $\scrv$ and a parameter $\ld \in \R_{\geq 1}$, we have
$\eff{\scrv} \leq \scal$ (see Definition~\ref{defn:effectivesize}). 
Suppose that $X_{ij}$ is a Bernoulli 
random variable that takes value $s$ with probability $q$, and value $0$ with probability
$1-q$. 
For any $\ell \in \POSl$, we have 
$a^\ell_{ij} \leq \frac{s \cdot \mathbbm{1}_{\paren{s<t_\ell}}}{4 t_\ell}$; recall that 
$\mathbbm{1}_{\mce} \in \{0,1\}$ is $1$ if and only if the event $\mce$ happens. 
By definition of $\POSl$, the $t_\ell$s decrease geometrically over $\ell \in \POSl$, so
\begin{equation*}
\sum_{\ell \in \POSl} a^\ell_{ij} \leq \sum_{\ell \in \POSl} \frac{s \cdot \mathbbm{1}_{\paren{s < t_\ell}}}{4 t_\ell} = \frac14 \, \sum_{\ell \in \POSl} \frac{s}{t_\ell} \, \mathbbm{1}_{\paren{s/t_\ell < 1}} \leq \frac{ 1 + 1/2 + 1/4 + \dots }{4} \leq \frac12 \ . 
\qedhere
\end{equation*}
\end{proof}

As before, we analyze 
$\E{\topl(\load_\sg)}$ by bounding the expected $\topl$ norm of the exceptional jobs for
index $\ell$, the truncated jobs for index $\ell$ in $\M^\ell_\low$, and the truncated
jobs for index $\ell$ in $\M^\ell_\hi$. 
Define $Y^{\excep,\ell}_i := \sum_\jtoi \erv{X_{ij}}{t_\ell}$, and
\begin{equation*}
Y^{\low,\ell}_i :=
\begin{cases}
\sum_\jtoi \trv{X_{ij}}{t_\ell}; & \text{if } i \in \M^\ell_\low \\
0; & \text{otherwise}
\end{cases}
\qquad \qquad
Y^{\hi,\ell}_i :=
\begin{cases}
\sum_\jtoi \trv{X_{ij}}{t_\ell}; & \text{if } i \in \M^\ell_\hi \\
0; & \text{otherwise}
\end{cases}
\end{equation*}
We have $\load_\sg=Y^{\low,\ell}+Y^{\hi,\ell}+Y^{\excep,\ell}$.
Given the above claims, it follows from Theorem~\ref{iterrndthm} (b) that the assignment $\sg$
satisfies the constraints of \eqref{berauxlp} with an additive violation of at most
$5$. 
This will enable us to bound 
$\E{\topl(Z)}$, where $Z\in\{Y^{\low,\ell},Y^{\hi,\ell},Y^{\excep,\ell}\}$, by
$O(\ell t_\ell)$ (Lemmas~\ref{lem:boundedgenexcep}--\ref{lem:boundedgenlow}), and hence show that
$\E{\topl(\load_\sg)}=O(\ell t_\ell)$ (Lemma~\ref{lem:boundedgen}).
The proofs follow the same template as those in Section~\ref{sec:loadbaltopl}, but we
need to also account for the redundant constraints that were dropped.
Recall that \linebreak
$\POSl := \{\ell \in \POS : \ell = 1 \text{ or } t_\ell = \frac{t_{\ell/2}}{2} \}$,
and $\POSe := \{ \ell \in \POS: 2\ell \notin \POS \text{ or } t_\ell = t_{2\ell} \}$.

\begin{lemma} \label{clm:boundedgenexcep} \label{lem:boundedgenexcep}
We have $\E{\Topl{Y^{\excep,\ell}}}\leq 6\,\ell t_\ell$ for all $\ell\in\POS$.
\end{lemma}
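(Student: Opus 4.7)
The plan is to prove the bound first for indices $\ell\in\POSl\cap\POSe$, where the relevant LP constraint is actually included in \eqref{berauxlp}, and then extend it to all $\ell\in\POS$ by a structural reduction. Since $\E{\topl(Y^{\excep,\ell})}\leq\sum_i\E{Y_i^{\excep,\ell}}=\sum_j\E{X_{\sg(j),j}^{\geq t_\ell}}$ by Claim~\ref{lem:sandwich} (and linearity of expectation), it suffices to upper bound this last sum by $6\ell t_\ell$.

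For $\ell\in\POSl\cap\POSe$, the scaled exceptional-jobs constraint \eqref{eq:excepl} appears in \eqref{berauxlp} with right-hand side $1$. The column-sum bounds from Claims~\ref{clm:lowcolsumexcep} and~\ref{clm:lowcolsumbudget} give $\viol\leq 4+1=5$, so by Theorem~\ref{iterrndthm}(b) the assignment $\sg$ obtained by the rounding satisfies this constraint with right-hand side at most $1+5=6$; equivalently, $\sum_j\E{X_{\sg(j),j}^{\geq t_\ell}}\leq 6\ell t_\ell$.

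For a general $\ell\in\POS$ I plan to locate $\ell^*\in\POSl\cap\POSe$ satisfying $t_{\ell^*}\leq t_\ell$ and $\ell^* t_{\ell^*}\leq\ell t_\ell$, and then reduce the bound at $\ell$ to the bound at $\ell^*$. Build $\ell^*$ by a two-phase walk on $\POS$: starting at $\ell$, repeatedly halve the current index as long as it lies outside $\POSl$, which by the definition of $\POSl$ keeps $t$ fixed and halves the product $\ell t_\ell$; once the index lies in $\POSl$, repeatedly double it as long as it lies outside $\POSe$, which halves $t$, preserves $\ell t_\ell$, and (since $t$ strictly decreases in this step) keeps the new index in $\POSl$. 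Because $\POS$ is finite and its maximum element belongs to $\POSe$, the process terminates in $O(\log m)$ steps at the desired $\ell^*$.

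Since $t_{\ell^*}\leq t_\ell$ yields $X_{ij}^{\geq t_\ell}\leq X_{ij}^{\geq t_{\ell^*}}$ pointwise, combining with the bound at $\ell^*$ gives $\sum_j\E{X_{\sg(j),j}^{\geq t_\ell}}\leq\sum_j\E{X_{\sg(j),j}^{\geq t_{\ell^*}}}\leq 6\ell^* t_{\ell^*}\leq 6\ell t_\ell$, completing the proof. The only non-routine ingredient is the two-phase walk, which exploits the geometric structure the definitions of $\POSl$ and $\POSe$ impose on $\vt$; everything else is a direct consequence of Theorem~\ref{iterrndthm}(b) and Claim~\ref{lem:sandwich}.
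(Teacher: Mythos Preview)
Your proposal is correct and follows essentially the same approach as the paper: both reduce the bound for a general $\ell\in\POS$ to an index in $\POSl\cap\POSe$, where the scaled constraint \eqref{eq:excepl} appears in \eqref{berauxlp} and Theorem~\ref{iterrndthm}(b) (with the column-sum bound $5$ from Claims~\ref{clm:lowcolsumexcep} and~\ref{clm:lowcolsumbudget}) gives the $6\ell t_\ell$ bound directly. Your two-phase walk is just a unified, procedural packaging of the paper's case split (go down to $\POSl$ when $\ell\notin\POSl$, then go up to $\POSe$ when $\ell\notin\POSe$), yielding the same target index with the same properties $t_{\ell^*}\leq t_\ell$ and $\ell^* t_{\ell^*}\leq\ell t_\ell$.
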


\begin{proof}
Fix $\ell\in\POS$.
We upper bound $\E{\topl[m](Y^{\excep,\ell})}=\sum_j \E{\erv{X_{\sg(j),j}}{t_\ell}}$ by $6\ell t_\ell$.
If $\ell \in \POSl\cap\POSe$, then by Theorem~\ref{thm:budgetedmatrounding} (b), we have
$\sum_j \E{\erv{X_{\sg(j),j}}{t_\ell}} \leq 6 \ell t_\ell$ 
since the column sum for each $\z_{ij}$ variable with $\bz_{ij}>0$ is at most $5$ by
Claims~\ref{clm:lowcolsumexcep} and~\ref{clm:lowcolsumbudget}.

Next, suppose that $\ell \in \POS \setminus \POSl$, so $t_\ell = t_{\ell/2}$. 
Let $\ell'$ be the largest index in $\POSl$ that is at most $\ell$. Then,
$\ell'\leq\ell/2$ and $t_\ell=t_{\ell'}$. So $Y^{\excep,\ell} = Y^{\excep,\ell'}$. 
Also, $\ell'\in\POSe$, since $t_\ell\leq t_{2\ell'}\leq t_{\ell'}=t_\ell$.
Therefore,
$\E{\topl[m](Y^{\excep,\ell})}=\E{\topl[m](Y^{\excep,\ell'})}\leq 6\ell't_{\ell'}\leq 6 \ell t_\ell$.

Finally, suppose $\ell\in\POS\sm\POSe$, so $t_\ell=2t_{2\ell}$. Now, let $\ell'$ be the
smallest index in $\POSe$ that is at least $\ell$. Then, $\ell'\geq 2\ell$, and
$t_{\ell'/2}>t_{\ell'}$ (since $\ell'/2\notin\POSe$), so $\ell'\in\POSl$.
We claim that $\ell't_{\ell'}=\ell t_\ell$. 
This is because for every $\ell''\in\POS$, $\ell\leq\ell''<\ell'$, we have
$t_{\ell''}=2t_{2\ell''}$, and so $\ell''t_{\ell''}=2\ell''t_{2\ell''}$.
Hence, we have  
$\E{\topl[m](Y^{\excep,\ell})}\leq\E{\topl[m](Y^{\excep,\ell'})}\leq 6\ell't_{\ell'}=6\ell
t_\ell$.
\end{proof}

\begin{lemma} \label{clm:boundedgenhi} \label{lem:boundedgenhi}
We have $\E{\Topl{Y^{\hi,\ell}}}\leq 104\,\ell t_\ell$ for all $\ell\in\POS$. 
\end{lemma}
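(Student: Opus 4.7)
\begin{proofsketchof}{Lemma~\ref{lem:boundedgenhi}}
The plan is to mirror the argument in Lemma~\ref{lem:boundedtoplexcephi}(ii), bounding $\E{\topl(Y^{\hi,\ell})}$ by the cruder quantity $\E{\topl[m](Y^{\hi,\ell})} = \sum_{i \in \M^\ell_\hi} \E{Y^{\hi,\ell}_i}$, and then controlling each per-machine expectation through the budget constraints of \eqref{berauxlp} that are (approximately) satisfied by $\sg$.

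First I would handle the case $\ell \in \POSl$. Claims~\ref{clm:lowcolsumexcep} and~\ref{clm:lowcolsumbudget} together show that in \eqref{berauxlp} the column sum corresponding to any variable $\z_{ij}$ with $\bz_{ij} > 0$ is at most $4 + 1 = 5$. Applying Theorem~\ref{iterrndthm}(b) with $\viol \leq 5$, the assignment $\sg$ satisfies every constraint of \eqref{berauxlp} with an additive violation of at most $5$. In particular, for each $i \in \M^\ell_\hi$, the relevant instance of \eqref{eq:budgethi} gives
\[
\sum_{j \mapsto i} \E{\trv{X_{ij}}{t_\ell}/4t_\ell} \leq 4\,\beload[i,\ell] + 11 \leq 26\,\beload[i,\ell],
\]
where the last inequality uses $\beload[i,\ell] \geq 1/2$ (since $i \in \M^\ell_\hi$). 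Multiplying by $4 t_\ell$ yields $\E{Y^{\hi,\ell}_i} \leq 104\, t_\ell\, \beload[i,\ell]$. Summing over $i \in \M^\ell_\hi$ and using the LP constraint \eqref{eq:truncineqgen} (which is satisfied by the fractional solution $\beload[]$) gives
\[
\E{\topl(Y^{\hi,\ell})} \leq \E{\topl[m](Y^{\hi,\ell})} = \sum_{i \in \M^\ell_\hi} \E{Y^{\hi,\ell}_i} \leq 104\, t_\ell \sum_{i} \beload[i,\ell] \leq 104\,\ell t_\ell.
\]

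Finally, for $\ell \in \POS \setminus \POSl$ I would reduce to the previous case. Let $\ell'$ be the largest index in $\POSl$ with $\ell' \leq \ell$. Since $t_\ell = t_{\ell/2}$ whenever $\ell \notin \POSl$, iterating this equality gives $t_{\ell'} = t_\ell$; the normalization step right before \eqref{berauxlp} also lets us assume $\beload[i,\ell'] = \beload[i,\ell]$ (hence $\M^{\ell'}_\hi = \M^\ell_\hi$ and $\ld_{i,\ell'} = \ld_{i,\ell}$), so $Y^{\hi,\ell} = Y^{\hi,\ell'}$. Combining with the case already handled,
\[
\E{\topl(Y^{\hi,\ell})} \leq \E{\topl[m](Y^{\hi,\ell'})} \leq 104\,\ell' t_{\ell'} \leq 104\,\ell t_\ell.
\]
There is no real obstacle here; the only thing to be careful about is threading the bookkeeping between $\ell$ and $\ell'$ (and verifying that $\beload[i,\ell]$ still satisfies $\sum_i \beload[i,\ell] \leq \ell$ for the dropped indices, which follows from the equality $\beload[i,\ell] = \beload[i,\ell/2]$ and \eqref{eq:truncineqgen} at $\ell/2$).
\end{proofsketchof}
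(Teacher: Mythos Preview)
Your proof is correct and follows essentially the same approach as the paper: bound $\E{\topl(Y^{\hi,\ell})}$ by $\E{\topl[m](Y^{\hi,\ell})}$, use Theorem~\ref{iterrndthm}(b) (with the column-sum bound of $5$) to get $\sum_{\jtoi}\E{\trv{X_{ij}}{t_\ell}/4t_\ell}\leq 4\beload[i,\ell]+11\leq 26\beload[i,\ell]$ for $i\in\M^\ell_\hi$, sum using \eqref{eq:truncineqgen}, and reduce the case $\ell\notin\POSl$ to the largest $\ell'\in\POSl$ below $\ell$ via $t_{\ell'}=t_\ell$ and the normalization $\beload[i,\ell']=\beload[i,\ell]$.
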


\begin{proof}
Fix $\ell\in\POS$.
We again show that in fact $\E{\topl[m](Y^{\hi,\ell})}$ is at most $104\,\ell t_\ell$.
Note that we only need to show this for $\ell\in\POSl$: if $\ell\in\POS\sm\POSl$, 
then if we consider the largest index $\ell'\in\POSl$ that is at
most $\ell$, we have $t_{\ell'}=t_\ell$, and so $Y^{\hi,\ell'}=Y^{\hi,\ell}$; thus, the
bound on $\E{\topl[m](Y^{\hi,\ell})}$ follows from that on $\E{\topl[m](Y^{\hi,\ell'})}$.
So suppose $\ell\in\POSl$.
We have 
$$
\E{Y^{\hi,\ell}_i}=\sum_\jtoi \E{\trv{X_{ij}}{t_\ell}}
\leq 4t_\ell(4\beload[i,\ell]+11) \leq 4t_\ell\cdot 26\beload[i,\ell].
$$
The first inequality follows from Theorem~\ref{iterrndthm} (b)
Summing over all $i$, since $\sum_i\beload[i,\ell]\leq\ell$, we obtain that 
$\E{\topl[m](Y^{\hi,\ell})}\leq 104\, \ell t_\ell$.
\end{proof}

\begin{lemma} \label{clm:boundedgenlow} \label{lem:boundedgenlow}
We have $\E{\Topl{Y^{\low,\ell}}}\leq 296\,\ell t_\ell$ for all $\ell\in\POS$. 
\end{lemma}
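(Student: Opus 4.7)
The plan is to mirror the proof of Lemma~\ref{lem:boundedtopllow} applied separately for each $\ell\in\POS$, supplemented by an extra step to cover those $\ell$ for which \eqref{berauxlp} intentionally omits a budget constraint of type \eqref{eq:budgetlow} (namely $\ell\in\POS\sm\POSl$). The three tools I will string together are: Theorem~\ref{iterrndthm}, to extract an additive guarantee on the budget constraints from iterative rounding; Lemma~\ref{smallbeta}(b), to translate a bound on the $\ld_{i,\ell}$-effective size of the scaled machine load $\vecrv_i:=Y^{\low,\ell}_i/(4t_\ell)$ into a tail-expectation bound; and Lemma~\ref{lbproxupper}, to convert the resulting per-machine tail bounds into a bound on $\E{\topl(Y^{\low,\ell})}$.

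First, fix $\ell\in\POSl$. Claims~\ref{clm:lowcolsumexcep} and~\ref{clm:lowcolsumbudget} together show that the column-sum parameter $\viol$ in Theorem~\ref{iterrndthm} applied to \eqref{berauxlp} is at most $5$, so for every $i\in\M^\ell_\low$ the rounded assignment $\sg$ satisfies $\eff[\ld_{i,\ell}]{\vecrv_i}=\sum_{j\mapsto i}\eff[\ld_{i,\ell}]{\trv{X_{ij}}{t_\ell}/(4t_\ell)}\leq 10+5=15$, using additivity of $\beta_\ld$ over independent summands. Since $\beload[i,\ell]<1/2$ on $\M^\ell_\low$ gives $\ld_{i,\ell}=\min(100m,\floor{1/\beload[i,\ell]})\geq 2$, Lemma~\ref{smallbeta}(b) yields $\E{\vecrv_i^{\geq 16}}\leq 18/\ld_{i,\ell}\leq 36\beload[i,\ell]+18/(100m)$. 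Summing over $i$ and using $\sum_i\beload[i,\ell]\leq\ell$ from \eqref{eq:truncineqgen} (applied to the fractional LP solution) gives $\sum_i\E{\vecrv_i^{\geq 16}}\leq 37\ell$; raising the threshold to $37$ can only decrease this sum, so $\sum_i\E{\vecrv_i^{\geq 37}}\leq 37\ell$, i.e., $\sum_i\E{{Y^{\low,\ell}_i}^{\geq 148 t_\ell}}\leq\ell\cdot 148 t_\ell$ after rescaling by $4t_\ell$. Lemma~\ref{lbproxupper} with $\qt=148 t_\ell$ then delivers $\E{\topl(Y^{\low,\ell})}\leq 296\,\ell t_\ell$.

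For $\ell\in\POS\sm\POSl$, I will reduce to the previous case. Let $\ell'$ be the largest index in $\POSl$ with $\ell'\leq\ell$; chaining the redundancy-based assumption $\beload[i,\ell''/2]=\beload[i,\ell'']$ for each $\ell''\in(\ell',\ell]\cap\POS$ (none of which lie in $\POSl$) yields $\beload[i,\ell]=\beload[i,\ell']$, hence $\M^\ell_\low=\M^{\ell'}_\low$ and $\ld_{i,\ell}=\ld_{i,\ell'}$; combined with $t_\ell=t_{\ell'}$, this forces $Y^{\low,\ell}=Y^{\low,\ell'}$. The previous paragraph has already shown $\sum_i\E{{Y^{\low,\ell'}_i}^{\geq 148 t_{\ell'}}}\leq 148\,\ell' t_{\ell'}\leq\ell\cdot 148 t_\ell$, so applying Lemma~\ref{lbproxupper} \emph{at the larger index $\ell$} with $\qt=148 t_\ell$ immediately gives $\E{\topl(Y^{\low,\ell})}\leq 296\,\ell t_\ell$.

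The main subtlety I anticipate is this last reduction: one cannot simply pass from $\topl[\ell'](Y^{\low,\ell'})$ to $\topl(Y^{\low,\ell'})$ by monotonicity without paying a multiplicative factor of $\ell/\ell'$ per step, which could compound across all of $\POS$ and destroy the $O(1)$-approximation. The right move, as above, is to keep the tail-sum bound in the form $\sum_i\E{{Y^{\low,\ell'}_i}^{\geq\qt}}\leq\ell\qt$ and feed it into Lemma~\ref{lbproxupper} at the index $\ell$ directly. A minor but essential bookkeeping point is that the additive slack of $5$ from Theorem~\ref{iterrndthm} still leaves $\eff[\ld_{i,\ell}]{\vecrv_i}\leq 15$, which is compatible with the hypothesis $\ld_{i,\ell}\geq 2$ needed to invoke the stronger second part of Lemma~\ref{smallbeta}.
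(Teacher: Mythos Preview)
Your proof is correct and follows essentially the same approach as the paper's. Both arguments reduce to $\ell\in\POSl$ via the identification $Y^{\low,\ell}=Y^{\low,\ell'}$ (using $t_\ell=t_{\ell'}$ and the assumed equality $\beload[i,\ell]=\beload[i,\ell']$), then for $\ell\in\POSl$ use the additive-$5$ violation from Theorem~\ref{iterrndthm} to get $\beta_{\ld_{i,\ell}}(\vecrv_i)\leq 15$, invoke Lemma~\ref{smallbeta} to obtain $\E{\vecrv_i^{\geq 16}}\leq 36\beload[i,\ell]+18/(100m)$, sum to $\leq 37\ell$, and finish with Lemma~\ref{lbproxupper} at threshold $37$.

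One minor remark: the ``subtlety'' you flag in the last paragraph is not actually a danger here. Since $\topl(x)\leq(\ell/\ell')\topl[\ell'](x)$ for $\ell'\leq\ell$, and $t_{\ell'}=t_\ell$, the bound $\E{\topl[\ell'](Y^{\low,\ell'})}\leq 296\,\ell' t_{\ell'}$ already yields $\E{\topl(Y^{\low,\ell})}\leq 296\,\ell t_\ell$ directly, with no compounding. Your tail-sum route is of course also valid (and is what the paper implicitly does).
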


\begin{proof}
As with Lemma~\ref{lem:boundedgenhi}, we only need to consider $\ell\in\POSl$.
So fix $\ell \in \POSl$. 
Let $\vecrv=Y^{\low,\ell}/4t_\ell$.
Mimicking the proof of Lemma~\ref{lem:boundedtopllow}, it suffices to show that 
$\sum_{i \in \M^\ell_\low} \E{\erv{\vecrv_i}{37}} \leq 37\ell$, as 
Lemma~\ref{lem:upperboundtopl} then implies that $\E{\Topl{\vecrv}}\leq 74\,\ell$, or
equivalently $\E{\Topl{Y^{\low,\ell}}} \leq 296 \, \ell t_\ell$.

By Theorem~\ref{thm:budgetedmatrounding} (b), for any $i \in \M^\ell_\low$, we have 
$\eff[\ld_{i,\ell}]{\vecrv_i}=\sum_\jtoi \beta_{\ld_{i,\ell}}\bigl(\trv{X_{ij}}{t_\ell}/4t_\ell\bigr)\leq 15$.
So by Lemma~\ref{lem:smallbetasmalltail}, we have
$\E{\erv{\vecrv_i}{16}}\leq 18/\ld_{i,\ell}\leq 
18\,\max(\frac{1}{100m},\frac{1}{\floor{1/\beload[i,\ell]}}) \leq 36\beload[i,\ell]+18/100m$.
This implies that $\sum_{i \in \M^\ell_\low} \E{\erv{\vecrv_i}{16}} \leq 37\ell$.
\end{proof}

\begin{lemma} \label{lem:boundedgen}
The assignment $\sg$ satisfies 
$\E{\Topl{\load_\sg}} \leq 406 \, \ell t_\ell$.
\end{lemma}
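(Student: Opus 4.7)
The plan is to apply the triangle inequality for the $\topl$-norm to the decomposition $\load_\sg = Y^{\low,\ell} + Y^{\hi,\ell} + Y^{\excep,\ell}$ that was set up just before the three preceding lemmas, and then simply sum the bounds from Lemmas~\ref{lem:boundedgenexcep}, \ref{lem:boundedgenhi}, and~\ref{lem:boundedgenlow}. The decomposition is valid coordinatewise because for every machine $i$, each truncated-job contribution $\trv{X_{ij}}{t_\ell}$ is counted in exactly one of $Y^{\low,\ell}_i$ or $Y^{\hi,\ell}_i$ (depending on whether $i \in \M^\ell_\low$ or $i \in \M^\ell_\hi$), and each exceptional contribution $\erv{X_{ij}}{t_\ell}$ is counted in $Y^{\excep,\ell}_i$; so the three vectors add to $\load_\sg$.

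The proof will then proceed in essentially two lines. First, since $\topl$ is a (monotone, symmetric) norm, the triangle inequality gives $\topl(\load_\sg) \leq \topl(Y^{\low,\ell}) + \topl(Y^{\hi,\ell}) + \topl(Y^{\excep,\ell})$ pointwise on the probability space. Taking expectations and invoking linearity of expectation yields $\E{\topl(\load_\sg)} \leq \E{\topl(Y^{\low,\ell})} + \E{\topl(Y^{\hi,\ell})} + \E{\topl(Y^{\excep,\ell})}$. Second, plugging in the bounds $296\,\ell t_\ell$, $104\,\ell t_\ell$, and $6\,\ell t_\ell$ established in the three previous lemmas gives a total of $406\,\ell t_\ell$, as desired.

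There is essentially no obstacle here since all the technical work has already been done in the three component lemmas; this statement is a pure aggregation step. The only mild subtlety to verify in writing it out is that the decomposition into $Y^{\low,\ell}$, $Y^{\hi,\ell}$, and $Y^{\excep,\ell}$ really does sum to $\load_\sg$ (which follows directly from the definitions), and that the triangle inequality applies to the random vectors in the obvious coordinatewise manner before taking expectations.
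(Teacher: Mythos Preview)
Your proposal is correct and matches the paper's intended argument: the paper states Lemma~\ref{lem:boundedgen} without proof, having already set up the decomposition $\load_\sg = Y^{\low,\ell} + Y^{\hi,\ell} + Y^{\excep,\ell}$ and proved the three component bounds, so the triangle inequality plus $296+104+6=406$ is exactly what is meant.
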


\subsubsection*{Finishing up the proof of Theorem~\ref{thm:loadbalber}}
For $\ell \in \POS$, let $t^*_\ell$ be the smallest number of the form $2^k$, where $k\in\Z$ (and could be negative) such that $t^*_\ell\geq 2 \, \E{\topl(\load_{\ssg})}/{\ell}$.
Given Claim~\ref{clm:LBfeasgen} and Lemma~\ref{lem:boundedgen}, we know that 
the solution $\sg$ computed for $\vec{t^*}$ will be an $O(1)$-approximate
solution. We argue below that we can identify a $\poly(m)$-size set $\T\sse\Rp^\POS$
containing $\vec{t^*}$, and $\vec{t^*}$ satisfies the assumptions made at the beginning of Section~\ref{sec:loadbalber}, namely $t^*_\ell/t^*_{2\ell}\in\{1,2\}$ for all $\ell$ such
that $\ell,2\ell\in\POS$. 
While we cannot necessarily identify $\vts$, we use Corollary~\ref{detestimate} to infer
that 
if $\LPv$ is feasible, then we can use $\vt$ to come up with a good estimate of the
expected $f$-norm of the solution computed for $\vt$. This will imply that the ``best''
vector $\vt\in\T$ yields an $O(1)$-approximate solution.

By Lemma~\ref{binsearch}, we have a bound on $\OPT$ and $t^*_1$:
$\frac{\UB}{m} \leq \E{\Topl[1]{\load_{\ssg}}} \leq \E{f(\load_{\ssg})} \leq \UB$.
Observe that $\E{\topl(\load_{\ssg})}/{\ell}$ does not increase as $\ell$ increases.
It follows that $\frac{2\,\UB}{m^2}\leq 2\,\E{\topl(\load_{\ssg})}/{\ell}\leq 2\,\UB$ for
all $\ell\in\POS$. 
Define
\begin{equation}
\T := \curly{ \vt \in \Rp^\POS:\  
\forall\ell\in\POS,\quad \frac{2 \, \UB}{m^2} \leq t_\ell < 4 \, \UB,\ \ 
t_\ell \text{ is a  power of $2$},\ \ 
\frac{t_\ell}{t_{2\ell}}\in \{1,2\}\text{ whenever $2\ell\in\POS$}}
\label{setofguesses}
\end{equation}

\begin{claim} \label{clm:needleinapolyhaystack}
The vector $\vts$ belongs to $\T$, and $|\T|=O(m\log m)$. 
\end{claim}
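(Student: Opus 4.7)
The plan is to verify the three defining properties of $\T$ for the vector $\vts$, and then count the elements of $\T$ by parameterizing them compactly.

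First, I will verify $\vts \in \T$. The bound $2\UB/m^2 \leq t^*_\ell < 4\UB$ follows directly from the observation (already made in the paragraph preceding the claim) that $2\UB/m^2 \leq 2\,\E{\topl(\load_{\ssg})}/\ell \leq 2\UB$ for all $\ell \in \POS$, combined with the fact that $t^*_\ell$ is the \emph{smallest} power of $2$ that is at least $2\,\E{\topl(\load_{\ssg})}/\ell$, so $t^*_\ell$ is sandwiched in $\bigl[2\,\E{\topl(\load_{\ssg})}/\ell,\, 4\,\E{\topl(\load_{\ssg})}/\ell\bigr)$. That $t^*_\ell$ is a power of $2$ is immediate from the definition.

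The one genuinely substantive check is that $t^*_\ell/t^*_{2\ell} \in \{1,2\}$ whenever $2\ell \in \POS$. Setting $a_\ell := 2\,\E{\topl(\load_{\ssg})}/\ell$, I will use the standard fact that $\topl(x) \leq \topl[2\ell](x) \leq 2\topl(x)$ for any $x \in \Rp^m$ (the right inequality uses $x^\down_{\ell+1}+\dots+x^\down_{2\ell} \leq \ell\cdot x^\down_\ell \leq \topl(x)$), which after taking expectations yields $a_\ell/2 \leq a_{2\ell} \leq a_\ell$. Since $t^*_\ell$ is the smallest power of $2$ at least $a_\ell$, we have $t^*_\ell/2 < a_\ell \leq t^*_\ell$, and similarly for $t^*_{2\ell}$. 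Combining these: $t^*_{2\ell} \geq a_{2\ell} \geq a_\ell/2 > t^*_\ell/4$ gives $t^*_{2\ell} \geq t^*_\ell/2$, and $t^*_{2\ell}/2 < a_{2\ell} \leq a_\ell \leq t^*_\ell$ gives $t^*_{2\ell} \leq t^*_\ell$, hence (both being powers of $2$) the ratio lies in $\{1,2\}$.

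For the bound $|\T| = O(m \log m)$, I will parameterize any $\vt \in \T$ by $(t_1, \text{drop-pattern})$. The first coordinate $t_1$ is a power of $2$ in $[2\UB/m^2,\,4\UB)$, giving at most $\log_2(2m^2) + 1 = O(\log m)$ choices. Given $t_1$, the rest of $\vt$ is determined by, for each $\ell \in \POS$ with $2\ell \in \POS$, whether $t_{2\ell} = t_\ell$ or $t_{2\ell} = t_\ell/2$. There are $|\POS|-1 = \floor{\log_2 m}$ such transitions, so at most $2^{\floor{\log_2 m}} \leq m$ drop-patterns. Multiplying gives $|\T| \leq O(m \log m)$. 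The main obstacle in the proof is the ratio calculation above; the counting argument is routine.
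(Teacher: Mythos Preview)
Your proof is correct and follows essentially the same approach as the paper's: both verify the range and power-of-$2$ properties directly, derive the ratio condition $t^*_\ell/t^*_{2\ell}\in\{1,2\}$ from the sandwich $\topl\leq\topl[2\ell]\leq 2\topl$ together with the minimality of $t^*_\ell$ as a power of $2$, and count $\T$ by specifying $t_1$ (with $O(\log m)$ choices) and the binary drop-pattern (with at most $m$ choices). Your transition count $2^{\floor{\log_2 m}}\leq m$ is in fact slightly sharper than the paper's $2^{|\POS|}$, but both yield the same $O(m\log m)$ bound.
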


\begin{proof}
Note that each vector $\vt\in\T$ is completely defined by specifying $t_1$, and the set of
``breakpoint'' indices $\ell\in\POS$ for which $\frac{t_\ell}{t_{2\ell}}=2$.
There are $O(\log m)$ choices for $t_1$, and at most $2^{|\POS|}\leq m$ choices for the
set of breakpoint indices; hence, $|\T|=O(m\log m)$.
To see that $\vts\in\T$, by definition, each $t^*_\ell$ is a power of $2$.
The bounds shown above on $2\,\E{\topl(\load_\ssg)}/\ell$
show that $t^*_\ell$ lies within the stated bounds. 
The only nontrivial condition to check is $t^*_\ell/t^*_{2\ell} \in \{1,2\}$ whenever
$2\ell \in \POS$. 
We have $\frac{2\ell \, t^*_{2\ell}}{2} \geq \E{\Topl[2\ell]{\load_{\ssg}}} \geq \E{\Topl{\load_{\ssg}}} > \frac{\ell t^*_\ell}{4}$, which implies $t^*_\ell < 4 \, t^*_{2\ell}$.
Next, by $\Topl[2\ell]{\cdot} \leq 2 \Topl{\cdot}$ we have $\frac{\ell t^*_{\ell}}{2} \geq \E{\Topl{\load_{\ssg}}} \geq \frac12 \E{\Topl[2\ell]{\load_{\ssg}}} > \frac{2\ell t^*_{2\ell}}{8}$, which implies $t^*_{\ell} > \frac12 \, t^*_{2\ell}$.
Since $t^*_\ell$s are powers of $2$, this implies $t^*_\ell/t^*_{2\ell} \in \{1,2\}$.
\end{proof}

We enumerate over all guess vectors $\vt \in \T$ and check if $\LPv$ is feasible.
For each $\vt \in \T$, and $\ell\in\POS$, define $\budg_\ell(\vt\,) := \ell t_\ell$. 
Let $\bvec(\vt\,)\in\Rp^m$ be the vector given by Lemma~\ref{bvecest} for 
the sequence $\paren{\budg_\ell(\vt\,) }_{\ell \in \POS}$.
Among all feasible $\vt$ vectors, let $\vtb$ be a vector that minimizes
$f\bigl(\bvec(\vtb\,)\bigr)$. Let $\overline{\sg}$ be the assignment computed by our
algorithm by rounding a feasible solution to $\LPv[\vtb]$. 

Then, we have $f\bigl(\bvec(\vtb\,)\bigr)\leq f\bigl(\bvec(\vts\,)\bigr)$. Since
$\budg_\ell(\vts\,)=\ell t^*_\ell\leq 4\E{\topl(\load_{\ssg})}$ for all $\ell\in\POS$,
Corollary~\ref{detestimate} (b) shows that 
$f\bigl(\bvec(\vts\,)\bigr)\leq 12\,\E{f(\load_{\ssg})}$; hence, we have
$f\bigl(\bvec(\vtb\,)\bigr)\leq 12\,\OPT$.

Our rounding algorithm ensures that 
$\E{\Topl{\load_{\overline{\sg}}}}\leq 406\,\budg_\ell(\vtb\,)$ for all $\ell \in \POS$
(Lemma~\ref{lem:boundedgen}). Hence, by Corollary~\ref{detestimate} (a), we have
$\E{f(\load_{\overline{\sg}})}\leq 2\cdot 28\cdot 406\cdot\bigl(f(\bvec(\vtb\,))\bigr)=O(1)\cdot\OPT$. 
\hfill \qed

\subsubsection{Rounding algorithm and analysis: general distributions} \label{sec:loadbalgen}

As in Section~\ref{sec:loadbalber}, we assume that $\vt \in \Rp^\POS$ is such that $\LPv$ is feasible,
$t_\ell$ is a power of $2$ (possibly smaller than $1$) for all $\ell \in \POS$, and
$t_\ell / t_{2\ell} \in \{1,2\}$ for all $\ell$ such that $2\ell \in \POS$. 
Let $(\bz,\beload[])$ be a feasible fractional solution to this LP.
We show how to round $\bz$ to obtain an assignment $\sg$ satisfying $\E{\Topl{\load_\sg}} = \bigo{\log m/\log \log m} \ell t_\ell$ for all $\ell \in \POS$.

Our strategy is simpler than the one used in Section~\ref{sec:loadbalber}. 
We set up an auxiliary LP which is a specialization of \eqref{berauxlp}: for each machine $i$ and index $\ell \in \POSl$ we let $\ld_{i,\ell} := 1$, or equivalently, we let $\M^\ell_\low = \emptyset$ and $\M^\ell_\hi = [m]$.
For each $\ell \in \POS$, the constraint~\eqref{eq:excepl} bounding the total contribution from exceptional jobs is retained as is.
Formally, for the truncated jobs at any index $\ell \in \POSl$, we are only bounding their expected contribution to the load vector.
The fractional assignment $\bz$ yields a feasible solution to this auxiliary LP.
We invoke Theorem~\ref{iterrndthm} with this simpler auxiliary LP to round $\bz$ and obtain an assignment $\sg$.

\paragraph{Analysis.}
We first show that the constraint matrix of the above auxiliary LP has $O(1)$-bounded column sums for the support of $\bz$.
Recall from Claim~\ref{clm:lowcolsumexcep} that for any job size distributions the budget constraints for the exceptional jobs have $\bigo{1}$-bounded column sums.
Claim~\ref{clm:lowcolsumexpct} shows $O(1)$ column sum for
constraints \eqref{eq:budgethi} in \eqref{berauxlp}. 
Note that there are no constraints of the type \eqref{eq:budgetlow} since $\M^\ell_\low = \emptyset$.

\begin{claim} \label{clm:lowcolsumexpct}
Let $X_{ij}$s be arbitrary random variables. 
Then, $\sum_{\ell \in \POSl} \E{\trv{X_{ij}}{t_\ell}/4 t_\ell} \leq 1$ for any $i,j$.
\end{claim}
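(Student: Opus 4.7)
The plan is to show the stronger inequality $\sum_{\ell \in \POSl} \E{\trv{X_{ij}}{t_\ell}/t_\ell} \leq 2$ via a pointwise argument on the realization of $X_{ij}$, and then divide by $4$. Fix $i,j$, and let $Z$ denote a (nonnegative) realization of $X_{ij}$. Since $\trv{Z}{t_\ell}/t_\ell = (Z/t_\ell)\bon_{Z < t_\ell}$, each summand is nonnegative, is bounded above by $Z/t_\ell$, and is zero whenever $t_\ell \leq Z$.

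The key structural observation I would use is the definition of $\POSl$: as $\ell$ ranges over $\POSl$ in increasing order, the thresholds $t_\ell$ decrease by exactly a factor of $2$ at each step. Thus if I enumerate $\POSl = \{\ell_1 < \ell_2 < \dots\}$, I have $t_{\ell_k}/t_{\ell_{k+1}} = 2$ for all $k$. Given the realization $Z$, let $k^*$ be the largest index $k$ (if any) for which $t_{\ell_k} > Z$; for $k > k^*$ the summand vanishes. For $k \leq k^*$, write $t_{\ell_k} = 2^{k^*-k} \, t_{\ell_{k^*}}$, so
\[
\sum_{\ell \in \POSl} \frac{\trv{Z}{t_\ell}}{t_\ell} \;=\; \sum_{k=1}^{k^*} \frac{Z}{t_{\ell_k}}
\;=\; \frac{Z}{t_{\ell_{k^*}}} \sum_{k=1}^{k^*} \frac{1}{2^{k^*-k}}
\;<\; 1 \cdot 2 \;=\; 2,
\]
using $Z < t_{\ell_{k^*}}$ for the last inequality. (If $k^*$ does not exist, the sum is $0$.)

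Taking expectations over $X_{ij}$ preserves this bound, yielding $\sum_{\ell \in \POSl} \E{\trv{X_{ij}}{t_\ell}/t_\ell} \leq 2$, and dividing by $4$ gives the claim with room to spare (in fact with bound $1/2$). There is no real obstacle here: the only ingredient beyond elementary manipulation is the geometric spacing of $\{t_\ell\}_{\ell \in \POSl}$ guaranteed by the definition of $\POSl$, together with the trivial truncation bound $\trv{Z}{t_\ell} \leq Z \cdot \bon_{Z < t_\ell}$. Note that, in contrast to Claim~\ref{clm:lowcolsumbudget}, we do not need $X_{ij}$ to be Bernoulli, because we are bounding the raw expectation rather than an effective size.
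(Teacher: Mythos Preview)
Your proof is correct and follows essentially the same approach as the paper's: both exploit that the $t_\ell$ values halve at each step over $\POSl$, fix a realized value (the paper conditions on each point $s$ in the support of $X_{ij}$, you work with a generic realization $Z$), bound the inner geometric sum by $2$, and take expectation to obtain the final bound $1/2$. The arguments are the same up to presentation.
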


\begin{proof}
Fix some machine $i$ and job $j$.
Recall that $\POSl = \{ \ell \in \POS : \ell = 1 \text{ or } t_{\ell} = \frac{t_{\ell/2}}{2} \}$.
So, $t_\ell$ drops by a factor $2$ as $\ell$ steps over the indices in $\POSl$.
For the sake of simplicity, suppose that $X_{ij}$ is a discrete random variable; the same argument can be extended to continuous distributions.
For $s \in \supp(X_{ij})$, let $q_s=\Pr[X_{ij} = s]$. Then,
\begin{equation*}
\sum_{\ell \in \POSl} \E{\trv{X_{ij}}{t_\ell}/4 t_\ell} = 
\sum_{s \in \supp(X_{ij})} \frac{q_s}{4}\cdot\biggl(\sum_{\ell \in \POSl:t_\ell>s} \frac{s}{t_\ell}\biggr) 
\leq \sum_{s \in \supp(X_{ij})} \frac{q_s}{4} \paren{
  1 + 1/2 + 1/4 + \dots } \leq \frac12 \ . 
\qedhere
\end{equation*}
\end{proof}

Claims~\ref{clm:lowcolsumexcep}~and~\ref{clm:lowcolsumexpct} imply that the column-sums across the budget constraints are bounded by $5$.

\begin{theorem} \label{roundinglogmloglogm}
Let $\sg$ denote the assignment obtained by invoking Theorem~\ref{iterrndthm} for rounding the solution $\bz$ with the parameter $\viol = 5$.
Then,
\begin{enumerate}[(i), topsep=0.5ex, itemsep=0.25ex, leftmargin=*]
	\item for any index $\ell \in \POS$, $\sum_{j} \E{\erv{X_{\sg(j),j}}{t_\ell}} \leq 6 \, \ell t_\ell$;
	\item for any machine $i$ and index $\ell \in \POS$, $\sum_\jtoi \E{\trv{X_{ij}}{t_\ell}} \leq (16\beload[i,\ell] + 44)t_\ell$.
	\item for any index $\ell\in \POS$, $\E{\Topl{\load_\sg}} = \bigo{\log m/\log \log m} \ell t_\ell$.
\end{enumerate}
\end{theorem}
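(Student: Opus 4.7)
The first two parts of the theorem follow directly from the properties of the rounded solution. Parts (i) and (ii) are obtained by applying Theorem~\ref{iterrndthm} to the auxiliary LP (the specialization of \eqref{berauxlp} with $\M^\ell_\low=\emptyset$) with $\viol=5$; the bounded column-sum condition is verified by Claims~\ref{clm:lowcolsumexcep} and~\ref{clm:lowcolsumexpct}, which hold for arbitrary distributions. After undoing normalizations, this yields (i) for $\ell\in\POSl\cap\POSe$ (the constraint \eqref{eq:excepl} with RHS $1$ becomes $\leq 6$) and (ii) for $\ell\in\POSl$ (the constraint \eqref{eq:budgethi} with RHS $4\beload[i,\ell]+6$ becomes $\leq 4\beload[i,\ell]+11$, giving $\sum_{j\mapsto i}\E{X_{ij}^{<t_\ell}}\leq(16\beload[i,\ell]+44)t_\ell$). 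For $\ell\in\POS\setminus\POSl$ and $\ell\in\POS\setminus(\POSl\cap\POSe)$, I would follow the reduction used in Lemmas~\ref{lem:boundedgenexcep} and~\ref{lem:boundedgenhi}, passing to the largest index in $\POSl$ (resp.\ $\POSl\cap\POSe$) that is at most $\ell$, and using that $t_\ell$ and $\beload[i,\ell]$ stay the same under this reduction by construction of \eqref{berauxlp}.

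For part (iii), I would decompose $\load_\sg=Y^{\excep,\ell}+Y^{\text{tr},\ell}$ where $Y^{\text{tr},\ell}_i:=\sum_{j\mapsto i}X_{ij}^{<t_\ell}$, and use the triangle inequality $\E{\topl(\load_\sg)}\leq\E{\topl(Y^{\excep,\ell})}+\E{\topl(Y^{\text{tr},\ell})}$. The exceptional term is immediate from (i): $\E{\topl(Y^{\excep,\ell})}\leq\E{\topl[m](Y^{\excep,\ell})}=\sum_j\E{X_{\sg(j),j}^{\geq t_\ell}}\leq 6\ell t_\ell$. For the truncated term, I partition the machines into $B:=\{i:\beload[i,\ell]>1/2\}$ and $A:=[m]\setminus B$, and split $Y^{\text{tr},\ell}=Y^A+Y^B$ with $Y^A_i:=Y^{\text{tr},\ell}_i$ if $i\in A$ else $0$ (symmetrically for $Y^B$). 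Since $\sum_i\beload[i,\ell]\leq\ell$, we have $|B|<2\ell$; combining this with (ii) gives $\E{\topl(Y^B)}\leq\sum_{i\in B}\E{Y^{\text{tr},\ell}_i}\leq\sum_{i\in B}(16\beload[i,\ell]+44)t_\ell\leq(16\ell+88\ell)t_\ell=O(\ell t_\ell)$.

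The main work, and the source of the $\log m/\log\log m$ factor, is bounding $\E{\topl(Y^A)}$. For each $i\in A$, the scaled load $Y^{\text{tr},\ell}_i/t_\ell$ is a sum of independent $[0,1]$-bounded random variables whose mean $\mu_i$ is at most $16(1/2)+44=52$ by (ii). Setting $K:=\log m/\log\log m$ and $\qt:=CKt_\ell$ for a sufficiently large constant $C$, the Chernoff bound (Lemma~\ref{chernoff}) gives $\Pr[Y^{\text{tr},\ell}_i\geq st_\ell]\leq(52e/s)^s$ for all $s\geq 52e^2$. For $s\geq CK$, a direct calculation yields $s\log(s/52e)\geq C\log m\,(1-o(1))$, so the tail is at most $m^{-C(1-o(1))}$; integrating gives $\E{(Y^{\text{tr},\ell}_i)^{\geq\qt}}=O(CKt_\ell\cdot m^{-C(1-o(1))})$. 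Summing over the at-most-$m$ machines in $A$ and choosing $C$ to be a sufficiently large absolute constant, I obtain $\sum_{i\in A}\E{(Y^{\text{tr},\ell}_i)^{\geq\qt}}\leq\ell\qt$, so Lemma~\ref{lbproxupper} applied to $Y^A$ yields $\E{\topl(Y^A)}\leq 2\ell\qt=O(K\ell t_\ell)$. Combining the three contributions gives $\E{\topl(\load_\sg)}=O(\log m/\log\log m)\cdot\ell t_\ell$.

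The main obstacle, which is exactly what forces the $\Theta(\log m/\log\log m)$ approximation ratio, is that the auxiliary LP used here only captures the means of the truncated loads on each machine, not their multi-scale $\beta_\ld$-effective-size structure (as in the Bernoulli analysis). Consequently the only tail bound available is the mean-based Chernoff estimate, and the deviation threshold $\qt$ needed to overpower a union bound over $m$ machines must grow as $\Theta(\log m/\log\log m)$ rather than $\Theta(1)$.
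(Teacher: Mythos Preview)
Your proof is correct. Parts (i) and (ii) match the paper's argument (one minor imprecision: for $\ell\in\POSl\setminus\POSe$ the reduction in Lemma~\ref{lem:boundedgenexcep} passes to the \emph{smallest} index in $\POSe$ that is \emph{at least} $\ell$, not the largest below, using $\ell' t_{\ell'}=\ell t_\ell$; but you correctly defer to that lemma).

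For part (iii), your route differs from the paper's in organization. The paper does not partition machines by $\beload[i,\ell]$; instead it observes directly that whenever every $Y^{\bd,\ell}_i\leq\alpha(16\beload[i,\ell]+44)t_\ell$ one has $\topl(Y^{\bd,\ell})\leq 60\alpha\ell t_\ell$ (any $\ell$-subset sum is at most $\alpha(16\sum_i\beload[i,\ell]+44\ell)t_\ell$), and then bounds $\E{\topl(Y^{\bd,\ell})}$ by integrating the tail probability $q(\alpha)=\Pr\bigl[\exists i:\,Y^{\bd,\ell}_i>\alpha(16\beload[i,\ell]+44)t_\ell\bigr]$ over $\alpha$, applying Chernoff with the machine-dependent mean and showing $\int_0^\infty q(\alpha)\,d\alpha=O(\log m/\log\log m)$. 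Your $A/B$ split achieves the same effect differently: the $B$ piece isolates the machines whose mean bound is large (handled combinatorially via $|B|<2\ell$), and on $A$ you can work with a uniform mean bound and invoke Lemma~\ref{lbproxupper} at a single fixed threshold $\qt=\Theta(Kt_\ell)$. Both arguments rest on the same Chernoff-plus-union-bound core and yield the same $O(\log m/\log\log m)$ factor; your version is a bit more modular (it cleanly reuses Lemma~\ref{lbproxupper}), while the paper's is slightly more streamlined in that it avoids the case split.
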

\begin{proof}
By Theorem~\ref{iterrndthm} (b), part (i) holds for any $\ell \in \POSl \cap \POSe$, and
repeating the argument from Lemma~\ref{lem:boundedgenexcep} implies the bound for
remaining indices in $\POS$. 
Again, Theorem~\ref{iterrndthm} (b) yields part (ii) for any index $\ell \in \POSl$; the
bound also holds for remaining $\ell \in \POS$ since $t_\ell = t_{\ell/2}$. 

We prove part (iii) by separately bounding the $\topl$ norm of the load vectors induced by
the truncated jobs and the exceptional jobs for any $\ell \in \POS$. 
Let $Y^{\bd,\ell}$ and $Y^{\excep,\ell}$ be defined as follows: for each $i$, $Y^{\bd,\ell}_i := \sum_\jtoi \trv{X_{ij}}{t_\ell}$; and $Y^{\excep,\ell}_i := \sum_\jtoi \erv{X_{ij}}{t_\ell}$. 
By construction, $\load_\sg = Y^{\bd,\ell} + Y^{\excep,\ell}$.
As before, the treatment for the load induced by exceptional jobs is easy to handle. 
The first conclusion implies a bound on the $\topl[m]$ norm of $Y^{\excep,\ell}$: $\E{\Topl[m]{Y^{\excep,\ell}}} = \sum_j \E{\erv{X_{\sg(j),j}}{t_\ell}} \leq 6 \ell t_\ell$.
So, the same bound extends to the (possibly smaller) $\topl$ norm of $Y^{\excep,\ell}$.

Since each $Y^{\bd,\ell}_i$ is a sum of independent $[0,t_\ell)$-bounded r.v.s, we can use standard Chernoff-style concentration inequalities to bound $\E{\Topl{Y^{\bd,\ell}}}$ by $\bigo{\log m/\log \log m} \ell t_\ell$; this is sufficient since triangle inequality implies the bound given in (iii).
To this end, fix $\ell \in \POS$.
The following fact will be useful.
Let $\scal > 0$ be a scalar and $y \in \Rp^m$ be some vector.
Suppose that for each $i \in [m]$, $y_i \leq \scal (16 \beload[i,\ell] + 44)t_\ell$ holds.
Then, 
\[
\Topl{y} = \max_{S \subseteq [m], |S| = \ell} \sum_{i \in S} y_i \leq \scal \paren{16 t_\ell \sum_i \beload[i,\ell] + 44 \ell t_\ell} \leq 60 \scal \ell t_\ell \ ,
\] 
where $\sum_i \beload[i,\ell] \leq \ell$ follows from the feasibility of $(\bz,\beload[])$ to $\LPv$.

For any $\scal \in \Rp$, let $q(\scal)$ denote the probability of the event:
$\bigl\{\exists i\ \text{s.t.}\ Y^{\bd,\ell}_i > \alpha (16 \beload[i,\ell] + 44)t_\ell\bigr\}$. 
By the above, we have that $\pr{\Topl{Y^{\bd,\ell}}>60\scal\ell t_\ell}$ is at most $q(\scal)$.
Therefore
\begin{equation*}
\begin{split}
\E{\topl(Y^{\bd,\ell})} & = \int_0^\infty \pr{\topl(Y^{\bd,\ell})>\al}d\al \\
& = \bigl(60\ell t_\ell\bigr)\cdot\int_0^\infty\pr{\topl(Y^{\bd,\ell})>60\scal\ell t_\ell}d\scal
\leq \paren{\int_{0}^{\infty} q(\scal) d \scal} \bigo{\ell t_\ell}.
\end{split}
\end{equation*}
The second equality follows by the change of variable, $\al=60\scal\ell t_\ell$. 
We finish the proof by arguing that $\int_0^\infty q(\scal)d\scal=\bigo{\log m/\log \log m}$.

By Chernoff bounds (see Lemma~\ref{chernoff}), for any machine $i$ and any $\scal\geq e^2$ we have:
\[
\pr{ Y^{\bd,\ell}_i > \scal \paren{16 \, \beload[i,\ell] + 44} t_\ell \, } \leq 
\paren{\frac{e^{\scal-1}}{\scal^\scal}}^{16 \, \beload[i,\ell] + 44} 
\leq e^{-\frac{1}{2}\scal \ln \scal(16 \, \beload[i,\ell] + 44)}
\leq e^{-22\,\scal\ln\scal}\,.
\]
For $\scal=\al\ln m/\ln\ln m$, where $\al\geq 1$, we therefore obtain that 
$\Pr\bigl[Y^{\bd,\ell}_i>\scal(16\,\beload[i,\ell]+44)\bigr]$ is at most $e^{-11\al\ln m}$
(since $\scal\ln\scal\geq\al\ln m/2$).
By the union bound, it follows that $q(\al\ln m/\ln\ln m)\leq e^{-10\al\ln m}$ for all
$\al\geq 1$. 
Thus,
\begin{equation*}
\begin{split}
\int_0^{\infty} q(\scal) d \scal & 
\leq \frac{\ln m}{\ln \ln m}
+\int_{\frac{\ln m}{\ln\ln m}}^{\infty} q(\scal) d \scal  
= \frac{\ln m}{\ln \ln m}\biggl[1+\int_1^\infty q(\al\ln m/\ln\ln m)d\al\biggr] \\
& \leq \frac{\ln m}{\ln \ln m}\biggl[1+\int_1^\infty e^{-10\al\ln m}d\al\biggr]  
= \frac{\ln m}{\ln \ln m}+\frac{m^{-10}}{10\ln \ln m}
= \bigo{\log m/\log \log m} \ . \qedhere
\end{split}
\end{equation*}
\end{proof}

\begin{proof}[{\bf Finishing up the proof of Theorem~\ref{thm:loadbalgen}}]
We finish up the proof by finding a suitable $\vt$ vector, as in the proof of
Theorem~\ref{thm:loadbalber}, and returning the solution found for this vector. 
As before, for each $\vt \in \T$ (see \eqref{setofguesses} for the definition of $\T$),
and $\ell \in \POS$, define $\budg_\ell(\vt\,) := \ell t_\ell$. 
Let $\bvec(\vt\,) \in \Rp^m$ be the vector given by Lemma~\ref{bvecest} for 
the sequence $\bigl(\budg_\ell(\vt\,)\bigr)_{\ell \in \POS}$.
Among all feasible $\vt$ vectors, let $\vtb$ be a vector that minimizes
$f\bigl(\bvec(\vt\,)\bigr)$; recall that 
$f\bigl(\bvec(\vtb\,)\bigr) = \bigo{\OPT}$. 
Let $\overline{\sg}$ be the assignment computed by our algorithm by rounding a feasible
solution to $\LPv[\vtb]$. 
By Theorem~\ref{roundinglogmloglogm}, for each $\ell \in \POS$, $\overline{\sg}$ satisfies
$\E{\Topl{\load_{\overline{\sg}}}} = \bigo{\log m/\log \log m} \budg_\ell(\vtb\,)$. 
Hence, by Corollary~\ref{detestimate} (a), we have 
$\E{f(\load_{\overline{\sg}})} = \bigo{\frac{\log m}{\log \log m}}\cdot f\paren{\bvec(\vtb\,)} 
= \bigo{\frac{\log m}{\log \log m}}\cdot\OPT$. 
\end{proof}

\section{Spanning Trees} \label{sec:tree}

In this section, we apply our framework to devise an approximation algorithm for \emph{stochastic minimum norm spanning tree}.
We are given an undirected graph $G = (V,E)$ with stochastic edge-weights and we are interested in connecting the vertices of this graph to each other.
For an edge $e \in E$, we have a nonnegative r.v. $X_e$ that denotes its weight.
Edge weights are independent.
A feasible solution is a spanning tree $T \subseteq E$ of $G$.
This induces a random weight vector $\treerv^T = (X_e)_{e \in T}$; 
note that $\treerv^T$ follows a product distribution on $\Rp^{\n}$ where $n := |V|$.
The goal is to find a spanning tree $T$ that minimizes $\E{f(\treerv^T)}$ for a given
monotone symmetric norm $f$. 
(As an aside, we note that the deterministic version of this problem, wherein each edge
$e$ has a fixed weight $w_e\geq 0$, can be solved optimally: an MST $T'$ 
simultaneously minimizes $\sum_{e \in T} \paren{w_e-\scal}^+$ among all spanning trees $T$
of $G$, for all $\scal\geq 0$.
Since $\Topl{x} = \min_{\scal \in \Rp} \curly{\ell \scal + \sum_i \paren{x_i-\scal}^+}$,
it follows that $T'$ minimizes every $\topl$-norm, and so by Theorem~\ref{monnormthm}, it
is simultaneously optimal for all monotone, symmetric norms.) 

\begin{theorem} \label{thm:tree}
There is an $\bigo{1}$-approximation algorithm for stochastic $f$-norm spanning tree with arbitrary edge-weight distributions.
\end{theorem}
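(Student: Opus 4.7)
I would instantiate the $\problth$-based framework developed in Sections~\ref{proxy} and~\ref{expnorm}. Let $m := n-1$ be the size of the cost vector $\treerv^T = (X_e)_{e\in T}$ of a spanning tree $T$, and let $T^*$ be an optimal tree. Theorem~\ref{apxstatsthm} tells me that an $O(1)$-approximation reduces to finding a tree $T$ and a non-increasing nonnegative sequence $(\bproblth)_{\ell\in\POS_m}$ such that (i) $\problth[\beta\ell](\treerv^T) \leq \bproblth$ for every $\ell\in\POS_m$ with $\beta=O(1)$, and (ii) $\sum_{e\in T}\E{(X_e-\bproblth[1])^+} + f(\vtp) = O(\OPT)$, where $\vtp$ is the expansion of $(\bproblth)_\ell$ to an $m$-vector. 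The plan is to enumerate over a polynomial-size family $\T$ of candidate sequences, solve an LP for each, round via iterative rounding, and pick the best candidate using Corollary~\ref{detestimate}.

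For each guess $(\bproblth)_\ell$, I would solve
\begin{align*}
\min \quad & \sum_{e\in E}\E{(X_e-\bproblth[1])^+}\, z_e \\
\text{s.t.} \quad & \sum_{e\in E}\Pr[X_e>\bproblth]\, z_e \leq \ell \qquad \forall\, \ell\in\POS_m, \\
& z \in \Q,
\end{align*}
where $\Q$ is the base polytope of the graphic matroid of $G$. For every guess with $\bproblth \geq \problth(\treerv^{T^*})$, the indicator $z = \chi^{T^*}$ is feasible, since $\sum_{e\in T^*}\Pr[X_e>\bproblth] = \E{N^{>\bproblth}(\treerv^{T^*})} < \ell$. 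When additionally $\bproblth \leq 2\problth(\treerv^{T^*})$ for every $\ell\in\POS_m$, Theorem~\ref{apxstatsthm}(b) with $\kappa=0$ bounds the LP objective plus $f(\vtp)$ by $32\,\OPT$, so each term is $O(\OPT)$.

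The crucial ingredient is the rounding. Normalizing constraint $\ell$ by $1/\ell$ makes the right-hand side equal to $1$ and the $(\ell,e)$-entry of the constraint matrix equal to $\Pr[X_e>\bproblth]/\ell \leq 1/\ell$. Hence the column-sum parameter $\viol$ in Theorem~\ref{iterrndthm} is at most $\sum_{\ell\in\POS_m}1/\ell \leq 2$, so the theorem yields a spanning tree $T$ with objective at most the LP value and $\sum_{e\in T}\Pr[X_e>\bproblth] \leq 3\ell$ for every $\ell\in\POS_m$. This is exactly $\E{N^{>\bproblth}(\treerv^T)} \leq 3\ell$, which gives $\problth[4\ell](\treerv^T) \leq \bproblth$ (trivially when $4\ell>m$, using $\problth[k]=0$ for $k>m$). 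Applying Theorem~\ref{apxstatsthm}(a) with $\alpha=1$, $\beta=4$ yields $\E{f(\treerv^T)} = O\bigl(\sum_{e\in T}\E{(X_e-\bproblth[1])^+} + f(\vtp)\bigr)$, and combining with the LP-objective bound for an appropriately chosen guess gives $\E{f(\treerv^T)} = O(\OPT)$.

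For the enumeration of $\T$, I would use an analogue of Lemma~\ref{binsearch} (exploiting that an MST of a deterministic instance simultaneously minimizes every monotone, symmetric norm, so $\OPT$ can be sandwiched by quantities readable from the input distributions) to restrict each $\bproblth$ to powers of $2$ within a $\poly(n)$-length range; Claim~\ref{monseq} then bounds $|\T|$ by a polynomial. Among feasible guesses I would return the tree minimizing $f\bigl(\bvec(\bproblth)\bigr)$, where $\bvec(\bproblth)$ is the vector produced by Lemma~\ref{bvecest} from $(\bproblth)_\ell$, and Corollary~\ref{detestimate}(a) certifies that the returned tree is $O(1)$-approximate. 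The entire argument is matroid-agnostic---$\Q$ can be replaced by any matroid-base polytope, since Theorem~\ref{iterrndthm} handles arbitrary matroids---so the same approach gives an $O(1)$-approximation for stochastic minimum-norm matroid basis. The main obstacle I anticipate is the discretization in the enumeration step: verifying that a polynomially-sized discrete candidate set is rich enough to contain a sequence within factor $2$ of $\bigl(\problth(\treerv^{T^*})\bigr)_{\ell\in\POS_m}$; I expect this to follow by noting that $\bproblth$ enters the LP only through the probabilities $\Pr[X_e>\bproblth]$, so discretization along the relevant quantiles of the $X_e$ distributions should suffice.
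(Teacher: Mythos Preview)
Your core approach---the LP with constraints $\sum_e \Pr[X_e>\bproblth]\,z_e\le\ell$, rounding via Theorem~\ref{iterrndthm} with $\viol=\sum_{\ell\in\POS}1/\ell<2$, and invoking Theorem~\ref{apxstatsthm}(a)---is exactly what the paper does. Two points need correction.

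\textbf{Selection criterion.} Minimizing $f\bigl(\bvec(\bproblth)\bigr)$ is not the right proxy here. Your own analysis via Theorem~\ref{apxstatsthm}(a) bounds $\E{f(\treerv^T)}$ by $O\bigl(\LPOPTv+f(\vtp)\bigr)$, and this has \emph{two} terms: the $\tau$-based norm $f(\vtp)$ and the exceptional mass $\LPOPTv=\sum_e\E{(X_e-\bproblth[1])^+}z_e$. The quantity $f(\bvec(\bproblth))$ (whatever sequence you feed into Lemma~\ref{bvecest}) captures only the first; a guess with small $f(\vtp)$ but huge LP objective would be selected yet yield a bad tree. Corollary~\ref{detestimate} is the tool for the load-balancing setting, where the guarantee is phrased as $\E{\topl(Y)}\le\alpha\budg_\ell$; here the guarantee is on $\tau$-statistics plus an exceptional term, so the paper simply selects the guess minimizing $\val(\vt):=\LPOPTv+f(\vtp)$ and bounds $\val(\vts)=O(\OPT)$ via Theorem~\ref{apxstatsthm}(b).

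\textbf{Discretization.} Requiring $\bproblth\le 2\problth(\treerv^*)$ with $\kappa=0$ fails whenever some $\problth(\treerv^*)=0$, since no positive power of~$2$ lies in $[0,0]$. The paper's fix is to take $\bproblth$ as the largest power of~$2$ at most $2\problth(\treerv^*)+\delta$ with $\delta:=\UB/n^2$ (where $\UB$ is the MST weight under expected edge costs), so that $\problth(\treerv^*)\le\bproblth\le 2\problth(\treerv^*)+\delta$ always holds; Theorem~\ref{apxstatsthm}(b) with $\kappa=\delta$ then gives $\val(\vts)\le 32\,\OPT+(n-1)\delta=O(\OPT)$. Your quantile-based alternative might also work, but would require additional argument.
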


\paragraph{Overview.}
We drop the superscript $T$ in $\treerv^T$ when the tree $T$ is clear from the context.
We use $\treerv_e$ to denote the coordinate in $\treerv$ corresponding to edge $e$;
we will ensure that $\treerv_e$ is used only when $e \in T$.
Let $T^*$ denote an optimal solution. 
Let $\treerv^* := \treerv^{T^*}$ and $\OPT := \E{f(\treerv^*)}$ denote the optimal value.

The cost vector $\treerv^T$ is inherently less complex than the load vector in load
balancing, in that each coordinate $\treerv^T_e$ is an ``atomic'' random variable whose
distribution we can directly access. Thus, our approach is guided by
Lemma~\ref{tauexpr} and Theorem~\ref{apxstatsthm}, which show that 
to obtain an approximation guarantee for stochastic $f$-norm spanning tree, it suffices to
find a spanning tree $T$ such that the $\taul$ statistics of $\treerv^T$ are
``comparable'' to those of $\treerv^*$. 

We write an LP relaxation \ref{treelp} that works with a non-increasing vector $\vt \in
\Rp^{\POS_{\n}}$ that is intended to be a guess of 
$\bigl(\problth(\treerv^*)\bigr)_{\ell\in\POS}$.
By Lemma~\ref{tauexpr}, 
$\E{f(\treerv)} = \tht{\sum_e\E{\paren{\treerv_e-\taul[1](Y)}^+} + f(\vec{\taul[]}(Y))}$.  
So our LP seeks a (fractional) spanning tree whose cost vector $\treerv$ 
minimizes $\sum_e\E{(X_e-t_1)}^+$ subject to the constraint that
$\taul(\treerv) \leq t_\ell$ for all $\ell \in \POS$.
We round an optimal solution to this LP using Theorem~\ref{iterrndthm}, and argue that
this solution has expected cost at most $O\bigl(\LPOPTv + f(\vtp)\bigr)$,
where $\LPOPTv$ is the optimal value of \ref{treelp} and $\vtp\in\Rp^{n-1}$ is the expansion of $\vt$. 
(Recall that $\vtp$ is defined by setting $\vtp_i := \vt_{2^{\floor{\log_2 i}}}$ for all
$i \in [\n]$.)  
Finally, we show that we can find a suitable vector $\vt$ for which 
$\bigl(\LPOPTv+f(\vtp)\bigr)$ is $O(\OPT)$.

\subsection{LP relaxation} \label{sec:LPtree}
Let $\vt \in \Rp^{\POS_{\n}}$ be a non-increasing vector.
We have $z_e$ variables indicating if an edge $e$ belongs to the spanning tree. So $z$
belongs to the polytope 
$$
\Q^\mst :=\ \bigl\{z \in \Rp^E: \quad z(E) = |V| - 1, \qquad 
z(A) \leq n -\comp(A) \quad \forall A \subseteq E\bigr\},
$$
where $\comp(A)$ denotes the number of connected
components of $(V,A)$, and $z(A)$ denotes $\sum_{e \in A} z_e$.  
It is well-known that the above polytope is the spanning-tree polytope of $G$, i.e., it is
the convex hull of indicator vectors of spanning trees of $G$.
We use the matroid base-polytope characterization for two reasons: (i) our arguments can
be generalized verbatim to the more general setting with an arbitrary
matroid; 
and (ii) we can directly invoke Theorem~\ref{iterrndthm} to round fractional LP solutions
to integral solutions. 
The constraints of our LP encode that $z\in\Q^\mst$, and that the $\taul$-statistics of the
cost vector induced by $z$ are bounded by $\vt$.
The objective function accounts for the contribution from the coordinates larger than
$\taul[1]$ to the expected $f$-norm. 
\begin{alignat}{2}
\min & \quad & \sum_{e \in E} \E{\paren{X_e - t_1}^+} z_e &
\tag*{\LPTv} \label{treelp} \\
\text{s.t.} && \sum_{e \in E} \pr{X_e > t_\ell} z_e & \leq \ell \qquad \forall \ell \in \POS \label{eq:taul} \\
&& z \in \Q^\mst. & \label{eq:fractree}
\end{alignat}
The following claim is immediate. 

\begin{claim} \label{clm:LPfeastree}
$\LPTv$ is feasible whenever $t_\ell \geq \Taul{\treerv^*}$ for all $\ell \in \POS$.
\end{claim}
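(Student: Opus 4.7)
The plan is to exhibit an explicit feasible point, namely the indicator vector $z^* := \chi^{T^*}$ of the optimal spanning tree $T^*$. Membership in $\Q^\mst$ is immediate since $z^*$ is the indicator of a spanning tree of $G$, so the work lies in verifying the constraints \eqref{eq:taul}.

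For \eqref{eq:taul}, I would unpack the quantities. The random cost vector $\treerv^*$ has one coordinate per edge of $T^*$, with $\treerv^*_e = X_e$ for $e \in T^*$; hence $\E{N^{>\qt}(\treerv^*)} = \sum_{e \in T^*} \Pr[X_e > \qt]$. Substituting $z^*$ into the left-hand side of \eqref{eq:taul} for a fixed $\ell \in \POS$ gives exactly $\E{N^{>t_\ell}(\treerv^*)}$. The task then reduces to showing that $t_\ell \geq \Taul{\treerv^*}$ implies $\E{N^{>t_\ell}(\treerv^*)} \leq \ell$.

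This follows from two facts recorded in Section~\ref{proxy}: (i) $\E{N^{>\qt}(\treerv^*)}$ is a non-increasing function of $\qt$, so it suffices to verify the inequality at $\qt = \Taul{\treerv^*}$; and (ii) the infimum in the definition $\Taul{\treerv^*} = \inf\{\qt : \E{N^{>\qt}(\treerv^*)} < \ell\}$ is attained, which combined with the right-continuity argument used in Section~\ref{proxy} yields $\E{N^{>\Taul{\treerv^*}}(\treerv^*)} \leq \ell$ (by picking a decreasing sequence $\qt_n \downarrow \Taul{\treerv^*}$ with $\E{N^{>\qt_n}(\treerv^*)} < \ell$ and passing to the limit via monotone convergence).

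There is no real obstacle here---the claim is essentially a sanity check that our chosen LP relaxation is valid, and the whole proof is a two-line verification once the definitions are unwound. The only thing worth being careful about is distinguishing strict versus non-strict inequality at the infimum $\Taul{\treerv^*}$, but since we only need $\leq \ell$ on the right-hand side of \eqref{eq:taul}, the weaker non-strict bound is enough.
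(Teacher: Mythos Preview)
Your proposal is correct and follows the same approach as the paper: exhibit the indicator vector $z^* = \chi^{T^*}$ of the optimal tree and verify constraint~\eqref{eq:taul} using the definition of $\Taul{\treerv^*}$. The paper's own proof is a terse two-liner; your version is more careful about the boundary case $t_\ell = \Taul{\treerv^*}$, and your limiting argument via right-continuity correctly yields the needed non-strict bound $\E{N^{>t_\ell}(\treerv^*)} \leq \ell$ (note that the infimum need not literally be a minimum, but your sequence argument does not require that).
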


\begin{proof}
Consider the solution $\sz$ induced by the optimal solution $T^*$ i.e., $\sz$ is the indicator vector of $T^*$.
For each $\ell \in \POS$ constraint~\eqref{eq:taul} is satisfied since $t_\ell \geq \taul$.
\end{proof}

\subsection{Rounding algorithm and analysis} \label{sec:roundingtree}
Assume that $\vt \in \Rp^\POS$ is a non-increasing vector such that $\LPTv$ is feasible.
Let $\bz$ be an optimal fractional spanning tree solution to $\LPTv$, and let $\LPOPTv$ be
the optimal value of $\LPTv$.
Define $\val(\vt) := \LPOPTv + f(\vtp)$, where $\vtp \in\Rp^{\n}$ denotes the expansion of
$\vt$. 
We show how to round $\bz$ to obtain a spanning tree $T$ satisfying 
$\E{f(\treerv)}=\bigo{ \val(\vt) }$. 
The expression for $\val(\vt)$ is motivated by Lemma~\ref{tauexpr}, which shows that 
$\E{f(\treerv)} = \tht{ \sum_e \E{\paren{\treerv_e-\taul[1]}^+} + f(\vec{\taul[]})}$.

Note that scaling constraint \eqref{eq:taul} by $\ell$ yields $O(1)$-bounded column sums
in the resulting constraint matrix, and an additive $O(1)$ violation of the scaled constraint
translates to an additive $O(\ell)$ violation of constraint~\eqref{eq:taul} for index $\ell$.

We invoke Theorem~\ref{iterrndthm} with the parameter $\viol := \sum_{\ell \in \POS}
\frac{1}{\ell}<2$ to round $\bz$ to an integral spanning tree $T$. 

\begin{theorem} \label{thm:boundedtree}
The weight vector $\treerv=\treerv^T$ satisfies:
\begin{enumerate}[(i), topsep=0.25ex, itemsep=0.25ex, leftmargin=*]
\item $\sum_{e \in T} \E{\paren{\treerv_e - t_1}^+} \leq \LPOPTv$; 
\item for each $\ell \in \POS$, $\Taul[3\ell]{\treerv} \leq t_\ell$; and
\item $\E{f(\treerv)}\leq 126\cdot\val(\vt)$.
\end{enumerate}
\end{theorem}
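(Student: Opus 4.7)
My plan is to prove (i) and (ii) as direct consequences of the iterative-rounding guarantees of Theorem~\ref{iterrndthm}, and then derive (iii) by invoking Theorem~\ref{apxstatsthm}(a) with the parameters supplied by (i) and (ii).

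Part (i) is immediate: Theorem~\ref{iterrndthm}(a) guarantees that the basis $T$ returned by the rounding has cost, with respect to the LP objective $c_e := \E{(X_e-t_1)^+}$ of \ref{treelp}, at most $c^T\bz = \LPOPTv$.

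For part (ii), I would rescale constraint~\eqref{eq:taul} for each index $\ell \in \POS$ by $1/\ell$, so that it reads $\sum_e (\pr{X_e > t_\ell}/\ell)\,z_e \leq 1$. For any edge $e$ in $\supp(\bz)$, the column sum across these rescaled constraints is at most $\sum_{\ell \in \POS_{\n}} 1/\ell < 2$, because $\POS_{\n} \subseteq \{1,2,4,\ldots\}$. Hence one may take $\viol$ strictly less than $2$ in Theorem~\ref{iterrndthm}, and the rounded $T$ satisfies each rescaled constraint with additive slack strictly less than $2$; unscaling gives $\E{N^{>t_\ell}(\treerv)} = \sum_{e\in T}\pr{\treerv_e > t_\ell} < 3\ell$ for every $\ell \in \POS_{\n}$. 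By the definition of $\problth[3\ell]$, this yields $\problth[3\ell](\treerv)\leq t_\ell$.

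For part (iii), I would apply Theorem~\ref{apxstatsthm}(a) to the cost vector $\treerv$ with the non-increasing sequence $\{\bproblth := t_\ell\}_{\ell\in\POS_{\n}}$, whose expansion is precisely $\vtp$. Part (ii) verifies the hypothesis $\problth[\beta\ell](\treerv) \leq \al\bproblth$ with $\al=1$ and $\beta=3$, and the theorem then yields
\[
\E{f(\treerv)} \;\leq\; 2\cdot\clbp\cdot(1+2)\cdot 3\cdot\Bigl(\sum_{e\in T}\E{(\treerv_e - t_1)^+} + f(\vtp)\Bigr) \;\leq\; 126\cdot\bigl(\LPOPTv + f(\vtp)\bigr) \;=\; 126\cdot\val(\vt),
\]
using $\clbp=7$ and the bound from part (i).

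No step poses a real obstacle; the one subtlety is the strict-versus-weak inequality in the definition of $\problth$, which is why in part (ii) I track that $\viol$ is strictly less than $2$ rather than just at most $2$.
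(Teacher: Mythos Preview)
Your proposal is correct and follows essentially the same approach as the paper: parts (i) and (ii) are obtained from Theorem~\ref{iterrndthm}(a) and (b) respectively (after the same $1/\ell$ rescaling of~\eqref{eq:taul} to get column sums $\viol<2$), and part (iii) is derived from Theorem~\ref{apxstatsthm}(a) with $\al=1$, $\beta=3$, yielding the constant $2\cdot 7\cdot 3\cdot 3=126$. Your explicit note about the strict inequality needed for $\problth[3\ell](\treerv)\le t_\ell$ is a nice touch that the paper leaves implicit.
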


\begin{proof}
Parts (i) and (ii) follow from parts (a) and (b) of Theorem~\ref{iterrndthm} respectively.
Part (i) is immediate from part (a).
For part (ii), since we invoke Theorem~\ref{iterrndthm}
with $\viol < 2$, by part (b), we have that \linebreak
$\sum_{e \in T} \pr{\treerv_e > t_\ell} < 3 \ell$, and hence,
$\Taul[3\ell]{\treerv} \leq t_\ell$. 
For part (iii), observe that $\treerv, \vt$ satisfy the assumptions of
Theorem~\ref{apxstatsthm}(a) with $\alpha = 1$ and $\beta = 3$, so we obtain that
$\E{f(\treerv)} \leq 126 \, \bigl( \sum_{e \in T} \E{\paren{\treerv_e - t_1}^+}+f(\vtp)\bigr)$, 
and the RHS expression is at most $126\cdot\val(\vt)$ by part (i).
\end{proof}

\subsubsection*{Finishing up the proof of Theorem~\ref{thm:tree}}
For $\ell \in \POS$, let $t^*_\ell$ be the largest number of the form $2^k$, where
$k \in \Z$ (possibly negative) such that $t^*_\ell \leq 2 \Taul{\treerv^*} + \delta$ and
$\delta > 0$ is a small constant that we will set later. 
Given Claim~\ref{clm:LPfeastree} and Theorem~\ref{thm:boundedtree}, we know that the
solution $T$ computed by rounding an optimal fractional solution $\sz$ to $\LPTv[\vts]$
satisfies \linebreak $\E{f(\treerv^T)} = \bigo{ \val(\vts) }$. 
We argue below that we can identify a $\poly(m)$-sized set $\T \subseteq \Rp^\POS$
containing $\vts$.
While we cannot necessarily identify $\vts$, we use Theorem~\ref{apxstatsthm} (b) to infer
that if $\LPTv$ is feasible, then 
$\val(\vt)$ acts as a proxy for the expected $f$-norm of the solution computed for $\vt$.  
This will imply that the ``best'' vector $\vt \in \T$ yields an $\bigo{1}$-approximate solution.

As in the case of load balancing, Claim~\ref{sandwich} immediately yields lower and
upper bounds on $\OPT$
that differ by a multiplicative factor of $\n$. 

\begin{lemma} \label{binsearchtree}
Define $\UB$ to be the weight of a minimum-weight spanning tree in $G$ with edge-weights
given by (the deterministic quantity) $\E{X_e}$.
Then, $\frac{\UB}{\n} \leq \E{\Topl[1]{\treerv^*}} \leq \E{f(\treerv^*)} \leq \UB$.
\end{lemma}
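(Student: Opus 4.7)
The plan is to establish the three inequalities as essentially mirrors of the load-balancing analogue in Lemma~\ref{binsearch}, using Claim~\ref{sandwich}---i.e., $\max_i x_i \leq f(x) \leq \sum_i x_i$ for any $x\in\Rp^{n-1}$---together with the optimality of $T^*$ and the defining property of the MST under deterministic edge-weights $\E{X_e}$.

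For the middle inequality $\E{\topl[1](\treerv^*)} \leq \E{f(\treerv^*)}$, I would simply note that Claim~\ref{sandwich} applied to the realized (nonnegative) vector $\treerv^*$ gives $\topl[1](\treerv^*) \leq f(\treerv^*)$ pointwise, and take expectations.

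For the upper bound $\E{f(\treerv^*)}\leq\UB$, I would introduce the MST $T'$ under the deterministic weights $\E{X_e}$, so that $\UB=\sum_{e\in T'}\E{X_e}$. Applying Claim~\ref{sandwich} to the random vector $\treerv^{T'}$, we have $f(\treerv^{T'})\leq\topl[n-1](\treerv^{T'})=\sum_{e\in T'}X_e$, and hence $\E{f(\treerv^{T'})}\leq\sum_{e\in T'}\E{X_e}=\UB$. Since $T^*$ is optimal for stochastic $f$-norm minimization, $\E{f(\treerv^*)}\leq\E{f(\treerv^{T'})}\leq\UB$.

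For the lower bound $\UB/(n-1)\leq\E{\topl[1](\treerv^*)}$, the key observation is that the MST property gives $\UB\leq\sum_{e\in T^*}\E{X_e}$, and the latter equals $\E{\topl[n-1](\treerv^*)}$ (since $\topl[n-1]$ of an $(n-1)$-dimensional vector is simply the sum of its coordinates). Combining with the trivial pointwise bound $\topl[n-1](x)\leq(n-1)\topl[1](x)$ and taking expectations, we get $\UB\leq(n-1)\E{\topl[1](\treerv^*)}$, which rearranges to the desired bound. There is no real technical obstacle here; the argument is essentially a direct adaptation of the load-balancing proof, with the MST playing the role of the assignment that minimizes the sum of expected machine loads.
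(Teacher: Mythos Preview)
Your proposal is correct and matches the paper's approach: the paper does not give a standalone proof of Lemma~\ref{binsearchtree} but simply remarks that it follows from Claim~\ref{sandwich} exactly as in the load-balancing analogue (Lemma~\ref{binsearch}), and your write-up spells out precisely that argument.
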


\begin{claim} \label{smalltau1}
$\Taul[1]{\treerv^*} \leq 4 \, \UB$.
\end{claim}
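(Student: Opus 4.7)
The plan is to chain together three inequalities already established in the paper: a lower bound on $\eproxl[1]$ in terms of $\problth[1]$, the bound relating $\eproxl[1]$ to $\E{\topl[1]}$, and the upper bound on $\E{\topl[1](\treerv^*)}$ from Lemma~\ref{binsearchtree}.

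First, I would invoke Lemma~\ref{eproxlexpr} with $\ell=1$ applied to $\treerv^*$, which yields
\[
\eproxl[1](\treerv^*)=\problth[1](\treerv^*)+\sum_{e\in T^*}\E{(\treerv^*_e-\problth[1](\treerv^*))^+}\geq \problth[1](\treerv^*),
\]
since the second term is nonnegative. Next, Theorem~\ref{eproxlthm} with $\ell=1$ gives $\eproxl[1](\treerv^*)\leq 4\cdot\E{\topl[1](\treerv^*)}$. Finally, Lemma~\ref{binsearchtree} provides $\E{\topl[1](\treerv^*)}\leq\E{f(\treerv^*)}\leq\UB$. Combining these three estimates gives $\problth[1](\treerv^*)\leq 4\,\UB$, which is exactly the claim.

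There is essentially no obstacle here: the statement is a direct corollary of the proxy-function machinery developed in Section~\ref{proxy}, together with the $\UB$ bound from Lemma~\ref{binsearchtree}. The only thing to double-check is that $f$ is normalized so that the chain $\E{\topl[1](\treerv^*)}\leq\E{f(\treerv^*)}$ in Lemma~\ref{binsearchtree} is valid in our setting; this is a standing assumption after Claim~\ref{sandwich}.
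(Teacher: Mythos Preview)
Your proof is correct and follows essentially the same approach as the paper: both chain the inequalities $\problth[1](\treerv^*)\leq\eproxl[1](\treerv^*)\leq 4\,\E{\topl[1](\treerv^*)}\leq 4\,\UB$, invoking Lemma~\ref{eproxlexpr}, Theorem~\ref{eproxlthm}, and Lemma~\ref{binsearchtree} in turn.
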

\begin{proof}
Using Lemma~\ref{eproxlexpr} and Theorem~\ref{lem:gammaproxytopl} we get: $\Taul[1]{\treerv^*} \leq \Gaml[1]{\treerv^*} \leq 4 \, \E{\Topl[1]{\treerv^*}} \leq 4 \, \UB$.
\end{proof}

Define $\delta := \UB/n^2$.
Let 
\begin{equation*} \label{setofguessestree}
\T := \curly{ \vt \in \Rp^\POS:\ 
\forall \ell \in \POS,\quad\delta/2 < t_\ell \leq 8 \, \UB + \delta, \ \ 
t_\ell \text{ is a power of $2$} }
\end{equation*}

\begin{claim} \label{clm:needleinapolyhaystacktree}
The vector $\vts$ belongs to $\T$, and $|\T|=\poly(m)$.
\end{claim}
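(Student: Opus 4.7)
The plan is to verify the two assertions separately.

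For the membership $\vts \in \T$, each $t^*_\ell$ is a power of $2$ by construction, so only the range $\delta/2 < t^*_\ell \leq 8\UB + \delta$ requires checking. For the upper bound, I would combine the monotonicity of $\Taul{\treerv^*}$ in $\ell$ (which follows directly from the definition $\problth(Y)=\inf\{\qt:\E{N^{>\qt}(Y)}<\ell\}$) with Claim~\ref{smalltau1}, giving $\Taul{\treerv^*}\leq\Taul[1]{\treerv^*}\leq 4\UB$ and hence $t^*_\ell\leq 2\Taul{\treerv^*}+\delta\leq 8\UB+\delta$. For the lower bound, let $y := 2\Taul{\treerv^*}+\delta\geq\delta$; by the maximality of $t^*_\ell$ among powers of $2$ with $t^*_\ell\leq y$, the next power of $2$ must strictly exceed $y$, so $2t^*_\ell>y\geq\delta$, i.e., $t^*_\ell>\delta/2$.

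For the cardinality $|\T|=\poly(m)$, I would first observe that because $\delta=\UB/n^2$, the interval $(\delta/2,\,8\UB+\delta]$ contains only $O(\log n)$ powers of $2$, so each coordinate of a vector in $\T$ admits $O(\log n)$ choices. Taking the $|\POS_{\n}|=O(\log n)$ coordinates independently would yield only a quasipolynomial bound $n^{O(\log\log n)}$, so the next step is to invoke monotonicity: I would restrict $\T$ implicitly to non-increasing sequences, which is without loss of generality because $\vts$ itself is non-increasing (as $\Taul{\treerv^*}$ is non-increasing in $\ell$ and rounding down to the nearest power of $2$ preserves the ordering) and $\LPTv$ is only defined for non-increasing $\vt$. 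Applying Claim~\ref{nondec} with both $M,k=O(\log n)$ then gives $|\T|\leq (2e)^{O(\log n)}=\poly(n)$.

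The main (modest) subtlety is in the counting step: without restricting $\T$ to non-increasing sequences, the count is only quasipolynomial, so one must justify that restriction by noting both that $\vts$ is itself non-increasing and that only non-increasing vectors are ever fed into $\LPTv$. Everything else is direct verification from the definitions.
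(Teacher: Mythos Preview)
Your proposal is correct and follows essentially the same approach as the paper: verify the range bounds on $t^*_\ell$ using maximality (lower bound) and Claim~\ref{smalltau1} (upper bound), then invoke Claim~\ref{monseq} for the cardinality count. In fact, you are more careful than the paper on one point: the definition of $\T$ as written does not include a monotonicity constraint, yet the paper's proof simply asserts ``each $\vt\in\T$ is a non-increasing vector'' before applying Claim~\ref{monseq}. You correctly identify that without this restriction the naive count is only quasipolynomial, and you supply the justification the paper omits---namely that $\vts$ is itself non-increasing (since $\problth(\treerv^*)$ is non-increasing in $\ell$ and the floor-to-power-of-$2$ map is monotone) and that the algorithm only ever enumerates non-increasing $\vt$, so restricting $\T$ accordingly is without loss.
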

\begin{proof}
The polynomial bound follows from Claim~\ref{monseq} since each $\vt \in \T$ is a
non-increasing vector, and $\log_2 \vt_\ell$ can take only $\bigo{\log m}$ values.
To see that $\vts \in \T$, by definition, $t^*_\ell$ is the highest power of $2$ satisfying $t^*_\ell \leq 2 \Taul{\treerv^*} + \delta$.
Since $t^*_\ell$ is the largest such power of $2$, we get $t^*_\ell > \delta/2$.
And by Claim~\ref{smalltau1}, we get that $t^*_\ell \leq 8 \, \UB + \delta$.
Thus, $\vts \in \T$.
\end{proof}

We enumerate over all guess vectors $\vt \in \T$ and check if $\LPTv$ is feasible.
Among all feasible $\vt$ vectors, let $\vtb$ be a vector that minimizes $\val(\vt)$;
recall that $\val(\vt) = \LPOPTv + f(\vtp)$, where $\LPOPTv$ is the optimal value of
$\LPTv$ and $\vtp\in\Rp^{\n}$ is the expansion of $\vt$. 
Let $\overline{T}$ be the spanning tree computed by our algorithm by rounding an optimal
fractional solution to $\LPv[\vtb]$. 
By Theorem~\ref{thm:boundedtree} (iii), we have 
$\E{f(\treerv^{\overline{T}})} = \bigo{\val(\vtb)}$, and by
Claim~\ref{clm:needleinapolyhaystacktree}, we know that $\val(\vtb)\leq\val(\vts)$.

To finish the proof, we argue that
$\val(\vts) = \bigo{\OPT}$.
Recall that 
for each $\ell \in \POS$, $t^*_\ell$ is the highest power of $2$ satisfying 
$t^*_\ell \leq 2 \Taul{\treerv^*} + \delta$. It follows that  
$\Taul{\treerv^*}\leq t^*_\ell\leq 2 \Taul{\treerv^*} + \delta$ for all $\ell \in \POS$.
Thus, by Claim~\ref{clm:LPfeastree}, $\LPTv[\vts]$ is feasible; in particular, the
indicator vector of $T^*$ is a feasible solution to $\LPTv[\vts]$.
It follows that 
\[
\val(\vts)\leq \sum_{e \in T^*} \E{\paren{\treerv^*_e - t^*_1}^+}+f\bigl(\vec{t^*}'\bigr)
\leq 32\cdot\E{f(\treerv^*)}+(\n)\delta=O(\OPT),
\]
where the second inequality is due to Theorem~\ref{apxstatsthm} (b).
\hfill \qed

\subsection{Extension: stochastic minimum-norm matroid basis} \label{matbasis}
Our results extend quite seamlessly to the {\em stochastic minimum-norm matroid basis}
problem, which is the generalization of stochastic minimum-norm spanning tree, where we
replace spanning trees by bases of an arbitrary given matroid.  
More precisely, 
the setup is that we are given a
matroid $\M=(U,\indset)$ specified via its rank function $\rank$, and a monotone symmetric
norm $f$ on $\R^{\rank(U)}$. Each element $e\in U$ has a random weight $X_e$. The goal is
to find a basis of $M$ whose induced random weight vector has minimum expected $f$-norm.

The only change in the algorithm is that we replace the spanning-tree polytope $\Q^\mst$ in
\eqref{eq:fractree} by the base polytope of $\M$, and we of course invoke
Theorem~\ref{iterrndthm} with the same base polytope. The analysis is {\em identical} to
the spanning-tree setting.

\section{Proof of Lemma~\ref{lem:largebetalargetail}} \label{append-largebeta}
We restate the lemma below for convenient reference.

\begin{duplicate}[Lemma~\ref{lem:largebetalargetail} (restated)]
Let $\sumrv = \sum_{j \in [k]} \scrv_j$, where the $\scrv_j$s are independent
$[0,\qt]$-bounded random variables. Let $\ld\geq 1$ be an integer. 
Then, 
\[
\E{\sumrv^{\geq \qt}} \geq \qt \cdot \frac{\sum_{j \in [k]}
\beta_\lambda\paren{\scrv_j/4\qt} - 6}{4 \ld} 
\]
\end{duplicate}

\medskip
It suffices to prove the result for $\qt=1$. This is because if we set
$\scrv'_j=\scrv_j/\qt$ for all $j\in[k]$, then the $\scrv'_j$s are $[0,1]$-bounded random
variables, and $\sumrv'=\sumrv/\qt$; so applying the 
result for $\qt=1$ yields the inequality
$\E{\sumrv'^{\geq 1}}\geq\frac{\sum_{j\in[k]}\beta_\ld(\scrv'/4)-6}{4\ld}$, and we have 
$\E{\sumrv'^{\geq 1}}=\E{\sumrv^{\geq\qt}}/\qt$.
So we work with $\qt=1$ from now on.

To keep notation simple, we always reserve $j$ to index over the set $[k]$.
We will actually prove the following ``volume'' inequality:
\begin{equation}
\sum_{j \in [k]} \eff{\scrv_j/4} \leq \paren{3\ld + 2} \Eb{\erv{\sumrv}{1}} + 6 \ ,
\label{volineq}
\end{equation}
holds for all integer $\ld \geq 1$.
The above inequality holds trivially for $\ld = 1$ since $\sum_j \eff{\scrv_j/4} = \frac14 \,
\E{\sumrv} \leq \frac14 \, \paren{\E{ \erv{\sumrv}{1} } + 1}$.  
(In fact, the inequality, and the lemma is very weak for $\ld = 1$.) 
In the sequel, $\ld \geq 2$, 
and note that for $3\ld+2\leq 4\ld$ for $\ld\geq 2$, so \eqref{volineq} implies the lemma.

At a high level, we combine and adapt the proofs of Lemmas 3.2 and 3.4
from~\cite{KleinbergRT00} to obtain our result. 
We say that a Bernoulli trial is of type $(q,s)$ if it takes size $s$ with probability $q$,
and size $0$ with probability $1-q$. 
For a Bernoulli trial $B$ of type $(q,s)$, Kleinberg et al.~\cite{KleinbergRT00} define a
modified notion of effective size: $\effm{B} := \min(s,sq\ld^s)$. 
The following claim will be useful.

\begin{claim}[Proposition 2.5 from \cite{KleinbergRT00}] \label{modeff}
$\eff{B} \leq \effm{B}$.
\end{claim}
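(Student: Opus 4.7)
The plan is to directly compute both sides for the Bernoulli trial and then verify the two inequalities embedded in the $\min$. Since $B$ takes value $s$ with probability $q$ and value $0$ with probability $1-q$, we have
\[
\E{\ld^B} = q\ld^s + (1-q) = 1 + q(\ld^s - 1),
\]
so $\eff{B} = \log_\ld\bigl(1 + q(\ld^s - 1)\bigr)$. It then suffices to show
\begin{enumerate}[(i), topsep=0.25ex, itemsep=0ex, leftmargin=*]
\item $\log_\ld\bigl(1 + q(\ld^s - 1)\bigr) \le s$, and
\item $\log_\ld\bigl(1 + q(\ld^s - 1)\bigr) \le sq\ld^s$.
\end{enumerate}
First I would handle the edge case $\ld = 1$ separately: here $\eff{B} = \E{B} = sq$ and $\effm{B} = \min(s, sq) = sq$, so the inequality is in fact an equality. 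For the rest, assume $\ld > 1$.

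For (i), I would simply note that $q \le 1$ gives $1 + q(\ld^s - 1) \le \ld^s$, and taking $\log_\ld$ yields $\eff{B} \le s$.

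For (ii), the key tool is the elementary inequality $\ln(1+x) \le x$ for $x > -1$, which translates to $\log_\ld(1+x) \le x/\ln\ld$. Applying this with $x = q(\ld^s - 1) \ge 0$ gives
\[
\eff{B} \le \frac{q(\ld^s - 1)}{\ln\ld}.
\]
It therefore suffices to prove $(\ld^s - 1)/\ln\ld \le s\ld^s$. Since $\ld > 1$, the function $u \mapsto \ld^u$ is increasing on $[0,s]$, so
\[
\ld^s - 1 \;=\; \int_0^s \ld^u\,\ln\ld\,du \;\le\; \ln\ld\cdot\int_0^s \ld^s\,du \;=\; s\ld^s\ln\ld,
\]
which rearranges to exactly $(\ld^s - 1)/\ln\ld \le s\ld^s$, completing (ii).

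Combining (i) and (ii), $\eff{B} \le \min(s,\, sq\ld^s) = \effm{B}$, as desired. There is no real obstacle here; the only care needed is to justify the bound $\log_\ld(1+x) \le x/\ln\ld$ and the integral estimate, both of which are routine. The proof is short enough that I would write it out directly rather than citing external facts.
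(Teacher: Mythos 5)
Your proof is correct. The paper simply cites this as Proposition~2.5 of Kleinberg--Rabani--Tardos and gives no argument of its own, so there is no ``paper proof'' to compare against; your direct computation fills that gap cleanly. Both halves of the $\min$ are handled properly: the bound $\eff{B}\le s$ follows from $q\le 1$, and the bound $\eff{B}\le sq\ld^s$ follows from $\ln(1+x)\le x$ together with the monotone integral estimate $\ld^s-1\le s\ld^s\ln\ld$, with the $\ld=1$ edge case verified separately as an equality. Nothing is missing.
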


Roughly speaking, inequality \eqref{volineq} states that each unit of
$\ld$-effective size contributes ``$\ld^{-1}$-units'' towards $\E{\sumrv^{\geq 1}}$. This is
indeed what we show, but we first reduce to the setting of Bernoulli trials.

The proof will involve various transformations of the $\scrv_j$ random variables, and the
notion of stochastic dominance will be convenient to compare the various random
variables so obtained. A random variable $R$ {\em stochastically dominates}
another random variable $B$, denoted $B\sdleq R$, if $\pr{R\geq t}\geq\pr{B\geq t}$ for all
$t\in\R$. We will use the following well-known facts about stochastic dominance.
\begin{enumerate}[label=(F\arabic*), topsep=0.25ex, itemsep=0ex, leftmargin=*]
\item \label{sdmon}
If $B\sdleq R$, then for any non-decreasing function $u:\R\mapsto\R$, we have
$\E{u(B)}\leq\E{u(R)}$. 
\item \label{sdsum}
If $B_i\sdleq R_i$ for all $i=1,\ldots,k$, then 
$\bigl(\sum_{i=1}^k B_i\bigr)\sdleq\bigl(\sum_{i=1}^k R_i\bigr)$.
\end{enumerate}
All the random variables encountered in the proof will be nonnegative, and we will often omit
stating this explicitly.

\subsection{Bernoulli decomposition: reducing to Bernoulli trials}
We utilize a result of~\cite{KleinbergRT00} that shows how to replace an arbitrary
random variable $R$ with a sum of Bernoulli trials that is ``close'' to $R$ in terms of
stochastic dominance. This allows us to reduce to the case where all 
random variables are Bernoulli trials (Lemma~\ref{berredn}). 

We use $\supp(R)$ to denote the support of a random variable $R$; if $R$ is a
discrete random variable, this is the set $\{x\in\R: \pr{R=x}>0\}$.
Following~\cite{KleinbergRT00}, we say that a random variable is {\em geometric} if its
support is contained in $\{0\}\cup\{2^r: r\in\Z\}$.

\begin{lemma}[Lemma~3.10 from \cite{KleinbergRT00}] \label{decomp}
Let $R$ be a geometric random variable. 
Then there exists a set of independent Bernoulli trials $B_1,\dots,B_p$ such that 
$B = B_1+ \dots + B_p$ satisfies 
$\pr{R = t} = \pr{t \leq B < 2t}$ for all $t\in\supp(R)$. 
Furthermore, the support of each $B_i$ is contained in the support of $R$. 
\end{lemma}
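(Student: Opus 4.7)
The plan is to give an explicit construction of the $B_i$'s. Enumerate the support of $R$ as $\{0\}\cup\{2^{r_1}<\cdots<2^{r_k}\}$ with $r_i\in\Z$ (allowing $\Pr[R=0]=0$), set $p_i := \Pr[R = 2^{r_i}]$ for $i\in[k]$ and $p_0 := \Pr[R = 0]$, and let $P_i := \sum_{j\leq i}p_j$ denote the cumulative weights, so $P_k = 1$. For each $i\in[k]$, introduce an independent Bernoulli trial $B_i$ that takes value $2^{r_i}$ with probability $q_i := p_i/P_i$ and value $0$ otherwise; in the degenerate case $P_i=0$ (which forces $p_0=\cdots=p_i=0$), one simply drops $B_i$ from the construction. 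I will show that $B := B_1 + \cdots + B_k$ satisfies the required identity.

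The entire argument hinges on one geometric observation: since $r_1<r_2<\cdots$ are strictly increasing integers, the partial sum $\sum_{i=1}^j 2^{r_i}$ is strictly less than $2^{r_j+1}$ for every $j$. Consequently, if $j^\star := \max\{i : B_i \neq 0\}$ (with $j^\star:=0$ if every trial vanishes), then $B$ is confined to the interval $[2^{r_{j^\star}},2^{r_{j^\star}+1})$: the lower bound $B\geq B_{j^\star}=2^{r_{j^\star}}$ is immediate, and the upper bound follows because the remaining terms contribute at most $\sum_{i<j^\star}2^{r_i} < 2^{r_{j^\star}}$. Hence the event $\{t\leq B<2t\}$ for $t = 2^{r_j}$ is exactly the event $\{j^\star = j\}$.

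What remains is a telescoping calculation. Using $1 - q_i = P_{i-1}/P_i$ and $P_k=1$,
\[
\Pr[t \leq B < 2t]=\Pr[j^\star = j]=q_j\prod_{i=j+1}^{k}(1-q_i)=\frac{p_j}{P_j}\cdot\prod_{i=j+1}^{k}\frac{P_{i-1}}{P_i}=\frac{p_j}{P_j}\cdot\frac{P_j}{P_k}=p_j,
\]
as required, and an analogous collapse gives $\Pr[B=0]=\prod_i(1-q_i)=P_0/P_k=p_0$. Each $B_i$ has support $\{0,2^{r_i}\}\subseteq\{0\}\cup\mathrm{supp}(R)$, which coincides with $\mathrm{supp}(R)$ whenever $0\in\mathrm{supp}(R)$. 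I do not anticipate any substantive obstacle; the only delicate point is identifying the correct normalization $q_i = p_i/P_i$, which is precisely what makes the telescoping product collapse. The edge case $p_0 = 0$ (so that $q_1 = 1$ and $B_1 \equiv 2^{r_1}$ deterministically) can be handled by a standard perturbation or absorbed into applications by passing to a random variable with an arbitrarily small atom at $0$.
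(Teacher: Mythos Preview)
The paper does not give its own proof of this lemma; it is quoted verbatim from~\cite{KleinbergRT00} and used as a black box. So there is no in-paper proof to compare against.

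Your construction is correct and is, as far as I know, essentially the same as the one in~\cite{KleinbergRT00}: define the ``hazard'' probabilities $q_i = p_i/P_i$, use the geometric-series bound $\sum_{i\le j}2^{r_i}<2^{r_j+1}$ to show that the event $\{t\le B<2t\}$ for $t=2^{r_j}$ coincides with $\{j^\star=j\}$, and telescope. The edge case you flag (when $\Pr[R=0]=0$, the trials $B_i$ for $i\ge 2$ put mass at $0\notin\supp(R)$) is a genuine wrinkle in the lemma statement as written, but it is immaterial for how the paper uses the lemma: only the nonzero sizes $s_\ell$ of the Bernoulli trials are ever referenced (to conclude $s_\ell\in[0,1/4]$ and that $s_\ell$ is a power of $2$), and the stochastic-dominance corollary that follows does not need $0\in\supp(R)$.
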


\begin{corollary} \label{decompcor}
Let $R$ be a geometric random variable, and $B$ be the corresponding sum of Bernoulli
trials given by Lemma~\ref{decomp}. Then, we have $B/2\sdleq R\sdleq B$.
\end{corollary}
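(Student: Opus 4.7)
The plan is to construct an explicit coupling of $R$ and $B$ under which the pointwise inequalities $R \leq B$ and $B < 2R$ both hold; by \ref{sdmon} (applied to the identity function), these pointwise bounds immediately yield $R \sdleq B$ and $B/2 \sdleq R$.

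First I would observe that, because $R$ is geometric, the nonzero part of its support is contained in $\{2^r : r \in \Z\}$, so the intervals $\{[s, 2s) : s \in \supp(R),\, s > 0\}$ are pairwise disjoint and cover $(0,\infty)$. Combined with the guarantee $\Pr[R = s] = \Pr[s \leq B < 2s]$ from Lemma~\ref{decomp} for each $s \in \supp(R)$, summing over $s$ shows that $B$ almost surely falls into exactly one such interval $[s,2s)$ (with the convention that $s = 0$ corresponds to the atom $\{B = 0\} = \{R = 0\}$, which is consistent with Lemma~\ref{decomp} since the support of each $B_i$ lies in $\supp(R)$).

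Given this, I would define a joint distribution on $(R, B)$ as follows: sample $B$ according to its given distribution, let $s(B)$ be the unique element of $\supp(R)$ with $B \in [s(B), 2s(B))$ (setting $s(0) := 0$), and declare $R := s(B)$. By Lemma~\ref{decomp}, the marginal law of $R$ in this coupling is correct, and of course the marginal law of $B$ is unchanged. In this coupling, deterministically $R = s(B) \leq B < 2 s(B) = 2R$, so $R \leq B$ and $B \leq 2R$ hold pointwise, yielding both inequalities $R \sdleq B$ and $B/2 \sdleq R$.

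I do not expect any real obstacle here; the only small subtlety is to handle the atom at $0$ consistently and to verify that the intervals $[s,2s)$ for $s \in \supp(R)$ genuinely partition the almost sure support of $B$, both of which follow from the geometric structure of $\supp(R)$ and the support containment stated in Lemma~\ref{decomp}.
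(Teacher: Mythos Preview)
Your approach is correct and takes a somewhat different route from the paper. The paper argues each dominance by a direct tail-probability comparison: for $R\sdleq B$ it shows
\[
\Pr[R\geq t]=\sum_{t'\in\supp(R):\,t'\geq t}\Pr[t'\leq B<2t']\leq\Pr[B\geq t],
\]
and for $B/2\sdleq R$ it shows
\[
\Pr[R< t]=\sum_{t'\in\supp(R):\,t'<t}\Pr[t'\leq B<2t']\leq\Pr[B<2t];
\]
in both cases the only structural fact used is that the intervals $[t',2t')$ are pairwise disjoint for $t'\in\supp(R)$, since $R$ is geometric. Your coupling argument packages the same observation more conceptually, delivering the pointwise sandwich $R\leq B<2R$ in one stroke rather than two separate CDF inequalities; this is exactly the monotone coupling whose existence the paper's CDF bounds certify.

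One correction: the intervals $[s,2s)$ for $s\in\supp(R)$, $s>0$, need \emph{not} cover all of $(0,\infty)$, since $\supp(R)$ need not contain every power of~$2$. Fortunately you do not actually use coverage anywhere---your summing argument already shows that $B$ lands in $\bigcup_{s>0}[s,2s)$ with probability exactly $1-\Pr[R=0]$ (and disjointness gives uniqueness), which is all you need to define the map $s(B)$ almost surely---so you should simply drop the coverage claim.
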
 

\begin{proof}
To see that $R\sdleq B$, consider any $t\geq 0$. We have 
$$
\pr{R\geq t}=\sum_{t'\in\supp(R): t'\geq t}\pr{R=t'}
=\sum_{t'\in\supp(R): t'\geq t}\pr{t'\leq B<2t'}\leq\pr{B\geq t}
$$
where the second equality is due to Lemma~\ref{decomp}, and the
last inequality follows since the intervals $[t',2t')$ are disjoint for
$t'\in\supp(R)$, as $R$ is a geometric random variable.

To show that $B/2\sdleq R$, we argue that $\pr{R<t}\leq\pr{B<2t}$ for all $t\in\R$.
This follows from a very similar argument as above. We have
$$
\pr{R<t}=\sum_{t'\in\supp(R): t'<t}\pr{R=t'}
=\sum_{t'\in\supp(R): t'<t}\pr{t'\leq B<2t'}\leq\pr{B<2t}.
$$
The last inequality again follows because the intervals $[t',2t')$ are disjoint for
$t'\in\supp(R)$, as $R$ is a geometric random variable.
\end{proof}

We intend to apply Lemma~\ref{decomp} to the $\scrv_j$s to obtain a collection of
Bernoulli trials, but first we need to convert them to geometric random variables.
For technical reasons that will be clear soon, 
we first scale our random variables by a factor of $4$; then, we convert each scaled random
variable to a geometric random variable by rounding up the non-zero values in its support 
to the closest power of $2$.
Formally, for each $j$, let $\scrv^\rd_j$ denote the {geometric} random variable obtained
by \emph{rounding up} each non-zero value in the support of $\scrv_j/4$ to the
closest power of $2$.
(So, for instance, $0.22$ would get rounded up to $0.25$, whereas $1/8$ would stay the
same.) 
Note that $\scrv^\rd_j$ is a geometric $[0,1/4]$-bounded random variable. 

We now apply Lemma~\ref{decomp} to each $\scrv^\rd_j$ to obtain a collection 
$\{ \scrv^\ber_{\ell} \}_{\ell\in F_j}$ of independent Bernoulli trials.
Let $F$ be the disjoint union of the $F_j$s, so $F$ is the collection of all the Bernoulli
variables so obtained. 
Define $\sumrv^\ber := \sum_{\ell\in F} \scrv^\ber_{\ell}$. 
For $\ell\in F$, let $\scrv^\ber_{\ell}$ be a Bernoulli trial of type $(q_{\ell},s_{\ell})$; 
from Lemma~\ref{decomp}, we have that $s_{\ell} \in [0,1/4]$ and is a (inverse) power of $2$.
We now argue that it suffices to prove \eqref{volineq} for the
Bernoulli trials $\{\scrv^\ber_\ell\}_{\ell\in F}$.

\begin{lemma} \label{berredn}
Inequality \eqref{volineq} follows from 
the inequality: 
$\sum_{\ell\in F}\eff{\scrv^\ber_\ell}\leq (3\ld+2)\E{(\sumrv^\ber)^{\geq 1}}+6$. 
By Claim~\ref{modeff}, this in turn is implied by the inequality
\begin{equation}
\sum_{\ell\in F}\effm{\scrv^\ber_\ell}\leq (3\ld+2)\E{(\sumrv^\ber)^{\geq 1}}+6.
\label{berineq}
\end{equation}
\end{lemma}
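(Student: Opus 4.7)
The plan is to reduce the inequality~\eqref{volineq} for the $\scrv_j$s to the stated Bernoulli inequality $\sum_{\ell \in F} \eff{\scrv^\ber_\ell} \leq (3\ld+2)\E{(\sumrv^\ber)^{\geq 1}}+6$ by leveraging two tools: the monotonicity and additivity of the $\beta_\ld$-effective size, and the stochastic-dominance sandwich provided by Corollary~\ref{decompcor}. The crucial pointwise relation between $\scrv_j$ and its geometric rounding is $\scrv_j/4 \leq \scrv^\rd_j \leq \scrv_j/2$, immediate from the ``round the nonzero values up to the closest power of $2$'' definition of $\scrv^\rd_j$, and Corollary~\ref{decompcor} gives $B_j/2 \sdleq \scrv^\rd_j \sdleq B_j$ where $B_j := \sum_{\ell \in F_j} \scrv^\ber_\ell$. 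Together these four inequalities let us push each side of the target inequality in the right direction.

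First I would bound the left-hand side $\sum_j \eff{\scrv_j/4}$ from above by $\sum_{\ell \in F} \eff{\scrv^\ber_\ell}$. Since $\ld \geq 1$ makes $x \mapsto \ld^x$ non-decreasing, the pointwise bound $\scrv_j/4 \leq \scrv^\rd_j$ combined with~\ref{sdmon} gives $\eff{\scrv_j/4} \leq \eff{\scrv^\rd_j}$; the stochastic dominance $\scrv^\rd_j \sdleq B_j$ from Corollary~\ref{decompcor} and the same fact give $\eff{\scrv^\rd_j} \leq \eff{B_j}$; and independence of $\{\scrv^\ber_\ell\}_{\ell \in F_j}$ together with the identity $\E{\ld^{A+B}} = \E{\ld^A}\E{\ld^B}$ for independent $A,B$ yields $\eff{B_j} = \sum_{\ell \in F_j} \eff{\scrv^\ber_\ell}$. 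Summing over $j$ produces $\sum_j \eff{\scrv_j/4} \leq \sum_{\ell \in F} \eff{\scrv^\ber_\ell}$.

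Next I would show $\E{(\sumrv^\ber)^{\geq 1}} \leq \E{\sumrv^{\geq 1}}$. From $B_j/2 \sdleq \scrv^\rd_j$ in Corollary~\ref{decompcor} combined with the pointwise $\scrv^\rd_j \leq \scrv_j/2$, one obtains $B_j \sdleq \scrv_j$. The $B_j$s are mutually independent (they are built from the independent $\scrv^\rd_j$s via Lemma~\ref{decomp}), and likewise for the $\scrv_j$s, so~\ref{sdsum} lifts this to $\sumrv^\ber \sdleq \sumrv$. Since the map $x \mapsto x^{\geq 1} = x\,\bon_{x \geq 1}$ is non-decreasing on $\Rp$, one further application of~\ref{sdmon} yields the required bound on expectations.

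Chaining the two inequalities above with the hypothesized $\sum_{\ell \in F} \eff{\scrv^\ber_\ell} \leq (3\ld+2)\E{(\sumrv^\ber)^{\geq 1}}+6$ establishes~\eqref{volineq}. The remaining assertion of the lemma---that~\eqref{berineq} suffices---is immediate from Claim~\ref{modeff} applied termwise on the left-hand side, since $\eff{\scrv^\ber_\ell} \leq \effm{\scrv^\ber_\ell}$ for each Bernoulli trial. I do not expect any step to pose a real technical obstacle here: once the scaling factor of $1/4$ in the effective-size argument is matched with the at-most-doubling gap between $\scrv^\rd_j$ and $\scrv_j/4$, the argument is a clean bookkeeping chain resting on Lemma~\ref{decomp} of~\cite{KleinbergRT00} and the standard properties of stochastic dominance recalled in~\ref{sdmon} and~\ref{sdsum}. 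The genuinely hard work---producing the Bernoulli decomposition itself and then establishing \eqref{berineq} on the decomposed side---happens elsewhere.
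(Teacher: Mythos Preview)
Your proposal is correct and follows essentially the same route as the paper: both bound the left-hand side via $\scrv_j/4 \leq \scrv^\rd_j \sdleq B_j$ together with monotonicity and additivity of $\beta_\lambda$, and both bound the right-hand side via $B_j \sdleq 2\scrv^\rd_j \leq \scrv_j$ and Facts~\ref{sdsum},~\ref{sdmon}. The only cosmetic difference is that you split the effective-size chain into two steps ($\eff{\scrv_j/4}\leq\eff{\scrv^\rd_j}\leq\eff{B_j}$) whereas the paper combines the dominance first and then applies $\beta_\lambda$ once.
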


\begin{proof}
Fix any $j\in[k]$.
By Corollary~\ref{decompcor}, we have $\scrv^\rd_j\sdleq\sum_{\ell\in F_j}\scrv^\ber_\ell$. 
We also have that $\scrv_j/4\leq\scrv^\rd_j$, and so 
$\scrv_j/4\sdleq\sum_{\ell\in F_j}\scrv^\ber_\ell$. 
Therefore, since $\ld^x$ is an increasing function of $x$, we have 
$\eff{\scrv_j/4}\leq\eff{\sum_{\ell\in F_j}\scrv^\ber_\ell}=\sum_{\ell\in F_j}\eff{\scrv^\ber_\ell}$. 
Summing over all $j$, we obtain that 
$\sum_{j\in[k]}\eff{\scrv_j/4}\leq\sum_{\ell\in F}\eff{\scrv^\ber_\ell}$. 

Corollary~\ref{decompcor} also yields that 
$\sum_{\ell\in F_j}\scrv^\ber_\ell\sdleq 2\cdot\scrv^\rd_j$ for all $j$. 
We also have $\scrv^\rd_j\leq 2(\scrv_j/4)$, and therefore, we have
$\sum_{\ell\in F_j}\scrv^\ber_\ell\sdleq\scrv_j$ for all $j$.
Using Fact~\ref{sdsum}, this implies that $\sumrv^\ber\sdleq\sumrv$, and hence, by
Fact~\ref{sdmon}, we have $\E{(\sumrv^\ber)^{\geq 1}}\leq\E{\sumrv^{\geq 1}}$.

To summarize, we have shown that $\sum_{j\in[k]}\eff{\scrv_j/4}$ is at most 
$\sum_{\ell\in F}\eff{\scrv^\ber_\ell}$, which is at most the LHS of
\eqref{berineq} (by Claim~\ref{modeff}), and 
the RHS of \eqref{volineq} is at least the RHS of \eqref{berineq}. 
\end{proof}

\subsection{Proof of inequality \eqref{berineq}}
We now focus on proving inequality \eqref{berineq}.
Let $F^\sml := \{\ell\in F : \ld^{s_\ell}\leq 2\}$
index the set of \emph{small} Bernoulli trials, and let $F^\lar := F \setminus F^\sml$
index the remaining {\em large} Bernoulli trials.

It is easy to show that the total modified effective size of small Bernoulli trials is at
most $2\E{\sumrv^\ber}\leq 2\E{(\sumrv^\ber)^{\geq 1}}+2$ (see Claim~\ref{clm:smalleffsize} 
and inequality \eqref{smallvars}). Bounding the modified effective size of the large
Bernoulli trials is more involved. Roughly speaking, we first consolidate these random
variables by replacing them with ``simpler'' Bernoulli trials, and then show that
each unit of total modified effective size of these simpler Bernoulli trials makes a
contribution of $\ld^{-1}$ towards $\E{(\sumrv^\ber)^{\geq 1}}$. The constant $6$ in
\eqref{berineq} 
arises due to two reasons:
(i) because we bound the modified effective size of the small Bernoulli trials
by $O\bigl(\E{\sumrv^\ber}\bigr)$ (as opposed to $O\bigl(\E{(\sumrv^\ber)^{\geq 1}}\bigr)$);
and (ii) because we lose some modified effective size in the consolidation 
or large Bernoulli trials.%
\footnote{Note that some constant must {\em unavoidably} appear on the RHS of
inequalities \eqref{volineq} and \eqref{berineq}; that is, we cannot bound the total
effective size (or modified effective size) by a purely multiplicative factor of
$\E{\sumrv^{\geq 1}}$, even for Bernoulli trials. This is because if $\ld=1$, and say we
have only one (Bernoulli) random variable $\scrv$ that is strictly less than $1$, then its
effective size (as also its modified effective size) is simply $\E{\scrv}$, whereas 
$\sumrv^{\geq 1}=\scrv^{\geq 1}=0$.} 

\begin{claim}[Shown in Lemma~3.4 in \cite{KleinbergRT00}] \label{clm:smalleffsize}
Let $B$ be a Bernoulli trial of type $(q,s)$ with $\ld^s\leq 2$. 
Then, $\effm{B} \leq 2 \, \E{B}$.
\end{claim}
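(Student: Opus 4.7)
The plan is to prove the claim by a direct case analysis on the two terms appearing in the definition of the modified effective size. Recall $\effm{B} = \min(s, sq\ld^s)$ and $\E{B} = qs$, so the target inequality is $\min(s, sq\ld^s) \leq 2qs$.

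First I would handle the case in which $sq\ld^s \leq s$, i.e., when the minimum is attained by $sq\ld^s$. Then $\effm{B} = sq\ld^s$, and using the hypothesis $\ld^s \leq 2$ directly gives $\effm{B} = sq\ld^s \leq 2qs = 2\E{B}$.

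Next I would handle the complementary case in which the minimum is $s$, so $\effm{B} = s$ and $sq\ld^s > s$. The latter inequality forces $q\ld^s > 1$, and combined with $\ld^s \leq 2$ this gives $q > 1/\ld^s \geq 1/2$. Hence $2\E{B} = 2qs > s = \effm{B}$, which completes this case.

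There is really no hard step here; the claim is essentially a one-line computation once one unpacks the definitions. Both cases rely only on the hypothesis $\ld^s \leq 2$ and the definitions of $\effm{\cdot}$ and $\E{B}$ for a Bernoulli trial, and together they establish $\effm{B} \leq 2\E{B}$ as required.
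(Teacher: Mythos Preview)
Your proof is correct and uses essentially the same idea as the paper. The paper's version is slightly more direct: it simply observes that $\effm{B} = \min(s, sq\ld^s) \leq sq\ld^s \leq 2qs = 2\E{B}$, avoiding the case split by using that the minimum is always at most its second argument.
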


\begin{proof}
By definition, $\effm{B} = \Min{s,sq\ld^s} \leq 2 qs = 2 \, \E{B}$. 
\end{proof}

By the above claim we get the following volume inequality for the small Bernoulli trials.

\begin{equation} \tag{vol-small} \label{smallvars}
\sum_{\ell \in F^\sml} \effm{\scrv^\ber_{\ell}} \leq 2 \, \sum_{\ell \in F^\sml}
  \E{\scrv^\ber_{\ell}} \leq 2 \, \E{\sumrv^\ber}\leq 2 \, \E{(\sumrv^\ber)^{\geq 1}}+2\ . 
\end{equation}

We now handle the Bernoulli trials in $F^\lar$.
For each $\ell\in F^\lar$, we set $q'_{\ell} = \min(q_{\ell},\ld^{-s_{\ell}})$. 
Observe that this operation 
does not change the modified effective size, and we have that 
$\effm{(q_\ell,s_\ell)}=\effm{(q'_\ell,s_\ell)}=s_\ell q'_\ell\ld^{s_\ell}$.
The following claim from \cite{KleinbergRT00} is useful in consolidating Bernoulli trials
of the same size. 

\begin{claim}[Claim~3.1 from \cite{KleinbergRT00}] \label{grouping}
Let $\mce_1,\dots,\mce_p$ be independent events, with $\pr{\mce_i} = p_i$.
Let $\mce'$ be the event that at least one of these events occurs. 
Then $\pr{\mce'} \geq \frac{1}{2}\min(1,\sum_i p_i)$.
\end{claim}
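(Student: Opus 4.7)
The plan is to prove this by a direct calculation on the complementary event, exploiting the standard bound $1-p \leq e^{-p}$. Since the events are independent, $\pr{\mce'} = 1 - \prod_{i\in[p]}(1-p_i)$, so if we set $s := \sum_i p_i$, we immediately obtain
\[
\pr{\mce'} \;\geq\; 1 - \prod_{i\in[p]}e^{-p_i} \;=\; 1 - e^{-s}.
\]
It remains to show that $1 - e^{-s} \geq \tfrac{1}{2}\min(1,s)$.

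I would split into two cases depending on whether $s\geq 1$ or $s<1$. When $s\geq 1$, one has $1 - e^{-s} \geq 1 - e^{-1} > \tfrac{1}{2}$, which matches $\tfrac{1}{2}\min(1,s)=\tfrac{1}{2}$. When $s<1$, one needs the inequality $1-e^{-s}\geq s/2$ on $[0,1]$; this is the only computational step in the argument. The cleanest justification is concavity: the function $h(s):=1-e^{-s}-s/2$ satisfies $h(0)=0$, $h(1)=\tfrac{1}{2}-e^{-1}>0$, and $h''(s)=-e^{-s}<0$, so $h$ is concave on $[0,1]$ with nonnegative values at the endpoints, hence nonnegative throughout.

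There is no real obstacle here; the only thing to be careful about is the elementary inequality $1-e^{-s}\geq s/2$ on $[0,1]$, which is why I would explicitly justify it via concavity rather than leaving it as ``clear''. This is a short self-contained argument, so I expect the proof to take only a few lines in the final write-up.
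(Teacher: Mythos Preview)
Your proof is correct. The paper does not actually prove this claim; it simply quotes it from~\cite{KleinbergRT00} and uses it as a black box. Your argument via $1-p_i\leq e^{-p_i}$ followed by the concavity check that $1-e^{-s}\geq s/2$ on $[0,1]$ is the standard one-paragraph justification and would serve perfectly well here.
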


\paragraph{Consolidation for a fixed size.}
For each $s$ that is an inverse power of 2, we define 
$F^s := \{\ell\in F^\lar : s_{\ell}=s \}$, so that $\bigcup_{s} F^s$ is a partition of
$F^\lar$. 
Next, we further partition $F^s$ into sets $P^s_1,\dots,P^s_{n_s}$ such that for all 
$i =1,\dots,n_s-1$, we have $2 \ld^{-s} \leq \sum_{\ell \in P^s_i} q'_{\ell} < 3 \ld^{-s}$ and
$\sum_{\ell\in P^s_{n_s}} q'_{\ell} < 2 \ld^{-s}$.  
Such a partitioning always exists since for each $\ell \in F^s$ we have
$q'_{\ell}\leq\lambda^{-s}]$ by definition. 
We now apply Claim~\ref{grouping} to ``consolidate'' each $P^s_i$: by this, we mean that
for $i=1,\ldots,n_s-1$, we think of representing $P^s_i$ by the
``simpler'' Bernoulli trial $B^s_i$ of type $(\ld^{-s},s)$ and using this to replace the
individual random variables in $P^s_i$.

By Claim~\ref{grouping}, for any $i=1,\ldots,n_s-1$, 
we have $\pr{\sum_{\ell\in P^s_i}\scrv^\ber_\ell\geq s}\geq\ld^{-s}=\pr{B^s_i=s}$ (note
that this only works for large Bernoulli trials since $2\ld^{-s}\leq 1$);
hence, it follows that $B^s_i\sdleq\sum_{\ell\in P^s_i}\scrv^\ber_\ell$.

Note that 
$$
\sum_{\ell\in P^s_i}\effm{\scrv^\ber_\ell}=s\ld^s\sum_{\ell\in P^s_i}q'_\ell
<
\begin{cases} 
3s; & \text{if $i\in\{1,\ldots,n_s-1\}$} \\
2s; & \text{if $i=n_s$}. 
\end{cases}
$$
Also, $\effm{B^s_i} = \min(s, s \, \lambda^{-s} \, \lambda^s) = s$.
Putting everything together,
\[
(n_s-1) s = \sum_{i=1}^{n_s-1} \effm{B^s_i} \geq \frac{\sum_{\ell \in F^s} \effm{\scrv^\ber_{\ell}} - 2s}{3} \ .
\]

\paragraph{Consolidation across different sizes.}
Summing the above inequality for all $s$ (note that each $s$ is an inverse power of $2$
and at most $1/4$), we obtain that 
\begin{equation}
\sum_s \sum_{i=1}^{n_s-1} \effm{B^s_i} \geq \frac{ \sum_{\ell \in F^\lar} \effm{\scrv^\ber_{\ell}} - 1}{3}.
\label{berlrgineq}
\end{equation}
Let $\vol$ denote the RHS of \eqref{berlrgineq}, and 
let $m := \floor{\vol}$. 
(Note that $m$ could be $0$; but if $m=0$, then $\vol<1$, and \eqref{berineq} trivially holds.)
Since each $B^s_i$ is a Bernoulli trial of type $(\ld^{-s},s)$, where $s$ is an inverse
power of $2$, 
we can obtain $m$ disjoint subsets $A_1,\dots,A_m$ of $(s,i)$ pairs from the entire
collection $\{B^s_i\}_{s,i}$ of Bernoulli trials, such that 
$\sum_{(s,i) \in A_u} \effm{B^s_i} = \sum_{(s,i) \in A_u} s = 1$ for each $u\in [m]$.%
\footnote{To justify this statement, it suffices to show the following. Suppose we have
created some $r$ sets $A_1,\ldots,A_r$, where $r<m$, and let $I$ be the set of $(s,i)$
pairs 
indexing the Bernoulli trials that are not in $A_1,\ldots,A_r$; then, we can find a subset 
$I'\sse I$ such that $\sum_{(s,i)\in I'}s=1$.  
To see this, first since $r<m$, we have $\sum_{(s,i)\in I}s\geq 1$. We sort the $(s,i)$
pairs in $I$ in non-increasing order of $s$; to avoid excessive notation, let $I$ denote
this sorted list. Now since each $s$ is an inverse power of $2$, it is easy to see by
induction that if $J$ is a prefix of $I$ such that $\sum_{(s,i)\in J}s<1$, then
$1-\sum_{(s,i)\in J}s$ is at least as large as the $s$-value of the pair in $I$
appearing immediately after $J$. Coupled with the fact that $\sum_{(s,i)\in I}s\geq 1$,
this implies that there is a prefix $I'$ such that $\sum_{(s,i)\in I'}s=1$.}
For each subset $A_u$, 
\[
{\boldsymbol\Pr}\biggl[\sum_{(s,i) \in A_u} B^s_i = 1\biggr] 
= \prod_{(s,i) \in A_u} \pr{B^s_i = s} = \prod_{(s,i) \in A_u} \ld^{-s} = \ld^{-1} \ .
\]

\paragraph{Finishing up the proof of inequality \eqref{berineq}.}
For any nonnegative random variables $R_1,R_2$, 
we have {$\E{\erv{\paren{R_1+R_2}}{1}} \geq \E{\erv{R_1}{1}} + \E{\erv{R_2}{1}}$}.
So, 
\[
{\boldsymbol\Exp}\biggl[\erv{\Bigl(\sum_s\sum_{i=1}^{n_s-1} B^s_i\Bigr)}{1}\biggr] 
\geq \sum_{u = 1}^{m} {\boldsymbol\Exp}\biggl[\erv{\Bigl(\sum_{(s,i) \in A_u} B^s_i\Bigr)}{1}\biggr] 
= \frac{m}{\ld} \geq \frac{ \sum_{\ell \in F^\lar} \effm{\scrv^\ber_{\ell}} - 4}{3 \ld} \ .
\]
As noted earlier, we have that $B^s_i\sdleq\sum_{\ell\in P^s_i}\scrv^\ber_\ell$ for all
$s$, and all $i=1,\ldots,n_s-1$.
By Fact~\ref{sdsum}, it follows that 
$\bigl(\sum_s\sum_{i=1}^{n_s-1}B^s_i\bigr)
\sdleq\bigl(\sum_s\sum_{i=1}^{n_s-1}\sum_{\ell\in P^s_i}\scrv^\ber_\ell\bigr)$.
Also,
$$
\sum_s\sum_{i=1}^{n_s-1}\sum_{\ell\in P^s_i}\scrv^\ber_\ell
\leq\sum_{\ell\in F^\lar}\scrv^\ber_\ell\leq\sum_{\ell\in F}\scrv^\ber_\ell=\sumrv^\ber, 
$$
and combining the above with Fact~\ref{sdmon}, we obtain that  
$\E{\bigl(\sum_s\sum_{i=1}^{n_s-1}B^s_i\bigr)^{\geq 1}}\leq\E{(\sumrv^\ber)^{\geq 1}}$.
Thus, we have shown that 
\begin{equation} \tag{vol-large} \label{largevars}
\sum_{\ell \in F^\lar} \effm{\scrv^\ber_{\ell}} \leq 3 \ld \E{\erv{(\sumrv^\ber)}{1}} + 4 \ .
\end{equation}

Adding \eqref{smallvars} and \eqref{largevars} gives
\[
\sum_{\ell \in F} \effm{\scrv^\ber_{\ell}} \leq \paren{3\ld + 2} \E{\erv{(\sumrv^\ber)}{1}} + 6.
\]
This completes the proof of inequality \eqref{berineq}, and hence the lemma. 
\hfill \qed

\section{Conclusions and discussion} \label{concl}
We introduce stochastic minimum-norm optimization, and present a framework for designing
algorithms for stochastic minimum-norm optimization problems. A key
component of our framework is a structural result showing that 
if $f$ is a monotone symmetric norm, and $Y\in\Rp^m$ is a nonnegative random vector
with independent coordinates, then $\E{f(Y)}=\Theta\bigl(f(\E{Y^\down})\bigr)$;
in particular, this shows that $\E{f(Y)}$ can be controlled by controlling $\E{\topl(Y)}$
for all $\ell\in[m]$ (or all $\ell\in\{1,2,4,\ldots,2^{\floor{\log_2 m}}\}$). 
En route to proving this result, we develop various deterministic proxies to reason about
expected $\topl$-norms, which also yield a deterministic proxy for $\E{f(Y)}$.
We utilize our framework to develop approximation algorithms for stochastic minimum-norm
load balancing on unrelated machines and stochastic minimum-norm spanning tree (and
stochastic minimum-norm matroid basis). 
We obtain $O(1)$-approximation algorithms for spanning tree, and load balancing with 
(i) arbitrary monotone symmetric norms and Bernoulli job sizes, and 
(ii) $\topl$ norms and arbitrary job-size distributions.

The most pressing question left open by our work is developing a constant-factor
approximation algorithm for the general case of stochastic min-norm load balancing, where
both the monotone symmetric norm and the job-size distributions are arbitrary; currently,
we only have an $O\bigl(\frac{\log m}{\log\log m}\bigr)$-approximation.

Another interesting question is to obtain significantly-improved approximation factors. As
mentioned in the Introduction, we have nowhere sought to optimize constants, and 
it is indeed possible to identify some places where one could tighten constants. However,
even with such optimizations, the approximation factor that we obtain for load balancing
(in settings (i) and (ii) above) is likely to be in the thousands (this is true even of
prior work on minimizing expected makespan~\cite{GuptaKNS18}), and the approximation
factor for spanning tree is likely to be in the hundreds. It would be very interesting to
obtain substantial improvements in these approximation factors, and in particular, obtain
approximation factors that are close to those known for the deterministic problem. This
will most likely require new insights. 
It would also be interesting to know if the stochastic problems are strictly harder than 
their deterministic counterparts.

In this context, we highlight one particularly mathematically appealing question, namely,
that of proving tight(er) bounds on the ratio $\E{f(Y)}/f\bigl(\E{Y^\down}\bigr)$. 
We feel that this question, which is an {\em analysis} question bereft of any
computational concerns, is a fundamental question about monotone symmetric norms and
independent random variables that is of independent interest. 
We prove an upper bound of $\cexpnorm$ (which can be improved by a factor of roughly $2$),
but we do not know if a much smaller constant upper bound---say, even $2$---is
possible. Of course, any improvement in the upper bound 
would also translate to improved approximation factors. Proving lower bounds on the
above ratio would also be illuminating, especially, if one is looking to establish tight
bounds.

\bibliographystyle{abbrv}
\bibliography{stochminnorm-short}

\end{document}